\documentclass[11pt]{article}
\usepackage{amsmath}
\usepackage{graphicx}
\usepackage{enumerate}
\usepackage{natbib}
\usepackage{url} %
\usepackage{algorithm}
\usepackage[noend]{algpseudocode}
\usepackage{setspace}
\usepackage{bm}

\addtolength{\oddsidemargin}{-.5in}%
\addtolength{\evensidemargin}{-.5in}%
\addtolength{\textwidth}{1in}%
\addtolength{\textheight}{1in}%
\addtolength{\topmargin}{-.8in}%

\newcommand{\red}[1]{{\color{red}#1}}

\usepackage{ upgreek }
\usepackage{graphicx} 
\usepackage{textcomp} 
\usepackage{amsfonts} 
\usepackage{amsmath}
\usepackage{amssymb}
\usepackage{yfonts}
\usepackage{mathtools}
\usepackage{mathrsfs}
\usepackage{latexsym}
\usepackage{array}
\usepackage{amsthm} 
\usepackage{calc}
\usepackage{perpage}
\usepackage{textcomp}
\usepackage{verbatim} 
\usepackage{marvosym}
\usepackage{hieroglf}
\usepackage{multirow} %
\usepackage{rotating} %
\usepackage{dsfont} %
\usepackage{booktabs}
\usepackage[english]{babel}
\usepackage{hyperref} 
\usepackage{url}
\usepackage{color}
\usepackage{tikz}
\usetikzlibrary{shapes,arrows,positioning,calc}
\usepackage{float}
\usepackage{soul}
\usepackage{multirow}
\usepackage{siunitx}
\usepackage{float}
\usepackage{booktabs}
\usepackage{siunitx}
\usepackage{soul}
\usepackage{subfigure}

\newtheorem{theorem}{Theorem}[section]

\newtheorem{lemma}{Lemma}[section]

\newtheorem{remark}{Remark}[section]
\newtheorem{example}{Example}[section]

\newtheorem{assumption}{Assumption}[section]

\newcommand{\Dkonv}{\stackrel{d}{\rightarrow}}
\newcommand{\eps}{\varepsilon}

\newcommand{\normsup}[1]{\ensuremath{\!|\!| #1 | \! |_{\infty}}}

\newcommand{\normtwo}[1]{ \ensuremath{\left\| #1 \right\|_{2}} }

\newcommand{\norm}[1]{\ensuremath{\!|\!| #1 | \! |}}

\newcommand{\tri}[1]{{\left\vert\kern-0.25ex\left\vert\kern-0.25ex\left\vert #1 
    \right\vert\kern-0.25ex\right\vert\kern-0.25ex\right\vert}}
\newcommand{\normop}[1]{{\left\vert\kern-0.25ex\left\vert\kern-0.25ex\left\vert #1 
    \right\vert\kern-0.25ex\right\vert\kern-0.25ex\right\vert}_\infty}
\newcommand{\normf}[1]{{\left\vert\kern-0.25ex\left\vert\kern-0.25ex\left\vert #1 
    \right\vert\kern-0.25ex\right\vert\kern-0.25ex\right\vert}_{\op{F}}}
 \newcommand{\normspec}[1]{{\left\vert\kern-0.25ex\left\vert\kern-0.25ex\left\vert #1 
    \right\vert\kern-0.25ex\right\vert\kern-0.25ex\right\vert}_{2}}

\newcommand{\bigo}{\mathcal{O}}

\newcommand{\1}{{\rm 1}\mskip -4,5mu{\rm l} }

\newcommand{\argmin}{\mathop{\mathrm{arg\,min}}}
\newcommand{\argmax}{\mathop{\mathrm{arg\,max}}}

\newcommand{\op}[1]{\operatorname{#1}}  %

\def\E{\mathbb{E}}

\def\G{\mathbb{G}}

\def\R{\mathbb{R}}

\def\Z{\mathbb{Z}}

\def\Ll{\mathcal{L}}

\newcommand{\rp}{\mathbb{R}^p}
\newcommand{\rd}{\mathbb{R}^d}
\newcommand{\rpp}{\mathbb{R}^{p\times p}}
\newcommand{\rnp}{\mathbb{R}^{n\times p}}

\newcommand{\rnn}{\mathbb{R}^{n\times n}}

\newcommand{\mpr}{\mathbb{P}}

\def\Var{\mathbb{V}\mbox{ar}\,}

\def\cov{\mbox{Cov}\,}

\newcommand{\graph}{\mathcal{G}}

\newcommand{\grad}{\nabla}

\newcommand{\vecx}{\mathbf{x}}
\newcommand{\vecb}{{\boldsymbol{\beta}}}

\newcommand{\vecz}{{\mathbf{z}}}
\newcommand{\vecg}{{\boldsymbol{\gamma}}}
\newcommand{\response}{Y}

\newcommand{\vecw}{{\mathbf{w}}}

\newcommand{\est}{{\hat{\boldsymbol{\gamma}}}}
\newcommand{\estb}{{\hat{\boldsymbol{\beta}}}}

\newcommand{\Dmat}{\ensuremath{D}}

\onehalfspacing

\begin{document}
	
	\title{Spectral clustering with variance information for group structure estimation in panel data
		\thanks{%
			We are grateful to professors H.J. Wang and Y. Zhang for sending us the code for their simulations in~\cite{zhang2019} . We thank Professor D. Millimet for kindly sharing this data with us. The data we use here is the same as in \cite{millimet2003environmental}. We also thank the AE and anonymous referees for constructive comments on an earlier version of this manuscript that motivated us to consider the local analysis in section 2.2 and resulted in a greatly improved manuscript.}}
	\author{Lu Yu\thanks{Department of Statistical Sciences, University of Toronto. E-mail: \texttt{stat.yu@mail.utoronto.ca} }
		\and
		Jiaying Gu\thanks{Department of Economics, University of Toronto. E-mail: \texttt{jiaying.gu@utoronto.ca}}
		\and
		Stanislav Volgushev\thanks{Department of Statistical Sciences, University of Toronto. E-mail: \texttt{stanislav.volgushev@utoronto.ca} }
	}
	
	\maketitle
	\bigskip
	\begin{abstract}
		Consider a panel data setting where repeated observations on individuals are available. Often it is reasonable to assume that there exist groups of individuals that share similar effects of observed characteristics, but the grouping is typically unknown in advance. We first conduct a local analysis which reveals that the variances of the individual coefficient estimates contain useful information for the estimation of group structure. We then propose a method to estimate unobserved groupings for general panal data models that explicitly accounts for the variance information. Our proposed method remains computationally feasible with a large number of individuals and/or repeated measurements on each individual. The developed ideas can also be applied even when individual-level data are not available and only parameter estimates together with some quantification of estimation uncertainty are given to the researcher.
		 A thorough simulation study demonstrates superior performance of our method than existing methods and we apply the method to two empirical applications. 
		
	\end{abstract}
	
	\noindent%
	{\it Keywords:}   group structure estimation, spectral clustering, panel data models
	\vfill
	
	\addtolength{\textheight}{.5in}

	\section{Introduction}

	Panel data models are a standard empirical tool in statistics, economics, marketing, and financial research. The conventional modeling approach is to assume that all individual heterogeneity can be summarized by an individual specific intercept, often known as the fixed effects, while assuming all covariates have a common effect among all the individuals, such that information can be pooled across individuals to gain efficiency for estimating these common parameters. However, heterogeneous responses towards observed control variables are often better supported by empirical evidence, especially as detailed individual level data becomes more available.

	An increasingly popular approach to model unobserved heterogeneity in the effects of covariates on individual responses is to assume the existence of a {finite} number of homogeneous groups. Here, parameters in a potentially non-linear model\footnote{Examples include quantile regression and discrete outcome models. } are assumed to take common values within groups but differ across groups. The main challenge is to learn the unobserved group structure from observed data. {An alternative way to model unobserved heterogeneity is through latent factors (e.g., \cite{bai2009panel}). This approach also has discrete heterogeneity in the sense that a small number of unobserved factors drive the co-movement of a large number of time series. Both group pattern and factor structure are useful empirical tools, but they have different interpretations. In this paper, we focus on group patterns.}
	
	The existing literature can be roughly categorized into three categories. Methods from the first category rely on minimizing a loss function that incorporates different coefficients for all individuals combined with a penalty which encourages the coefficient estimates to be similar. \cite{SSP} propose the classifier-LASSO (C-LASSO) approach, which is applicable to both linear and nonlinear models. Differences among individual parameters are penalized through a LASSO type penalty, and consistent grouping can be achieved if the penalty parameter is chosen properly. \cite{wang2018homogeneity} propose a Panel-CARDS penalty which extends the idea of homogeneity pursuit in \cite{ke2015homogeneity} from cross-sectional models to panel data models. \cite{gu2019panel} propose to use the convex clustering penalty of \cite{hocking2011clusterpath} in panel data quantile regression models with grouped individual intercepts and common slope parameters. 
	
	An alternative approach is to relate the group structure estimation problem to clustering; here clusters in the coefficient vectors correspond to latent groups of individuals. Estimating clusters has a long history in statistics and economics. Among the many clustering algorithms, the $k$-mean algorithm by \cite{macqueen1967some} is one of the most popular and commonly used methods. However, instead of directly applying $k$-mean methods on the estimated individual parameters, \cite{LinNg} and \cite{BM} propose to incorporate the regression loss function and re-estimate the group-specific coefficients in an iterative fashion. Originally proposed for linear regression models, this approach has also been extended to quantile regression models by \cite{zhang2019} and \cite{Leng}. Further advancement of this literature has considered time varying group membership, for example \cite{miao2020panel}, \cite{okui2021heterogeneous} and \cite{lumsdaine2021estimation}. 
	
	Both the penalization-based and clustering-based approaches described above require the repeated fitting of large regression models which involve all individuals and all individual-specific parameters in a large-scale minimization problem. This can be computationally costly especially for large scale datasets, which become more and more common in practice. In addition, the extensions of the $k$-means approach discussed above rely on iterative algorithms with random initialization which require repeated application with many different starting points. Motivated by those computational challenges, \cite{CM} propose an estimator for linear panel data models with grouped intercepts and common slope. Their approach is shown to guarantee the same  theoretical properties as \cite{BM} but is computationally much faster. It should be pointed out however that their approach seems to be difficult to extend to non-linear panels.
	\cite{wang2021} propose to use ordered individual-specific regression estimators to convert the grouping problem into a change-point detection setting and apply binary segmentation to learn the underlying group structure. This approach can be applied to both linear and nonlinear panel data models. It is computationally efficient because the individual-specific regressions only need to be estimated once rather than in an iterative fashion. They further show that by considering the spectral decomposition of an outer product of the individual parameter estimates and then applying binary segmentation on the leading eigenvectors can lead to improved group estimation. 
	
	In the present paper, we propose a novel approach that retains the computational advantages of working with individual-specific regressions but explicitly takes into account the uncertainty in the corresponding estimates. This information is particularly important in settings where different entries of a coefficient vector are estimated with different degrees of precision and hence carry varying amounts of information about the underlying population coefficients. To motivate the specific form of reweighting we use, we first conduct a simplified analysis in a local alternative framework. In the simplest case where there are only two groups in the population, we study the probability of classifying an individual to one of two groups when the separation between group centers tends to zero at a certain rate. This analysis targets a simplified iteration step which is the key ingredient of most existing iterative procedures for estimating group membership.
	
This local analysis motivates us to weigh the differences between coefficient estimates of different individuals by an estimated variance-covariance matrix.	The resulting weighted differences can not be interpreted as a Euclidean distance. This renders many classical clustering approaches such as the vanilla $k$-means algorithm or extensions of homogeneity pursuit and binary segmentation inapplicable. We handle this challenge by interpreting the weighted distances as a quantification of dissimilarity between individuals. With this interpretation, we can apply any clustering approach that works with general measures of dissimilarity. We consider two popular approaches: \emph{k-medoids} \cite{schubert2019faster} and \emph{spectral clustering} \cite{ng2002}. In simulation studies, we find that both approaches outperform existing proposals. In finite samples, the spectral clustering approach works better than the k-medoids approach and we provide high level assumptions which guarantee consistent group structure recovery asymptotically.

	The remaining paper is organized as follows. In Section \ref{sec: local} we present the simple local analysis motivates our approach. Section~\ref{sec:methodology} contains a detailed description of the proposed estimation procedure and illustrates it on several specific models that were previously considered in the literature. Section~\ref{sec:theoryhighlevel} contains theoretical guarantees on correct group estimation under high-level conditions. Those conditions are verified for several examples in Section~\ref{sec:theoryexamples}. A simulation study is presented in Section~\ref{sec:sims}. An empirical illustration analyzing the heterogeneous relationship between income and pollution level among different states using data from the United States is given in Section~\ref{sec:data}. We also apply our approach to the commuting zone summary statistics provided by \cite{chetty2018impacts} to analyze group patterns of intergenerational income mobility. Section \ref{sec:conclusion} concludes. All proofs and some additional plots are deferred to the supplementary material.

	\section{Setting and proposed methodology}	
	
	\subsection{General setting}
	
	Assume that we have repeated observations $(\vecx_{it},Y_{it})_{t=1,...,T}$ from individuals $i=1,...,n$. Our goal is to assign the individuals into $G^*$ groups such that individuals in the same group share a set of characteristics. For now, let $G^*$ be given, a data-driven choice of $G^*$ will be discussed at a later point. 
	
	Specifically, assume that the characteristics of individual $i$ are described by a vector of parameters $\vecg_i$ and that we are interested in grouping individuals according to sub-vectors $\vecb_i \in \R^p$ of $\vecg_i$. For instance, $\vecg_i$ can be coefficients in a non-linear model linking the response $Y_{it}$ to the covariates $\vecx_{it}$ and $\vecb_i$ can be the full vector $\vecg_i$, a sub-vector thereof, or simply the intercept term in a regression model. Specific examples are provided in Section~\ref{sec:ex}.
	
	A popular approach to such problems, pioneered by \cite{LinNg} and \cite{BM}, is to interpret this as a clustering problem and apply an iterative approach in the spirit of Lloyd's k-means clustering algorithm. For concreteness, assume that we only have two groups and that the coefficient vectors $\vecg_i = (\alpha_i,\vecb_i)$\footnote{since the $\alpha_i$ will be left unrestricted, they correspond to the individual specific effects} can be estimated by minimizing a loss function $\Ll$ via
	\[
	(\hat \alpha_i, \hat \vecb_i) = \argmin_{\alpha,\vecb} \sum_{t=1}^T \Ll(\vecx_{it},Y_{it};\alpha,\vecb).
	\]
	Roughly speaking, procedures in the spirit of \cite{LinNg, BM} consist of an initialization step where individuals are assigned to groups in a randomized fashion, followed by iterative re-assignments until convergence. In the k'th iteration step, denote the group centers from step $k-1$ by $\hat \vecb_1^{(k-1)}, \hat \vecb_2^{(k-1)}$. Now individual $i$ is assigned to group $1$ iff\footnote{\cite{BM} consider linear least squares models where the individual-specific intercepts $\alpha_i$ can be differenced out. The method presented here is a canonical generalization of their approach to non-linear models where differencing out individual effects may not be possible.}
	\begin{equation}\label{eq:BMsimple}
		\inf_{\alpha} \sum_{t=1}^T \Ll(\vecx_{it},Y_{it};\alpha,\hat \vecb_1^{(k-1)}) < \inf_{\alpha} \sum_{t=1}^T \Ll(\vecx_{it},Y_{it};\alpha,\hat \vecb_2^{(k-1)}).
	\end{equation}
	This approach has been adopted to quantile regression by~\cite{zhang2019}. In practice, it has two potential drawbacks. First, for large $n,T$ the cost of each iteration step can be expensive. Second and more importantly, if only initial estimators $\hat \alpha_i, \hat \vecb_i$ but not individual level data are available, this approach is infeasible to implement.
	
	Assuming that we only have access to estimators $\hat \alpha_i, \hat \vecb_i$ and covariance estimates $\hat \Sigma_i$ for $\hat \vecb_i$, a natural alternative to the iteration step is to assign individual $i$ to group $1$ iff 
	\begin{equation}\label{eq:PAMsimple}
		\|\hat \Sigma_i^{-1/2}(\hat\vecb_i - \hat \vecb_1^{(k-1)})\|_2 < \|\hat \Sigma_i^{-1/2}(\hat\vecb_i - \hat \vecb_2^{(k-1)})\|_2.
	\end{equation}
	For a motivation, note that the problem of assigning individual $i$ to group $1$ or $2$ reduces to classifying an individual into one of two classes. The rule in~\eqref{eq:PAMsimple} can now be viewed as an approximate Bayes rule in classification: if $\hat \Sigma_i$ are fixed and $\hat \vecb_i - \vecb_{i}^* \sim N(0,\hat \Sigma_i)$ and the population parameters $\vecb_{i}^*$ satisfy $\vecb_{i}^* \in \{\hat \vecb_1^{(k-1)}, \hat \vecb_2^{(k-1)}\}$,~\eqref{eq:PAMsimple} reduces to the Bayes rule which is known to be optimal for minimizing classification error.

	 At this point, it is natural to wonder whether the rule in~\eqref{eq:BMsimple} or in~\eqref{eq:PAMsimple} should be used. We next argue that, in a simplified but general setting, the classification error of rule~\eqref{eq:PAMsimple} is (asymptotically) always at least as good as that of~\eqref{eq:BMsimple}.

	\subsection{Loss functions versus weighted distances of estimators: a local analysis} \label{sec: local}
	
	To keep the presentation focused and notation simple, consider a single individual and drop the index $i$ throughout this section. Assume that the true parameter that generated the data is $\vecg^* = (\alpha^*,\vecb^*)$ and that we want to decide based on observations $(\vecx_t,Y_t)_{t=1,\dots,T}$ whether the data are generated from parameter $(\alpha_1,\vecb_1)$ or $(\alpha_2,\vecb_2)$ where $\vecb_1,\vecb_2$ are given and $\alpha_1,\alpha_2 \in \R$ are unspecified. Let $\Gamma$ denote the parameter space and define
	\[
	(\hat \alpha, \hat \vecb) :=  \argmin_{(\alpha,\vecb) \in \Gamma} \sum_{t=1}^T \Ll(\vecx_{t},Y_{t};\alpha,\vecb).
	\]
	Denote by $\hat \Sigma$ a consistent estimator of the asymptotic variance of $\hat \vecb$. Define 
	\[
	\hat k^{BM} = 1 \iff \inf_{\alpha} \sum_{t=1}^T \Ll(Y_t - \alpha - \vecx_t^\top \vecb_1) < \inf_\alpha \sum_{t=1}^T\Ll(Y_t - \alpha - \vecx_t^\top \vecb_2)  
	\] 
	and 
	\[
	\hat k^{PAM} = 1 \iff \| \hat\Sigma^{-1/2} (\hat \vecb - \vecb_1) \|_2 <  \| \hat\Sigma^{-1/2} (\hat \vecb - \vecb_2) \|_2.  
	\]
	We also consider a more general approach for a general weight matrix $K_T$ that can depend on the sample size and on the available data
	\[
	\hat k^{PAM,K_T} = 1 \iff \|K_T(\hat \vecb - \vecb_1) \|_2 <  \| K_T(\hat \vecb - \vecb_2) \|_2.  
	\]
	This includes the case of no weighting by setting $K_T$ to be the identity matrix. We will now compare those rules in a local alternative regime where $\vecb_1 = \vecb^*, \vecb_2 = \vecb^* + T^{-1/2}\bm{\Delta}$. Assume that the loss function $\Ll$ has the following properties.
	
	\begin{assumption}\label{Ass:gen}
		\begin{enumerate} Assume that $(\vecx_1,Y_1),\dots,(\vecx_T,Y_T)$ are i.i.d. and that further
			\item[(i)] The map $m: \vecg \mapsto \E[\Ll(\vecx_t,Y_t;\vecg)]$ is twice continuously differentiable in a neighbourhood of $\vecg^*$ with symmetric Hessian matrix $A_\vecg$ of full rank.%
			\item[(ii)] The map $g: \vecg \mapsto \Ll(\vecx_t,Y_t;\vecg)$ is differentiable at $\vecg^*$ on a set $\mathcal{Z}$ such that $\mpr((\vecx_t,Y_t) \in \mathcal{Z}) = 1$ and there exists a measurable function $\dot g$ such that almost surely $|\Ll(\vecx_t,Y_t;\vecg_1) - \Ll(\vecx_t,Y_t;\vecg_2)| \le \dot g(\vecx_t,Y_t) \|\vecg_1 - \vecg_2\|$ for all $\vecg_1,\vecg_2$ in a neighborhood of $\vecg^*$ and $\E[\dot g(\vecx_t,Y_t)^2] < \infty$.
			\item[(iii)] For any $\vecb$ in a neighbourhood $\mathcal{B}$ of $\vecb^*$ the function $\alpha \mapsto m(\alpha,\vecb)$ has a well separated (uniformly in $\vecb$) global minimizer $\alpha^*_\vecb$, i.e. for every $\eps > 0$ we have 
			\[
			\inf_{\vecb \in \mathcal{B}} \inf_{|\alpha - \alpha^*_\vecb| > \eps} \big(m(\alpha,\vecb) - m(\alpha^*_\vecb,\vecb) \big) > 0.
			\]
			\item[(iv)] The value $\vecg^*$ is in the interior of the parameter space $\Gamma$. Either the parameter space $\Gamma$ is compact or the parameter space is convex and the function $\vecg \mapsto \Ll(\vecx_t,Y_t;\vecg)$ is convex almost surely.   
		\end{enumerate}
	\end{assumption}
	
	It is routine to verify that all of the above conditions hold for two important examples that we will discuss throughout this paper: quantile regression and logistic regression. More generally, parts (i) and (ii) of the assumptions are fairly mild and standard conditions for establishing asymptotic normality and expansions for m-estimators, see for instance Theorem 5.23 and the discussion around it in \cite{vdV}. Conditions (iii) and (iv) are added because the proof relies not only on expansions for the original estimator but also for the minimizer of the perturbed objective $\sum_{t=1}^T \Ll(Y_t - \alpha - \vecx_t^\top \vecb)$ where $\vecb \neq \vecb^*$. We have opted for simple to state and verify conditions rather than the most general possible ones. The proof of Theorem~\ref{thm:loc} reveals that it is the expansions~\eqref{eq:expansionL}--\eqref{eq:ratetildealpha} in the proof rather than the specific conditions we state above that are needed to establish this result. Such expansions can also be established for data with serial dependence but we do not pursue this direction here as it does not add any insights to our main message. 
	
	To state the next result introduce some additional notation. For square matrices $M$ of dimension $p+1$ consider the following block structures 
	\[
	M = \left[ 
	\begin{array}{cc}
		M_{11} & M_{12}
		\\
		M_{21} & M_{22}
	\end{array}
	\right]
	\]
	with $M_{11} \in \R$.
	
	\begin{theorem}\label{thm:loc}
	Assume that Assumption~\ref{Ass:gen} holds and that $\vecb_1 = \vecb^*, \vecb_2 = \vecb^* + T^{-1/2}\bm{\Delta}$, $\bm{\Delta} \neq 0$. Let $A = A_{\vecg^*}$, $B = Var(\grad_\vecg \Ll(\vecx,Y;\vecg^*))$ and assume that $B$ is of full rank. Then $\sqrt{T}(\hat \vecb - \vecb^*) \Dkonv N(0,\Sigma_\vecb)$ where $\Sigma_\vecb = [A^{-1} B A^{-1}]_{22}$. Assume that $\hat \Sigma = \Sigma_\vecb + o_\mpr(1)$. Then
		\begin{equation}\label{eq:asympcompprob}
		\lim_{T\to \infty} \mpr\Big( \hat k^{PAM} = 1 \Big) \geq \lim_{T\to \infty} \mpr\Big( \hat k^{BM} = 1 \Big).
		\end{equation} 
		Define $D = [[A^{-1}]_{22}]^{-1}, C := B_{22} + \frac{B_{11}}{A_{11}^2}A_{21}A_{21}^\top - 2 \frac{A_{21}B_{21}^\top}{A_{11}}$. Equality in~\eqref{eq:asympcompprob} holds if and only if $C^{1/2}\bm\Delta$ is a scalar multiple of $C^{-1/2}D\bm\Delta$. Further, if $K_T \to K$ entry-wise in probability for a fixed matrix $K$ with finite entries
		\[
		\lim_{T\to \infty} \mpr\Big( \hat k^{PAM} = 1 \Big) \geq \lim_{T\to \infty} \mpr\Big( \hat k^{PAM,K_T} = 1 \Big).
		\]
	\end{theorem}
	
	A similar result under even weaker conditions continues to hold if there is no individual-specific $\alpha$ and all parameters are estimated globally. The proof of this result is similar in spirit but even simpler and we omit the details for the sake of brevity.  
	
	Note that when $\Ll$ is a correctly specified negative log-likelihood function, standard regularity conditions yield $A=B$ which further implies $C=D$ by the block matrix inversion formula. In this case $C^{-1/2}D = C^{1/2}$ so the asymptotic probabilities for rules~\eqref{eq:BMsimple} and~\eqref{eq:PAMsimple} selecting the correct center are equal for any $\bm\Delta$. Correct specification of $\Ll$ is sufficient but not necessarry. The equality $C=D$ continues to hold in the case where $A$ is a scalar multiple of $B$ which is the case in least squares or quantile regression with homoscedastic errors, for instance. However, in general models such as quantile regression or ordinary least squares estimation with heteroscedasticity or in the presence of temporal dependence, $A$ is not a scalar multiple of $B$ in general and thus also $C \neq D$. Since rule~\eqref{eq:PAMsimple} is always at least as good as~\eqref{eq:BMsimple} asymptotically, this suggests that~\eqref{eq:PAMsimple} would be preferable whenever the asymptotic covariance matrix can be estimated consistently, even when~\eqref{eq:BMsimple} is feasible.    
	
	The second statement of Theorem~\ref{thm:loc} implies that the proposed scaling with $\hat\Sigma^{-1/2}$ is asymptotically optimal among all possible choices of scale matrix that converge to a fixed matrix.
	
	Although the results presented above only work in a very idealized setting and can not be directly utilized to analyze the performance of rules~\eqref{eq:BMsimple} and~\eqref{eq:PAMsimple} when applied inside an iterative procedure, the findings strongly suggest that using the objective function in iteration for group centers might not be optimal from a statistical perspective. Instead, using information on the (asymptotic) variance of the estimators $\hat \vecb_i$ can lead to more efficient procedures. This motivates the ideas in the following section. 
	
	\begin{remark}
	The key to proving the first statement of Theorem~\ref{thm:loc} is an asymptotic expansion for the probabilities appearing in~\eqref{eq:asympcompprob}. Specifically, we derive the following limits
	\[
	P(\hat k^{PAM} = 1) \to \Phi(\|\Sigma_\vecb^{-1/2}\Delta_\vecb\|_2/2).
	\]
	in equation~\eqref{eq:limpobPAM} and 
	\[
	P(\hat k^{BM} = 1) \to \Phi\Big( \frac{\Delta\Big[ [A^{-1}]_{22}\Big]^{-1}\Delta}{2(\Delta^\top C \Delta)^{1/2}} \Big)
	\]
	in equation~\eqref{eq:limpobBM} in the proof of Theorem~\ref{thm:loc}. This is where the matrix $C$ comes into play. Given those expansions,~\eqref{eq:asympcompprob} follows by an application of the Cauchy-Schwarz inequality as follows
	\[
	\frac{\Delta^\top D \Delta}{(\Delta^\top C \Delta)^{1/2}} 
	= 
	\frac{\Delta^\top C^{1/2} C^{-1/2} D \Delta}{(\Delta^\top C\Delta)^{1/2}} 
	\leq 
	\frac{\|\Delta^\top C^{1/2}\|_2 \|C^{-1/2} D \Delta\|_2}{(\Delta^\top C \Delta)^{1/2}} 
	= 
	(\Delta^\top \Sigma_\vecb^{-1} \Delta)^{1/2}.
	\]
	This inequality is strict unless $C^{1/2}\Delta$ is a scalar multiple of $C^{-1/2} D \Delta$. 
	\end{remark}

	\subsection{Proposed methodology through the lens of clustering}\label{sec:methodology}
	
	The discussion up to this point focused on variants of the k-means algorithm for grouping individuals. However, k-means is not the only clustering method which is available and other approaches have been observed to have superior performance in certain settings. Many methods of this type work with general measures of dissimilarity between units and attempt to cluster units that are most similar to each other. Given the developments in the previous sections, a natural measure of dissimilarity is given by
	\begin{equation} \label{eq:hatVij}
		\hat V_{ij} := \norm{\hat \Sigma_{i,j}^{-1/2}(\hat\vecb_i - \hat\vecb_j) }_2\,,    
	\end{equation}
	where typically $\hat \Sigma_{i,j} = \hat \Sigma_i + \hat \Sigma_j$ and $\hat \Sigma_i$ estimates the variance of $\hat\vecb_i - \vecb$. Note that for consistent estimators $\hat \vecb_i$, $\hat \Sigma_i$ will typically converge to zero. This measure of dissimilarity can be computed based on summary statistics and variance estimates and does not require individual level data. The importance of taking variance information into account was illustrated in a simplified setting in Theorem~\ref{thm:loc} and is also confirmed in our simulations. As pointed out by the Associate Editor, using covariance estimates or diagonal versions thereof for $\hat \Sigma_i$ has the added benefit of making the procedure scale invariant.%

	Two popular clustering approaches in the literature that work with general measures of dissimilarity are \textit{$k$-medoids}~\cite{schubert2019faster} and~\textit{spectral clustering} \cite{ng2002, chung1997, von2007}. Similarly to $k$-means clustering, the $k$-medoids problem is NP-hard to solve exactly. In practice, approximate solutions to this problem are obtained by employing the algorithm \textit{Partitioning Around Medoids} (PAM)~\cite{reynolds2006clustering, schubert2019faster,kaufman2009finding}. We refer to~\cite[Section 4.1, Chapter 2]{kaufman2009finding} for more details about the PAM algorithm. %
	As we observe in simulations, using the PAM algorithm with dissimilarity measure~\eqref{eq:hatVij} can already lead to substantial gains relative to the iterative k-means style approaches of~\cite{LinNg,BM,zhang2019}. However, extensive simulations showed that in all settings considered spectral clustering leads to even more accurate group estimation than PAM, and hence we focus on spectral clustering in the theoretical developments that follow. Simulation evidence for the superiority of spectral clustering over to PAM is presented in Section~\ref{sec:sims}.%

	Since there are many variations of spectral clustering that are available in the literature, a detailed description of the specific version we use is given in Algorithm~\ref{SpectralClustering}\footnote{We do not claim any novel contributions to this specific algorithm, the details and explanation are presented here for the reader's convenience.}.
	
	\begin{algorithm}[ht]
		\caption{Spectral Clustering}\label{SpectralClustering}
		\hspace*{\algorithmicindent} \textbf{Input}: Number of clusters $G^*$, dissimilarity matrix $\hat V:=(\hat V_{ij})$ computed in~\eqref{eq:hatVij}. \\
		\hspace*{\algorithmicindent} \textbf{Output}:  Clusters $\hat I_1,\dots,\hat I_{G^*}$.
		\begin{algorithmic}[1]
			
			\State Compute the empirical adjacency matrix $\hat A \in \rnn$ with entries $\hat A_{ij} := e^{-\hat V_{ij}}$ for $i\neq j$ and $\hat A_{ij}=1$ for $i=j$. 
			
			\State Compute the empirical degree matrix $\hat D:=\operatorname{diag}(\hat{\Dmat}_1,\dots,\hat{\Dmat}_n),$ where $\hat{\Dmat}_i:=\sum_{j=1}^n \hat A_{ij},i=1,\dots,n.$ 
			
			\State Calculate the normalized graph Laplacian ~$\hat{L}:=\hat{\Dmat}^{-1/2}(\hat\Dmat-\hat{A})\hat{\Dmat}^{-1/2}.$ 
			
			\State Find $G^*$ orthonormal eigenvectors corresponding to the $G^*$ smallest eigenvalues of~$\hat{L}$, and form the matrix~$\hat Z \in \R^{n\times G^*}$ by stacking those vectors in columns. Normalize the rows of $\hat Z,$ to have $\ell^2$-norm $1$ and denote the resulting matrix by $\hat U.$
			
			\State Apply standard $k$-means clustering with $G^*$ clusters taking the rows of $\hat U$ as input vectors, and return the clusters $\hat I_1,\dots,\hat I_{G^*}.$ 
		\end{algorithmic}
	\end{algorithm}
	
	\normalsize
	
	To intuitively understand the motivation behind the above algorithm observe that the dissimilarities $\hat V_{ij}$ can be expected to be large if individuals $i,j$ are from different groups. In the limit $T \to \infty$ those distances will tend to infinity, and thus $\hat A_{ij} \approx 0$ whenever $i,j$ are from different groups. Similarly, $\hat V_{ij}$ can be expected to be bounded when $i,j$ are in the group, and thus $\hat A_{ij}$ will usually be bounded away from zero for such pairs. Thus after rearranging the order of individuals we see that $\hat V_{ij}$ will be approximately block diagonal with non-zero entries in the blocks. It is now straightforward to see that $\hat L$ will have exactly $G^*$ zero eigenvalues if there are $G^*$ such blocks and all other eigenvalues will be strictly positive. Moreover, the eigen-space corresponding to zero eigenvalues will have an orthogonal basis consisting of vectors that have non-zero entries in the exact components corresponding to different groups, see also the discussion surrounding equation~\eqref{eq:groupvect} and Lemma~\ref{eigenspace} in the supplementary material. For a more detailed discussion of the intuition and alternative formulations of the spectral clustering algorithm see \cite{von2007} and the literature cited therein. Although the last step of the algorithm uses the standard $k$-means algorithm, we note that it is applied on the rows of $\hat U$ which is a standard clustering problem with $n$ data points in Euclidean space. No refitting of models on individual level or large scale models as in \cite{BM} is required.
	
	Some additional comments on specific choices that we made in Algorithm 1 are in order. First, in step (1), we apply an exponential kernel to the dissimilarity matrix. Other monotone transformations can be used, for instance the Gaussian kernel is another popular choice. Our simulation exercise confirms that both the exponential kernel and the Gaussian kernel perform similarly. Second, in step (3), we apply a normalization to the graph Laplacian for the spectral clustering analysis. A line of seminal works (\cite{luxburg2004limits} and \cite{von2008consistency}) investigate the convergence of the normalized and unnormalized versions of the popular spectral clustering algorithm. They demonstrate that the normalized spectral clustering converges under very general conditions, while the unnormalized spectral clustering is only consistent under strong additional assumptions, which are not always satisfied in real data. These works give strong evidence for the superiority of normalized spectral clustering. 
	
	\begin{remark} \cite{wang2021} also observe that the spectral decomposition of a certain matrix that is derived from individual-specific estimators contains information on the latent group structure. However, there are several crucial differences between their and our approach. Most importantly, we explicitly take into account the uncertainty that is associated with individual-specific estimators while \cite{wang2021} work directly with raw estimators. Moreover, \cite{wang2021} do not apply spectral clustering directly but rather use certain eigenvectors as input to a binary segmentation algorithm. For a simulation-based comparison with that method, see section~\ref{sec:simlog}.  
		
	The idea to use spectral clustering for grouping different entities also appeared in \cite{vandelft2021}. The setting in the latter paper is very different from ours since \cite{vandelft2021} consider grouping locally stationary functional time series and do not take into account estimation uncertainty when constructing their dissimilarity measure between observations. Still, some parts of our theoretical analysis under high-level assumptions are related to theirs, additional comments on this can be found in Remark~\ref{rem:compmeth}. 
	\end{remark}
	So far we discussed an algorithm for assigning individuals to $G^*$ groups for any given $G^*$. In some settings, $G^*$ will be chosen based on domain knowledge about the problem at hand. If no such knowledge is available, we propose to select the $G^*$ that maximizes the relative eigen--gap (\cite{von2007}) of a modified graph Laplacian~$\tilde L.$ More precisely, consider the scaled dissimilarity~$\tilde V_{ij}:=\frac{2}{\sqrt{\log n\log T} }\hat V_{ij}.$ Use $\tilde V_{ij}$ as input to Algorithm~\ref{SpectralClustering} and obtain $\tilde L$ as output from step 3 of that algorithm. Consider the values $\tilde\lambda_i:=1-\hat\lambda_i,i=1,\dots,n,$ with $\hat\lambda_1\le\cdots\le \hat\lambda_n$ denoting the ordered eigenvalues of $\tilde L$. The estimated number of groups is   
	\begin{equation}\label{eq:selGstar}
		\hat G = \argmax_{g=1,\dots,n-1} \frac{|\tilde\lambda_{g+1}-\tilde\lambda_{g}|}{\tilde\lambda_{g+1}}\,,
	\end{equation}
	The motivation for using the scaling in $\tilde V_{ij}$ is that, under technical assumptions made later, this scaling ensures $\tilde V_{ij} \to 0$ for all $i,j$ in the same group. Without this scaling, the heuristic tends to have a small probability of not selecting a correct number of groups as $T$ increases.  
	
	Similar heuristic eigen-gap methods for estimating the number of groups can also be found in~\cite{vandelft2021, john2020spectrum, little2017path}, among many others. 
	
	\begin{remark}
		There are at least two other popular approaches to selecting the number of groups or equivalently the number of clusters. The first type of method combines cross-validation with the idea that ``true" cluster assignment should be stable under small perturbations of the data. This idea was exploited in \cite{wang2010} for selecting the number of clusters in a general setting and adapted by \cite{zhang2019} to selecting the number of groups for panel data quantile regression. However, as pointed out in \cite{ben2006sober}, methods that select the number of clusters based on stability can fail for certain cluster configurations. One such example will be given in the simulation section, see Model 2 in section~\ref{sec:qrslope} . The second drawback of such methods is that clustering stability can only be defined when there are at least two clusters. Hence, by construction, stability methods always select at least two clusters and fail if there is only a single cluster in the data.
		
		The second method uses information criteria which select the number of clusters that maximize a sum of objective function plus penalty, see for instance \cite{SSP,gu2019panel,wang2021} among many others. The main drawback of such approaches is that information criteria need to be derived case by case as they differ depending on the specific form of the objective function making them difficult to use for applied researchers. We note that this is different from the classical setting involving AIC and BIC in a maximum likelihood framework where only the number of parameters in the model matters. Moreover, computation of such information criteria typically requires access to raw data which might not always be available as in our second application. {The information criteria method also involves the heaviest computation burden because to construct the information criteria statistics, all candidate models with varying values of $G$ need to be estimated which can be costly (See computation time comparison in Section \ref{sec:simlog}). }
		
		We also conduct an extensive simulation comparing different methods of selecting the number of groups in Section \ref{sec:simlog} and \ref{sec:simnumgr}. Results show that our heuristic approach works reasonably well in all settings considered. Unsurprisingly, we also find that there is no universally dominating method.
	\end{remark}
	
	\subsection{Examples}\label{sec:ex}
	
	The setting above is generic and so far we did not assume anything about the specific structure of the estimators. In the remainder of this section, we provide several illustrative examples of model specifications that were considered previously and show how those examples fit into the proposed framework.

	\begin{example}[\textit{Logistic regression regression with individual-specific intercepts and grouping on slopes}] \label{ex:logreg}
		Consider binary responses $Y_{it} \in \{0,1\}$ and assume that
		\[
		\mpr(Y_{it} = 1) = \frac{\exp(\alpha_i + \vecx_{it}^\top\vecb_i )}{1+\exp(\alpha_i + \vecx_{it}^\top\vecb_i)} = \frac{\exp( \vecz_{it}^\top\vecg_i )}{1+\exp(\vecz_{it}^\top\vecg_i)}\,,
		\] 
		where $\vecz_{it}^\top = (1,\vecx_{it}^\top)$ and $\vecg_i^\top = (\alpha_i,\vecb_i^\top)$.
		We leave the $\alpha_i\in\R$ unrestricted and assume that certain sub-vectors of $\vecb_i\in\R^{p}$ have a group structure. 
	\end{example}
	
	\cite{SSP} considers a similar model; they assume a Gaussian link function for the binary response. \cite{ando2021bayesian} also considers the logit model with individual specific slope coefficients and a factor structure on the individual fixed effects. Their way of modeling unobserved heterogeneity is different from ours as we focus on group patterns. 
	
		\begin{example}[\textit{Quantile regression with individual-specific intercepts and grouping on slopes}] \label{ex:qrslope}
		Given the observations are $(\vecx_{it},Y_{it}),$
		assume that the conditional quantile function of the response $Y_{it}$ given covariates $\vecx_{it}$ for individual $i$ satisfies
		\[
		q_{i, \tau}( \vecz_{it}) = \alpha_i(\tau) + \vecx_{it}^\top\vecb_i(\tau) = \vecz_{it}^\top\vecg_i(\tau)\,,
		\]  
		where $\alpha_i(\tau)\in\R$ are unrestricted and we search for a group structure on $\vecb_i(\tau)\in\rp$. 
	\end{example}
	
	This setting was also considered in \cite{zhang2019}, \cite{Leng}. \cite{zhang2019} propose an iterative algorithm based on the $k$-mean algorithm in \cite{BM} to learn group structure. \cite{Leng} use a $k$-means type of iterative algorithm, but allow for time fixed effect while grouping both the individual fixed effects and the slope coefficients. This model will be considered in Section~\ref{sec:data} where coefficients of the panel quantile regression model will be utilized to analyze heterogeneous relationship between income and pollution level among different states in  the US.

	\begin{example}[\textit{Quantile regression with joint slope and grouping on intercepts}] \label{ex:qrint}
		Assume that the conditional quantile function of response $Y_{it}$ given covariates $\vecx_{it}$ for individual $i$ is
		\[
		q_{i, \tau}( \vecx_{it}) = \alpha_i(\tau) + \vecx_{it}^\top\vecb(\tau)\,,
		\]  
		where the vector of slope coefficients $\vecb(\tau)\in\rp$ is assumed to be the same across individuals. 
	\end{example}  
	
	This model was first considered in \cite{koenker2004quantile}, who proposed to regularize the individual fixed effects via $\ell_1$ penalization. \cite{Lamarche} considers the optimal choice of the penalty parameters in this approach. There has been an active literature on panel data quantile regression, mainly focusing on estimation of the common parameters $\beta(\tau)$ (e.g., \cite{kato2012asymptotics}, \cite{galvao2016}, \cite{harding2017} and \cite{galvao2018}). \cite{zhang2019subgroup} and \cite{gu2019panel} consider group structure on $\alpha_i(\tau)\in\R$. %

	\section{Theoretical Analysis}

\subsection{Generic spectral clustering results} \label{sec:theoryhighlevel}
	
In this section, we provide high-level conditions on the estimators $\hat \vecb_i\in\rp$ and $\hat \Sigma_{i,j}\in\rpp$ which ensure that the correct group structure is recovered with probability tending to one as $n,T$ tend to infinity. Formally, assume that the true coefficients $\vecb_1,...,\vecb_n$ take $G^*$ different values, say $\vecb_1^*,...,\vecb_{G^*}^*$ and the true group membership is given by
\[
\vecb_i = \vecb_k^* \quad \Leftrightarrow \quad i \in I^*_k, \quad k = 1,...,G^*\,,
\]   
where $I^*_k\subseteq\{1,\dots,n\}, k=1,\dots,G^*$ denote the true underlying groups. Naturally, we assume $I^*_k \cap I^*_\ell = \emptyset$ for $k \neq \ell$. We begin by providing an analytical non-asymptotic result which guarantees perfect classification in terms of certain abstract quantities. More precisely, define
\begin{align*}
A_{1,max} &:= \max_{i,j \mbox{ in different groups}} \hat A_{ij}\,,
\\
A_{0,min} &:= \min_{i,j \mbox{ in same group}} \hat A_{ij}\,
\\
A_{0,max} &:= \max_{i,j \mbox{ in same group}} \hat A_{ij}\,.
\end{align*}

\begin{theorem}\label{th:specnonasy}
A sufficient condition for perfect classification is 
\begin{equation}\label{eq:nonasyperfectA}
\frac{A_{1,max}}{A_{0,min}}  \sqrt{\frac{A_{0,max}^3}{A_{0,min}^3}}
\le 2^{-8.5} (n G^*)^{-1/2} \sqrt{\frac{\min_k |I_k^*|^3}{n \max_k |I_k^*|^2} }
\end{equation}
\end{theorem}	

Theorem~\ref{th:specnonasy} holds for fixed $n,T$ and is proved in a purely analytic way. The result does not assume anything about temporal or cross--sectional dependence. On a high level, this result corresponds to intuition as the inequality in~\eqref{eq:nonasyperfectA} becomes more difficult to satisfy for a larger number of groups $G^*$ or when groups have more unbalanced sizes leading to a larger ratio $\max_k|I_k^*|/\min_k |I_k^*|$. Having more individuals (larger $n$) also intuitively makes the problem harder. In order to achieve perfect classification, a large minimal dissimilarity between individuals from different groups, i.e. a small $A_{1,max}$, relative to $A_{0,min}$, is required. The ratio $\frac{A_{0,max}}{A_{0,min}}$ describes the spread of similarity measures among individuals that belong to the same group. Having a large spread here makes the problem harder, which again corresponds to intuition. Note that this is only a sufficient condition, and sharper results might be possible.  However, we are not aware of any necessary and sufficient conditions guaranteeing the success of spectral clustering or sharp expansions for the proportion of correctly grouped units. %

\begin{remark}\label{rem:compmeth}
The proof relies on the type of arguments that appeared in earlier work on spectral clustering, in particular \cite{ng2002}, \cite{von2007} and \cite{vandelft2021}. However, the setting we consider is different from any of the works mentioned above and the arguments need to be modified accordingly. The work of \cite{vandelft2021} is closest in spirit, but our analysis is complicated by the fact that we allow the number of individuals $n$ to diverge while the number of entities to be clustered was fixed in \cite{vandelft2021}. In order to deal with this complication, we leverage the fact that our construction of the similarity matrix gives rise to the different order for the diagonal blocks and off-diagonal blocks of the empirical Laplacian matrix. Taking advantage of this difference in order together with spectral information contained in the diagonal blocks of the empirical Laplacian matrix allows us to handle a diverging number of individuals.
\end{remark}
	
Below, we will provide more specific assumptions on the minimal separation of group centers and quality of initial estimators which guarantee that the probability of the events in~\eqref{eq:nonasyperfectA} tend to one. In the assumptions below, we allow for data from triangular arrays where the values of $\vecb_{i}$ and $\Sigma_{i,j}$ change with $n,T$. To keep the presentation simple this is not emphasized in the notation. We also allow the number of groups $G^*$ to grow with $n$. %

\begin{assumption}\label{asm:est_vec}
The estimators $\estb_i$ are uniformly consistent with rate $a_{n,T}$, i.e. 
\[
a_{n,T} := \sup_{i\in\{1,\dots,n\}} \,\norm{\estb_i-\vecb_i}_2 = o_\mpr(1). %
\]
\end{assumption}

	\begin{assumption}\label{asm:est_cov}
	There exists a sequence $b_T \to \infty$ and matrices $\Sigma_{i,j}$ (which may depend on $n,T$) such that
	$$
	\sup_{i,j} \normspec{b_T \hat \Sigma_{i,j} - \Sigma_{i,j}}  = o_\mpr(1)\,,
	$$
	where $\normspec{\cdot}$ denotes the spectral norm.
	Moreover, assume
	\begin{equation}\label{eq:ev-sigmaij}
		0<m < \lambda_{\min}(\Sigma_{i,j})\le \lambda_{\max}(\Sigma_{i,j}) < M<\infty \quad \forall i\neq j \in\{1,\dots,n\} 
	\end{equation}
	with some fixed constants $0 < m \leq M < \infty$ that do not depend on $n,T$. 
\end{assumption}

Assumptions~\ref{asm:est_vec} and~\ref{asm:est_cov} impose minimal restrictions on the quality of the initial estimates $\estb_i$ and $\hat \Sigma_{i,j}$. We emphasize that the matrices $\Sigma_{i,j}$ in Assumption~\ref{asm:est_cov} are not required to be equal to the true asymptotic covariance matrices of $\estb_i - \estb_j$ for the theory to work. This is a useful result because in some environments researchers only have access to individual estimates and the associated coordinate-by-coordinate standard deviation, but the covariances estimate are missing. In such cases, our method can still be used by setting the off-diagonal elements of $\hat \Sigma_{i,j}$ to zero.  Assumption \ref{asm:est_cov} will hold provided that the variance estimators on the diagonal converge to non-negative values. While setting off-diagonal entries to zero might not be optimal in the asymptotic setting of Theorem~\ref{thm:loc}, simulations indicate that in finite samples the performance can be close to using consistent estimates of the covariance. When covariances are difficult to estimate, using only the diagonal entries can even enhance finite sample performance as we will see in Section~\ref{sec:simlog}. Similarly, this assumption can be satisfied if there is dependence across individuals but this dependence is ignored when estimating the covariance of $\estb_i - \estb_j$. Again, ignoring this dependence will not lead to procedures with best possible performance but might work reasonably well if the dependence across individuals is mild.

In all examples we consider later the individuals will be assumed independent and the estimators $\estb_i$ will satisfy 
$
\sqrt{T}(\estb_i - {\vecb_i}) \stackrel{\mathcal{D}}{\longrightarrow} \mathcal{N}_p(0,\Sigma_i), i = 1,\dots,n.
$
By independence among individuals, the weak convergence above holds jointly for any given pair of individuals and the corresponding limits will be independent. In this case, we will set $b_T = T$, $\Sigma_{i,j} := \Sigma_i+\Sigma_j$, $\hat\Sigma_{i,j} := \hat \Sigma_i + \hat\Sigma_j$ where $T\hat\Sigma_i$ will be consistent estimators of $\Sigma_i$. 

The bound in $a_{n,T}$ is uniform over a potentially growing number of individuals $n$ and typically be of the form $a_{n,T} = \bigo_\mpr(\sqrt{T^{-1}\log n})$ where the additional $\sqrt{\log n}$ factor is to ensure uniformity. 

We now have the following result
\begin{theorem}\label{thm:clusternew}
Under Assumptions~\ref{asm:est_vec}, \ref{asm:est_cov} let $\Delta_{min} := \min_{k \neq \ell} \|\vecb_k^* - \vecb_\ell^*\|_2$. Assume that $a_{n,T} = o_\mpr(\Delta_{min})$, $n \geq 3$ and 
\begin{equation}\label{eq:assminsep}
\log n = o(b_T^{1/2}\Delta_{min}).  
\end{equation}
Then the true group structure is recovered with probability tending to one as $T \to \infty$.
\end{theorem}

In order to achieve perfect classification with probability going to one, Theorem~\ref{thm:clusternew} requires lower bounds on the minimal separation $\Delta_{min}$ which is required to grow faster than the uniform estimation error and than $b_T^{-1/2}\log n$. In the special setting discussed above the Theorem where $b_T = T, a_{n,T} = \bigo_\mpr(\sqrt{T^{-1}\log n})$, this corresponds to assuming that $\Delta_{min} \gg T^{-1/2}\log n$. For groups with fixed separation across centers where $\Delta_{min}$ is a constant, this leads to the requirement $\log n = o(T^{1/2})$ which allows the number of individuals to grow very quickly with $n$. If the minimal separation tends to zero, the requirements on $\log n$ relative to $\sqrt{T}$ become more stringent. 

Given the non-asymptotic bound in~\eqref{eq:nonasyperfectA}, it would also be possible to conduct a more detailed analysis in the case where the orders of $\Delta_{min}$ and $a_{n,T}$ match but $\Delta_{min}$ is sufficiently large so as to dominate a constant multiple of $a_{n,T}$ with a certain probability. Such an analysis would reveal more nuanced view on the role of $G^*$ and $\max_k |I_k^*|, \min_k |I_k^*|$ but does not lead to any specific insights except that large $G^*$ and imbalanced groups make the problem harder.

\subsection{Verification of high level conditions for specific examples} \label{sec:theoryexamples}
	
	In this section, we provide specific conditions in Example~\ref{ex:logreg}--Example~\ref{ex:qrslope} which guarantee that the high-level conditions~$\ref{asm:est_vec}$ and~\ref{asm:est_cov} are satisfied. The set of examples that we consider is by no means exhaustive for the possible applications of our methodology. Rather, it is intended as a demonstration that our high-level conditions can be verified in several different settings including the presence of individual-specific and joint parameters, binary outcomes, and non-smooth objective functions.   
	
	\subsubsection{Logistic regression with individual-specific intercepts and grouping on the slopes (Example~\ref{ex:logreg})}\label{sec:logtheory}
	The coefficient vector $\vecg_i^\top := (\alpha_i,\vecb_i^\top)$ is estimated via maximum likelihood, i.e.
	\[
	\est_i := \argmax_{\vecg\in\R^{p+1}} \frac{1}{T}\sum_{t=1}^T\Big[Y_{it}\vecz_{it}^\top\vecg-\log(1+\exp(\vecz_{it}^\top\vecg))\Big], \quad i = 1,\dots,n\,.
	\] 
	{The exact form of the asymptotic variance differs depending on whether the data exhibit temporal dependence. We begin by discussing the case that the observations $(\vecx_{it},Y_{it})$ are i.i.d. across $t$ and independent across $i$ and discuss the case with temporal dependence across $t$ later in this section. Throughout, the values of $\vecg_i^*$ are allowed to depend on $n,T$.} 
	
	Recall that in the i.i.d. case under standard assumptions the estimator $\hat \gamma_i$ is asymptotically normal with asymptotic variance given by
	$$
	\widetilde{\Sigma}_i = \Biggl( \E\Big[ \frac{e^{\vecz_{i1}^\top\vecg_i^*}}{(1+e^{\vecz_{i1}^\top\vecg_i^*})^2}\vecz_{i1}\vecz_{i1}^\top  \Big] \Biggr)^{-1}.
	$$
	The canonical plug-in estimator of~$\widetilde{\Sigma}_i$ takes the form
	$$
	\hat{\widetilde \Sigma}_i = \Biggl(\frac{1}{T}\sum_{t=1}^T \frac{e^{\vecz_{it}^\top\hat\vecg_i}}{(1+e^{\vecz_{it}^\top\hat\vecg_i})^2}\vecz_{it}\vecz_{it}^\top\Biggr)^{-1}.
	$$
	Denote by $\check \Sigma_i$ the lower $p\times p$ sub-matrix of $\hat{\widetilde \Sigma}_i$. Then we set
	\begin{equation}\label{eq:hatsigma-log}
		\hat \Sigma_{i,j} := T^{-1}(\check \Sigma_i + \check \Sigma_j)\,.
	\end{equation}

	Consider the following assumptions.
	\begin{assumption}\label{asm:logistic_eigenvalue} Assume that for a constant $L >0$ independent of $i,n,T$
	\[
	1/L<\Big\{\lambda_{\min}\bigl( \E\bigl[ \vecz_{i1}\vecz_{i1}^\top \bigr]\bigr)\Big\}
	< \Big\{\lambda_{\max}\bigl( \E\bigl[ \vecz_{i1}\vecz_{i1}^\top \bigr]\bigr)\Big\}
	< L
	\]
	and there exists $\kappa_1 < \infty$ independent of $n,T$ such that $\sup_i \|\vecg_i^*\| \leq \kappa_1$.
	\end{assumption}

	\begin{assumption}\label{asm:logistic_bounded}
		Assume $\sup_{i,t}\{\,\norm{\vecz_{it}}_2\}< \kappa <\infty$ a.s. for a constant $\kappa$ that does not depend on $n,T$.
	\end{assumption}

	Assumption~\ref{asm:logistic_eigenvalue} places mild restrictions on the design matrix. The boundedness condition in Assumption~\ref{asm:logistic_bounded} is made for the sake of simplicity; it can be relaxed to designs with bounded moments at the cost of additional technicalities in the proofs.

	\begin{theorem}\label{mle_est}
		Assume Assumptions~\ref{asm:logistic_eigenvalue} and \ref{asm:logistic_bounded} hold, that data are i.i.d. across $t$ and independent across $i$, and $T \to \infty, \log n/T \to 0$. \\
		\noindent (i)  It holds that
		\begin{equation}\label{eq:log-mle1}
			\sup_{i\in\{1,\dots,n\}}\normtwo{\hat\vecg_i-\vecg_i^*}=\bigo_\mpr\Bigg(\sqrt{\frac{\log n}{T}}\Bigg)\,.
		\end{equation}
		\noindent (ii)
		Under the same assumptions the estimators $\hat \Sigma_{i,j}$ in~\eqref{eq:hatsigma-log} satisfy 
		\begin{equation}\label{eq:log-mle2}
			\sup_{i \neq j} \normspec{T\hat \Sigma_{i,j} - \Sigma_{i,j}} = o_\mpr(1)\,,
		\end{equation}
		\noindent where $\Sigma_{i,j}$ denotes the lower $p\times p$ submatrix of $\widetilde \Sigma_i + \widetilde \Sigma_j$. Furthermore $\Sigma_{i,j}$ satisfy~\eqref{eq:ev-sigmaij}\,.
	\end{theorem}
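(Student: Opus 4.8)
The plan is to establish the three assertions in turn. Throughout, set $w(u):=e^u/(1+e^u)^2$ for the logistic variance function, $H_i:=\E[w(\vecz_{it}^\top\vecg_i^*)\vecz_{it}\vecz_{it}^\top]$ (so that $\widetilde\Sigma_i=H_i^{-1}$), and $\hat H_i:=T^{-1}\sum_{t=1}^T w(\vecz_{it}^\top\hat\vecg_i)\vecz_{it}\vecz_{it}^\top$ (so that $\hat{\widetilde\Sigma}_i=\hat H_i^{-1}$); write $\Sigma_i$ for the lower $p\times p$ block of $\widetilde\Sigma_i$, so that $\Sigma_{i,j}=\Sigma_i+\Sigma_j$ and $T\hat\Sigma_{i,j}=\widehat\Sigma_i+\widehat\Sigma_j$. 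I will use repeatedly that $0\le w\le 1/4$ and $\|w'\|_\infty\le 1/4$, and that under the maintained assumptions the true parameter vectors are uniformly bounded, $\sup_i\normtwo{\vecg_i^*}\le B<\infty$; combined with Assumption~\ref{asm:logistic_bounded} this gives $\sup_{i,t}|\vecz_{it}^\top\vecg_i^*|\le\kappa B$ a.s., hence $w(\vecz_{it}^\top\vecg_i^*)\ge w(\kappa B)=:c_0>0$ a.s. (as $w$ is symmetric and decreasing in $|u|$), and therefore, by Assumption~\ref{asm:logistic_eigenvalue}, $c\,I\preceq H_i\preceq\tfrac14\kappa^2 I$ uniformly in $i$ with $c:=c_0\inf_i\lambda_{\min}(\E[\vecz_{it}\vecz_{it}^\top])>0$.

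\textbf{Proof of~\eqref{eq:log-mle1}.} I would argue as in standard uniform M-estimation for smooth concave losses. The gradient $\nabla\ell_{i,T}(\vecg_i^*)=T^{-1}\sum_t(Y_{it}-p_{it}^*)\vecz_{it}$ (with $p_{it}^*:=e^{\vecz_{it}^\top\vecg_i^*}/(1+e^{\vecz_{it}^\top\vecg_i^*})$) is an average of i.i.d.\ mean-zero vectors of norm $\le\kappa$, so a coordinatewise Bernstein inequality plus a union bound over the $n$ individuals and the $p+1$ coordinates gives $\sup_i\normtwo{\nabla\ell_{i,T}(\vecg_i^*)}=\bigo_\mpr(\sqrt{\log n/T})$. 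The negative Hessian $\hat G_{i,T}(\vecg):=T^{-1}\sum_t w(\vecz_{it}^\top\vecg)\vecz_{it}\vecz_{it}^\top$ satisfies $\sup_i\normspec{\hat G_{i,T}(\vecg_i^*)-H_i}=o_\mpr(1)$ by matrix Bernstein plus a union bound, and its variation over a ball is controlled \emph{deterministically} by $\normspec{\hat G_{i,T}(\vecg)-\hat G_{i,T}(\vecg_i^*)}\le\|w'\|_\infty\kappa^3\normtwo{\vecg-\vecg_i^*}$; hence, choosing a fixed radius $r$ with $\|w'\|_\infty\kappa^3 r<c/4$, the event $\{\inf_i\inf_{\vecg\in B(\vecg_i^*,r)}\lambda_{\min}(\hat G_{i,T}(\vecg))\ge c/2\}$ has probability tending to one. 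On that event a second-order Taylor expansion shows $\ell_{i,T}(\vecg)<\ell_{i,T}(\vecg_i^*)$ for $\vecg\in\partial B(\vecg_i^*,r)$ once $T$ is large, so by concavity each $\hat\vecg_i$ exists, is unique, and lies in $B(\vecg_i^*,r)$; then $\nabla\ell_{i,T}(\hat\vecg_i)=0$ together with the curvature lower bound gives $\normtwo{\hat\vecg_i-\vecg_i^*}\le(2/c)\normtwo{\nabla\ell_{i,T}(\vecg_i^*)}$ uniformly in $i$, which is~\eqref{eq:log-mle1}.

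\textbf{Proof of~\eqref{eq:log-mle2}.} I would decompose $\hat H_i-H_i=T^{-1}\sum_t(w(\vecz_{it}^\top\hat\vecg_i)-w(\vecz_{it}^\top\vecg_i^*))\vecz_{it}\vecz_{it}^\top+(T^{-1}\sum_t w(\vecz_{it}^\top\vecg_i^*)\vecz_{it}\vecz_{it}^\top-H_i)$. The first term has spectral norm $\le\|w'\|_\infty\kappa^3\normtwo{\hat\vecg_i-\vecg_i^*}$, hence is $\bigo_\mpr(\sqrt{\log n/T})$ uniformly by the previous step; the second is $o_\mpr(1)$ uniformly by matrix Bernstein and a union bound. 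So $\sup_i\normspec{\hat H_i-H_i}=o_\mpr(1)$, and on $\{\sup_i\normspec{\hat H_i-H_i}\le c/2\}$ the resolvent identity $\hat H_i^{-1}-H_i^{-1}=\hat H_i^{-1}(H_i-\hat H_i)H_i^{-1}$ with $\normspec{\hat H_i^{-1}}\le 2/c$ and $\normspec{H_i^{-1}}\le 1/c$ gives $\sup_i\normspec{\hat{\widetilde\Sigma}_i-\widetilde\Sigma_i}=o_\mpr(1)$. Passing to the $p\times p$ principal block does not increase the spectral norm, so $\sup_{i\neq j}\normspec{T\hat\Sigma_{i,j}-\Sigma_{i,j}}\le\sup_i\normspec{\widehat\Sigma_i-\Sigma_i}+\sup_j\normspec{\widehat\Sigma_j-\Sigma_j}\le 2\sup_i\normspec{\hat{\widetilde\Sigma}_i-\widetilde\Sigma_i}=o_\mpr(1)$.

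\textbf{Proof of~\eqref{eq:ev-sigmaij} and the main obstacle.} From $c\,I\preceq H_i\preceq\tfrac14\kappa^2 I$ the matrix $\widetilde\Sigma_i+\widetilde\Sigma_j=H_i^{-1}+H_j^{-1}$ has all eigenvalues in $[8/\kappa^2,\,2/c]$, and Cauchy interlacing then puts the eigenvalues of its principal $p\times p$ submatrix $\Sigma_{i,j}$ in the same interval, which gives~\eqref{eq:ev-sigmaij}. The main obstacle I anticipate is the uniformity over $i=1,\dots,n$ when $n$ may grow almost exponentially in $T$: only $\log n/T\to0$ is available, so second-moment bounds are too weak and one must pair exponential (Bernstein / matrix-Bernstein) tail inequalities for $\nabla\ell_{i,T}(\vecg_i^*)$ and $\hat G_{i,T}(\vecg_i^*)$ with union bounds over the $n$ individuals, choosing the constant multiplying $\sqrt{\log n/T}$ large enough to beat the factor $n$. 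A smaller but real point is guaranteeing that the MLE exists and stays in a fixed-radius neighborhood uniformly over $i$; this is where concavity plus the curvature lower bound is used, and crucially the variation of the empirical Hessian over the neighborhood is handled by the deterministic Lipschitz bound above, so no $n$- or $T$-dependent covering is needed. I also flag that the argument uses $\sup_i\normtwo{\vecg_i^*}<\infty$, which I take to be part of the standard regularity conditions behind the stated asymptotic normality of the per-individual maximum likelihood estimator; the remaining computations (Bernstein constants, the Taylor remainder, the interlacing step) are routine.
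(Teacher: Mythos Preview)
Your proposal is correct and takes a genuinely different, more elementary route than the paper's proof. The paper establishes~\eqref{eq:log-mle1} via an empirical-process argument: it controls the directional derivative of $\mathbb M_{i,T}$ on a sphere of radius $\tilde C\sqrt{\log n/T}$ by bounding $\sup_{\|\vecg-\vecg_i^*\|\le 1}\|\nabla\mathbb M_{i,T}(\vecg)-\nabla\mathbb M_i(\vecg)\|$ through bracketing entropy of the class $\{h_\vecg^j\}$ and a Talagrand-type concentration inequality (Theorem~2.14.25 in van~der~Vaart--Wellner), then applies a union bound over $i$. For~\eqref{eq:log-mle2} it repeats the same bracketing/concentration machinery on the Hessian-type class $\{h_\vecg^{j,\ell}\}$. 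You instead evaluate the score and Hessian only at the truth $\vecg_i^*$, use scalar and matrix Bernstein directly (which suffices because $\|\vecz_{it}\|\le\kappa$ a.s.), and absorb all $\vecg$-dependence into the deterministic Lipschitz bound $\normspec{\hat G_{i,T}(\vecg)-\hat G_{i,T}(\vecg_i^*)}\le\|w'\|_\infty\kappa^3\|\vecg-\vecg_i^*\|$; the localisation then comes from strong concavity rather than from a boundary-derivative argument. Your approach is shorter and avoids empirical-process theory altogether; the paper's approach, by contrast, would generalise more readily to unbounded designs or to settings where the deterministic Lipschitz trick is unavailable. Your treatment of~\eqref{eq:ev-sigmaij} via Cauchy interlacing is also cleaner than the paper's, which only bounds the eigenvalues of $H_i$ and leaves the passage to the $p\times p$ block implicit. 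Your flag that $\sup_i\|\vecg_i^*\|<\infty$ is being used implicitly is accurate: the paper invokes the same quantity as $\kappa_1:=\max_i\|\vecg_i^*\|_2$ without listing it as a formal assumption.
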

	Theorem~\ref{mle_est} implies that Assumptions~\ref{asm:est_vec} and~\ref{asm:est_cov} hold with $a_{n,T} = \sqrt{T^{-1}\log n}$, $b_T = T$ and $\Sigma_{i,j}$ corresponding to the scaled asymptotic variance matrix of $\hat\vecb_i - \hat\vecb_j$. In particular,~\eqref{eq:assminsep} is satisfied provided that $\Delta_{min} \gg (\log n)/\sqrt{T}$. 
	
	Note that the results directly imply that Assumptions~\ref{asm:est_vec} and~\ref{asm:est_cov} continue to hold for any sub-vectors of $\hat\vecg_i$. This covers settings where we want to leave some coefficients individual-specific and only perform grouping on a part of the full coefficient vector. 
	
	{We now proceed to consider the case of temporal dependence. 
		\begin{assumption}
			\label{B1} For each $i \geq 1$, the process $(\vecx_{it},Y_{it})_{t \in \Z}$ is strictly stationary and $\beta$-mixing. 
			Let $\beta_{i}(j)$ denote the $\beta$-mixing coefficient of the process $(\vecx_{it},Y_{it})_{t \in \Z}$. Assume that there exist constants $b_\beta \in (0,1), C_\beta > 0$ independent of $i,n,T$ such that
			\begin{equation*}
				\sup_i \beta_i(j) \leq \beta(j), \quad \forall j \geq 1,
			\end{equation*}
			where $\beta(j):=C_{\beta} b_\beta^j.$
		\end{assumption}
		Such exponential mixing assumptions are often made in the literature, see for instance \cite{kato2012asymptotics} and \cite{galvao2018} in the context of quantile regression. }
	
	{
		The available data $(\vecx_{it},Y_{it})_{t = 1,\dots, T}$ are an observed stretch from the strictly stationary process $(\vecx_{it},Y_{it})_{t \in \Z}$. Under this assumption, the asymptotic variance of the estimator~$\est_i$ is of the form
		\[
		\widetilde{\Sigma}_i=B_i^{-1}H_iB_i^{-1}
		\]
		with
		\begin{align*}
			B_i&:= \E\Bigg[ \frac{e^{\vecz_{i1}^\top\vecg_i^*}}{(1+e^{\vecz_{it}^\top\vecg_i^*})^2}\vecz_{i1}\vecz_{i1}^\top  \Bigg] \\
			H_i&:= \E\big[\vecw_{i1}\vecw_{i1}^\top \big]+\sum_{j=1}^{\infty}\E \big[\vecw_{i1}\vecw_{i,1+j}^\top+\vecw_{i,1+j}\vecw_{i1}^\top \big] \,,
		\end{align*}
		where $\vecw_{it}:=Y_{it}\vecz_{it}-\frac{e^{\vecz_{it}^\top\vecg_i^*}\vecz_{it}}{1+e^{\vecz_{it}^\top\vecg_i^*}}$.
		A possible sandwich estimator of the asymptotic variance~$\widetilde{\Sigma}_i$ takes the form
		\[
		\hat{\widetilde \Sigma}_i = \widehat{B}_{iT}^{-1}\widehat{H}_{iT}\widehat{B}_{iT}^{-1}\,,
		\]
		where
		\begin{align*}
			\widehat{B}_{iT}&=  \frac{1}{T}\sum_{t=1}^T \frac{e^{\vecz_{it}^\top\hat\vecg_i}}{(1+e^{\vecz_{it}^\top\hat\vecg_i})^2}\vecz_{it}\vecz_{it}^\top \\
			\widehat{H}_{iT}& =  \frac{1}{T}\sum_{t=1}^T  \widehat \vecw_{it} \widehat \vecw_{it}^\top 
			+\sum_{1\le j\le m_T} \Big(1-\frac{j}{T}\Big) \Bigg(\frac{1}{T}\sum_{t=1}^{T-j}\big( \widehat \vecw_{it} \widehat \vecw_{i,t+j}^\top + \widehat \vecw_{i,t+j} \widehat \vecw_{it}^\top \big) \Bigg)\\
			\widehat \vecw_{it}& = Y_{it}\vecz_{it}-\frac{e^{\vecz_{it}^\top\est_i}\vecz_{it}}{1+e^{\vecz_{it}^\top\est_i}}\,,
		\end{align*}
		and $m_T>0$ denotes the bandwidth parameter tending to be infinity as $T$ goes to infinity.
		Denote by $\widehat \Sigma_i$ the lower $p\times p$ sub-matrix of $\hat{\widetilde \Sigma}_i$. Then we set
		\begin{equation}\label{eq:hatsigma-log1}
			\hat \Sigma_{i,j} := T^{-1}(\widehat \Sigma_i + \widehat \Sigma_j)\,.
		\end{equation}
	}
	
	{
		\begin{theorem}\label{mle_est1}
			Let Assumptions~\ref{asm:logistic_eigenvalue}, \ref{asm:logistic_bounded} and~\ref{B1} hold.
			Assume $T$ grows at most polynomially in $n$ and $(\log n)^3 = o(T)$. Assume that the smallest and largest eigenvalues of $H_i$ are bounded away from zero an infinity uniformly in $n,T$. \\
			\noindent (i)  It holds that
			\begin{equation}\label{eq:log-mle1}
				\sup_{i\in\{1,\dots,n\}}\normtwo{\hat\vecg_i-\vecg_i^*}=\bigo_\mpr\Bigg(\sqrt{\frac{\log n}{T}}\Bigg)\,.
			\end{equation}
			\noindent (ii)
			In addition, if $m_T\to\infty$ as $T \to \infty$ and $\frac{m_T^3\log (n\vee m_T)}{T}=o(1)$, the estimators $\hat \Sigma_{i,j}$ satisfy 
			\begin{equation}\label{eq:log-mle2}
				\sup_{i \neq j} \normspec{T\hat \Sigma_{i,j} - \Sigma_{i,j}} = o_\mpr(1)\,,
			\end{equation}
			\noindent where $\Sigma_{i,j}$ denotes the lower $p\times p$ submatrix of $\widetilde \Sigma_i + \widetilde \Sigma_j$. Furthermore $\Sigma_{i,j}$ satisfy~\eqref{eq:ev-sigmaij}\,.
		\end{theorem}
		Theorem~\ref{mle_est1} implies that Assumptions~\ref{asm:est_vec} and~\ref{asm:est_cov} hold with $a_{n,T} = \sqrt{T^{-1}\log n}$, $b_T = T$ and $\Sigma_{i,j}$ corresponding to the scaled asymptotic variance matrix of $\hat\vecb_i - \hat\vecb_j$. 
		 In particular,~\eqref{eq:assminsep} is satisfied provided that $\Delta_{min} \gg (\log n)/\sqrt{T}$ and we need the additional condition $(\log n)^3 = o(T)$. }

	\subsubsection{Quantile regression with individual-specific intercepts and grouping on the slopes (Example~\ref{ex:qrslope})} \label{sec:qrslope}

	Consider the quantile regression panel data model 
	\begin{equation*}
		q_{i,\tau}(\vecz_{it})= \vecz_{it}^\top\vecg^*_i(\tau), ~~t=1,\dots,T, i=1,\dots,n\,,
	\end{equation*}
	where $q_{i,\tau}(\vecz_{it}):=\inf\bigl\{y: \mpr(\response_{it}<y|\vecz_{it})\ge\tau\bigr\}$ is the conditional $\tau\,$-\,quantile of $\response_{it}$ given $\vecz_{it}.$

	We will {first assume that $(\vecz_{it},Y_{it})$ are i.i.d. across $t$ for each $i$ and independent across $i$. An extension to temporal dependence as in Assumption~\ref{B1} will be considered below. The distribution of $(\vecz_{it},Y_{it})$ and the values of $\vecg_i$ are allowed to vary with $n,T$.} 
	
	Consider the quantile regression estimator~$\est_i^\top=(\hat\alpha_i,\hat\vecb_i^\top):$
	\begin{equation*}
		\est_i :=\argmin_{\vecg\in\R^{p+1}}\frac{1}{T}\sum_{t=1}^T \rho_{\tau} (\response_{it} - \vecz_{it}^\top\vecg)\,,
	\end{equation*}
	where $\rho_{\tau}(u):=\{\tau-\1(u\le 0)\}u$ denotes the check function.

	Under mild regularity assumptions (in particular, this is true under Assumptions~\ref{A1}--\ref{A3} given below) this estimator is asymptotically normal with asymptotic covariance matrix of the form $\widetilde\Sigma_i = B_i^{-1}H_iB_i^{-1}$ where
	\begin{equation}
		\label{eq:AiBi}
		H_i := \tau (1-\tau) \E[\vecz_{i1}\vecz_{i1}^\top], \quad B_i = \E[f_{\response_{i1}\mid \vecz_{i1}}(q_{i,\tau}(\vecz_{i1})\mid \vecz_{i1})\vecz_{i1}\vecz_{i1}^\top]\,,
	\end{equation}
	with $f_{\response_{i1}|\vecz_{i1}}(y|\vecz)$ as the density function of the conditional distribution $F_{\response_{i1}|\vecz_{i1}}(y|\vecz).$ 
	
	A common way to estimate $\tilde\Sigma_{i}$ uses the Hendricks-Koenker sandwich covariance matrix estimator (\cite{HK1992}) which takes the following form
	\begin{equation}
		\label{eq:HKvar}
		\hat{\widetilde\Sigma}_{iT} := \widehat{B}_{iT}^{-1} \widehat{H}_{iT} \widehat{B}_{iT}^{-1}\,,~~\text{with}
	\end{equation}
	\begin{align*}
		\label{eq3}
		\widehat{B}_{iT} :=  \frac{1}{T} \sum_{t=1}^{T} \widehat{f}_{it} \vecz_{it}\vecz_{it}^\top\,,
		~\widehat{H}_{iT} := \tau (1-\tau)\frac{1}{T} \sum_{t=1}^{T}  \vecz_{it}\vecz_{it}^\top\,,
		~\widehat{f}_{it}  := \frac{2d_T}{\vecz_{it}^\top (\est_{i}(\tau + d_T) - \est_{i}(\tau - d_{T}))}\,.
	\end{align*}
	Here $d_T$ denotes a smoothing parameter that should converge to zero at an appropriate rate. Let $\hat \Sigma_{iT}$ denote the lower $p \times p$ submatrix of $\hat{\widetilde\Sigma}_{iT}$
	and set 
	\begin{equation}\label{eq:hatsigma}
		\hat \Sigma_{i,j} := T^{-1}\Big(\hat \Sigma_{iT} + \hat \Sigma_{jT}\Big)\,.
	\end{equation}
	
	We now verify Assumptions~\ref{asm:est_vec} and~\ref{asm:est_cov}, under the following conditions.\\
	
	\begin{assumption}
		\label{A1} Assume that $\normtwo{\vecz_{it}}\leq \kappa < \infty$, and that $c_{\lambda}\leq\lambda_{\min}(\E [\vecz_{it} \vecz_{it}^\top])\leq\lambda_{\max}(\E [\vecz_{it} \vecz_{it}^\top])\leq C_{\lambda}$ holds uniformly in  $i$ for some fixed constants $c_{\lambda}>0$ and $\kappa,C_{\lambda} <\infty$ that are independent of $n,T$.\\
	\end{assumption}
	
	\begin{assumption}
		\label{A2} {Define $\mathcal{Z}:=[-\kappa,\kappa]^{p+1}$.}The conditional distribution $F_{\response_{i1}|\vecz_{i1}}(y|\vecz)$ is twice differentiable w.r.t. $y$, with the corresponding derivatives $f_{\response_{i1}|\vecz_{i1}}(y|\vecz)$ and $f'_{\response_{i1}|\vecz_{i1}}(y|\vecz)$. 
		Assume that 
		$$
		\sup_i \sup_{y \in \mathbb{R},\vecz\in \mathcal{Z}} |f_{\response_{i1}|\vecz_{i1}}(y|\vecz)| \leq f_{max} < \infty, \quad
		\sup_i \sup_{y \in \mathbb{R},\vecz\in \mathcal{Z}} |f'_{\response_{i1}|\vecz_{i1}}(y|\vecz)| \leq \overline{f'}  < \infty.
		$$
		where $f_max, \overline{f'}$ are independent of $n,T$.
	\end{assumption}
	
	\begin{assumption}
		\label{A3} Denote by $\mathcal{T}$ an open neighborhood of $\tau$. Assume that uniformly across $i$, there exists a constant $f_{\min} < f_{\max}$ independent of $n,T$ such that 
		$$
		0 < f_{\min} \leq \inf_i \inf_{\eta \in \mathcal{T}} \inf_{\vecz \in \mathcal{Z}} f_{\response_{i1}|\vecz_{i1}}(q_{i,\eta}(\vecz)|\vecz)\,.
		$$
	\end{assumption}
	
	\begin{assumption}
		\label{A4} 
		Assume that $d_T = o(1)$, as $T \to \infty$ and 
		$$
		\frac{\log(nT)}{T d_T^{4/3}} = o(1)\,.
		$$
	\end{assumption}
	
	Assumptions~\ref{A1}-\ref{A4} are fairly standard in the quantile regression literature and have been imposed in \cite{kato2012asymptotics} and \cite{galvao2018} among many others.

	\begin{theorem}\label{quantreg_est}
		Let Assumptions~\ref{A1}-\ref{A3} hold.
		Assume $\log n = o(T)$ and $\min(n,T) \to \infty$. Assume that the data are i.i.d. across $t$ and independent across $i$.\\
		\noindent (i)  It holds that~
		$$
		\sup_{i\in\{1,\dots,n\} }\normtwo{\est_i-\vecg^*_i}=\bigo_\mpr\biggl(\sqrt{\frac{\log n}{T}}\biggr).
		$$
		In particular, Assumption~\ref{asm:est_vec} holds with $a_{n,T}=\sqrt{\frac{\log n}{T}}$ provided that $\log n = o(T)$. \\
		\noindent (ii)
		If in addition to the above Assumption~\ref{A4} holds, then Assumption~\ref{asm:est_cov} is also satisfied with $b_T := T$, $\Sigma_{i,j}$ denoting the lower $p\times p$ sub-matrix of  $\widetilde\Sigma_i+\widetilde\Sigma_j,$ and $\hat \Sigma_{i,j}$ defined in~\eqref{eq:hatsigma}.
	\end{theorem}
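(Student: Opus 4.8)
The plan is to treat the two assertions of Theorem~\ref{quantreg_est} in turn: the uniform rate for $\est_i$ is a fairly standard convexity-plus-concentration argument, while the statement about the Hendricks--Koenker estimate is more delicate and ultimately hinges on controlling an increment of the quantile process.

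\emph{Uniform consistency.} Since $\vecg\mapsto M_{iT}(\vecg):=T^{-1}\sum_t\rho_\tau(Y_{it}-\vecz_{it}^\top\vecg)$ is convex, it is enough to show that, for a large constant $C$ and $r_{n,T}:=C\sqrt{\log n/T}$, with probability tending to one $M_{iT}$ exceeds $M_{iT}(\vecg^*_i)$ everywhere on the sphere $\{\vecg:\normtwo{\vecg-\vecg^*_i}=r_{n,T}\}$, simultaneously over $i\le n$; convexity then pins $\est_i$ inside the corresponding ball. Expanding $M_{iT}(\vecg)-M_{iT}(\vecg^*_i)$ via Knight's identity gives a linear term whose summands have conditional mean zero (because $\vecz_{it}^\top\vecg^*_i=q_{i,\tau}(\vecz_{it})$) and are bounded by (A1), plus a nonnegative curvature term. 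A Bernstein bound, a union bound over $i=1,\dots,n$ and a covering of the unit sphere ($p$ fixed) control the linear term by $\bigo_\mpr(r_{n,T}\sqrt{\log n/T})=\bigo_\mpr(r_{n,T}^2/C)$, while (A1)--(A3) show the curvature term has conditional mean at least $c\,r_{n,T}^2$ for fixed $c>0$ (using that the conditional density exceeds $f_{\min}$ on an interval around $q_{i,\tau}(\vecz)$, by (A3) and continuity, and that $T^{-1}\sum_t\vecz_{it}\vecz_{it}^\top\succeq\tfrac{c_\lambda}{2}I$ uniformly in $i$ with high probability by matrix concentration) with fluctuation $o_\mpr(r_{n,T}^2)$; choosing $C$ large lets the curvature dominate. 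Running the same argument with the covering enlarged to a compact neighborhood $\bar{\mathcal T}\subset\mathcal T$ of $\tau$ (using continuity of $\eta\mapsto q_{i,\eta}(\vecz)$) upgrades this to $\sup_i\sup_{\eta\in\bar{\mathcal T}}\normtwo{\est_i(\eta)-\vecg^*_i(\eta)}=\bigo_\mpr(\sqrt{\log n/T})$, which I will need below.

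\emph{Reduction for the covariance estimate.} Write $\hat{\widetilde\Sigma}_{iT}=\widehat B_{iT}^{-1}\widehat H_{iT}\widehat B_{iT}^{-1}$ and $\widetilde\Sigma_i=B_i^{-1}H_iB_i^{-1}$. From (A1), (A3): $f_{\min}c_\lambda I\preceq B_i\preceq f_{\max}C_\lambda I$ and $\tau(1-\tau)c_\lambda I\preceq H_i\preceq\tau(1-\tau)C_\lambda I$ uniformly in $i$, so all factors live in a fixed compact set on which inversion and multiplication of matrices are uniformly Lipschitz, and passing to the lower $p\times p$ submatrix only shrinks spectral norms. Hence it suffices to prove $\sup_i\normspec{\widehat H_{iT}-H_i}=o_\mpr(1)$ and $\sup_i\normspec{\widehat B_{iT}-B_i}=o_\mpr(1)$: these give $\sup_i\normspec{\hat{\widetilde\Sigma}_{iT}-\widetilde\Sigma_i}=o_\mpr(1)$, and then, with $\hat\Sigma_{i,j}$ as in~\eqref{eq:hatsigma} and $\Sigma_i$ the lower block of $\widetilde\Sigma_i$, $\sup_{i\neq j}\normspec{T\hat\Sigma_{i,j}-\Sigma_{i,j}}\le 2\sup_i\normspec{\hat{\widetilde\Sigma}_{iT}-\widetilde\Sigma_i}=o_\mpr(1)$, i.e.\ Assumption~\ref{asm:est_cov}. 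The bound for $\widehat H_{iT}=\tau(1-\tau)T^{-1}\sum_t\vecz_{it}\vecz_{it}^\top$ is a uniform law of large numbers for bounded i.i.d.\ matrices (matrix Bernstein plus union bound, rate $\sqrt{\log n/T}$). For $\widehat B_{iT}$ I decompose $\widehat B_{iT}-B_i=T^{-1}\sum_t(\widehat f_{it}-f_{it})\vecz_{it}\vecz_{it}^\top+\big(T^{-1}\sum_t f_{it}\vecz_{it}\vecz_{it}^\top-B_i\big)$ with $f_{it}:=f_{Y_{i1}\mid\vecz_{i1}}(q_{i,\tau}(\vecz_{it})\mid\vecz_{it})$; the second summand is another uniform LLN, and the first is at most $\kappa^2\sup_{i,t}|\widehat f_{it}-f_{it}|$. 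Finally, the eigenvalue bound~\eqref{eq:ev-sigmaij} holds because $m'I\preceq\widetilde\Sigma_i\preceq M'I$ for fixed $0<m'\le M'<\infty$ (combine the bounds on $B_i$ and $H_i$), these inequalities are inherited by every principal submatrix (test quadratic forms against vectors supported on the retained coordinates), so $\Sigma_{i,j}=\Sigma_i+\Sigma_j$ satisfies~\eqref{eq:ev-sigmaij} with $m=2m'$, $M=2M'$; and Assumption~\ref{asm:nT} holds with $b_T=T$ since $\log n=o(T)$.

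\emph{The density estimate, the main obstacle.} It remains to show $\sup_{i,t}|\widehat f_{it}-f_{it}|=o_\mpr(1)$. As $\vecz_{it}^\top\vecg^*_i(\eta)=q_{i,\eta}(\vecz_{it})$, the denominator of $\widehat f_{it}$ equals $\big(q_{i,\tau+d_T}(\vecz_{it})-q_{i,\tau-d_T}(\vecz_{it})\big)+\vecz_{it}^\top\Delta_i$, with $\Delta_i:=(\est_i(\tau+d_T)-\vecg^*_i(\tau+d_T))-(\est_i(\tau-d_T)-\vecg^*_i(\tau-d_T))$. By (A2)--(A3), $\eta\mapsto q_{i,\eta}(\vecz)$ is twice differentiable with first derivative $1/f_{Y_{i1}\mid\vecz_{i1}}(q_{i,\eta}(\vecz)\mid\vecz)\in[1/f_{\max},1/f_{\min}]$ and uniformly bounded second derivative, so $q_{i,\tau+d_T}(\vecz_{it})-q_{i,\tau-d_T}(\vecz_{it})=2d_T f_{it}^{-1}(1+O(d_T^2))$ uniformly in $i,t$; hence $\widehat f_{it}=f_{it}\big(1+O(d_T^2)+\tfrac{f_{it}}{2d_T}\vecz_{it}^\top\Delta_i\big)^{-1}$, and everything reduces to showing $\sup_i\normtwo{\Delta_i}=o_\mpr(d_T)$. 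The naive bound from the first part only gives $\normtwo{\Delta_i}=\bigo_\mpr(\sqrt{\log n/T})$, which under (A4) is exactly $\bigo_\mpr(d_T)$ but not $o_\mpr(d_T)$; the extra factor must come from cancellation. I therefore plan to establish a uniform-in-$i$ Bahadur representation $\est_i(\eta)-\vecg^*_i(\eta)=B_i(\eta)^{-1}T^{-1}\sum_t\vecz_{it}\big(\eta-\1(Y_{it}\le q_{i,\eta}(\vecz_{it}))\big)+R_i(\eta)$ valid uniformly over $i$ and over $\eta$ in a neighborhood of $\tau$; then the increment of the leading term is $B_i(\tau+d_T)^{-1}T^{-1}\sum_t\vecz_{it}\big(2d_T-\1(q_{i,\tau-d_T}(\vecz_{it})<Y_{it}\le q_{i,\tau+d_T}(\vecz_{it}))\big)$ plus a term of order $d_T\sqrt{\log n/T}$ (from differentiating $B_i(\cdot)^{-1}$, which (A2) permits), and, since conditionally on $\vecz_{it}$ the indicator is Bernoulli with mean $2d_T$, that average is centered with per-coordinate variance $O(d_T/T)$, so a Bernstein bound and a union bound over $i$ cap it by $\bigo_\mpr(\sqrt{d_T\log n/T})=d_T\cdot o(1)$ under (A4). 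The hard part will be the remainder: one must show $\sup_i\normtwo{R_i(\tau+d_T)-R_i(\tau-d_T)}=o_\mpr(d_T)$, which requires a Bahadur bound that is uniform in both $i$ and the quantile level and that controls the modulus of continuity of $R_i(\cdot)$ near $\tau$ — this is the most technical step, and I would adapt the arguments of \cite{kato2012asymptotics} and \cite{galvao2018} for it. Once $\sup_i\normtwo{\Delta_i}=o_\mpr(d_T)$ is in hand, $\sup_{i,t}|\widehat f_{it}-f_{it}|=o_\mpr(1)$, which closes the reduction and completes the proof.
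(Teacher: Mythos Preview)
Your overall strategy matches the paper's, but the execution differs in instructive ways.

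\textbf{Part (i).} You argue via convexity and Knight's identity, whereas the paper simply quotes the Bahadur representation from Theorem~5.1 of \cite{chao2017} (together with their exponential bounds on the three remainder pieces) and then controls the leading score term by an empirical-process maximal inequality. Both routes are valid. Yours is self-contained; the paper's is shorter because it outsources the heavy lifting---and, crucially, because the Bahadur expansion is needed again in part~(ii). Since you end up invoking a Bahadur representation for part~(ii) anyway, your convexity argument for part~(i) becomes largely redundant: once you have the uniform Bahadur bound, the rate $\bigo_\mpr(\sqrt{\log n/T})$ for $\est_i$ falls out immediately.

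\textbf{Part (ii).} The decomposition is essentially the same as the paper's. One cosmetic difference: the paper works with the reciprocals $\hat e_{it}=1/\hat f_{it}$ and compares them to the finite-difference oracle $e_{it}=(q_{i,\tau+d_T}(\vecz_{it})-q_{i,\tau-d_T}(\vecz_{it}))/(2d_T)$ rather than to the true density; this absorbs your $O(d_T^2)$ Taylor bias into a separate step ($\sup_i\normspec{\E[B_{iT}]-B_i}=o(1)$, quoted from \cite{galvao2018}) and makes the algebra for $\hat e_{it}-e_{it}$ exact. Beyond that, the key maneuver---writing the increment $\Delta_i$ via Bahadur, observing that the leading term is a centered sum with per-term variance $O(d_T)$, and applying Bernstein plus a union bound to get $\bigo_\mpr(\sqrt{d_T\log n/T})=o_\mpr(d_T)$---is identical.

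\textbf{One correction.} Your claim that under (A4) the naive bound $\normtwo{\Delta_i}=\bigo_\mpr(\sqrt{\log n/T})$ is ``exactly $\bigo_\mpr(d_T)$'' is off: (A4) gives $\log n/T=o(d_T)$, hence $\sqrt{\log n/T}=o(\sqrt{d_T})$, which is \emph{larger} than $d_T$ as $d_T\to 0$. So the naive bound on $\normtwo{\Delta_i}/d_T$ is $o(d_T^{-1/2})$, potentially unbounded---the situation is worse than you state, not merely borderline. Your conclusion that cancellation is required is still correct, and you correctly isolate the Bahadur remainder increment $R_i(\tau+d_T)-R_i(\tau-d_T)$ as the crux. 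The paper's own write-up of this step is terse (it records the post-division remainder as $\bigo_\mpr(\sqrt{\log n/T})$ without the $d_T^{-1}$ factor and defers to \cite{galvao2018}); your plan to control the modulus of continuity of $R_i(\cdot)$ directly is the right instinct, and is in fact what a careful verification under (A4) alone would demand.
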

	
	Theorem~\ref{mle_est} implies that Assumptions~\ref{asm:est_vec} and~\ref{asm:est_cov} hold with $a_{n,T} = \sqrt{T^{-1}\log n}$, $b_T = T$ and $\Sigma_{i,j}$ corresponding to the scaled asymptotic variance matrix of~~$\hat\vecb_i - \hat\vecb_j$. 
	
	Similarly to the discussion in Section~\ref{sec:logtheory}, the results directly imply that Assumptions~\ref{asm:est_vec} and~\ref{asm:est_cov} continue to hold for any sub-vectors of $\hat\vecg_i$.  
	
	{We now consider the dependent case. Since we need to account for the temporal dependence structure, the asymptotic covariance matrix for the estimator~$\hat\vecg_i$ is now of the form $\widetilde\Sigma_i = B_i^{-1}\widetilde H_iB_i^{-1}$ where the matrix~$B_i$ is defined in~\eqref{eq:AiBi} as in the independent case, whereas the matrix $\widetilde H_i$ is defined in the following way incorporating the dependence
		\begin{equation*}
			\widetilde H_i := \tau (1-\tau) \E[\vecz_{i1}\vecz_{i1}^\top] +\sum_{j=1}^\infty \E[\vecw_{i1}\vecw_{i,1+j}^\top+\vecw_{i,1+j}\vecw_{i1}^\top] \,,
		\end{equation*}
		with $\vecw_{i1}= \vecz_{i1}\big(\tau-\1(\response_{it}\le q_{i,\tau}(\vecz_{i1})) \big).$
		This motivates the following generalized version of the Hendricks-Koenker sandwich covariance matrix estimator~$\hat{\widetilde\Sigma}_{iT}$
		\begin{equation}
			\label{eq:HKvar_dp}
			\hat{\widetilde\Sigma}_{iT} := \widehat{B}_{iT}^{-1} \widehat{H}'_{iT} \widehat{B}_{iT}^{-1}\,,
		\end{equation}
		where $\widehat{B}_{iT} :=  \frac{1}{T} \sum_{t=1}^{T} \widehat{f}_{it} \vecz_{it}\vecz_{it}^\top$ is defined in the same way as in the independent case and the estimator $\widehat{H}'_{iT} $ is defined via
		\begin{align*}
			\label{eq:hatH}
			~\widehat{H}'_{iT} := \tau (1-\tau)\frac{1}{T} \sum_{t=1}^{T}  \vecz_{it}\vecz_{it}^\top+ \sum_{1\le j\le m_T}\Big(1-\frac{j}{T}\Big)\Big(\frac{1}{T}\sum_{t\in T_j}(\widehat \vecw_{it} \widehat \vecw_{i,t+j}^\top+  \widehat \vecw_{i,t+j}\widehat \vecw_{it}^\top )\Big)
		\end{align*}
		Here, $T_j:=\{1\le t\le T-j\},$ $m_T>0$ denotes the bandwidth parameter tending to be infinity as $T$ goes to infinity, and
		\[
		\widehat \vecw_{it} :=\vecz_{it}\Big(\tau-\1\big(\response_{it}\le \hat\vecg_i(\tau)^\top\vecz_{it}\big)\Big)\,.
		\]
		
		To establish the asymptotic consistency of the covariance estimator, we need following additional assumptions.
	}	
	{	
		\begin{assumption}
			\label{B2} For each $i=1,...,n$ and $j>1$, the random vector $(Y_{i1},Y_{i,1+j})$ has a density conditional on $(\vecz_{i1},\vecz_{i, 1+j})$ and this density is bounded uniformly across $i,j$ and $n,T$.
		\end{assumption}
		A similar assumption was made in~\cite{kato2012asymptotics}.
		\begin{assumption}
			\label{B3} Assume that $d_T = o(1)$ and $m_T\to\infty$ as $T \to \infty$, and 
			\begin{equation*}
				\frac{\log n}{ T d_T^2 }=o(1),\qquad \frac{m_T^3\log n}{T}=o(1)\,.
			\end{equation*}
		\end{assumption}
		This assumption is similar to Assumption~\ref{A4} and imposes a restriction on the relative growth of the time dimension compared to the number of individuals.
	}
	
	{	
		\begin{theorem}\label{quantreg_est1}
			Let Assumptions~\ref{A1}-\ref{A3}, and \ref{B1}-\ref{B2} hold. 
			Assume $T$ grow at most polynomial in $n$, $(\log n)^3 = o(T)$, and $\min(n,T) \to \infty$. Assume that the smallest and largest eigenvalues of $\widetilde H_i$ are bounded away from zero and infinity uniformly in $i,n,T$.\\
			\noindent (i) It holds that~
			\begin{equation}\label{eq:mix_est}
				\sup_{i\in\{1,\dots,n\} }\normtwo{\est_i-\vecg^*_i}=\bigo_\mpr\biggl(\sqrt{\frac{\log n}{T}}\biggr).
			\end{equation}
			In particular, Assumption~\ref{asm:est_vec} holds with $a_{n,T}=\sqrt{\frac{\log n}{T}}$. \\
			\noindent (ii) If in addition to the above Assumption~\ref{B3} holds, then Assumption~\ref{asm:est_cov} is also satisfied with $b_T := T$, $\Sigma_{i,j}$ denoting the lower $p\times p$ sub-matrix of  $\widetilde\Sigma_i+\widetilde\Sigma_j$ and $\hat \Sigma_{iT}$ denoting the lower $p \times p$ submatrix of
			\begin{equation}\label{eq:hatsigma}
				\hat \Sigma_{i,j} := T^{-1}\Big(\hat{\widetilde{\Sigma}}_{iT} + \hat{\widetilde{\Sigma}}_{jT}\Big)\,.
			\end{equation}
		\end{theorem}
		
	}

	\subsubsection{Quantile regression with common slope and grouping on the intercepts (Example~\ref{ex:qrint})} \label{sec:qrint}

	Consider the quantile regression panel data model 
	\begin{equation*}
		q_{i,\tau}(\vecx_{it})=\alpha_i^*(\tau)+\vecx_{it}^\top\vecb^*(\tau), ~~t=1,\dots,T, i=1,\dots,n\,,
	\end{equation*}
	where $q_{i,\tau}(\vecx_{it}):=\inf\bigl\{y: \mpr(\response_{it}<y|\vecx_{it})\ge\tau\bigr\}$ denotes the conditional $\tau\,$-\,quantile of $\response_{it}$ given $\vecx_{it}.$ In contrast to the setting in Section~\ref{sec:qrslope}, we assume that the slope coefficient $\vecb^*$ is common across individuals and are only interested in grouping the intercepts. This model was considered in~\cite{gu2019panel}, who used a lasso-type penalty to enforce grouping on the intercepts $\alpha_i^*$. The latter paper also demonstrated that putting this kind of regularization on $\alpha_i^*$ can result in improved estimation of $\vecb^*$ compared to leaving $\alpha_i^*$ unrestricted.

	Assume that $(\vecx_{it},Y_{it})$ are i.i.d. across $t$ for each $i$ and independent across $i$. Since only intercepts contain the grouping information, we aim to use the estimates for $\alpha_i^*,$ and their variance estimates to construct the similarity matrix. At this point, there are two possibilities for estimating $\alpha_i^*$: (1) run individual-specific quantile regressions ignoring the fact that $\vecb^*$ is common across individuals or (2) put all individuals into a single large model in order to borrow information across individuals to improve the efficiency in estimating the joint coefficient vector $\vecb^*$. 
	
	Approach (1) has computational advantages, especially if $n$ is large, but can also result in a loss of efficiency. The theoretical treatment of (1) easily follows from minor modifications of the results in Section~\ref{sec:qrslope}, and we hence focus on the second approach. Define
	\begin{equation} \label{eq:alphabetaqrint}
		\big(\tilde\alpha_1(\tau),\cdots,\tilde\alpha_n(\tau),\tilde\vecb(\tau) \big):=\argmin_{\alpha_1,\dots,\alpha_n,\vecb}\frac{1}{nT}\sum_{i=1}^n\sum_{t=1}^T \rho_{\tau} (\response_{it} - \alpha_i-\vecx_{it}^\top\vecb)\,.
	\end{equation}
	
	In what follows, we assume that $n \to \infty$ which is the more relevant case for group structure detection. In this case, it is possible to obtain simplified estimators for the asymptotic variance of $\tilde \alpha_i$. Those estimators will be motivated next. 
	
	The main insight is that under $n \to \infty$ the estimation of $\vecb^*$ has a negligible effect of the asymptotic variance of $\tilde \alpha_i$ since $\vecb^*$ is estimated at a faster rate due to borrowing information across individuals. Further observe that, defining $\hat e_{it}  = \response_{it}-\vecx_{it}^\top\tilde\vecb$, we have
	\begin{equation}  \label{eq:alphabetaqrint2}
		\tilde\alpha_i = \argmin_{\alpha\in\R}\frac{1}{T}\sum_{t=1}^T \rho_{\tau} (\hat e_{it} - \alpha),~~i=1,\dots,n\,.
	\end{equation}
	Thus $\tilde\alpha_{i}$ is approximately the sample quantile of $\{\hat e_{it}, t=1,\dots,T\}$, which should be close to the sample quantile of $\{e_{it}, t=1,\dots,T\},$ where $e_{it} := \response_{it} - \vecx_{it}^\top\vecb^*$.
	
	If $n \to \infty$ this idea can be formalized by applying a modification of Lemma 7 in \cite{galvao2018} (after noting that the proof of the latter result can be modified to weaken the assumption $n(\log T)^2 /T \to 0$ made in there). Denoting the sample quantile of $\{e_{it}, t=1,\dots,T\}$ by $\hat \alpha_i$, the latter result implies
	$$
	\sup_{i=1,\dots,n} |\hat \alpha_i - \tilde \alpha_i| = \bigo_{\mpr}\Big( \normtwo{\tilde \vecb - \vecb^*} + T^{-1}\log (n \vee T) \Big ).
	$$
	Observing that by the proof of Theorem 3.2 in \cite{kato2012asymptotics} we have $\normtwo{\tilde \vecb - \vecb^*}= o_{\mpr}(T^{-1/2}),$ when $n \to \infty$ (note that this part of their result does not require the restrictive growth assumption on $n$ which is needed for unbiased asymptotic normality of $\tilde\vecb$), this implies $|\hat \alpha_i - \tilde \alpha_i|  = o_{\mpr}(T^{-1/2})$ uniformly over $i$ and thus the asymptotic distributions of $\hat \alpha_i$ and $\tilde \alpha_i$ coincide. Now classical results on the distribution of sample quantiles imply that the asymptotic variance of  $\hat \alpha_i$ is  
	\begin{equation}\label{eq:sigma1}
		\Sigma_i = \tau(1-\tau)/f_{e_i}^2(q_{e_i}(\tau)) \,,
	\end{equation}
	where $f_{e_i}, q_{e_i}$ denote the (unconditional) density and quantile function of $e_{i1}$, respectively. 
	
	This motivates the following version of $\hat \Sigma_{i,j}$: for a bandwidth parameter $d_T$ define
	\[
	\hat{\widetilde\Sigma}_{iT} := \tau(1-\tau) \Bigg( \frac{\tilde\alpha_i(\tau+d_T)-\tilde\alpha_i(\tau-d_T)}{2d_T} \Bigg)^2, \quad i=1,\dots,n
	\]
	and compute
	\begin{equation}\label{eq:hatsigma1}
		\hat \Sigma_{i,j} := T^{-1}\Big(\hat{\widetilde \Sigma}_{iT} + \hat{\widetilde \Sigma}_{jT}\Big)\,.
	\end{equation}
	
	\begin{theorem}\label{qrint_est}
		Let Assumptions~\ref{A1}-\ref{A3} with $\vecz_{it}=\vecx_{it}$ hold.
		Assume $\log n = o(T)$, $\min(n,T) \to \infty$.\\
		(i) It holds that~
		$$
		\sup_{i\in\{1,\dots,n\} } |\tilde\alpha_i-\alpha^*_i| =\bigo_\mpr\biggl(\sqrt{\frac{\log(n\vee T)}{T}}\biggr).
		$$
		In particular, Assumption~\ref{asm:est_vec} holds with $a_{n,T}=\sqrt{\frac{\log(n\vee T)}{T}}$. \\
		(ii)
		If in addition to the above $\log(n\vee T)/(n d_T^2) = o(1)$, then Assumption~\ref{asm:est_cov} is also satisfied with $b_T := T$, $\Sigma_{i,j} = \Sigma_i + \Sigma_j$ where $\Sigma_i,\Sigma_j$ are defined in~\eqref{eq:sigma1}, and $\hat \Sigma_{i,j}$ defined in~\eqref{eq:hatsigma1}.
	\end{theorem}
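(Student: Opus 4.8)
The statement has two parts and I would prove them in the order in which they appear.

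The first part---the uniform rate $\sup_i|\tilde\alpha_i - \alpha_i^*| = \bigo_\mpr(\sqrt{\log(n\vee T)/T})$, which is exactly Assumption~\ref{asm:est_vec}---I would establish via the oracle comparison already sketched around~\eqref{eq:alphabetaqrint2} and~\eqref{eq:sigma1}: compare $\tilde\alpha_i$ with the infeasible sample $\tau$-quantile $\hat\alpha_i$ of $\{e_{it}\}_t$, $e_{it} := \response_{it} - \vecx_{it}^\top\vecb^*$. For $\sup_i|\tilde\alpha_i - \hat\alpha_i|$ I would use the modification of Lemma~7 in~\cite{galvao2018} quoted in the text, bounding it by $\bigo_\mpr(\normtwo{\tilde\vecb - \vecb^*} + T^{-1}\log(n\vee T))$; a preliminary step is to verify that the proof of that lemma goes through after weakening the growth condition $n(\log T)^2/T \to 0$ imposed there. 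The rate $\normtwo{\tilde\vecb - \vecb^*} = o_\mpr(T^{-1/2})$ follows from the part of Theorem~3.2 of~\cite{kato2012asymptotics} that requires no growth restriction on $n$, whose hypotheses hold under (A1)--(A3). For $\sup_i|\hat\alpha_i - \alpha_i^*|$ I would combine monotonicity of the empirical distribution function with a Bernstein bound for $T^{-1}\sum_t(\1(e_{it}\le u) - F_{e_i}(u))$---using that $(e_{it})_t$ are i.i.d.\ and $f_{e_i}$ is bounded above by (A2) and bounded below near $\alpha_i^*(\tau)$ by (A3)---and a union bound over $i$, giving $\bigo_\mpr(\sqrt{\log(n\vee T)/T})$. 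Since $\log n = o(T)$ the remainder $T^{-1}\log(n\vee T)$ is of smaller order, so the two bounds combine to $a_{n,T} = \sqrt{\log(n\vee T)/T} = o(1)$.

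For the second part everything is scalar. Writing $s_i := q_{e_i}'(\tau) = 1/f_{e_i}(q_{e_i}(\tau))$, we have $\Sigma_i = \tau(1-\tau)s_i^2$ by~\eqref{eq:sigma1} and $\hat{\widetilde\Sigma}_{iT} = \tau(1-\tau)\hat s_i^2$ with $\hat s_i := (\tilde\alpha_i(\tau+d_T) - \tilde\alpha_i(\tau-d_T))/(2d_T)$, so it suffices to show $\sup_i|\hat s_i - s_i| = o_\mpr(1)$: then $\sup_i|\hat{\widetilde\Sigma}_{iT} - \Sigma_i| = o_\mpr(1)$ because (A2)--(A3) make $f_{e_i}(q_{e_i}(\tau)) \in [f_{\min},f_{\max}]$, hence $\sup_{i\neq j}\normspec{T\hat\Sigma_{i,j} - (\Sigma_i+\Sigma_j)} = o_\mpr(1)$ and~\eqref{eq:ev-sigmaij} holds with $m = 2\tau(1-\tau)f_{\max}^{-2}$, $M = 2\tau(1-\tau)f_{\min}^{-2}$; Assumption~\ref{asm:nT} with $b_T = T$ is then just the hypothesis $\log n = o(T)$. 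I would split $\hat s_i - s_i$ into (i) the deterministic bias $(q_{e_i}(\tau+d_T) - q_{e_i}(\tau-d_T))/(2d_T) - q_{e_i}'(\tau)$, which is $\bigo(d_T)$ uniformly in $i$ by a Taylor expansion of $q_{e_i}$ (legitimate because $f_{e_i}$ inherits differentiability and the upper bound from (A2) and is bounded below by (A3), so $q_{e_i}''$ is bounded uniformly in $i$, using $d_T = o(1)$); and (ii) the stochastic part $\big[(\tilde\alpha_i(\tau+d_T) - q_{e_i}(\tau+d_T)) - (\tilde\alpha_i(\tau-d_T) - q_{e_i}(\tau-d_T))\big]/(2d_T)$. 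Part (ii) is handled by a refinement of Part~1 that is uniform over quantile levels $\eta$ in a neighborhood $\mathcal T$ of $\tau$, taking advantage of the fact that the pooled slope $\tilde\vecb(\eta)$ borrows information across all $n$ individuals; dividing the resulting bound by the bandwidth $d_T$ and asking that it vanish is precisely what produces the condition $\log(n\vee T)/(n d_T^2) = o(1)$. Combined with $d_T = o(1)$ and the $\bigo(d_T)$ bias, this yields $\sup_i|\hat s_i - s_i| = o_\mpr(1)$.

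The main obstacle is part (ii). Because it is divided by the vanishing $d_T$, one cannot bound $\tilde\alpha_i(\tau\pm d_T) - q_{e_i}(\tau\pm d_T)$ by its marginal order $\sqrt{\log(n\vee T)/T}$; one must instead exploit the strong positive dependence between the fluctuations at $\tau+d_T$ and $\tau-d_T$ (the relevant empirical-process increments involve indicators of events whose symmetric difference has probability of order $d_T$) and, separately, track how the estimation error of $\tilde\vecb(\eta)$ enters $\tilde\alpha_i(\eta)$ and partly cancels when differenced over $[\tau-d_T,\tau+d_T]$. This is what forces the two preliminary refinements: extending the modified Lemma~7 of~\cite{galvao2018} to hold uniformly over $\eta\in\mathcal T$ (and without the growth restriction on $n$), and a modulus-of-continuity/Bernstein argument for these empirical processes that is uniform in $i$---the union bound over the $n$ individuals being where the $\log(n\vee T)$ factor enters. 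The remaining ingredients---the Taylor bound for the bias, the passage from scalar consistency to the eigenvalue bounds in~\eqref{eq:ev-sigmaij}, and the verification of Assumption~\ref{asm:nT}---are routine.
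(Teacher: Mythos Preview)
Your Part~1 argument follows the oracle-comparison route already sketched in the paper's text around~\eqref{eq:alphabetaqrint2}--\eqref{eq:sigma1}. The paper's proof sketch itself takes a slightly different but equivalent path: it establishes
\[
\sup_{|\eta-\tau|\le\varepsilon}\,\sup_i\,|\tilde\alpha_i(\eta)-\alpha_i^*(\eta)|=\bigo_\mpr\Bigl(\sqrt{\log(n\vee T)/T}\Bigr)
\]
directly and uniformly in $\eta$ by tracing Steps~1--3 of the proof of Theorem~3.2 in~\cite{kato2012asymptotics}, noting that their polynomial-growth hypothesis on $n$ can be dropped at the cost of replacing $\log n$ by $\log(n\vee T)$. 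Both routes are valid and closely related.

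For Part~2 the paper is far simpler than your proposal. Once the uniform-in-$\eta$ bound above is in hand, the paper says both conclusions follow by ``simple computations'': the stochastic part of $\hat s_i-s_i$ is bounded crudely by $d_T^{-1}\cdot\bigo_\mpr\bigl(\sqrt{\log(n\vee T)/T}\bigr)$ and added to the $\bigo(d_T)$ bias. Your cancellation programme---exploiting the positive dependence of the indicators at $\tau\pm d_T$ and partial cancellation of the differenced $\tilde\vecb(\eta)$ errors---is not what the paper does. Your attempt to derive the stated bandwidth condition $\log(n\vee T)/(n\,d_T^2)=o(1)$ from the pooled-slope rate also does not work out: the uniform rate for $\tilde\alpha_i(\eta)$ is governed by the $T$ observations for individual $i$ and stays at $\sqrt{\log(n\vee T)/T}$ no matter how fast $\tilde\vecb$ converges, and none of the cancellation mechanisms you describe puts $n$ in place of $T$ in the denominator. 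The paper's crude division matches the condition $\log(n\vee T)/(T\,d_T^2)=o(1)$; the printed $n$ is most plausibly a typo rather than a signal that the finer argument you sketch is required.
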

	
	We next consider the case of temporal dependence. Under Assumptions~\ref{B1}--\ref{B2} the asymptotic variance takes he form 
	\[
	\Sigma_i = \frac{1}{f_{e_i}^2(q_{e_i}(\tau))}\sum_{t \in \Z} \cov(\1\{e_{i0} \leq q_{e_i}(\tau)\}, \1\{e_{it} \leq q_{e_i}(\tau)\}). 
	\]
	This can be estimated consistently by
	\[
	\hat{\Sigma}_{iT} := \Bigg( \frac{\tilde\alpha_i(\tau+d_T)-\tilde\alpha_i(\tau-d_T)}{2d_T} \Bigg)^2 \Big(\tau(1-\tau) + \sum_{1 \le j \le m_T} (1-j/T)\sum_{t \in T_j} (\widehat \vecw_{it} \widehat \vecw_{i,t+j}^\top+  \widehat \vecw_{i,t+j}\widehat \vecw_{it}^\top )  \Big)
	\]
	where $T_j:=\{1\le t\le T-j\},$ $m_T>0$ denotes the bandwidth parameter tending to be infinity as $T$ goes to infinity, and
	\[
	\widetilde \vecw_{it} := \tau-\1\big\{\response_{it}\le \tilde\vecb_i(\tau)^\top\vecx_{it} + \hat\alpha_i(\tau)\big\}\,.
	\]

	\begin{theorem}\label{qrint_est1}
		Let Assumptions~\ref{A1}-\ref{A3}, and \ref{B1}-\ref{B2} with $\vecz_{it}=\vecx_{it}$ hold. 
		Assume $T$ grows at most polynomially in $n$, $(\log n)^3 = o(T)$, and $\min(n,T) \to \infty$.\\
		(i) It holds that~
		\[
		\sup_{i\in\{1,\dots,n\} } |\tilde\alpha_i-\alpha^*_i| =\bigo_\mpr\biggl(\sqrt{\frac{\log n}{T}}\biggr)\,.
		\]
		where $\tilde\alpha_i$ is defied in~\eqref{eq:alphabetaqrint2}.
		In particular, Assumption~\ref{asm:est_vec} holds with $a_{n,T}=\sqrt{\frac{\log n}{T}}$. \\
		(ii)
		If in addition to the above Assumption~\ref{B3} holds, then Assumption~\ref{asm:est_cov} is also satisfied with $b_T := T$, $\Sigma_{i,j} = \Sigma_i + \Sigma_j,$ where $\Sigma_i,\Sigma_j$ are defined in~\eqref{eq:sigma1}, and $\hat \Sigma_{i,j}$ defined in~\eqref{eq:hatsigma1}.
	\end{theorem}

	\section{Numerical experiments} \label{sec:sims}
	
	In Section~\ref{sec:simlog} and Section~\ref{sec:simquant}, we report the performance of different algorithms in terms of assigning individuals to groups when the true number of groups is specified. We consider two performance metrics: perfect matching, which corresponds to the proportion of times that the exact group assignment is found, and average matching. The latter is computed as follows. Define the true cluster assignment as a set $\omega^*:=\{\omega^*_1,\dots,\omega^*_n\},$ where $\omega^*_i\in\{1,\dots,G^*\}$ denotes the $\omega^*_i$-th group to which the individual $i$ belongs. 
	Define the set of permutations of the labels $\Phi:=\{\phi: \phi \text{ is a bijection from } \{1,\dots,G^*\} \text{ to } \{1,\dots,G^*\}\}.$
	Define the estimated membership as a set~$\hat\omega:=\{\hat\omega_1,\dots,\hat\omega_n\},$ where $\hat\omega_i\in\{1,\dots,G^*\}$ denotes the estimated group number of the $i$-th individual.
	We define the average percentage of correct classification of the estimated membership~$\hat\omega$ as
	\[
	\max_{\phi\in\Phi}\frac{1}{n}\sum_{i=1}^n \1\{\phi\big(\omega^*_i\big)= \hat\omega_i\}\,.
	\]
	A similar approach was taken in \cite{SSP, gu2019panel, Leng}. The performance of the heuristic~\eqref{eq:selGstar} for selecting the number of groups is considered in Section~\ref{sec:numgrouplog} for logistic regression and in Section~\ref{sec:simnumgr} for quantile regression.  Additional models and simulation settings are discussed in the  supplement.

	\subsection{Logistic regression} \label{sec:simlog}
	In this section, we consider logistic regression with individual-specific intercepts and groupings on the slopes specified as 
	\[
	Y_{it} = \1\{\alpha_i + \vecx_{it}^\top \vecb_{g_i}\geq \epsilon_{it}\}\,,
	\]
	where $\epsilon_{it}$ follows a logistic distribution, $\alpha_i = 1$ for all $i$ and $g_i \in \{1,2,3\}$ with equal proportions, and $\vecx_{it}^\top:=(x_{1it},x_{2it})$. Moreover,
	\[
	\vecb_1=\binom{-4}{1},\vecb_2=\binom{0}{1},\vecb_3=\binom{4}{1}\,.
	\]
	
	We consider two different data generating processes for the covariates $\vecx_{it}$. For Model 1, 
	\[
	x_{1it} = 0.5 \alpha_i + \eta_i + z_{1it}, \quad \text{and}\quad x_{2it} = 0.5 \alpha_i + \eta_i + z_{2it}\,,
	\]
	where $\eta_i \sim N(0,1)$ and $z_{1it} \sim N(0, 4)$ and $z_{2it} \sim N(0, 0.04).$ 
	
	Here, the data generating process is constructed such that the coefficient of $x_2$ is not informative on the group structure while at the same time it is estimated less precisely. On the contrary, the coefficient of $x_1$ is informative on group structure and also precisely estimated. 
	
	For model 2, we switch the labels of $x_1$ and $x_2$. This is a more challenging DGP because the coordinate of $\vecb$ that contains group information is estimated with a lot of noise; see the scatter plot of $\{\hat{\vecb}_i\}_{i=1,\dots,n}$ in Figure \ref{fig:betalogit} for a data realization from Model 1 versus Model 2.
	
		\begin{figure}[H]
		\centering
		\subfigure[]{	\includegraphics[width=1\textwidth]{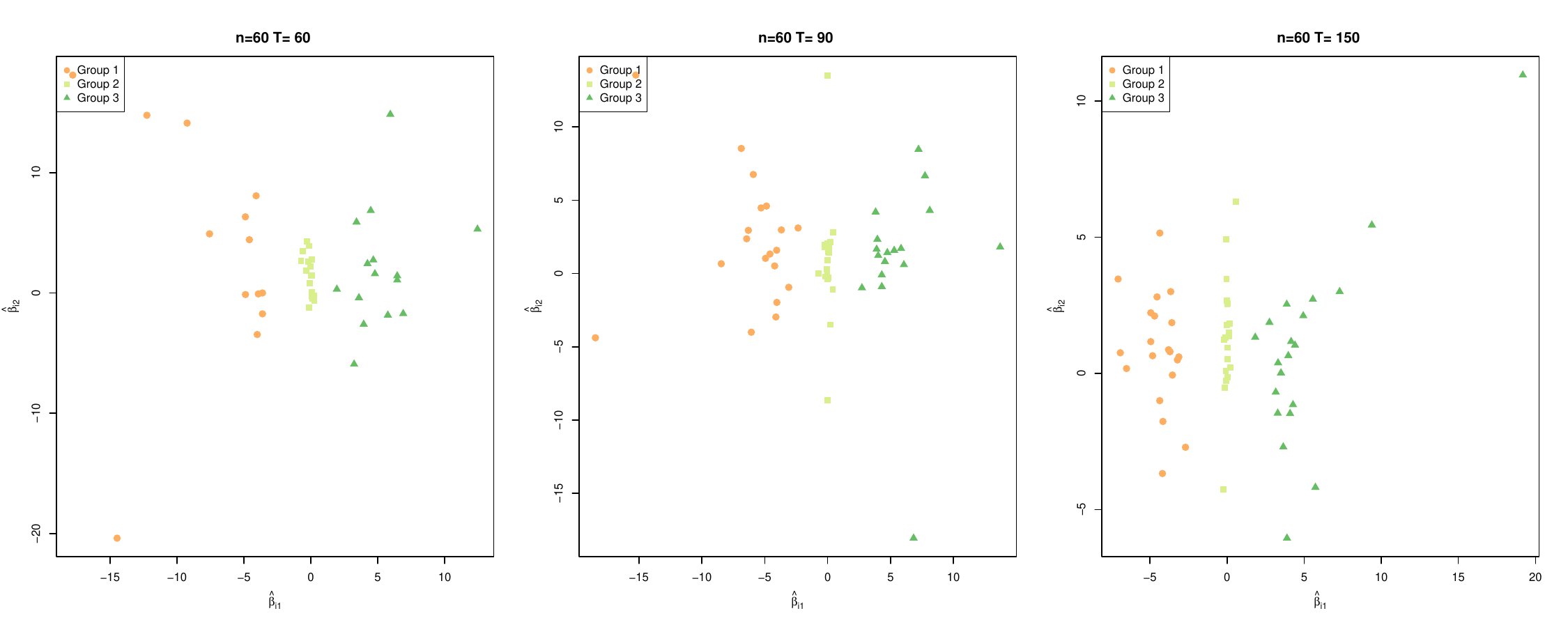}}
		\subfigure[]{	\includegraphics[width=1\textwidth]{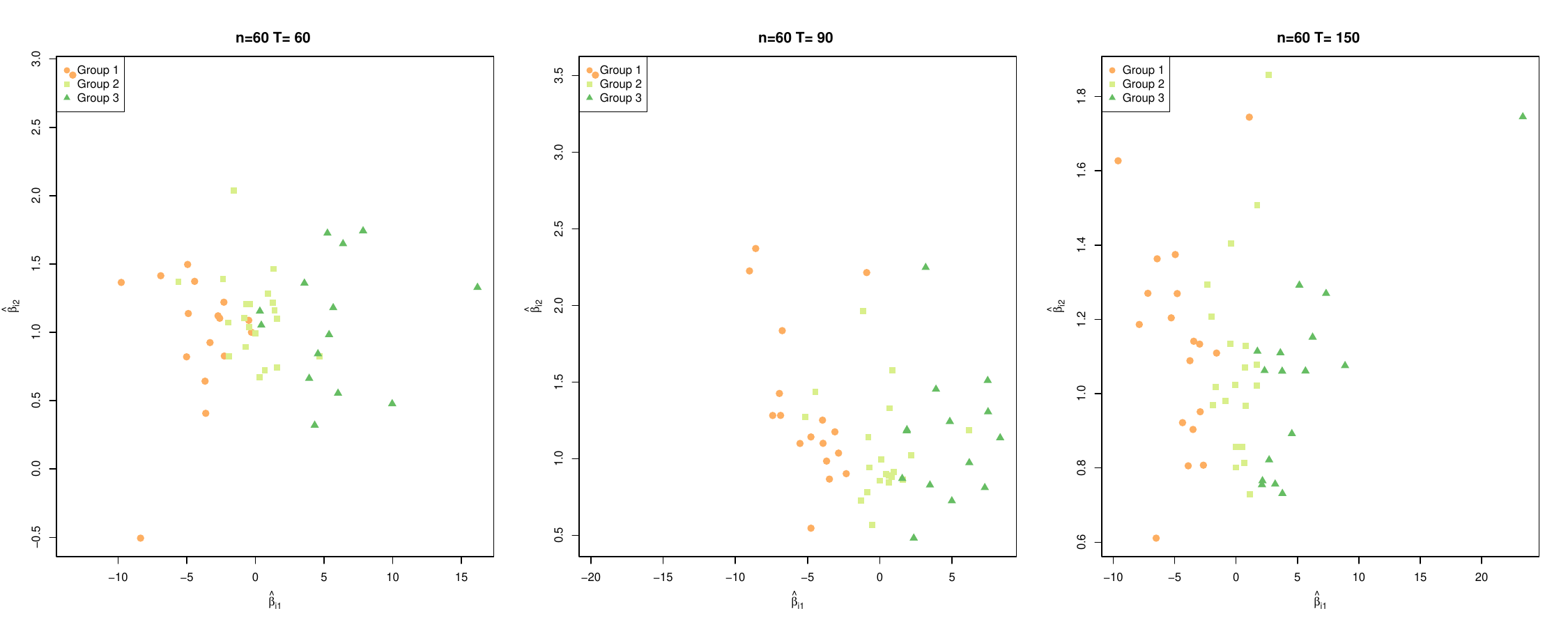}}
		\caption{\small{Scatter plots of $\{\hat{\vecb}_i\}_{i=1}^n$ for Model 1 (Figure (a)) and Model 2 (Figure (b)).}}
		\label{fig:betalogit}
	\end{figure} 
	
	\subsubsection{Clustering with a known number of groups}
	We first compare our method with the C-LASSO proposed by \cite{SSP}. The C-LASSO approach proposed in \cite{SSP} considers minimizing the following objective: 
	\[
	\frac{1}{nT} \sum_i \sum_t \psi(Y_{it}, \vecx_{it}, \vecb_i, \hat \alpha_i(\vecb_i)) + \frac{\lambda}{n} \sum_i \prod_{k=1}^{K_0} \norm{\vecb_i - \eta_k}\,.
	\]
	This itself is not a convex optimization, but at each $k$, we can focus on only the $k$-th element in the product term in the penalty, resulting in a convex program. For details of the implementation, we refer to our supplement or \cite{SSP}. {In addition, we also consider an interative $k$-means approach in the spirit of \cite{BM}. In particular, in each iterative step, we re-estimate group membership based on the logit likelihood function, and then refit the model until coefficients converge. We also compare to the Sequential Binary Segmentation Algorithm (SBSA) in \cite{wang2021} (labeled as SBSA in Table \ref{tab:logistic}. The SBSA method applies the binary segmentation algorithm to detect break-points in eigenvectors from the spectral decomposition of the outer product of $\hat \vecb$ that corresponds to the $\min(p, G)$ largest eigenvalues where $p$ is the number of covariates. }

	\begingroup
	\tabcolsep = 4pt
	\def\arraystretch{1}
	
	\begin{table}[H]
		\small
		\begin{center}
			\resizebox{\textwidth}{!}{\begin{tabular}{rrccccrrrrrrr|ccccrrrrrrr}
					\toprule
					\multicolumn{24}{c}{\bfseries Model 1}\tabularnewline
					\cline{1-24}
					\multicolumn{2}{c}{\bfseries  }&\multicolumn{1}{c}{\bfseries }&\multicolumn{10}{c|}{\bfseries Perfect Match}&\multicolumn{1}{c}{\bfseries }&\multicolumn{10}{c}{\bfseries Average Match}\tabularnewline
					\cline{1-2} \cline{3-13} \cline{15-24}
					\multicolumn{1}{c}{n}&\multicolumn{1}{c}{T}&\multicolumn{1}{c}{}&\multicolumn{1}{c}{S}&\multicolumn{1}{c}{PAM}&\multicolumn{1}{c}{S-Diag}&\multicolumn{1}{c}{S-Iden}&\multicolumn{4}{c}{C-LASSO}&\multicolumn{1}{c}{$k$-means}&\multicolumn{1}{c}{SBSA}&\multicolumn{1}{c}{}&\multicolumn{1}{c}{S}&\multicolumn{1}{c}{PAM}&\multicolumn{1}{c}{S-Diag}&\multicolumn{1}{c}{S-Iden}&\multicolumn{4}{c}{C-LASSO}&\multicolumn{1}{c}{$k$-means}&\multicolumn{1}{c}{SBSA}\tabularnewline
					\midrule
			$30$&$ 60$&&$0.83$&$0.53$&$0.88$&$0.00$&$0.32$&$0.67$&$0.78$&$0.40$&$0.03$&$0.07$&&$0.992$&$0.919$&$0.995$&$0.478$&$0.948$&$0.979$&$0.988$&$0.959$&$0.857$ & $0.631$\tabularnewline
			$30$&$ 90$&&$0.93$&$0.70$&$0.99$&$0.00$&$0.44$&$0.80$&$0.86$&$0.55$&$0.05$&$0.16$&&$0.998$&$0.963$&$0.999$&$0.448$&$0.968$&$0.990$&$0.993$&$0.978$&$0.881$&$0.741$\tabularnewline
			$30$&$150$&&$0.99$&$0.98$&$0.99$&$0.09$&$0.77$&$0.93$&$0.98$&$0.93$&$0.04$&$0.70$&&$1.000$&$0.999$&$1.000$&$0.590$&$0.987$&$0.997$&$0.999$&$0.998$&$0.884$&$0.919$\tabularnewline
		$60$&$ 60$&&$0.77$&$0.40$&$0.82$&$0.01$&$0.14$&$0.47$&$0.55$&$0.21$&$0.00$&$0.00$&&$0.995$&$0.914$&$0.996$&$0.433$&$0.948$&$0.976$&$0.981$&$0.967$&$0.858$&$0.632$\tabularnewline
		$60$&$ 90$&&$0.80$&$0.57$&$0.88$&$0.01$&$0.21$&$0.57$&$0.70$&$0.54$&$0.00$&$0.16$&&$0.996$&$0.964$&$0.998$&$0.407$&$0.959$&$0.985$&$0.990$&$0.987$&$0.873$&$0.733$\tabularnewline
		$60$&$150$&&$0.92$&$0.88$&$0.96$&$0.12$&$0.62$&$0.88$&$0.91$&$0.85$&$0.00$&$0.66$&&$0.999$&$0.990$&$0.999$&$0.577$&$0.989$&$0.998$&$0.998$&$0.996$&$0.876$&$0.914$\tabularnewline
																\hline
					\cline{1-24}
					\multicolumn{24}{c}{\bfseries }\tabularnewline
					\multicolumn{24}{c}{\bfseries Model 2}\tabularnewline
					\cline{1-24}
					\multicolumn{2}{c}{\bfseries  }&\multicolumn{1}{c}{\bfseries }&\multicolumn{10}{c|}{\bfseries Perfect Match}&\multicolumn{1}{c}{\bfseries }&\multicolumn{10}{c}{\bfseries Average Match}\tabularnewline
					\cline{1-2} \cline{3-13} \cline{15-24}
					\multicolumn{1}{c}{n}&\multicolumn{1}{c}{T}&\multicolumn{1}{c}{}&\multicolumn{1}{c}{S}&\multicolumn{1}{c}{PAM}&\multicolumn{1}{c}{S-Diag}&\multicolumn{1}{c}{S-Iden}&\multicolumn{4}{c}{C-LASSO}&\multicolumn{1}{c}{$k$-means}&\multicolumn{1}{c}{SBSA}&\multicolumn{1}{c}{}&\multicolumn{1}{c}{S}&\multicolumn{1}{c}{PAM}&\multicolumn{1}{c}{S-Diag}&\multicolumn{1}{c}{S-Iden}&\multicolumn{4}{c}{C-LASSO}&\multicolumn{1}{c}{$k$-means}&\multicolumn{1}{c}{SBSA}\tabularnewline
					\midrule
			$30$&$ 60$&&$0$&$0$&$0$&$0$&$0$&$0$&$0$&$0$&$0$&$0$&&$0.691$&$0.643$&$0.682$&$0.541$&$0.583$&$0.726$&$0.675$&$0.583$&$0.700$&$0.486$\tabularnewline
			$30$&$ 90$&&$0$&$0$&$0$&$0$&$0$&$0$&$0$&$0$&$0$&$0$&&$0.753$&$0.675$&$0.739$&$0.539$&$0.632$&$0.787$&$0.748$&$0.632$&$0.722$&$0.517$\tabularnewline
			$30$&$150$&&$0.01$&$0.01$&$0.02$&$0$&$0$&$0.02$&$0$&$0$&$0$&$0$&&$0.869$&$0.759$&$0.862$&$0.582$&$0.729$&$0.864$&$0.837$&$0.729$&$0.743$&$0.563$\tabularnewline
		$60$&$ 60$&&$0$&$0$&$0$&$0$&$0$&$0$&$0$&$0$&$0$&$0$&&$0.699$&$0.625$&$0.677$&$0.475$&$0.577$&$0.728$&$0.664$&$0.577$&$0.684$&$0.469$\tabularnewline
		$60$&$ 90$&&$0$&$0$&$0$&$0$&$0$&$0$&$0$&$0$&$0$&$0$&&$0.791$&$0.665$&$0.769$&$0.490$&$0.648$&$0.802$&$0.746$&$0.648$&$0.715$&$0.502$\tabularnewline
		$60$&$150$&&$0$&$0$&$0$&$0$&$0$&$0$&$0$&$0$&$0$&$0$&&$0.883$&$0.729$&$0.864$&$0.501$&$0.746$&$0.872$&$0.840$&$0.746$&$0.714$&$0.550$\tabularnewline
														\bottomrule
			\end{tabular}}
			\caption{\small{Comparison of group membership estimation. $S$ refers to our proposed spectral clustering method, PAM refers to the PAM method applied on the dissimilarity measure $V$. S-Diag refers to the spectral clustering approach but we plug in the diagnonal of the variance-covariance matrix estimate and likewise, S-Iden is similar but the variance-covariance matrix of individual estimate is taken to be the identity matrix. For C-LASSO, the four columns of the results are based on tuning parameter constants $c = 0.05 \times \{1, \frac{1}{4}, \frac{1}{8}, \frac{1}{32}\}$. The k-means approach is adapted from \cite{BM} where we iteratively cluster individuals with a refit to update group coefficients until convergence. For the iterative k-means method, we use 20 random starting groupings and a maximum of 100 iterations for each starting grouping. Then we take the grouping that minimizes the loss function. %
					\label{tab:logistic}}}
		\end{center}
	\end{table}
	\endgroup
	
	\vspace{-.25in}
	
	The first few rows in Table \ref{tab:logistic} report the performance of four different grouping methods for several combinations of $n$ and $T$ based on Model 1. We evaluate the performance by the proportion of perfect matches out of 100 simulation repetitions and the average matches described at the beginning of Section~\ref{sec:sims}. {The spectral clustering method works consistently better than the PAM approach (labeled PAM).  
		
From local analysis in Section \ref{sec: local}, one might expect that PAM and the iterative k-mean method should perform similarly. However, that analysis is asymptotic and inspecting the scatter plot of $\{\hat{\vecb}_i\}_{i=1,\dots,n}$ in Figure \ref{fig:betalogit} suggests that the coefficient estimates are not yet approximately Gaussian around their true values. Hence the asymptotic analysis may not provide a sufficiently accurate description of finite sample performance at the sample sizes considered in this simulation. We also note that there are non-convergence issues with the iterative k-means method. For $n = 30, L = 60$, in $12\%$ of the cases none of the random initialization lead to convergence after $100$ iterations and in $35\%$ of all cases the initialization that led to the best likelihood function did not correspond to convergence after 100 iterations.
			
In addition, the non-Gaussian shape of the point clouds may also provide an explanation for the superior performance of the spectral clustering method over PAM, since spectral clustering is known to have an advantage for clusters with non-elliptical shapes. 
		
For small $T$, using the diagonalized estimated variance-covariance matrix (labeled S-Diag) actually performs slightly better than using the full estimated variance-covariance matrix (labeled S). In finite samples, the off-diagonal terms of the variance-covariance matrix can be poorly estimated and using just the diagonal variance information seems to provide a small margin of better performance. However, discarding variance information completely (labeled S-Iden) clearly shows much worse performance. The iterative $k$-mean method performs much worse than spectral clustering with variance information in terms of perfect match. Its average match performance is in fact better than the spectral method when the variance information is not accounted for. This shows that spectral clustering needs to be applied together with variance information for good performance. We also note that the iteration k-means method sometimes does not converge after 100 iterations, which may explain why performance is not improving monotonically as $T$ increases. The SBSA approach in \cite{wang2021} has better performance than the k-mean method for perfect match proportion in Model 1, but is still inferior to CLASSO, PAM, and spectral clustering with variance information. }

	For C-LASSO, the penalty tuning parameter $\lambda$ is set at $c  T^{-1/3}\Var(Y_{it})$ as recommended by the authors with a few different values of $c$ specified in the caption of Table \ref{tab:logistic}. We see that the C-LASSO can perform very well for a suitably chosen constant, and our method matches that or overperforms sometimes. However, it can perform poorly if the tuning parameter constant is not chosen carefully. This imposes challenges for its practical usage. 
	
	Performance for Model 2 is reported in the last few rows in Table \ref{tab:logistic}. We clearly see that this is a much more challenging DGP with almost all methods failing to recover perfect match for group membership. In terms of average matches, our method still performs comparable or sometimes better than all other methods for all combinations of $n$ and $T$. 
	
	\subsubsection{Estimating the number of groups}\label{sec:numgrouplog}		
	
	Simulation results in Table \ref{tab:logistic} assume the researchers know the correct number of groups $G$. We report in Table \ref{tab:logitestG} the performance of the proposed method for estimation of $G$. The comparison is made with the information criteria proposed in \cite{SSP}. For Model 1, both methods work very well while for Model 2, the information criteria of \cite{SSP} works much better across most combinations of $n,T$. The information criteria relies on the whole sample to estimate $G$ while our heuristic approach only requires information on individual based estimates.
	
	\begin{table}[!h] \label{tab:logitestG}
		\begin{center}
			\begin{tabular}{rrcrrrrrcrrrrr}
				\toprule 
				\multicolumn{14}{c}{\bfseries Model 1}\tabularnewline
				\cline{1-14}
				\multicolumn{2}{c}{\bfseries  }&\multicolumn{1}{c}{\bfseries }&\multicolumn{5}{c}{\bfseries Heuristic}&\multicolumn{1}{c}{\bfseries }&\multicolumn{5}{c}{\bfseries IC-CLASSO}\tabularnewline
				\cline{1-2} \cline{4-8} \cline{10-14}
				\multicolumn{1}{c}{n}&\multicolumn{1}{c}{T}&\multicolumn{1}{c}{}&\multicolumn{1}{c}{1}&\multicolumn{1}{c}{2}&\multicolumn{1}{c}{3}&\multicolumn{1}{c}{4}&\multicolumn{1}{c}{$\geq$5}&\multicolumn{1}{c}{}&\multicolumn{1}{c}{1}&\multicolumn{1}{c}{2}&\multicolumn{1}{c}{3}&\multicolumn{1}{c}{4}&\multicolumn{1}{c}{$\geq$5}\tabularnewline
				\hline
				$30$&$ 60$&&$0.10$&$0$&\textbf{0.89}&$0$&$0.01$&&$0$&$0.02$&\textbf{0.98}&$0$&$0$\tabularnewline
				$30$&$ 90$&&$0$&$0$&\textbf{0.98}&$0.01$&$0.01$&&$0$&$0$&\textbf{0.99}&$0.01$&$0$\tabularnewline
				$30$&$150$&&$0$&$0$&\textbf{1.00}&$0$&$0$&&$0$&$0$&\textbf{0.94}&$0.06$&$0$\tabularnewline
				$60$&$ 60$&&$0.02$&$0$&\textbf{0.98}&$0$&$0$&&$0$&$0$&\textbf{1.00}&$0$&$0$\tabularnewline
				$60$&$ 90$&&$0$&$0$&\textbf{1.00}&$0$&$0$&&$0$&$0$&\textbf{1.00}&$0$&$0$\tabularnewline
				$60$&$150$&&$0$&$0$&\textbf{1.00}&$0$&$0$&&$0$&$0$&\textbf{0.94}&$0.06$&$0$\tabularnewline
				\hline 
				\cline{1-14}
				\multicolumn{14}{c}{\bfseries }\tabularnewline
				\multicolumn{14}{c}{\bfseries Model 2}\tabularnewline
				\cline{1-14}
				\multicolumn{2}{c}{\bfseries  }&\multicolumn{1}{c}{\bfseries }&\multicolumn{5}{c}{\bfseries Heuristic}&\multicolumn{1}{c}{\bfseries }&\multicolumn{5}{c}{\bfseries IC-CLASSO}\tabularnewline
				\cline{1-2} \cline{4-8} \cline{10-14}
				\multicolumn{1}{c}{n}&\multicolumn{1}{c}{T}&\multicolumn{1}{c}{}&\multicolumn{1}{c}{1}&\multicolumn{1}{c}{2}&\multicolumn{1}{c}{3}&\multicolumn{1}{c}{4}&\multicolumn{1}{c}{$\geq$5}&\multicolumn{1}{c}{}&\multicolumn{1}{c}{1}&\multicolumn{1}{c}{2}&\multicolumn{1}{c}{3}&\multicolumn{1}{c}{4}&\multicolumn{1}{c}{$\geq$5}\tabularnewline
				\hline
				$30$&$ 60$&&$0.93$&$0.07$&\textbf{0.00}&$0$&$0$&&$0$&$0.85$&\textbf{0.15}&$0$&$0$\tabularnewline
				$30$&$ 90$&&$0.76$&$0.20$&\textbf{0.04}&$0$&$0$&&$0$&$0.79$&\textbf{0.20}&$0.01$&$0$\tabularnewline
				$30$&$150$&&$0.61$&$0.16$&\textbf{0.22}&$0.01$&$0$&&$0$&$0.52$&\textbf{0.40}&$0.07$&$0.01$\tabularnewline
				$60$&$ 60$&&$0.92$&$0.08$&\textbf{0.00}&$0$&$0$&&$0$&$0.99$&\textbf{0.01}&$0$&$0$\tabularnewline
				$60$&$ 90$&&$0.87$&$0.11$&\textbf{0.02}&$0$&$0$&&$0$&$0.92$&\textbf{0.08}&$0$&$0$\tabularnewline
				$60$&$150$&&$0.47$&$0.16$&\textbf{0.36}&$0.01$&$0$&&$0$&$0.62$&\textbf{0.36}&$0.02$&$0$\tabularnewline
				\bottomrule
		\end{tabular}\end{center}
		\caption{Estimation of G for Model 1 and Model 2 with true $G = 3$. IC-CLASSO is based on a combination of the log-likelihood evaluation and a penalty term that depends on a turning parameter, group size and $n$ and $T$.}
	\end{table}
	
	\subsubsection{Computation times}
	In what follows, Table \ref{tab:logittime} reports the computation times for our method versus C-LASSO,  the iterative k-means method which requires iteration with the whole sample as well as SBSA proposed in \cite{wang2021}. The run time of SBSA is very similar to our proposed method because it also only required the use of individual coefficients. For C-LASSO the reported times are based on maximum 20 iterations for optimization. The final estimates are obtained when the objective function differs less than 0.001 and when the $\ell_2$ norm of the estimates group centers differ by less than $0.1\%$ or when the maximum iterations are reached. For the iterative k-mean method, we take 20 random start of group membership and pick the best estimates that minimizes the loss function criteria. For each random starting, the maximum number of iteration is 100. With known $G$, our method and the SBSA method has the least computational time while the iterative k-means has the largest. This is because quite often the k-mean algorithm does not converge before the maximum 100 iterations is reached. Our algorithm spends most of its computation time on individual based estimates. For the C-LASSO method, the individual estimates are computed as initial estimates before applying a re-optimization with the penalty terms for group center estimates. Because the optimization problem is only convex for optimizing over one group center while fixing the others, it has to optimize group by group, which increases computation times. The iterative k-means method takes the most computation time, as it requires individual loops to decide group membership until convergence as well as refitting to obtain group center estimates. In practice, we observe that it may take a very large number of iterations to converge. When $G$ is not known, the computation time of our method does not increase because the heuristic method recycles already computed similarity measures to estimate $G$. The SBSA method uses a IC criteria to estimate $G$. Because grouping is obtained very fast for each candidate model and the IC criteria just needs to evaluate the likelihood of each estimated candidate model, the increase in computation time is also very minimal. Both the C-LASSO and k-means rely on information criteria to estimate $G$ which requires fitting of all candidate models with varying $G$. Hence computation times grow at least linearly with the number of candidate models. The reported times in Table \ref{tab:logittime} are based on candidate models with $G = \{1,2,3,4,5\}$.

	\begingroup
	\tabcolsep = 5pt
	\def\arraystretch{1}
	
	\begin{table}[ht]
		\small
		\centering
		\begin{tabular}{rrrrrrrcrrrr}
			\toprule
			\multicolumn{2}{c}{} & \multicolumn{4}{r}{Known G} &\multicolumn{1}{c}{}& \multicolumn{4}{r}{Estimate G}\\ 
			\cline{1-2} \cline{3-7} \cline{9-12}
			& n & T & Spectral & C-LASSO & Kmeans &SBSA &  & Spectral & C-LASSO & Kmeans & SBSA\\ 
			\hline
			& 30 & 60 & 0.32 & 5.88 & 13.23 &0.38   && 0.32 & 31.75 & 86.27 & 0.44\\ 
			& 30& 90 & 0.38 & 6.55 & 24.76 & 0.44&& 0.38 & 39.94 & 122.75 & 0.50\\ 
			& 30 & 150 & 0.42 & 6.87 & 52.69 & 0.43&& 0.43 & 48.09 & 245.97 & 0.52\\ 
			& 60 & 60 & 1.26 & 10.15 & 48.27 & 0.77&& 1.28 & 59.48 & 445.17 & 0.94\\ 
			& 60 & 90 & 1.47 & 11.10 & 197.70 & 0.80&& 1.52& 79.38 & 742.79&1.12\\ 
			& 60 & 150 & 1.61 & 12.30 & 225.92 & 0.85&& 1.61 & 123.48 & 801.23 & 1.34\\ 
			\bottomrule
		\end{tabular}
		\caption{\small{Comparison of computation time in seconds for Model 1: the left panel includes computation times when we assume $G$ is known. The right panel includes computation times when we have to estimate $G$. For our proposed method, we use the heuristic method to estimate $G$ and for all other methods, we use some form of information criteria to estimate $G$ from the set $\{1,2,3,4,5\}$. Timings are averages of 5 data realizations. }}
		\label{tab:logittime} 
	\end{table}
	\endgroup

	\subsection{Quantile regression} \label{sec:simquant}
	In this section, we consider quantile regression with individual-specific intercepts and grouping on the slopes as in Example~\ref{ex:qrslope}, and with joint slope and grouping of intercepts from Example~\ref{ex:qrint}. We focus on the clustering performance with a given (correctly specified) number of groups, the performance of the proposed heuristic, and several other methods for selecting the number of groups is considered in Section~\ref{sec:simnumgr}.

	\subsubsection{Quantile regression individual--specific intercepts and grouping on slopes} \label{sec:simqrslope}
	
	Recall the model specification in Example~\ref{ex:qrslope}:
	$
	q_{it}(\tau) = \alpha_i(\tau) + \vecx_{it}^\top\vecb_i(\tau) = \vecz_{it}^\top\vecg_i(\tau)\,.
	$
	This setting was also considered in~\cite{zhang2019} and we will compare the performance of the proposed method with theirs. 
	Simulations are done in the \textbf{quantreg} package in R. Covariance estimates are computed using the function \texttt{summary.rq()} with option \texttt{se="nid"} and default bandwidth choice \texttt{hs=true}.

	We consider three models. Model 1 corresponds to Model 3 from~\cite{zhang2019}.%
	
	\textbf{Model 1:} 
	$$
	y_{it}=\alpha_i +\vecx_{it}^\top\vecb_{g_i}+0.5x_{2it}e_{it}\,,
	$$
	where
	$$
	\alpha_i\overset{iid}{\sim} U(0,1), i=1,\dots,n,~~g_i\text{ are sampled randomly with equal probablities from }\{1,2,3\}.
	$$
	Set 
	$$
	\vecb_1=\begin{pmatrix} 0.1\\  0.1\end{pmatrix}, 
	\vecb_2=\begin{pmatrix}0.2 \\  0.2\end{pmatrix}, 
	\vecb_3=\begin{pmatrix}0.3\\  0.3\end{pmatrix}, 
	$$
	$e_{it}\overset{iid}{\sim}N(0,1) \text{ or } e_{it}\overset{iid}{\sim} t(3),$
	and
	$$
	\vecx_{it}^\top=(x_{1it},x_{2it})\,,~\text{with}~x_{1it}=0.3\alpha_i +z_{1it}, \text{ with } z_{1it}\overset{iid}{\sim}N(0,1),\text{and}~x_{2it}\sim U(0,1).
	$$

	{Results for $\tau = 0.5$ are reported in Table~\ref{tab:model1a}. We considered the PAM method as well as several variants of spectral clustering. In particular, $S^g$ refers to spectral clustering when we apply the Gaussian kernel instead of the exponential kernel in Algorithm 1. S-Diag is spectral clustering when we use only the diagonal entries in the variance-covariance matrix and sets the off--diagonal entries to zero. S-Iden is spectral clustering when we do not use the variance covariance information of the coefficient estimates. k-$\op{mean}^\circ$ applies the k-mean clustering algorithm on $\hat \vecb$ and ZWZ19 is the method in \cite{zhang2019} which adapts the iterative method of \cite{BM} to the quantile regression case. }
	
	Spectral clustering shows uniformly best performance in terms of average and perfect matching across all settings considered. The approach of~\cite{zhang2019} comes close in terms of average matching and is better than both methods which ignore variance information (S-Iden and k-$\op{means}^\circ$) but is slightly worse than PAM and the spectral method. This agrees with the theoretical analysis in Section \ref{sec: local} which suggests that for a heteroscedastic model as in Model 1, the loss function based approach implicitly takes into account variance information, but is not as efficient as using the dissimilarity measure as in Algorithm 1. Surprisingly,~\cite{zhang2019} shows much worse performance in terms of perfect matching. A closer look at the results revealed that in this model the method of~\cite{zhang2019} often assigns one individual to the wrong group, resulting in good average matching but inferior perfect matching performance. Despite our best efforts at varying various parameters of~\cite{zhang2019} (e.g. criteria for termination and number of random starting points), we were not able to alleviate this issue. Among spectral methods, using the Gaussian kernel to transform the dissimilarity measure does not lead to improvements in terms of performance. Using just the diagonal of the variance-covariance matrix also yields almost identical performance than using the estimated full variance-covariance matrix. We do note that the PAM method is slightly worse than the corresponding spectral clustering method. This seems to be a persistent phenomenon we observe in all the simulations for quantile regression. 
	{Moreover, we provide the scatter plot of $\{\hat{\vecb}_i\}_{i=1,\dots,n}$ in the online Appendix (Section~\ref{app:plots} Figure \ref{fig:beta}) for a data realization where the proposed method achieves perfect matching but all the other methods fail. The figure suggests that there seems to be clear separation along the first coordinate, but not in the second coordinate of $\hat \beta$, which is driven by the fact that the second coordinate is estimated with more noise. However, this information is not available to the researcher. Accounting for the variance information improves the performance compared to methods that do not account for this. The loss function based method of \cite{zhang2019} implicitly accounts for this to some extent, but less well than reweighting. }
	
The simulation findings suggest that the main improvements in our proposal are due to using variance information, while using spectral clustering instead of PAM only leads to modest additional gains. The results here are also consistent with the local analysis in Theorem~\ref{thm:loc} since this is a model with heteroscedastic errors.

	The second model we consider has four groups, with pairs of group centres being close together. Both entries of the coefficient vector carry information about the group structure, but one of them is estimated more precisely than the other one.
	
	\textbf{Model 2:} 
	$$
	y_{it}=\alpha_i +\vecx_{it}^\top\vecb_{g_i}+0.5x_{2it}e_{it}\,,
	$$
	where 
	$$
	\alpha_i=1, i=1,\dots,n,~~
	g_i\text{ are sampled randomly with equal probabilities from}\{1,2,3,4\}.
	$$
	Set
	$$
	\vecb_1=\begin{pmatrix} 0.1\\  0.1\end{pmatrix}, 
	\vecb_2=\begin{pmatrix}0.2 \\  0.2\end{pmatrix}, 
	\vecb_3=\begin{pmatrix}3\\  3\end{pmatrix}, 
	\vecb_4=\begin{pmatrix}3.1\\  3.1\end{pmatrix}, 
	$$
	$e_{it}\overset{iid}{\sim} N(0,1) \text{ or } t(3),$
	and set
	$$
	\vecx_{it}^\top=(x_{1it},x_{2it})\,,~\text{with}~x_{1it}=0.3\alpha_i +z_{1it},~\text{where}~z_{1it}\overset{iid}{\sim}N(0,1),~\text{and}~x_{2it}\sim U(0,1)\,.
	$$
	
	Results for $\tau = 0.5$ are reported in Table~\ref{tab:model2a}.
	The results are fairly similar to those of Model 1, the proposed method has the best performance with respect to perfect and average match. The design of this DGP is also used later for estimation of $G$ to demonstrate the drawback of stability based method proposed in \cite{wang2010}. Again, the main performance boost comes from using reweighting and using spectral clustering instead of PAM only leads to small additional accuracy gains.

	\begingroup
	\tabcolsep = 3pt
	\def\arraystretch{1}
	
	\begin{table}[h]
		\centering
		\scalebox{0.8}{
			\begin{tabular}{cccccccccc|cccccccc}
				\toprule
				\multirow{2}{*}{n}& \multirow{2}{*}{T} & \multicolumn{1}{c}{}& \multicolumn{7}{c|}{\textbf{Perfect Match}} &
				\multicolumn{7}{c}{\textbf{Average Match}} \\
				&  & &  {$\op{S}^g$} & {S} & {PAM} &  {S-Diag} & {S-Iden}&  $k$-$\op{means}^\circ$ & {ZWZ19} & &{$\op{S}^g$} & {S} & {PAM} &  {S-Diag} &{S-Iden}&  $k$-$\op{means}^\circ$ & {ZWZ19} \\
				\midrule
				\multicolumn{18}{c}{$N(0,1), \tau = 0.5$}\\
				\midrule
				$30$&$ 60$&&$0.23$&$0.23$&$0.17$&$0.22$&$0$&$0$&$0.09$&&$0.95$&$0.95$&$0.93$&$0.95$&$0.64$&$0.63$&$0.90$\tabularnewline
				$30$&$ 90$&&$0.62$&$0.62$&$0.56$&$0.61$&$0$&$0$&$0.39$&&$0.98$&$0.98$&$0.98$&$0.98$&$0.70$&$0.70$&$0.97$\tabularnewline
				$30$&$120$&&$0.81$&$0.81$&$0.77$&$0.80$&$0$&$0$&$0.67$&&$0.99$&$0.99$&$0.99$&$0.99$&$0.75$&$0.75$&$0.98$\tabularnewline
				$60$&$ 60$&&$0.06$&$0.06$&$0.04$&$0.05$&$0$&$0$&$0.01$&&$0.95$&$0.95$&$0.94$&$0.95$&$0.63$&$0.61$&$0.92$\tabularnewline
				$60$&$ 90$&&$0.40$&$0.40$&$0.34$&$0.39$&$0$&$0$&$0.16$&&$0.98$&$0.98$&$0.98$&$0.98$&$0.70$&$0.69$&$0.97$\tabularnewline
				$60$&$120$&&$0.72$&$0.72$&$0.65$&$0.71$&$0$&$0$&$0.45$&&$0.99$&$1.00$&$0.99$&$0.99$&$0.76$&$0.76$&$0.99$\tabularnewline
				$90$&$ 60$&&$0.02$&$0.02$&$0.01$&$0.02$&$0$&$0$&$0.00$&&$0.96$&$0.96$&$0.94$&$0.95$&$0.63$&$0.61$&$0.92$\tabularnewline
				$90$&$ 90$&&$0.27$&$0.26$&$0.20$&$0.26$&$0$&$0$&$0.07$&&$0.99$&$0.99$&$0.98$&$0.98$&$0.70$&$0.69$&$0.97$\tabularnewline
				$90$&$120$&&$0.62$&$0.62$&$0.57$&$0.62$&$0$&$0$&$0.34$&&$1.00$&$1.00$&$0.99$&$1.00$&$0.77$&$0.76$&$0.99$\tabularnewline
				\midrule
				\multicolumn{18}{c}{$t(3), \tau = 0.5$}\\
				\midrule
				$30$&$ 60$&&$0.12$&$0.11$&$0.07$&$0.10$&$0$&$0$&$0.04$&&$0.92$&$0.92$&$0.89$&$0.92$&$0.61$&$0.60$&$0.86$\tabularnewline
				$30$&$ 90$&&$0.42$&$0.42$&$0.38$&$0.41$&$0$&$0$&$0.23$&&$0.97$&$0.97$&$0.96$&$0.97$&$0.67$&$0.67$&$0.94$\tabularnewline
				$30$&$120$&&$0.73$&$0.73$&$0.69$&$0.73$&$0$&$0$&$0.51$&&$0.99$&$0.99$&$0.99$&$0.99$&$0.72$&$0.72$&$0.98$\tabularnewline
				$60$&$ 60$&&$0.01$&$0.01$&$0.01$&$0.01$&$0$&$0$&$0.00$&&$0.93$&$0.93$&$0.91$&$0.93$&$0.60$&$0.58$&$0.88$\tabularnewline
				$60$&$ 90$&&$0.22$&$0.23$&$0.16$&$0.21$&$0$&$0$&$0.06$&&$0.97$&$0.97$&$0.97$&$0.97$&$0.67$&$0.66$&$0.95$\tabularnewline
				$60$&$120$&&$0.53$&$0.52$&$0.45$&$0.52$&$0$&$0$&$0.26$&&$0.99$&$0.99$&$0.99$&$0.99$&$0.72$&$0.72$&$0.98$\tabularnewline
				$90$&$ 60$&&$0.00$&$0.00$&$0.00$&$0.00$&$0$&$0$&$0.00$&&$0.94$&$0.93$&$0.91$&$0.93$&$0.60$&$0.58$&$0.89$\tabularnewline
				$90$&$ 90$&&$0.10$&$0.09$&$0.06$&$0.09$&$0$&$0$&$0.02$&&$0.97$&$0.97$&$0.97$&$0.97$&$0.66$&$0.66$&$0.95$\tabularnewline
				$90$&$120$&&$0.39$&$0.39$&$0.34$&$0.38$&$0$&$0$&$0.13$&&$0.99$&$0.99$&$0.99$&$0.99$&$0.73$&$0.72$&$0.98$\tabularnewline
				\bottomrule
		\end{tabular}}
		\caption{\small {Membership estimation based on Spectral (the proposed method), ZWZ19 \cite{zhang2019}, and the vanilla $k-$means method without variance information for Model 1 with $\tau=0.5$ and two error distributions. \label{tab:model1a}}}%
\end{table}
\endgroup

\vspace{.1in}

\begingroup
\tabcolsep = 3pt
\def\arraystretch{1}

\begin{table}[h]
	\centering
	\scalebox{0.8}{
		\begin{tabular}{cccccccccc|cccccccc}
			\toprule
			\multirow{2}{*}{n}& \multirow{2}{*}{T} & \multicolumn{1}{c}{}& \multicolumn{7}{c|}{\textbf{Perfect Match}} &
			\multicolumn{7}{c}{\textbf{Average Match}} \\
			&  & & {$\op{S}^g$} & {S} & {PAM} &  {S-Diag} & {S-Iden}&  $k$-$\op{means}^\circ$ & {ZWZ19} & &  {$\op{S}^g$} & {S} & {PAM} &  {S-Diag} &{S-Iden}&  $k$-$\op{means}^\circ$ & {ZWZ19} \\
			\midrule
			\multicolumn{18}{c}{$N(0,1), \tau = 0.5$}\\
			\midrule
			$30$&$ 60$&&$0.31$&$0.32$&$0.26$&$0.31$&$0$&$0$&$0.06$&&$0.95$&$0.96$&$0.94$&$0.96$&$0.70$&$0.69$&$0.84$\tabularnewline
			$30$&$ 90$&&$0.67$&$0.69$&$0.60$&$0.67$&$0$&$0$&$0.23$&&$0.98$&$0.99$&$0.98$&$0.98$&$0.74$&$0.73$&$0.88$\tabularnewline
			$30$&$120$&&$0.87$&$0.88$&$0.83$&$0.87$&$0$&$0$&$0.32$&&$0.99$&$1.00$&$0.99$&$1.00$&$0.79$&$0.78$&$0.88$\tabularnewline
			$60$&$ 60$&&$0.13$&$0.13$&$0.07$&$0.13$&$0$&$0$&$0.01$&&$0.96$&$0.96$&$0.95$&$0.96$&$0.68$&$0.67$&$0.84$\tabularnewline
			$60$&$ 90$&&$0.49$&$0.50$&$0.43$&$0.50$&$0$&$0$&$0.09$&&$0.99$&$0.99$&$0.98$&$0.99$&$0.75$&$0.73$&$0.86$\tabularnewline
			$60$&$120$&&$0.77$&$0.78$&$0.74$&$0.78$&$0$&$0$&$0.19$&&$1.00$&$1.00$&$1.00$&$1.00$&$0.81$&$0.78$&$0.86$\tabularnewline
			$90$&$ 60$&&$0.05$&$0.05$&$0.03$&$0.05$&$0$&$0$&$0.00$&&$0.96$&$0.97$&$0.96$&$0.97$&$0.69$&$0.68$&$0.83$\tabularnewline
			$90$&$ 90$&&$0.38$&$0.38$&$0.30$&$0.36$&$0$&$0$&$0.05$&&$0.99$&$0.99$&$0.99$&$0.99$&$0.75$&$0.74$&$0.84$\tabularnewline
			$90$&$120$&&$0.71$&$0.71$&$0.61$&$0.70$&$0$&$0$&$0.08$&&$1.00$&$1.00$&$1.00$&$1.00$&$0.81$&$0.79$&$0.83$\tabularnewline
			\midrule
			\multicolumn{18}{c}{$t(3), \tau = 0.5$}\\
			\midrule
			$30$&$ 60$&&$0.20$&$0.20$&$0.14$&$0.19$&$0$&$0$&$0.02$&&$0.93$&$0.94$&$0.91$&$0.94$&$0.67$&$0.66$&$0.80$\tabularnewline
			$30$&$ 90$&&$0.51$&$0.53$&$0.44$&$0.52$&$0$&$0$&$0.11$&&$0.97$&$0.98$&$0.97$&$0.98$&$0.72$&$0.71$&$0.84$\tabularnewline
			$30$&$120$&&$0.74$&$0.76$&$0.70$&$0.75$&$0$&$0$&$0.19$&&$0.99$&$0.99$&$0.99$&$0.99$&$0.76$&$0.76$&$0.86$\tabularnewline
			$60$&$ 60$&&$0.03$&$0.04$&$0.02$&$0.04$&$0$&$0$&$0.00$&&$0.94$&$0.95$&$0.93$&$0.95$&$0.66$&$0.66$&$0.79$\tabularnewline
			$60$&$ 90$&&$0.30$&$0.31$&$0.26$&$0.30$&$0$&$0$&$0.02$&&$0.98$&$0.98$&$0.98$&$0.98$&$0.72$&$0.71$&$0.82$\tabularnewline
			$60$&$120$&&$0.61$&$0.62$&$0.56$&$0.60$&$0$&$0$&$0.08$&&$0.99$&$0.99$&$0.99$&$0.99$&$0.77$&$0.75$&$0.83$\tabularnewline
			$90$&$ 60$&&$0.01$&$0.01$&$0.00$&$0.01$&$0$&$0$&$0.00$&&$0.95$&$0.95$&$0.93$&$0.95$&$0.67$&$0.66$&$0.78$\tabularnewline
			$90$&$ 90$&&$0.18$&$0.18$&$0.14$&$0.18$&$0$&$0$&$0.01$&&$0.98$&$0.98$&$0.98$&$0.98$&$0.72$&$0.71$&$0.80$\tabularnewline
			$90$&$120$&&$0.49$&$0.50$&$0.41$&$0.49$&$0$&$0$&$0.03$&&$0.99$&$0.99$&$0.99$&$0.99$&$0.78$&$0.76$&$0.79$\tabularnewline			\bottomrule
	\end{tabular}}
	\caption{\small {Membership estimation based on Spectral (the proposed method), ZWZ19 \cite{zhang2019}, and the vanilla $k-$means method without variance information for Model 2 with $\tau=0.5$ and two error distributions. \label{tab:model2a}}}%
\end{table}

\endgroup

\textbf{Model 3:} 
{The last model we consider has the same specification as Model 1, except that we allow individuals to have varying time period lengths and individuals with shorter  panel length are expected to be estimated with larger standard error. This resembles many macroeconomic settings where individual units have varying panel length and hence individual based estimates are of very different quality. In the simulation, the panel lengths are a random draw from $\{30,60,90\}$ with equal probabilities. Results are summarized in Table \ref{tab:model5a}. The overall performance deteriorates in comparison to Table \ref{tab:model1a} since some individuals with shorter panel length are estimated with more noise. The spectral clustering methods ($S^g$ and S) perform comparably. Using just the diagonal information of the covariance matrix yields equally good performance, but not using the covariance information at all clearly performs worse. The vanilla k-means method again performs very similarly to spectral clustering without accounting for variance information. The method proposed by \cite{zhang2019} is competitive, improves upon estimates not using variance information, but is slightly inferior to PAM and our proposed spectral clustering methods. }

\vspace{.1in}

\begingroup
\tabcolsep = 3pt
\def\arraystretch{1}

\begin{table}[h]
\centering
\scalebox{0.8}{
	\begin{tabular}{ccccccccc|cccccccc}
		\toprule
		\multirow{2}{*}{n}& \multicolumn{1}{c}{}&& \multicolumn{7}{c|}{\textbf{Perfect Match}} &
		\multicolumn{7}{c}{\textbf{Average Match}} \\
		&  &  {$\op{S}^g$} & {S} & {PAM} &  {S-Diag} & {S-Iden}&  $k$-$\op{means}^\circ$ & {ZWZ19} & & {$\op{S}^g$} & {S} & {PAM} &  {S-Diag} &{S-Iden}&  $k$-$\op{means}^\circ$ & {ZWZ19} \\
		\midrule
		\multicolumn{17}{c}{$N(0,1), \tau = 0.5$}\\
		\midrule
		$30$&&$0.09$&$0.10$&$0.05$&$0.08$&$0$&$0$&$0.04$&&$0.92$&$0.93$&$0.86$&$0.92$&$0.61$&$0.61$&$0.88$\tabularnewline
		$60$&&$0.01$&$0.02$&$0.00$&$0.01$&$0$&$0$&$0.00$&&$0.94$&$0.94$&$0.87$&$0.93$&$0.59$&$0.59$&$0.90$\tabularnewline
		$90$&&$0.00$&$0.00$&$0.00$&$0.00$&$0$&$0$&$0.00$&&$0.93$&$0.93$&$0.88$&$0.93$&$0.59$&$0.58$&$0.90$\tabularnewline
		\midrule
		\multicolumn{17}{c}{$t(3), \tau = 0.5$}\\
		\midrule
		$30$&&$0.04$&$0.04$&$0.02$&$0.04$&$0$&$0$&$0.01$&&$0.89$&$0.90$&$0.82$&$0.89$&$0.58$&$0.58$&$0.84$\tabularnewline
		$60$&&$0.00$&$0.00$&$0.00$&$0.00$&$0$&$0$&$0.00$&&$0.91$&$0.91$&$0.83$&$0.91$&$0.57$&$0.56$&$0.86$\tabularnewline
		$90$&&$0.00$&$0.00$&$0.00$&$0.00$&$0$&$0$&$0.00$&&$0.91$&$0.91$&$0.83$&$0.91$&$0.56$&$0.55$&$0.87$\tabularnewline
		\bottomrule
\end{tabular}}
\caption{\small {Membership estimation based on Spectral (the proposed method), ZWZ19 \cite{zhang2019}, and the vanilla $k-$means method without variance information for Model 3 with $\tau=0.5$ and two error distributions. \label{tab:model5a}}}%
\end{table}

\endgroup

\subsubsection{Quantile regression with joint slope and grouping on intercepts} \label{sec:simqrint}

In this section, we consider the setting in Example~\ref{ex:qrint}. The spectral clustering approach is based on the estimators for the slopes and variances described in Section~\ref{sec:qrint}. More precisely, recall the definition of $\tilde \alpha_1, \tilde \vecb$ in~\eqref{eq:alphabetaqrint} and $\hat \Sigma_{i,j}$ defined in~\eqref{eq:hatsigma1}. 
The variation matrix~$\hat V$ which we use as input to the spectral clustering algorithm is given by~$\hat V_{ij}:=  \hat \Sigma_{i,j}^{-1/2}|\tilde\alpha_i-\tilde\alpha_j|\,. $ For comparison, we also consider spectral clustering setting all variance estimators set to be equal, the naive $k$-means approach on estimated $\tilde \alpha_i$ from~\eqref{eq:alphabetaqrint}, 
and the convex clustering procedure of \cite{gu2019panel}. Tuning parameters for \cite{gu2019panel} were set as described in the latter paper. 
The following model corresponds to DGP1 location scale shift model in~\cite{gu2019panel}.

\textbf{Model 4:} 
$$
y_{it}=\alpha_i+x_{it}\beta+(1+x_{it}\gamma)e_{it}\,.
$$
where
$e_{it}\overset{iid}{\sim}N(0,1) \text{ or } e_{it}\overset{iid}{\sim} t(3),$
$\alpha_i\in\{1,2,3\}~~\text{with the same proportions},$
and
$\beta=1,\gamma=0.1,x_{it}=\gamma_i+v_{it},$
where $\gamma_i$ and $v_{it}$ are independent and identically distributed from standard normal distribution over $i, t$, respectively.

{Tables~\ref{tab:model4a} summarizes the proportion of perfect classification and the average of the percentage of correct classification based on the proposed method with both exponential and the Gaussian kernel (denoted as S and $S^g$ respectively). Spectral clustering ignoring variance information is denoted as S-Iden, and $k$-means clustering on $\tilde \alpha_i$ is denoted as $k$-$\op{means}^\circ$ along with the PAM method for clustering. The procedure from~\cite{gu2019panel} is denoted as GV.}

{In this model, including variance information is not helpful (S versus S-Iden). A possible explanation for variance information not being useful in this model is that the $\alpha_i$ are one-dimensional and there are no directions of larger or smaller variation in their estimates. The PAM and vanilla k-means performs identical in this model. The key difference is that PAM picks a representative point as the center of a group while k-means will take a cluster based average, this does not materialize any differences for grouping estimation in this Model. The method proposed in \cite{gu2019panel}, which uses convex clustering method to group the intercept shows slightly inferior performance for smaller $T$, but is otherwise comparable for larger $T$. }

\begingroup
\tabcolsep = 3pt
\def\arraystretch{1}

\begin{table}[ht]
\centering
\scalebox{0.8}{
\begin{tabular}{ccccccccc|ccccccc}
	\toprule
	\multirow{2}{*}{n}& \multirow{2}{*}{T} &\multicolumn{1}{c}{}&  \multicolumn{6}{c|}{\textbf{Perfect Match}} &
	\multicolumn{6}{c}{\textbf{Average Match}} \\
	&  && {S}  & {$S^g$}& {S-Iden} & PAM & $k$-$\op{means}^\circ$ & GV && {S}  & {$S^g$}& {S-Iden} & PAM & $k$-$\op{means}^\circ$ & GV \\
	\midrule
	\multicolumn{15}{c}{$N(0,1)$, $\tau=0.5$}\\
	\midrule
	$30$&$15$&&$0.07$&$0.08$&$0.06$&$0.05$&$0.05$&$0.03$&&$0.902$&$0.904$&$0.891$&$0.887$&$0.905$&$0.670$\tabularnewline
	$30$&$30$&&$0.52$&$0.53$&$0.52$&$0.45$&$0.49$&$0.39$&&$0.978$&$0.978$&$0.978$&$0.971$&$0.978$&$0.880$\tabularnewline
	$30$&$60$&&$0.94$&$0.94$&$0.94$&$0.91$&$0.92$&$0.91$&&$0.998$&$0.998$&$0.998$&$0.997$&$0.997$&$0.988$\tabularnewline
	$60$&$15$&&$0.00$&$0.00$&$0.01$&$0.01$&$0.01$&$0.00$&&$0.915$&$0.915$&$0.914$&$0.896$&$0.916$&$0.663$\tabularnewline
	$60$&$30$&&$0.35$&$0.36$&$0.36$&$0.27$&$0.34$&$0.22$&&$0.982$&$0.982$&$0.982$&$0.977$&$0.982$&$0.903$\tabularnewline
	$60$&$60$&&$0.91$&$0.90$&$0.90$&$0.87$&$0.90$&$0.86$&&$0.998$&$0.998$&$0.998$&$0.998$&$0.998$&$0.994$\tabularnewline
	$90$&$15$&&$0.00$&$0.00$&$0.00$&$0.00$&$0.00$&$0.00$&&$0.913$&$0.913$&$0.913$&$0.899$&$0.915$&$0.662$\tabularnewline
	$90$&$30$&&$0.19$&$0.20$&$0.20$&$0.16$&$0.17$&$0.12$&&$0.982$&$0.983$&$0.982$&$0.979$&$0.982$&$0.908$\tabularnewline
	$90$&$60$&&$0.89$&$0.88$&$0.88$&$0.84$&$0.87$&$0.83$&&$0.999$&$0.999$&$0.999$&$0.998$&$0.999$&$0.994$\tabularnewline
	\midrule
	\multicolumn{15}{c}{$t(3)$, $\tau=0.5$}\\
	\midrule
	$30$&$15$&&$0.01$&$0.01$&$0.02$&$0.01$&$0.02$&$0.01$&&$0.875$&$0.876$&$0.851$&$0.855$&$0.877$&$0.638$\tabularnewline
	$30$&$30$&&$0.34$&$0.34$&$0.34$&$0.29$&$0.32$&$0.23$&&$0.962$&$0.962$&$0.957$&$0.953$&$0.962$&$0.836$\tabularnewline
	$30$&$60$&&$0.88$&$0.88$&$0.88$&$0.83$&$0.85$&$0.80$&&$0.996$&$0.996$&$0.995$&$0.993$&$0.995$&$0.975$\tabularnewline
	$60$&$15$&&$0.00$&$0.00$&$0.00$&$0.00$&$0.00$&$0.00$&&$0.882$&$0.882$&$0.876$&$0.861$&$0.885$&$0.595$\tabularnewline
	$60$&$30$&&$0.16$&$0.16$&$0.17$&$0.13$&$0.15$&$0.08$&&$0.971$&$0.971$&$0.970$&$0.964$&$0.970$&$0.850$\tabularnewline
	$60$&$60$&&$0.78$&$0.79$&$0.78$&$0.73$&$0.78$&$0.71$&&$0.996$&$0.996$&$0.996$&$0.994$&$0.996$&$0.983$\tabularnewline
	$90$&$15$&&$0.00$&$0.00$&$0.00$&$0.00$&$0.00$&$0.00$&&$0.887$&$0.888$&$0.885$&$0.863$&$0.890$&$0.550$\tabularnewline
	$90$&$30$&&$0.05$&$0.06$&$0.04$&$0.04$&$0.05$&$0.03$&&$0.969$&$0.969$&$0.969$&$0.962$&$0.969$&$0.851$\tabularnewline
	$90$&$60$&&$0.69$&$0.68$&$0.69$&$0.63$&$0.66$&$0.63$&&$0.996$&$0.996$&$0.996$&$0.995$&$0.996$&$0.984$\tabularnewline
	\bottomrule
\end{tabular}}
\caption{\label{tab:model4a}\small{Membership estimation based on Spectral, $k$-means and the method proposed in \cite{gu2019panel} (GV) for Model 4 with $\tau=0.5$.}}
\end{table}
\endgroup

\subsection{Determining the Number of Groups}\label{sec:simnumgr}

In this section, we compare the proposed heuristic in~\eqref{eq:selGstar} for selecting the number of groups with other proposals from the literature. A general principle for determining the number of clusters using cross-validation (CV) in combination with the stability of cluster assignments was proposed by~\cite{wang2010} and adapted to quantile regression with grouping on the slopes in~\cite{zhang2019}. The underlying idea is directly applicable to any clustering algorithm, and hence we consider two versions: CV-kmeans corresponding to the proposal of~\cite{zhang2019}, and CV-Spectral which uses spectral clustering as proposed in the present paper as the underlying clustering algorithm. The maximum numbers of clusters to consider, denoted by $\op{Gmax}$, is set to $10$ throughout. 
{Results for Model 1 are presented in Table~\ref{tab:model1estGa} and those for Model 2 are summarized in Table \ref{tab:model3estGa}.} {All results reported in this section are based on {500} simulation repetitions.}

For Model 1, the proposed heuristic has the best performance for all settings except for $t(3)$ errors with $n = 30, 60, T = 60$ where the CV--Spectral outperforms slightly. CV--Spectral shows better performance than CV--kmeans consistently. We note that Model 1 is perfectly symmetric with an odd number of groups, this corresponds to a setting that is favorable for stability--based methods. Model 2 demonstrates a situation where stability based method performs badly.

Model 2 corresponds to an even number of groups, and both CV methods fail in this setting because they always pick $2$ groups. In light of the findings in~\cite{ben2006sober}, this is not surprising; see also~\cite{von2010clustering}. The issue is that a wrong grouping with two groups corresponding to coefficients $(0.1,0.1), (0.2,0.2)$ in one group and $(3,3), (3.1,3.1)$ in the other is very stable under variations of the data which leads to confusion of the stability--based methods. In the online appendix, we plot the paths of cross-validated stability scores for different $n,T$ combinations and different realizations of the data (Section~\ref{app:plots}  Figure~\ref{fig:4}). For larger $T$ there is a local minimum at the true number of groups $G = 4$, but the global minima are always at $G = 2$. The proposed heuristic works reasonably well and is able to pick up the correct number of groups as $T$ increases. 

Model 4 corresponds to common slopes and group structure on the intercept (see also Example~\ref{ex:qrint}). Since this setting was also considered in~\cite{gu2019panel}, we  consider the information criterion proposed in there. Results are presented in  Table~\ref{tab:model4estG0.5}. We also include cross--validation with spectral clustering, denoted by CV-spectral, for comparison. Note that CV--kmeans is not applicable in this setting. 

For $\tau = 0.5, n= 30, T = 15$, the best performing method is \cite{gu2019panel} with about a $10\%-15\%$ advantage over the other two methods which show comparable performance. In all other settings, CV-Spectral is the best or close to best (within $5\%$) performer. The heuristic method performs better or is similar to \cite{gu2019panel} for most cases with $n = 90, T \geq 60$ while the results between those two are mixed in other settings.

In conclusion, there is no clear winner that performs best across all models and settings. This is not surprising because selecting the number of clusters is a very difficult problem in general. {This also explains why there exists no unifying approach for selecting the number of groups.} Our proposed eigenvalue heuristic is competitive in most cases considered, and clearly the best on some. Stability--based methods have two major limitations: they cannot select one group by construction, and they can fail for models with stable clusters for the wrong number of groups. The information criterion in~\cite{gu2019panel} can select one group and performs well when $n,T$ are smaller but falls behind when $n$ is large. No information criterion is known for quantile regression models with unrestricted intercepts and grouping on the slopes. Such a criterion could potentially be derived, but it would only be valid in this specific setting and we refrained from taking this route since we aimed to propose a method that is applicable in more generality.

\begingroup
\tabcolsep = 7pt
\def\arraystretch{1}

\begin{table}[ht]
\centering
\scalebox{0.8}{
\begin{tabular}{cc|ccccc|ccccc}
\toprule
\multirow{2}{*}{n}& \multirow{2}{*}{T} &  \multicolumn{5}{c|}{ $ N(0,1)$} &
\multicolumn{5}{c}{ $ t(3)$} \\
&    & \bf{1}  & \bf{2}    & \bf{3}    & \bf{4}    & \bf{$\ge$5}  & \bf{1}    & \bf{2}    & \bf{3}    & \bf{4}    & \bf{$\ge$5}  \\
\midrule
\multicolumn{12}{c}{ CV-Spectral, $\tau=0.5$} \\
\midrule
30 & 60   & -- & 0.05 & \textbf{0.84} &  0.09 & 0.02 & -- & 0.08 & \textbf{0.72} & 0.15 & 0.05 \\
30 & 90   & -- & 0.01 & \textbf{0.98} &  0.01 & 0.00 & -- & 0.03 & \textbf{0.91} & 0.06 & 0.00\\
30 & 120 & -- & 0.01 & \textbf{0.99} &  0.00 & 0.00 & -- & 0.01 & \textbf{0.98} & 0.01 & 0.00\\
60 & 60   & -- & 0.01 & \textbf{0.98} &  0.01 & 0.00 & -- & 0.01 & \textbf{0.95} & 0.02 & 0.02\\
60 & 90   & -- & 0.00 & \textbf{1.00} &  0.00 & 0.00 & -- & 0.00 & \textbf{0.99} & 0.01 & 0.00\\
60 & 120 & -- & 0.00 & \textbf{1.00} &  0.00 & 0.00 & -- & 0.00 & \textbf{1.00} & 0.00 & 0.00\\
\midrule
\multicolumn{12}{c}{ CV-kmeans, $\tau=0.5$} \\
\midrule
30 & 60   & -- & 0.29 & \textbf{0.40} & 0.17 & 0.14 & -- & 0.34 & \textbf{0.32} & 0.17 & 0.17 \\
30 & 90   & -- & 0.13 & \textbf{0.69} & 0.12 & 0.06 & -- & 0.19 & \textbf{0.58} & 0.16 & 0.07 \\
30 & 120 & -- & 0.13 & \textbf{0.80} & 0.06 & 0.01 & -- & 0.13 & \textbf{0.72} & 0.11 & 0.04\\
60 & 60   & -- & 0.09 & \textbf{0.39} & 0.25 & 0.27 & -- & 0.13 & \textbf{0.27} & 0.16 & 0.44\\
60 & 90   & -- & 0.06 & \textbf{0.74} & 0.15 & 0.05 & -- & 0.07 & \textbf{0.63} & 0.18 & 0.12\\
60 & 120 & -- & 0.05 & \textbf{0.85} & 0.08 & 0.02 & -- & 0.03 & \textbf{0.78} & 0.15 & 0.04\\
\midrule
\multicolumn{12}{c}{ Heuristic, $\tau=0.5$} \\
\midrule
30 & 60   & 0.00 & 0.07 & \textbf{0.91} & 0.02 & 0.00 & 0.05 & 0.25 & \textbf{0.69} & 0.01 & 0.00 \\ 
30 & 90   & 0.00 & 0.00 & \textbf{1.00} & 0.00 & 0.00 & 0.00 & 0.00 & \textbf{0.98} & 0.02 & 0.00 \\
30 & 120 & 0.00 & 0.00 & \textbf{1.00} & 0.00 & 0.00 & 0.00 & 0.00 & \textbf{0.99} & 0.01 & 0.00 \\
60 & 60   & 0.01 & 0.00 & \textbf{0.98} & 0.01 & 0.00 & 0.09 & 0.07 & \textbf{0.84} & 0.00 & 0.00 \\
60 & 90   & 0.00 & 0.00 &\textbf{1.00} & 0.00 & 0.00 & 0.00 & 0.00 & \textbf{1.00} & 0.00 & 0.00 \\
60 & 120 & 0.00 & 0.00 & \textbf{1.00} & 0.00 & 0.00 & 0.00 & 0.00 & \textbf{1.00} & 0.00 & 0.00 \\
\bottomrule
\end{tabular}}
\caption{\small{Percentage of estimated number of groups based on CV-Spectral, CV-kmeans, and Heuristic for Model 1 with $\tau=0.5$. The true $G$ is 3 (highlighted column).}
} \label{tab:model1estGa}
\end{table}
\endgroup

\begingroup
\tabcolsep = 7pt
\def\arraystretch{1}

\begin{table}[ht]
\centering
\scalebox{0.8}{
\begin{tabular}{cc|ccccc|cccccc}
\toprule
\multirow{2}{*}{n}& \multirow{2}{*}{T} &  \multicolumn{5}{c|}{ $ N(0,1)$} &
\multicolumn{5}{c}{ $ t(3)$} \\
&    & \bf{1} &  \bf{2}    & \bf{3}    & \bf{4}    & \bf{$\ge$5}  & \bf{1} &  \bf{2}    & \bf{3}    & \bf{4}    & \bf{$\ge$5}  \\
\midrule
\multicolumn{12}{c}{CV-Spectral, $\tau=0.5$} \\
\midrule
40 & 40  & -- & 1.00 & 0.00 & \textbf{0.00} & 0.00 & -- & 1.00 & 0.00 & \textbf{0.00} & 0.00 \\
40 & 80  & -- & 1.00 & 0.00 & \textbf{0.00} & 0.00 & -- & 1.00 & 0.00 & \textbf{0.00} & 0.00\\
40 & 160 &-- & 1.00 & 0.00 & \textbf{0.00} & 0.00 & -- & 1.00 & 0.00 & \textbf{0.00} & 0.00   \\
60 & 40  & -- & 1.00 & 0.00 & \textbf{0.00} & 0.00 & -- & 1.00 & 0.00 & \textbf{0.00} & 0.00    \\
60 & 80  & -- & 1.00 & 0.00 & \textbf{0.00} & 0.00 & -- & 1.00 & 0.00 & \textbf{0.00} & 0.00 \\
60 & 160 &-- & 1.00 & 0.00 & \textbf{0.00} & 0.00 & -- & 1.00 & 0.00 & \textbf{0.00} & 0.00  \\
\midrule
\multicolumn{12}{c}{CV-kmeans, $\tau=0.5$} \\
\midrule
40 & 40  & -- & 1.00 & 0.00 & \textbf{0.00} & 0.00 & -- & 1.00 & 0.00 & \textbf{0.00} & 0.00 \\
40 & 80  & -- & 1.00 & 0.00 & \textbf{0.00} & 0.00 & -- & 1.00 & 0.00 & \textbf{0.00} & 0.00\\
40 & 160 &-- & 1.00 & 0.00 & \textbf{0.00} & 0.00 & -- & 1.00 & 0.00 & \textbf{0.00} & 0.00   \\
60 & 40  & -- & 1.00 & 0.00 &\textbf{0.00} & 0.00 & -- & 1.00 & 0.00 & \textbf{0.00} & 0.00    \\
60 & 80  & -- & 1.00 & 0.00 & \textbf{0.00} & 0.00 & -- & 1.00 & 0.00 & \textbf{0.00} & 0.00 \\
60 & 160 &-- & 1.00 & 0.00 &\textbf{0.00} & 0.00 & -- & 1.00 & 0.00 & \textbf{0.00} & 0.00  \\
\midrule
\multicolumn{12}{c}{Heuristic, $\tau=0.5$} \\
\midrule
40 & 40  & 0.00 & 0.70 & 0.00 & \textbf{0.30} & 0.00 & 0.00 & 0.92 & 0.00 & \textbf{0.08} & 0.00 \\
40 & 80  & 0.00 & 0.00 & 0.00 & \textbf{0.99} & 0.01 & 0.00 & 0.02 & 0.00 & \textbf{0.97} & 0.01 \\
40 & 160 &0.00 & 0.00 & 0.00 & \textbf{0.99} & 0.01 & 0.00 & 0.00 & 0.00 & \textbf{1.00} & 0.00   \\
60 & 40  & 0.00 & 0.49 & 0.00 & \textbf{0.51} & 0.00 & 0.00 & 0.91 & 0.00 & \textbf{0.09} & 0.00    \\
60 & 80  & 0.00 & 0.00 & 0.00 & \textbf{1.00} & 0.00 & 0.00 & 0.01 & 0.00 & \textbf{0.99} & 0.00 \\
60 & 160 &0.00 & 0.00 & 0.00 & \textbf{1.00} & 0.00 & 0.00 & 0.00 & 0.00 & \textbf{1.00} & 0.00  \\
\bottomrule
\end{tabular}}
\caption{ \label{tab:model3estGa}\small{Percentage of estimated number of groups with CV-Spectral, CV-kmeans and Heuristic methods for Model 2 with $\tau=0.5.$}. The true $G$ is 4 (highlighted column).}
\end{table}
\endgroup

\begingroup
\tabcolsep = 7pt
\def\arraystretch{1}

\begin{table}[ht]
\centering
\scalebox{0.8}{
\begin{tabular}{cc|ccccc|ccccc}
\toprule
\multirow{2}{*}{n}& \multirow{2}{*}{T} &  \multicolumn{5}{c|}{$ N(0,1)$} &
\multicolumn{5}{c}{ $t(3)$} \\
&    & \bf{1} & \bf{2}    & \bf{3}    & \bf{4}    & \bf{$\ge$5}    & \bf{1}  &  \bf{2}    & \bf{3}    & \bf{4}    & \bf{$\ge$5}  \\
\midrule
\multicolumn{12}{c}{ CV-Spectral, $\tau=0.5$} \\
\midrule
30 & 15 & -- & 0.35 & \textbf{0.43} & 0.09 & 0.13 & -- & 0.45 & \textbf{0.31} & 0.09 & 0.15 \\
30 & 30 & -- & 0.04 & \textbf{0.92} & 0.03 & 0.01 & -- & 0.11 & \textbf{0.79} & 0.07 & 0.03 \\
30 & 60 & -- & 0.00 & \textbf{1.00} & 0.00 & 0.00 & -- & 0.01 & \textbf{0.99} & 0.00 & 0.00 \\  
60 & 15 & -- & 0.13 & \textbf{0.63} & 0.02 & 0.22 & -- & 0.20 & \textbf{0.44} & 0.01 & 0.35 \\
60 & 30 & -- & 0.00 & \textbf{1.00} & 0.00 & 0.00 & -- & 0.01 & \textbf{0.97} & 0.01 & 0.01 \\
60 & 60 & -- & 0.00 & \textbf{1.00} & 0.00 & 0.00 & -- & 0.00 & \textbf{1.00} & 0.00 & 0.00 \\
90 & 15 & -- & 0.09 & \textbf{0.79} & 0.00 & 0.12 & -- & 0.18 & \textbf{0.61} & 0.01 & 0.20 \\ 
90 & 30 & -- & 0.00 & \textbf{1.00} & 0.00 & 0.00 & -- & 0.00 & \textbf{1.00} & 0.00 & 0.00 \\
90 & 60 & -- & 0.00 & \textbf{1.00} & 0.00 & 0.00 & -- & 0.00 & \textbf{1.00} & 0.00 & 0.00 \\
\midrule
\multicolumn{12}{c}{ Heuristic, $\tau=0.5$} \\
\midrule
30 & 15 & 0.05 & 0.40 & \textbf{0.45} & 0.06 & 0.04 & 0.10 & 0.49 & \textbf{0.31} & 0.06 & 0.04 \\
30 & 30 & 0.01 & 0.04 & \textbf{0.92} & 0.02 & 0.01 & 0.00 & 0.15 & \textbf{0.79} & 0.03 & 0.03  \\
30 & 60 & 0.00 & 0.00 & \textbf{0.99} & 0.00 & 0.01 & 0.00 & 0.00 & \textbf{1.00} & 0.00 & 0.00 \\
60 & 15 & 0.18 & 0.37 & \textbf{0.44} & 0.00 & 0.01 & 0.50 & 0.28 & \textbf{0.22} & 0.00 & 0.00  \\
60 & 30 & 0.00 & 0.01 & \textbf{0.99} & 0.00 & 0.00 & 0.00 & 0.04 & \textbf{0.95} & 0.00 & 0.01\\
60 & 60 & 0.00 & 0.00 & \textbf{1.00} & 0.00 & 0.00 & 0.00 & 0.00 & \textbf{1.00} & 0.00 & 0.00  \\ 
90 & 15 & 0.02 & 0.31 & \textbf{0.64} & 0.01 & 0.02 & 0.12 & 0.40 & \textbf{0.46} & 0.00 & 0.02 \\  
90 & 30 & 0.00 & 0.00 & \textbf{1.00} & 0.00 & 0.00 & 0.00 & 0.02 & \textbf{0.98} & 0.00 & 0.00 \\
90 & 60 & 0.00 & 0.00 & \textbf{1.00} & 0.00 & 0.00 & 0.00 & 0.00 & \textbf{1.00} & 0.00 & 0.00 \\  
\midrule
\multicolumn{12}{c}{ GV, $\tau=0.5$} \\
\midrule
$30$&$15$&$0.00$&$0.06$&\textbf{0.51}&$0.34$&$0.09$&$0.00$&$0.16$&\textbf{0.50}&$0.27$&$0.08$\\
$30$&$30$&$0.00$&$0.00$&\textbf{0.81}&$0.16$&$0.03$&$0.00$&$0.01$&\textbf{0.76}&$0.20$&$0.04$\\
$30$&$60$&$0.00$&$0.00$&\textbf{0.98}&$0.02$&$0.00$&$0.00$&$0.00$&\textbf{0.96}&$0.03$&$0.00$\\
$60$&$15$&$0.00$&$0.04$&\textbf{0.52}&$0.32$&$0.12$&$0.00$&$0.11$&\textbf{0.44}&$0.33$&$0.12$\\
$60$&$30$&$0.00$&$0.00$&\textbf{0.86}&$0.13$&$0.02$&$0.00$&$0.00$&\textbf{0.78}&$0.18$&$0.04$\\
$60$&$60$&$0.00$&$0.00$&\textbf{0.99}&$0.01$&$0.00$&$0.00$&$0.00$&\textbf{0.98}&$0.02$&$0.00$\\
$90$&$15$&$0.00$&$0.03$&\textbf{0.51}&$0.32$&$0.14$&$0.00$&$0.11$&\textbf{0.37}&$0.33$&$0.19$\\
$90$&$30$&$0.00$&$0.00$&\textbf{0.88}&$0.11$&$0.01$&$0.00$&$0.00$&\textbf{0.79}&$0.17$&$0.03$\\
$90$&$60$&$0.00$&$0.00$&\textbf{0.99}&$0.01$&$0.00$&$0.00$&$0.00$&\textbf{0.98}&$0.02$&$0.00$\\
\bottomrule
\end{tabular}}
\caption{\label{tab:model4estG0.5}\small{Percentage of estimated number of groups based on CV-Spectral, Heuristic, and \cite{gu2019panel} (GV) methods for Model 3 with $\tau=0.5.$ }. The true $G$ is 3 (highlighted column).}
\end{table}
\endgroup

\section{Empirical Applications} \label{sec:data}
\subsection{Heterogeneity in environmental Kuznet curves}
We first apply our methodology to a panel data quantile regression analysis on the environmental Kuznet curves (EKC). The concept first emerged in the influential study of \cite{grossman1991environmental}. Various empirical studies have since then provided evidence in different countries that there exists an inverse-U relationship between economic development and the pollution level. As income per capita increases, we expect to see first deterioration of the environment, and then an improvement as income continues to rise. Understanding the relationship between pollution and per capita income is important for the design of the optimal environmental policy. Here we focus our analysis on using state-level panel data in the United States during the period of 1929 - 1994 and for brevity, we focus on the emission of $SO_2$. The dataset is available from the National Air Pollutant Emission Trends, 1900 - 1994, published by the US Environmental Protection Agency. Most early empirical work on EKC uses least squares methods pooling all the states together and utilizes either a quadratic or cubic specification to estimate the relationship between the emission level and per capita income. \cite{millimet2003environmental} discusses in detail some of the model specification issues and explores semi--parametric methods that provide a set of more flexible modelling tools. Given concerns that different states may take a different environmental transition path as income level arises, \cite{list1999environmental} estimates the EKC with both the quadratic and cubic specification state by state to account for potential state heterogeneity. They then group these states into three groups depending on whether the estimated peak of the state-specific EKC falls below, inside or above the 95\% confidence interval implied by a pooled model. This provides an interesting piece of evidence for some form of group heterogeneity, yet how group membership is constructed is ad hoc and does not account for the statistical uncertainty of the state-specific least square estimates. On the other hand, \cite{flores2014lessons} has criticized the least square approach and advocates the use of quantile regression methods. They document that quantile regression offers a more complete picture of the relationship between pollution and income. However, for a given quantile, they estimate the panel data quantile regression with state fixed effect without allowing the EKC coefficients to be state-dependent. Combining the insights of \cite{list1999environmental} and \cite{flores2014lessons}, we apply our methodology in a panel data quantile regression model {which allows individual fixed effects while estimating the group structure of the slope coefficients that determine the shape of the EKC curves across different states.}

For a given quantile level $\tau$, our model specification is: 
\[
q_{i,\tau}(Z_{it}) = \alpha_i(\tau) + \lambda_t(\tau) + Z_{it} \beta_{1, g_i}(\tau) + Z_{it}^2 \beta_{2, g_i}(\tau)\,,
\]
where $i$ corresponds to states, $t$ it the time index and $g_i$ records the group membership. We denote by $q_{i,\tau}(Z_{it})$ the conditional quantile function of $Y_{it}$ given $Z_{it}$ where the response $Y_{it}$  is the state-year per capita emission level of $SO_2$ and $Z_{it}$ is the per capita real income using 1987 dollar. We focus on the quadratic specification for better visualization of the estimation results. Cubic specification leads to similar grouping results. Other control variables can be added, for example, population density and the number of days with extreme temperature as considered in \cite{flores2014lessons}. However, \cite{flores2014lessons} report that these additional control variables do not change the estimates for the quadratics of the EKC. We first obtain state-specific estimates $\hat \beta_{1, i}(\tau), \hat \beta_{2,i}(\tau)$ as well as their associated covariance matrix. 

\begin{figure}[H]
\includegraphics[height=2in, width=\textwidth]{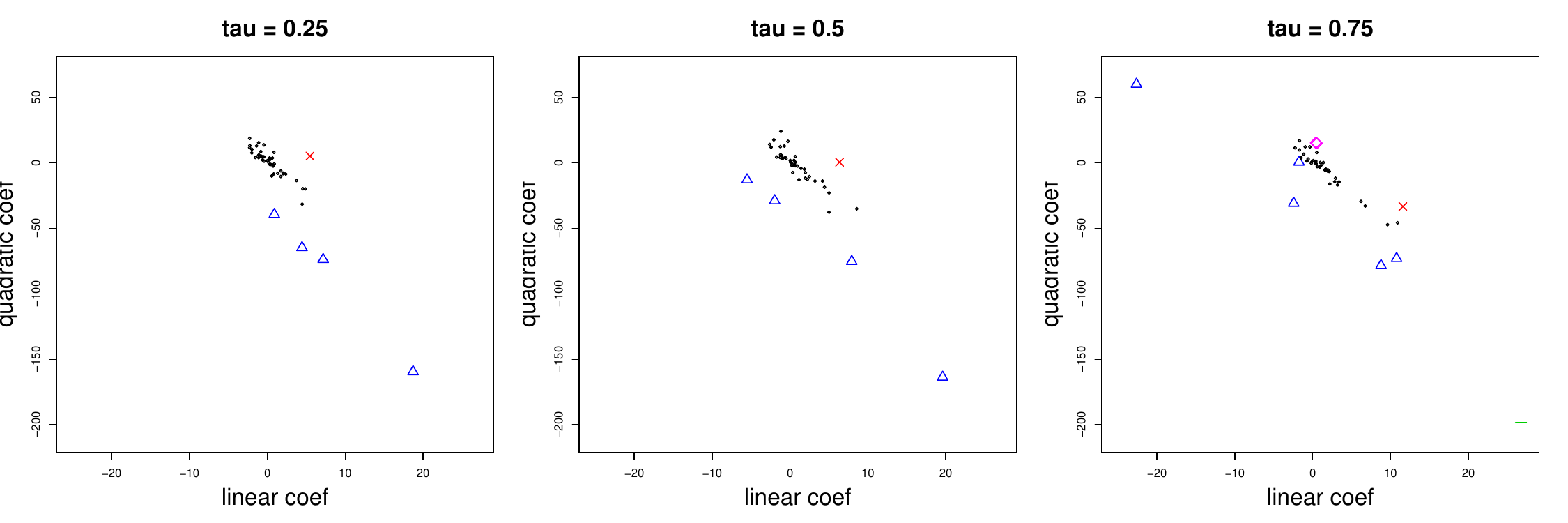}
\caption{\small{Raw state specific estimates for $\hat \beta_{1,i}(\tau), \hat \beta_{2,i}(\tau))$ and grouping of states for quantile levels $\tau = \{0.25, 0.5, 0.75\}$. Each symbol represents a different group. }}
\label{fig: group}
\end{figure}

To estimate the number of groups for different quantile levels $\tau = \{0.25, 0.5, 0.75\}$, we apply the heuristic in~\eqref{eq:selGstar}; see Figure \ref{fig: heuristic} in the online supplement for corresponding plots. For both 25 and 50th quantile, we find three groups and for 75th quantile, we find 5 groups. Given these estimates, we then apply the spectral clustering method on these raw estimates, accounting for the statistical uncertainty. Figure \ref{fig: group} shows the estimated group membership for $(\hat \beta_{1,i}(\tau), \hat \beta_{2,i}(\tau))$. Noticeably, for both 25th and 50th quantile, the grouping of the states are the same. The red cross in Figure \ref{fig: group} corresponds to West Virginia, while the blue triangles correspond to Arizona, Montana, Nevada, and Utah. A close inspection of the data suggests that the EKC for West Virginia looks to be closer to a linear trend within the range of years under consideration, while Arizona, Montana, Nevada, and Utah are states that have relatively higher emission level and a much more positive linear coefficient and a much negative quadratic coefficient when compared to all other states. Interestingly, these four states are also noted as the ``outlier" states in \cite{flores2014lessons} which documents that the residuals of these states are alarmingly high. Since their specification requires the EKC coefficients to be the same for all states, this provides some evidence that these states might have a different EKC. This is clearly confirmed by our analysis. For the 75th quantile, the State of Arizona has a more extreme estimate and now becomes a group by herself, as well as West Virginia. Two smaller groups consist of North Dakota and Wyoming as one group and Illinois, Montana, Nevada, New Mexico, and Utah as the other group. 

We note that some groups resulting from this empirical analysis are very small. The results should thus be interpreted with caution since this violates our theoretical assumptions which require proportional group sizes to be bounded from below. 

\subsection{Heterogeneity in intergenerational income mobility} 
The study on intergenerational income mobility across the United States by \cite{chetty2014land}, \cite{chetty2018opportunity} and \cite{chetty2018impacts} has been influential. Using tax records on the entire U.S. population, they document how children's expected incomes conditional on their parents' incomes differ across different geographical regions in the United States. Although the raw data used to obtain these estimates are not publicly available, they publish the region specific estimates at the commuting zone, country or census tract level, together with their associated standard errors. These estimates are used for policy purposes to encourage welfare improvements for children resides in the areas that have low mobility rates as for instance considered in \cite{bergman2019creating}. The categorization of a region having low mobility is often solely based on the point estimates without accounting for the associated statistical uncertainty. Our analysis focuses on the plausible hypothesis that although different geographical locations are likely to have heterogeneous mobility ratings, they may be divided into a few distinct groups and we let the data determine the number of groups utilizing both point estimates and their levels of precision. 	It is worth noting that, in contrast to most proposals in the existing literature, our method remains applicable even when raw individual-level data are not available due to privacy or other concerns and only estimated coefficients and their uncertainty estimates are given. 

We focus on the 100 most populous commuting zones. Let the point estimates of the income mobility to be $\hat \beta_i$ and the associated standard error to be $\hat \Sigma_i$. To apply the heuristic for the estimation of the number of groups, we also know $T_i$ which is the amount of data that leads to the estimates $(\hat \beta_i, \hat \Sigma_i)$.\footnote{All these information are publicly available from \url{https://opportunityinsights.org/data}. Note in this application, $T_i$ varies across individuals, we take the minimum $T_i$ when constructing the scaled dissimilarity measure for the estimation of $G$. Using the average value of $T_i$ leads to similar result. }  

We first use our method to select the number of groups. The left plot in Figure \ref{fig: eigen} shows that the number of groups is estimated to be nine and the right plot illustrates the gap of the adjacent eigen values. We then apply our algorithm to estimate the group membership, which is illustrated in Figure \ref{fig: effect}. Further details on the grouping of the hundred most populous commuting zones is provided in Table \ref{tab: effect}. Fayetteville and Memphis have the lowest point estimates for their income mobility among all the hundred commutting zones considered and they are grouped together. There are ten commuting zones grouped together as the top tier. The grouping provides a parsimonious description of the mobility heterogeneity. It also suggests that citizens in the commuting zones that belong to the same group, although having different point estimates, are likely to have similar true mobility ratings. 
\begin{figure}[H]\label{fig: eigen}
\centering
\includegraphics[scale=0.5]{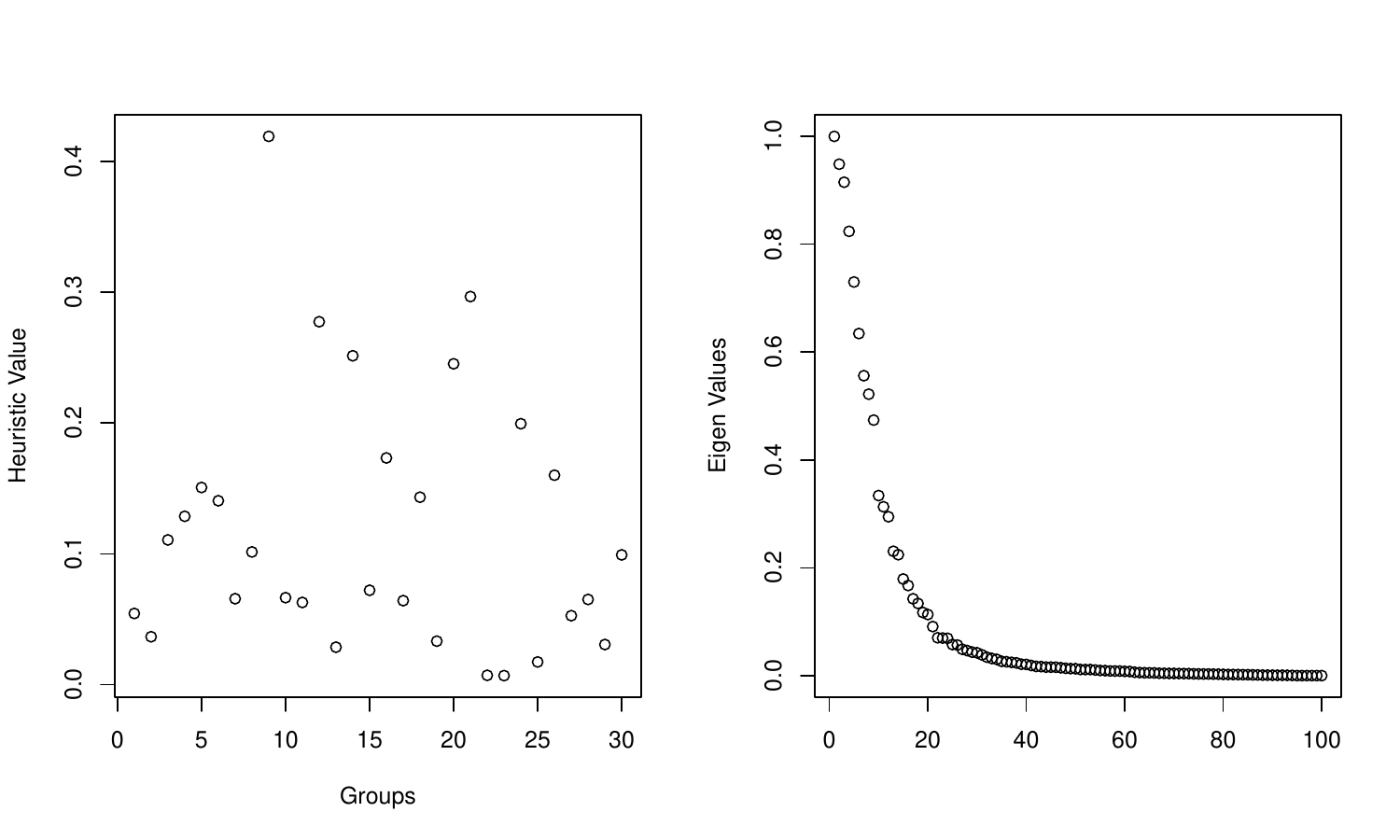}
\caption{The heuristic value for group selection and the associated eigen values for the 100 most populous commuting zones using publicly data in \cite{chetty2018impacts}.}
\end{figure}

\begin{figure}[H]\label{fig: effect} 
\centering
\includegraphics[scale=0.5]{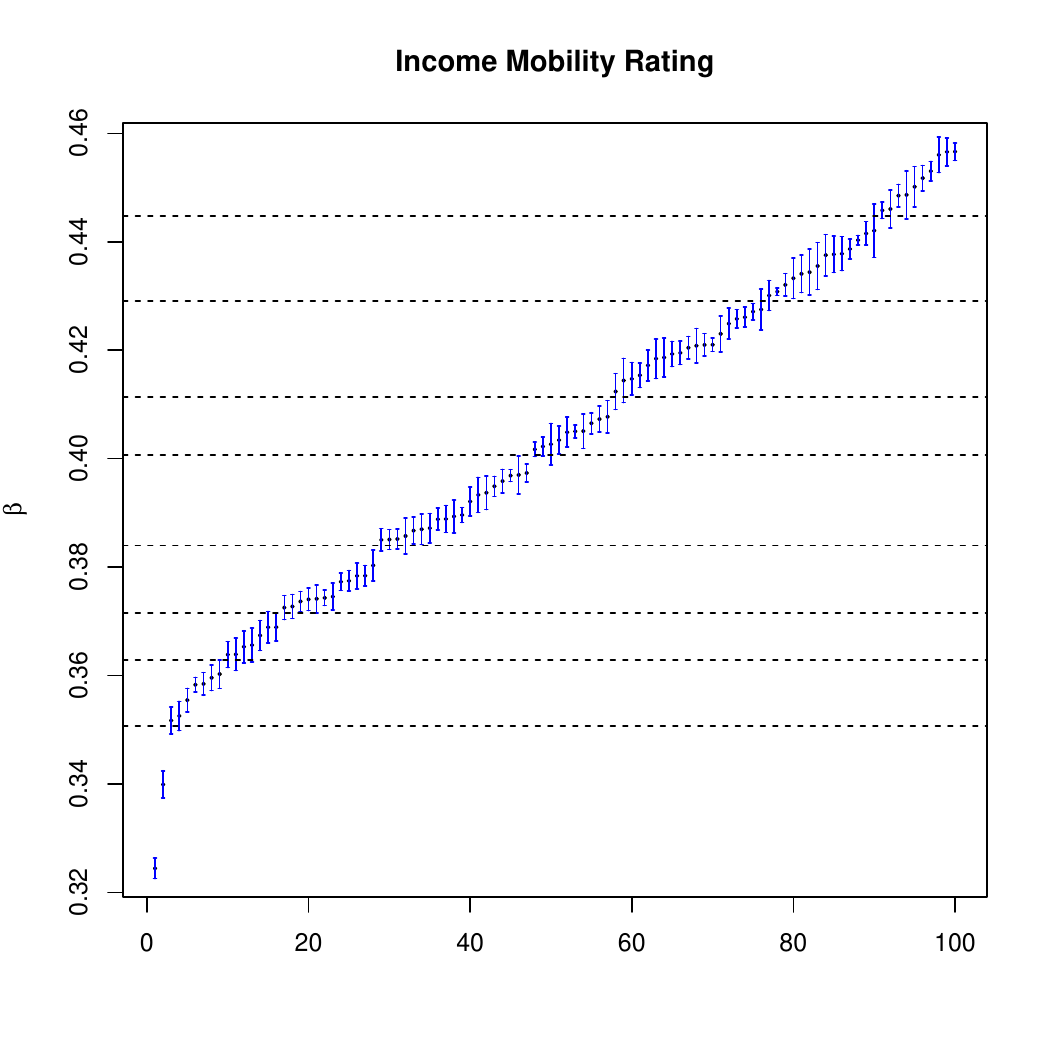}
\caption{Points in the figure are the sorted point estimates $\hat \beta_i$ for the 100 most populous commuting zones in the United States. The blue bars indicates the confidence set of each point estimates with $\pm$ 2 s.e. The dotted line are the division lines for the 9 groups based on the estimated group membership. } 
\end{figure}

\begin{table}[H]
\small
\centering
\resizebox{\textwidth}{!}{	\begin{tabular}{ll}
\toprule
& \textbf{Grouped Commuting Zones}  \\
\hline 
1	&  Boston, Des Moines, Honolulu, Minneapolis, Newark, Toms River, Salt Lake City\\
&  San Francisco, San Jose, Scranton \\
\hline 
2	&  Albany, Allentown, Brownsville, Los Angeles, Madison, Manchester, New York  \\
& Pittsburgh, Providence, Reading, Santa Barbara, Santa Rosa, Seattle, Spokane\\ 
\hline 
3	&  Bakersfield, Buffalo, Bridgeport, Canton, Denver, El Paso, Erie \\
&  Harrisburg, Houston,Modesto ,Omaha, Portland, Poughkeepsie, Sacramento\\
& San Diego, Springfield, Syracuse, Washington DC\\
\hline 
4	&  Austin, Eugene, Fort Worth, Miami, Oklahoma City, Philadelphia\\
& Rockford, San Antonio, Tulsa, Youngstown  \\
\hline 
5	&  Albuquerque, Baton Rouge, Cape Coral, Chicago, Cleveland, Dallas \\
&Fresno, Gary, Grand Rapids , Kansas City ,Las Vegas, Milwaukee, Orlando\\
&Port St. Lucie, Phoenix, Sarasota, South Bend, Toledo, Tucson \\
\hline 
6	&   Baltimore, Cincinnati, Columbus, Dayton, Detroit, Louisville \\
& Nashville, New Orleans, Pensacola, St. Louis, Tampa, Virginia Beach \\
\hline 
7	& Knoxville,  Indianapolis, Lakeland, Little Rock, Mobile, Raleigh, Richmond \\
\hline 
8	& Atlanta, Birmingham, Charlotte, Columbia, Greensboro, Greenville, Jacksonville  \\
\hline 
9	&  Fayetteville, Memphis\\
\bottomrule
\end{tabular}}
\caption{The 100 most populous commuting zones grouped using our method. First group are for those with the highest income mobility rating and the last group the lowest. } \label{tab: effect}
\end{table}

\section{Conclusion} \label{sec:conclusion}
In this paper, we propose a general methodology for studying group heterogeneity of effects in panel data models. We provide high-level conditions for the proposed method to achieve correct group identification and verify these conditions for several leading non-linear models often applied in empirical studies. We demonstrate that incorporating uncertainty information in individual-level estimates is useful for estimating group patterns. Although we focus on non-linear models, our methodology is naturally applicable to linear models, as well to situations where micro-level data is not available and only summary statistics are accessible to the researcher. We have proposed a method for selecting the number of groups, but left the theoretical validation of this method open to future research.

There are several important questions that merit further research. Our implementation of the dissimilarity measure and its theoretical analysis requires independence across individuals. In many settings, dependence across individuals is present. Properly modeling such dependence and accounting for it in our approach is an important question. 

Another substantial practical and theoretical challenge is dealing with short and highly unbalanced panels where individual level estimators are of very poor quality. In addition, as pointed out by the Associate Editor and a referee, there are cases when only some coefficients contain group structure. Using a two step procedure where coefficients containing group information are determined in a first step and grouping is only performed on those coefficients in a second step could lead to improvements in grouping accuracy. Implementing such an approach poses substantial theoretical challenges and is worthwhile investigating. The Associate Editor and referee also suggested In some cases, re-scaling the covariates first to bring all coefficients on the same scale before clustering could also be advantageous. This merits further exploration which we leave for future research. Finally, extending our approach to more complex models with time-varying group heterogeneity is another natural next step which we plan to address in future research.

\bibliographystyle{chicago}

\bibliography{References}

\newpage

\bigskip
\begin{center}
{\large\bf SUPPLEMENTARY MATERIAL}
\end{center}

In this supplementary material, we provide more simulation details, additional plots as well as proofs of the main results in Section~\ref{sec:theoryhighlevel} and Section~\ref{sec:theoryexamples}.

\section{Simulation Studies}

We provide more details about the simulation studies in Section~\ref{sec:sims}.

\subsection{Simulations in Section~\ref{sec:simlog}}

We implement the CLASSO estimator using CVX in Matlab with the mosek solver with version Mosek 8. 
The algorithm is initiated with $\vecb_i$ being the individual logistic regression estimates and $\eta_k$ being the origin for all $K_0$ groups. 
The algorithm is terminated when the objective function differs by a quantity less than 0.001 and when the $\ell_2$ norm of the estimated group centre $\eta_k$ changes by less than 0.1\%.

\subsection{Simulations in Section~\ref{sec:simqrslope}}
Simulations are done in the \textbf{quantreg} package in R. Covariance estimates are computed using the function \texttt{summary.rq()} with option \texttt{se="nid"} and default bandwidth choice \texttt{hs=true}.

\subsection{Simulations in Section~\ref{sec:simqrint} }
The bandwidth $d_T$ used in~\eqref{eq:hatsigma1} is based on the method implemented in the \textbf{quantreg} package in R (function \texttt{summary.rq()} with \texttt{se="nid"} and default choice \texttt{hs=true}).

\subsection{Simulations in Section~\ref{sec:simnumgr}  }
The maximum numbers of clusters to consider~$\op{Gmax}$ is set to $\op{Gmax}=5$ for $n<=30$ and $\op{Gmax}=10$ for $n>30$ cases for the CV methods, and we set $\op{Gmax}=10$ across all settings for the heuristic method. 
For the cross-validation method, we use 100 (4:4:2) random splits of the dataset into training and validation data (see \cite{wang2010} for details on the meaning of this splitting).

\section{Plots}\label{app:plots}

{
	\subsection{Plots in Section~\ref{sec:simquant}}
	\begin{figure}[H]
		\centering
\includegraphics[width=1\textwidth]{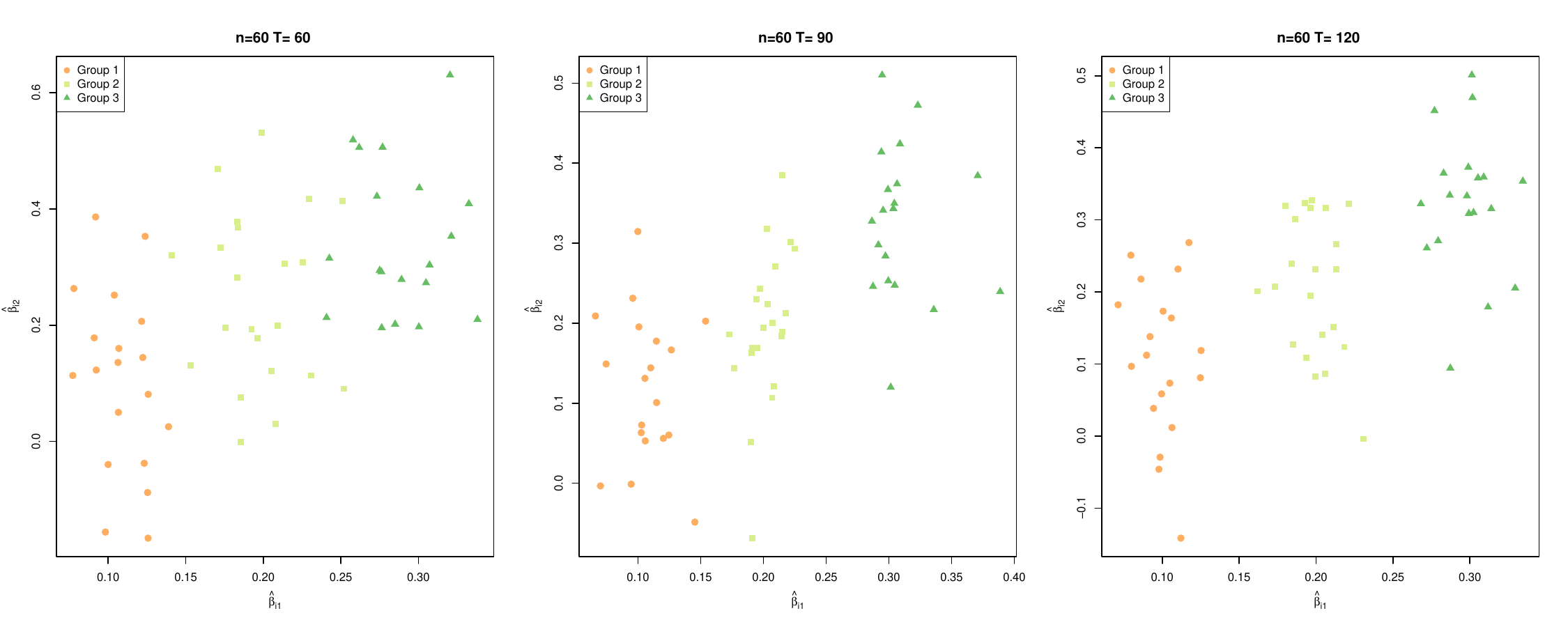}
\caption{\small{Scatter plots of $\{\hat{\vecb}_i\}_{i=1}^n$ for Model 1 with $t(3)$ error and $\tau=0.5.$}}
\label{fig:beta}
\end{figure} 
}

\subsection{Plots in Section~\ref{sec:simnumgr}}
\begin{figure}[H]
\centering
\subfigure[]{\includegraphics[width=0.4\textwidth]{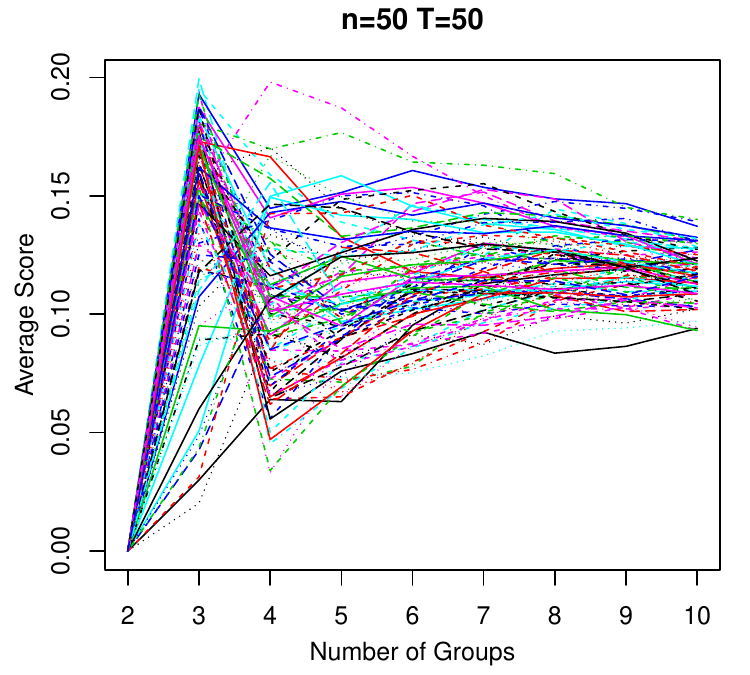}} 
\subfigure[]{\includegraphics[width=0.4\textwidth]{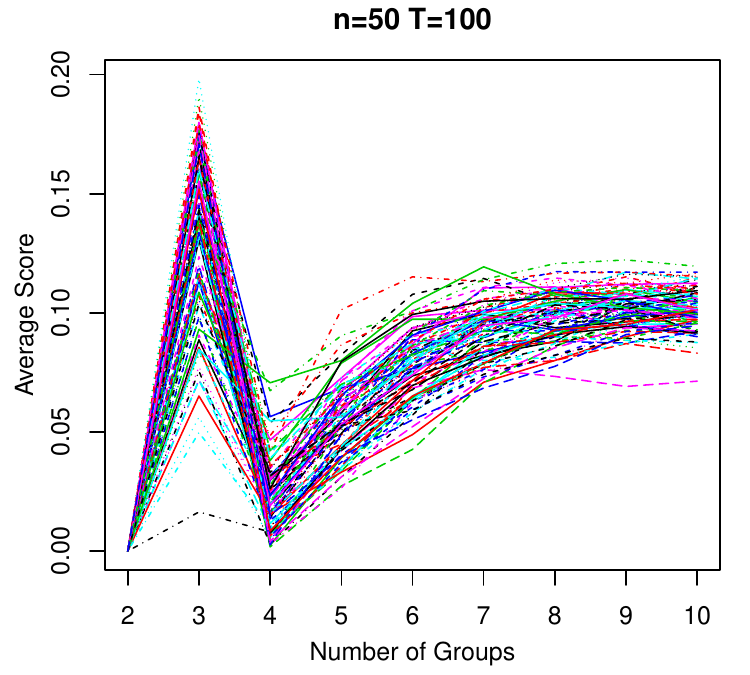}} 
\subfigure[]{\includegraphics[width=0.4\textwidth]{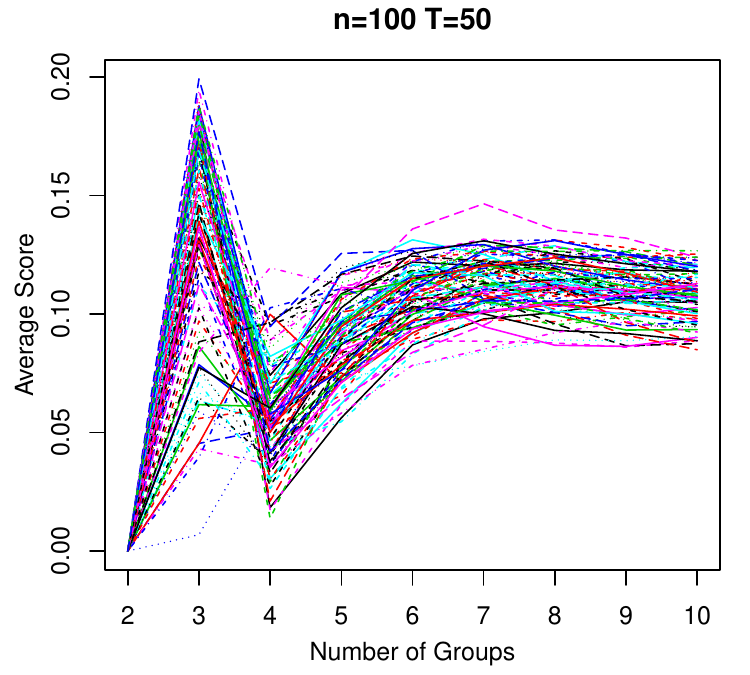}} 
\subfigure[]{\includegraphics[width=0.4\textwidth]{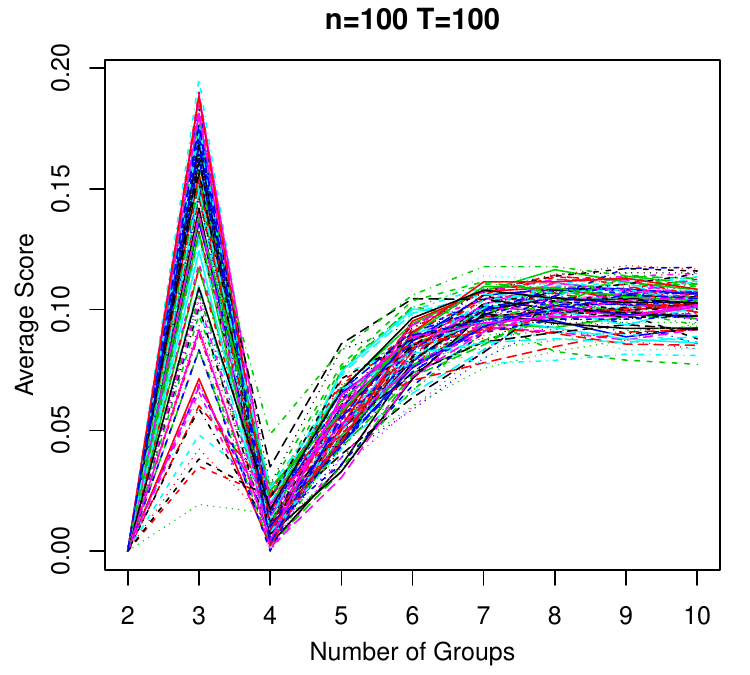}} 
\subfigure[]{\includegraphics[width=0.4\textwidth]{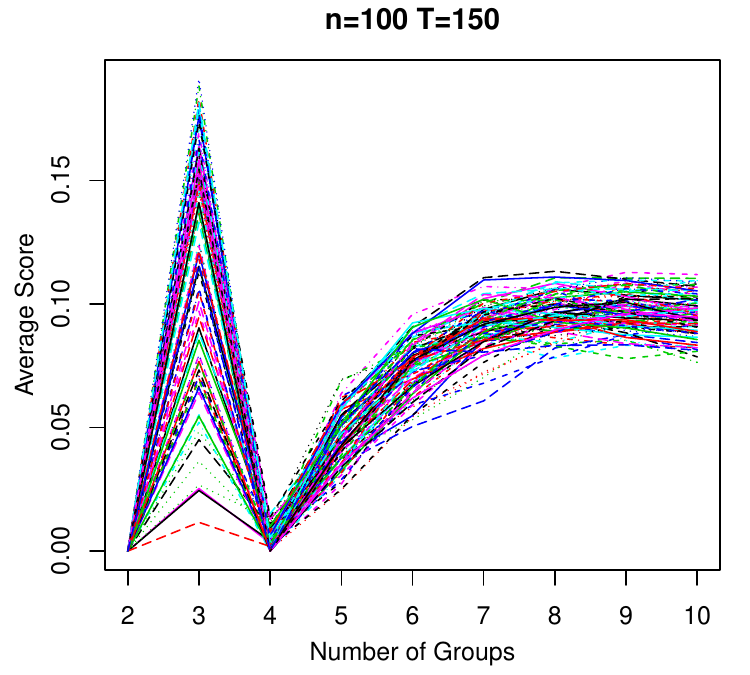}}
\caption{\small{Stability score for Model 3 with $t(3)$ error and $\tau=0.5.$}}
\label{fig:4}
\end{figure}

\subsection{Plots in Section~\ref{sec:data}}
\begin{figure}[H]
\includegraphics[scale=0.45]{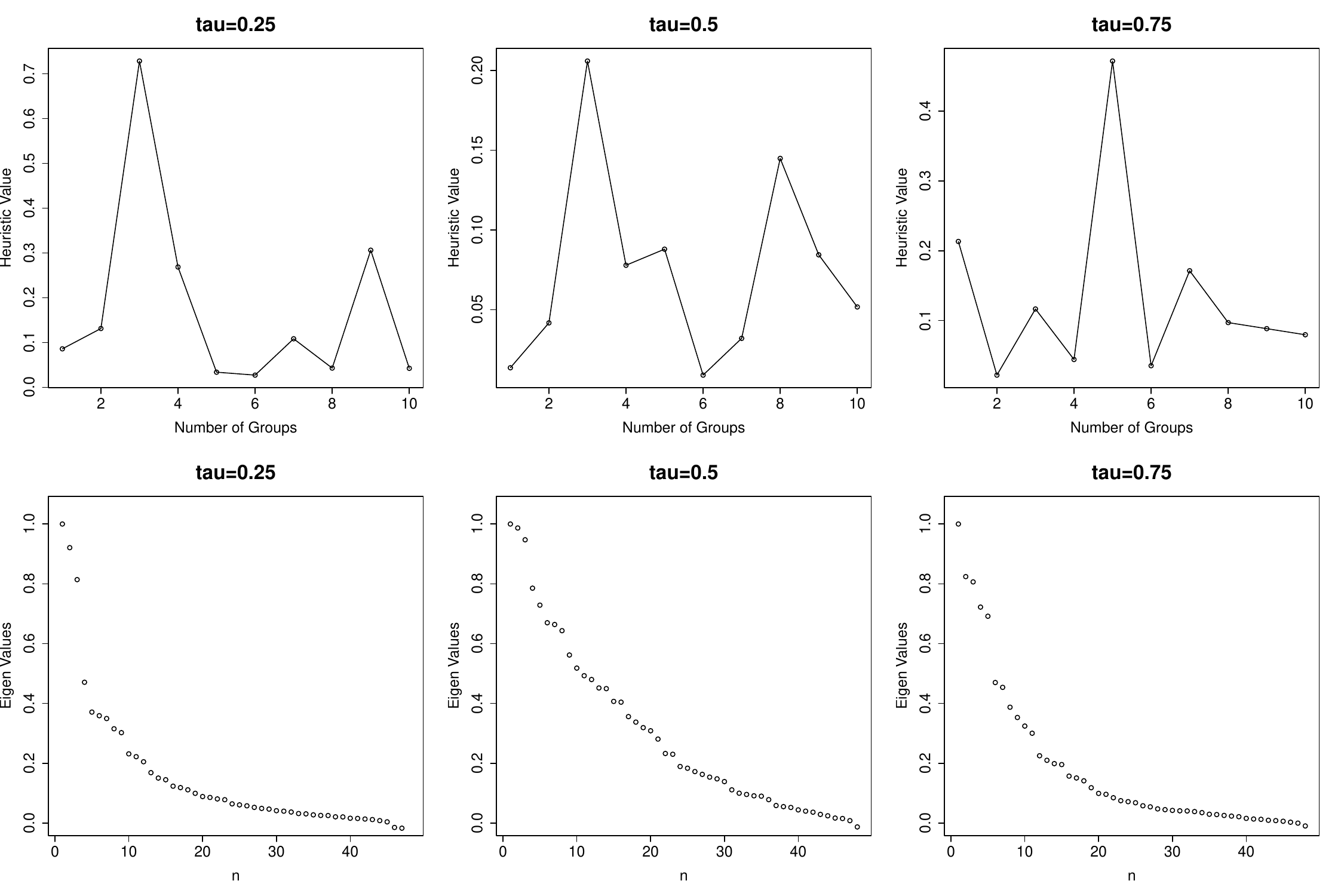}
\caption{\small{The heuristic values and the eigen-value plot for three different quantile levels $\tau = \{0.25, 0.5, 0.75\}$.}}
\label{fig: heuristic}
\end{figure}

\section{Proofs}

\subsection{Notation}
Let $a_n \lesssim_p b_n$ denotes that there exists
a non-random constant $C\in(0,\infty)$ that is independent of $n, T$, such that $\mpr(a_n\le Cb_n)\to 1.$ 
For a matrix~$A\in\rnp,$ we define the operator norm of $A$ as the maximum absolute column sum of the matrix
$
\normop{A}:=\max_{1\le i\le n}\sum_{j=1}^p |A_{ij}|\,,
$
define the Frobenius norm of $A$ as the square root of the sum of the absolute squares of all elements
$
\normf{A}:=\sqrt{\sum_{i=1}^n\sum_{j=1}^p |A_{ij}^2|}\,,
$
and define the spectral norm of $A$ as its largest singular value 
$
\normspec{A}:=\sigma_{\max} (A)\,.
$

To lighten notation we abbreviate the true number of groups as $G$ instead of $G^*$ whenever there is no risk of confusion.

\subsubsection{Proof of Theorem~\ref{thm:loc}}

The proof consists of two parts. First, we verify that under the assumptions made the following expansions and convergences hold. Second, we prove that~\eqref{eq:expansionL}-\eqref{eq:ratetildealpha} imply the statement of the Theorem.

For any (possibly random) sequence $\bm{\Delta}_{\vecg,T} := (\Delta_\alpha, \bm{\Delta}_\vecb) = \bigo_\mpr(1)$ we have an expansion of the form
\begin{multline}\label{eq:expansionL}
\sum_{t=1}^T  \Ll(\vecx_t,Y_t;\vecg^*+T^{-1/2}\bm{\Delta}_{\vecg,T}) - \Ll(\vecx_t,Y_t;\vecg^*)
\\
= \frac{1}{T^{1/2}} \sum_{t=1}^T \bm{\Delta}_{\vecg,T}^\top \grad_\vecg \Ll(\vecx_t,Y_t;\vecg^*) +  \frac{1}{2} \bm{\Delta}_{\vecg,T}^\top A \bm{\Delta}_{\vecg,T} + o_\mpr(1).
\end{multline}
Moreover, 
\begin{align}\label{eq:expansiongamma}
\sqrt{T}(\hat \vecg - \vecg^*) &= \frac{1}{T^{1/2}} \sum_{t=1}^T A^{-1}\grad_\vecg \Ll(\vecx_t,Y_t;\vecg^*) + o_\mpr(1),
\\	\label{eq:normalityBahadur}
\frac{1}{T^{1/2}} \sum_{t=1}^T \grad_\vecg \Ll(\vecx_t,Y_t;\vecg^*) &\Dkonv N(0,B)
\end{align}
for a non-degenerate covariance matrix $B$. Finally, letting 
\[
\tilde\alpha := \argmin_{\alpha} \sum_{t=1}^T \Ll(\vecx_t,Y_t;\alpha,\vecb^*+T^{-1/2}\Delta_\vecb)
\]
we have
\begin{equation}\label{eq:ratetildealpha}
\tilde \alpha - \alpha^* = \bigo_\mpr(T^{-1/2}).
\end{equation}

We now prove that~\eqref{eq:expansionL}--\eqref{eq:ratetildealpha} hold under Assumption~\ref{Ass:gen}. Consider the class of functions 
\[
\mathcal{F} := \Big\{ (\vecx,Y) \mapsto \Ll(\vecx,Y;\vecg): \vecg \in \tilde \Gamma \Big\}
\] 
where $\tilde \Gamma$ is the original parameter space if $\Gamma$ is bounded and a ball of Euclidean radius $1$ around $\vecg^*$ if $\Gamma$ is unbounded but the objective is convex. In both cases the bracketing numbers $N_{[\,]}(\eps,\mathcal{F},L_2(\mpr))$ are at most polynomial in $1/\eps$ (see Example~19.7 in \cite{vdV}). Thus Corollary 19.35 in \cite{vdV} implies 
\[
\sup_{\vecg \in \tilde\Gamma} \Big|\frac{1}{T}\sum_{t=1}^T \Ll(\vecx_t,Y_t;\vecg) - m(\vecg)\Big| = \bigo_\mpr(T^{-1/2}).
\]
Since the minimum is by assumption well-separated this implies $\hat \vecg - \vecg^* = o_\mpr(1)$ in the case where $\tilde\Gamma = \Gamma$. In the case of convexity we find that 
\begin{multline*}
\inf_{\vecg: \|\vecg-\vecg^*\| = 1} \Big|\frac{1}{T}\sum_{t=1}^T \Ll(\vecx_t,Y_t;\vecg) - 
\frac{1}{T}\sum_{t=1}^T \Ll(\vecx_t,Y_t;\vecg^*)\Big| 
\\
\geq \inf_{\vecg: \|\vecg-\vecg^*\| = 1} |m(\vecg) - m(\vecg^*)|+ \bigo_\mpr(T^{-1/2})  
\end{multline*}  
and hence by convexity of $\vecg \mapsto \frac{1}{T}\sum_{t=1}^T \Ll(\vecx_t,Y_t;\vecg)$ the minimizer of the latter must lie in ${\vecg: \|\vecg-\vecg^*\| = 1}$ with probability tending to one. This reduces the problem to the case of bounded parameter spaces $\Gamma$. In either case we have proved $\hat \vecg - \vecg^* = o_\mpr(1)$. Now~\eqref{eq:expansiongamma} follows from Theorem~5.23 in \cite{vdV} while the expansion in~\eqref{eq:expansionL} is established in the first line of the proof of the latter Theorem. Convergence in~\eqref{eq:normalityBahadur} follows from the CLT after observing that under Assumption~\ref{Ass:gen}(ii) $\Ll(\vecx_t,Y_t;\vecg^*)$ has a finite second moment. 

To establish~\eqref{eq:ratetildealpha}, note that the proof of Theorem 5.52 in \cite{vdV} yields the following more general result: assume that we have a sequence of functions $m_T: \Theta \to \R$ and estimated functions $\hat m_T$ such that for any sufficiently small $\delta > 0$
\begin{enumerate}
\item[(a)] $\sup_{\|\theta - \theta_T\|\leq \delta} m_T(\theta) - m_T(\theta_T) \leq C\delta^2$,
\item[(b)] $\E \Big[\sup_{\|\theta - \theta_T\|\leq \delta} \sqrt{T}|\hat m_T(\theta) - \hat m_T(\theta_T) - m_T(\theta) + m_T(\theta_T)|\Big] \leq C\delta$,
\item[(c)] $\hat m_T(\hat \theta) = \inf_{\theta} \hat m_T(\theta) + \bigo_\mpr(T^{-1})$.
\item[(d)] $\hat \theta = \theta_T + o_\mpr(1)$.
\end{enumerate}   
Then $\hat \theta - \theta_T = \bigo_\mpr(T^{-1/2})$. We will apply this with $ m_T(\theta) := \E[\Ll(\vecx,Y;\theta,\vecb^* + T^{-1/2}\bm\Delta_\vecb)]$, $\theta_T$ the well-separated global minimizer of $m_T(\theta)$ which exists by assumption for $T$ sufficiently large, and 
\[
\hat m_T(\theta) = T^{-1}\sum_{t=1}^T \Ll(\vecx_t,Y_t;\theta,\vecb^* + T^{-1/2}\bm\Delta_\vecb)
\]
Of those conditions, (a) follows by a Taylor expansion noting that the gradient of $m_T$ vanishes at $\theta_T$ and (c) follows by assuming the computed minimizer is sufficiently close to the global minimizer. Next observe that
\begin{multline*}
\sup_{\|\theta - \theta_T\|\leq \delta} \sqrt{T}|\hat m_T(\theta) - \hat m_T(\theta_T) - m_T(\theta) + m_T(\theta_T)| 
\\
\leq \sup_{\|\theta - \theta_T\|\leq \delta} \G_T(\Ll(\cdot;\theta,\vecb^* + T^{-1/2}\bm \Delta_\vecb)-\Ll(\cdot;\theta_T,\vecb^* + T^{-1/2}\bm \Delta_\vecb))
\end{multline*}
where $\G_T$ denotes the empirical process corresponding to the observations $(\vecx_t,Y_t)_{t=1,\dots,T}$. The class of functions 
\[
\mathcal{F}_T := \Big\{(\vecx,Y) \mapsto \Ll(\vecx,Y;\theta,\vecb^* + T^{-1/2}\bm \Delta_\vecb)-\Ll(\vecx,Y;\theta_T,\vecb^* + T^{-1/2}\bm \Delta_\vecb): |\theta - \theta_T| \leq \delta \Big\} 
\]
has envelope $\dot m(\cdot) \delta$ and bracketing numbers satisfying $N_{[\,]}(\eps,\mathcal{F}_T,L_2(\mpr)) \lesssim \frac{\delta}{\eps}$ (compare Example 19.7 in \cite{vdV}) and thus by Corollary 19.35 in \cite{vdV} we have
\[
\E\Big[\sup_{f \in \mathcal{F}_T} |\G_T(f)| \Big] \lesssim \int_0^{\|\dot m\|_{P,2}\delta} \sqrt{1+\log(\delta/\eps)}d\eps \lesssim \delta 
\] 
where the last equality follows after a change of variables. This implies~(b). The statement in (d) follows by similar arguments as the proof of consistency of $\hat \vecg$ given earlier since we assumed that each $m_T$ has a unique and well-separated global minimizer. This completes the proof of (a)--(d) and hence~\eqref{eq:ratetildealpha}.

From now on assume that~\eqref{eq:expansionL}--\eqref{eq:ratetildealpha} hold. We first analyze $\hat k^{PAM}$. Let
\begin{align*}
\hat G_\alpha = T^{-1/2} \sum_{t=1}^T \grad_\alpha \Ll(\vecx_t,Y_t;\alpha,\vecb^*)\Big|_{\alpha=\alpha^*}\,,
\\
\hat G_\vecb = T^{-1/2} \sum_{t=1}^T  \grad_\vecb \Ll(\vecx_t,Y_t;\alpha^*,\vecb)\Big|_{\vecb = \vecb^*}\,.
\end{align*}
By~\eqref{eq:normalityBahadur} we have 
\[
(\hat G_\alpha,\hat G_\vecb^\top)^\top \Dkonv (G_{\alpha},G_\vecb^\top)^\top \sim N(0,B)
\]
and by~\eqref{eq:expansiongamma} and~\eqref{eq:normalityBahadur} 
\begin{equation}\label{eq:Bahadur}
T^{1/2}(\hat \alpha - \alpha^*, (\hat \vecb - \vecb^*)^\top)^\top = - A^{-1} (\hat G_\alpha,\hat G_\vecb^\top)^\top + o_\mpr(1) \Dkonv -A^{-1} (G_{\alpha},G_\vecb^\top)^\top.
\end{equation}
In what follows, for squared matrices $M$ of dimension $p+1$ consider the following block structures 
\[
M = \Big[ 
\begin{array}{cc}
M_{11} & M_{12}
\\
M_{21} & M_{22}
\end{array}
\Big]
\]
with $M_{11} \in \R$. With this notation we find that
\begin{align*}
\sqrt{T}(\hat \vecb - \vecb^*) &\Dkonv \Big[ [A^{-1}]_{21} ~~ [A^{-1}]_{22} \Big] ( G_\alpha,G_\vecb^\top)^\top = [A^{-1}]_{21} G_\alpha +  [A^{-1}]_{22} G_\vecb
\\
& = [A^{-1}]_{22} G_\vecb - \frac{1}{A_{11}}[A^{-1}]_{22}A_{21} G_\alpha = [A^{-1}]_{22} \Big(  G_\vecb - \frac{A_{21}}{A_{11}} G_\alpha \Big)
\end{align*}
where we used block matrix inversion combined with the fact that $A_{11}$ is a scalar. Denoting by $C$ the covariance matrix of $G_\vecb - \frac{A_{21}}{A_{11}} G_\alpha$, we have 
\[
\sqrt{T}(\hat \vecb - \vecb^*) \Dkonv N(0, [A^{-1}]_{22} C[A^{-1}]_{22}).
\] 
In this notation, $\Sigma_\vecb = [A^{-1}]_{22} C[A^{-1}]_{22}$. Since $\hat \Sigma_\vecb$ is a consistent estimator for $\Sigma_\vecb$ by assumption and since $\vecb^* = \vecb_1, \vecb_2 = \vecb^* + T^{-1/2}\Delta$ by assumption, we obtain by the definition of $\hat k^{PAM}$
\[
P(\hat k^{PAM} = 1) \to P\Big( \|Z + \Sigma_{\vecb}^{-1/2}\Delta\|_2 > \|Z\|_2\Big).
\] 
where $Z \sim N(0,I_{p\times p})$ with $p$ denoting he dimension of $\vecb$. This can be further simplified as follows
\begin{align*}
& \|Z + \Sigma_\vecb^{-1/2}\Delta\|_2 > \|Z\|_2 
\\
\iff & \|Z + \Sigma_\vecb^{-1/2}\Delta\|_2^2 > \|Z\|_2^2
\\
\iff & \|Z\|_2^2 + \|\Sigma_\vecb^{-1/2}\Delta\|_2^2 + 2Z^\top \Sigma_\vecb^{-1/2}\Delta > \|Z\|_2^2
\\
\iff & \|\Sigma_\vecb^{-1/2}\Delta\|_2^2 > - 2Z^\top \Sigma_\vecb^{-1/2}\Delta
\\
\iff & \frac{1}{2}\|\Sigma_\vecb^{-1/2}\Delta\|_2^2 > \|\Sigma_\vecb^{-1/2}\Delta\|_2 N(0,1).
\end{align*}
Thus
\begin{equation}\label{eq:limpobPAM}
P(\hat k^{PAM} = 1) \to \Phi(\|\Sigma_\vecb^{-1/2}\Delta\|_2/2).
\end{equation}
Next we derive the corresponding limit for $\hat k^{BM}$. Let
\begin{align*}
\tilde \Delta_\alpha &:= \sqrt{T}(\tilde\alpha - \alpha^*)
\\
\Delta_\alpha^* &:= -\frac{\hat G_\alpha + A_{12}\Delta}{A_{11}} .
\end{align*}
Apply the expansion in~\eqref{eq:expansionL} with $\bm\Delta_\vecg = (\tilde \Delta_\alpha, \Delta)$ and with $\bm\Delta_\vecg = (\Delta_\alpha^*, \Delta)$ and subtract those expansions to obtain
\begin{align*}
0 \geq& \sum_{t=1}^T  \Ll(\vecx_t,Y_t;\tilde\alpha,\vecb^*+T^{-1/2}\Delta) - \sum_{t=1}^T  \Ll(\vecx_t,Y_t;\alpha^* + T^{-1/2}\Delta_\alpha^* ,\vecb^*+T^{-1/2}\Delta)
\\
=& 	\tilde \Delta_\alpha \hat G_\alpha + \Delta^\top \hat G_\vecb +  \frac{1}{2} A_{11} \tilde \Delta_\alpha^2 + \frac{1}{2}  \Delta^\top A_{22} \Delta + \tilde \Delta_\alpha A_{12}\Delta
\\
& - \Big\{\Delta_\alpha^* \hat G_\alpha + \Delta^\top \hat G_\vecb +  \frac{1}{2} A_{11} (\Delta_\alpha^*)^2 + \frac{1}{2}  \Delta^\top A_{22} \Delta + \Delta_\alpha^* A_{12}\Delta \Big\} + o_\mpr(1)
\\
=& \frac{A_{11}}{2} (\Delta_\alpha^* - \tilde\Delta_\alpha)^2 + o_\mpr(1). 
\end{align*} 
where the inequality in the first line follows because $\tilde\alpha$ is defined as minimizer and the last line is obtained by plugging in the definition of $\Delta_\alpha^*$. This implies $\Delta_\alpha^* = \tilde \Delta_\alpha + o_\mpr(1)$. Next observe that
\begin{align*}
& \inf_\alpha \sum_{t=1}^T \Ll(\vecx_t,Y_t;\alpha,\vecb^*+T^{-1/2}\Delta) - \sum_{t=1}^T \Ll(\vecx_t,Y_t;\alpha^*,\vecb^*)
\\
& = \sum_{t=1}^T \Ll(\vecx_t,Y_t;\tilde \alpha,\vecb^*+T^{-1/2}\Delta) - \sum_{t=1}^T \Ll(\vecx_t,Y_t;\alpha^*,\vecb^*)
\\
& = \tilde \Delta_{\alpha}\hat G_\alpha + \Delta^\top \hat G_\vecb + \frac{1}{2} (\tilde \Delta_\alpha,\Delta^\top) A (\tilde \Delta_\alpha,\Delta^\top)^\top + o_\mpr(1)
\\
& = \Delta^\top \hat G_\vecb + \frac{1}{2} \Delta^\top A_{22} \Delta  - \frac{1}{2A_{11}} (\hat G_\alpha + A_{12}\Delta)^2 + o_\mpr(1)
\end{align*}
where we used the definition of $\tilde \alpha$ and the expansion $\Delta_\alpha^* = \tilde \Delta_\alpha + o_\mpr(1)$ in the last line and~\eqref{eq:expansionL} in the second to last line. Similarly, setting $\Delta = 0$ in the above expansion we find 
\[
\inf_\alpha \sum_{t=1}^T \Ll(\vecx_t,Y_t;\alpha,\vecb^*) - \sum_{t=1}^T \Ll(\vecx_t,Y_t;\alpha^*,\vecb^*) = - \frac{1}{2A_{11}} \hat G_\alpha^2 + o_\mpr(1).
\] 
Subtracting those two expansions and expanding the square in $(\hat G_\alpha + A_{12}\Delta)^2$ we find
\begin{align*}
\hat k^{BM} = 1 &\iff \Delta^\top \hat G_\vecb + \frac{1}{2} \Delta^\top A_{22} \Delta  - \frac{\hat G_\alpha A_{12}\Delta}{A_{11}} - \frac{(A_{12}\Delta)^2}{2A_{11}} + o_\mpr(1) > 0
\\
&\iff \Delta^\top \Big(\hat G_\vecb - \frac{\hat G_\alpha A_{21}}{A_{11}}\Big) + \frac{1}{2} \Delta^\top \Big(A_{22} - \frac{A_{21}A_{12}}{A_{11}} \Big)\Delta + o_\mpr(1) > 0 
\\
&\iff \Delta^\top \Big(\hat G_\vecb - \frac{\hat G_\alpha A_{21}}{A_{11}}\Big) + \frac{1}{2} \Delta^\top \Big[ [A^{-1}]_{22}\Big]^{-1}\Delta + o_\mpr(1) > 0 
\end{align*}
where the last line follows by block inversion for matrices since $A_{11}$ is a scalar. Thus
\begin{align}
P(\hat k^{BM} = 1) &\to P\Big( \frac{1}{2} \Delta\Big[ [A^{-1}]_{22}\Big]^{-1}\Delta > N(0,\Delta^\top C \Delta)\Big) \nonumber
\\
&= \Phi\Big( \frac{\Delta\Big[ [A^{-1}]_{22}\Big]^{-1}\Delta}{2(\Delta^\top C \Delta)^{1/2}} \Big). \label{eq:limpobBM}
\end{align}
To lighten notation, let $D := \Big[ [A^{-1}]_{22}\Big]^{-1}$. Note that by the Cauchy-Schwarz inequality and the definition of $\Sigma_\vecb$ 
\[
\frac{\Delta^\top D \Delta}{(\Delta^\top C \Delta)^{1/2}} 
= 
\frac{\Delta^\top C^{1/2} C^{-1/2} D \Delta}{(\Delta^\top C\Delta)^{1/2}} 
\leq 
\frac{\|\Delta^\top C^{1/2}\|_2 \|C^{-1/2} D \Delta\|_2}{(\Delta^\top C \Delta)^{1/2}} 
= 
(\Delta^\top \Sigma_\vecb^{-1} \Delta)^{1/2}.
\]
This inequality is strict unless $C^{1/2}\Delta$ is a scalar multiple of $C^{-1/2} D \Delta$. Thus 
\[
\lim_{T\to\infty} P(\hat k^{PAM} = 1) \geq \lim_{T\to\infty} P(\hat k^{BM} = 1) 
\]  
with strict inequality unless $C^{1/2}\Delta$ is a scalar multiple of $C^{-1/2} D \Delta$. 

For a proof that 
\[
\lim_{T\to \infty} \mpr\Big( \hat k^{PAM} = 1 \Big) \geq \lim_{T\to \infty} \mpr\Big( \hat k^{PAM,K_T} = 1 \Big)\,,
\]
we obtain by similar computations as above that 
\begin{align*}
&\lim_{T\to \infty} \mpr\Big( \hat k^{PAM,K_T} = 1 \Big) 
= \Phi\Big( \frac{\Delta^\top K^\top K\Delta}{2(\Delta^\top K^\top K\Sigma_{\vecb} K^\top K\Delta)^{1/2}} \Big) 
= \Phi\Big( \frac{\Delta K^\top K\Sigma_{\vecb}^{1/2} \Sigma_{\vecb}^{-1/2}\Delta}{2(\Delta^\top K^\top K\Sigma_{\vecb} K^\top K\Delta)^{1/2}} \Big)
\\
&\le \Phi\Big( \frac{(\Delta K^\top K\Sigma_{\vecb}K^\top K \Delta)^{1/2} (\Delta \Sigma_{\vecb}^{-1} \Delta)^{1/2}}{2(\Delta^\top K^\top K\Sigma_{\vecb}K^\top K \Delta)^{1/2}} \Big) 
= \Phi\Big( \frac{(\Delta \Sigma_{\vecb}^{-1} \Delta)^{1/2}}{2} \Big) = \lim_{T\to \infty} \mpr\Big( \hat k^{PAM} = 1 \Big)\,.
\end{align*}
\hfill $\Box$

\subsection{Proof of the generic spectral clustering results (Theorems~\ref{th:specnonasy},~\ref{thm:clusternew})} \label{sec:proofclust}

Since the result is trivial when $G^*=1$, we will without loss of generality assume that $G^* \geq 2$. We will further write $G$ instead of $G^*$ since there is no risk of confusion in this subsection. 

To simplify notation, we will without loss of generality assume that the units are ordered according to their true grouping, i.e. unit $1,...,|I_1^*|$ belong to group $1$, unit $|I_1^*|+1,...,|I_2^*|$ belong to group $2$, etc. This is to shorten notation only, all arguments will work with more complex notation if this assumption is dropped. 

To proceed to the proof, we first consider the decomposition $\hat A= \hat A_{\op{diag}}+\hat A_{\op{off-diag}},$ where 
\begin{equation*}
\hat A_{\op{diag}} := \begin{pmatrix}
\hat{A}^{(11)} & \mathbf{0}  & \cdots&\mathbf{0} \\ 
\mathbf{0}&  \hat A^{(22)} & \cdots&\mathbf{0}\\
\cdots & \cdots& \cdots&\cdots\\ 
\mathbf{0} &\mathbf{0}&\cdots &\hat A^{(GG)}
\end{pmatrix}\,
\end{equation*}
and 
\begin{equation*}
\hat A_{\op{off-diag}} := \begin{pmatrix}
\mathbf{0}  & \hat A^{(12)}  & \cdots& \hat A^{(1G)} \\ 
\hat A^{(21)}& \mathbf{0} & \cdots&\hat A^{(2G)}\\
\cdots & \cdots& \cdots&\cdots\\ 
\hat A^{(G1)} &\hat A^{(G2)}&\cdots &\mathbf{0} 
\end{pmatrix}\,,
\end{equation*}
{with} $\hat A^{(ij)}\in\R^{|I^*_i|\times|I^*_j|},i,j=1,\dots,G.$ Define the degree matrix~$\hat D_{\op{diag}}$ corresponding to $\hat A_{\op{diag}}$ as 
\begin{equation*}
\hat D_{\op{diag}}:=\op{diag}\bigl((\hat D_{\op{diag}})_1,\dots,(\hat D_{\op{diag}})_n\bigr)\,,    
\end{equation*}
with the elements $(\hat D_{\op{diag}})_i:=\sum_{j=1}^n (\hat A_{\op{diag}})_{ij}.$
Define the {corresponding} graph Laplacian~$\hat L_{\op{diag}}$ as
\begin{equation*}
\hat L_{\op{diag}}:= I- \hat D_{\op{diag}}^{-1/2}\hat A_{\op{diag}}D_{\op{diag}}^{-1/2}\,.
\end{equation*}

The remaining proof proceeds as follows: in step 1, we show that $\hat L_{\op{diag}}$ has non-negative eigenvalues and that the eigenvalue zero has multiplicity $G$. Moreover, the eigenspace corresponding to that eigenvalue is spanned by the vectors $\hat D_{\op{diag}}\1_{I_j^*} \in \R^n$ with entries
\begin{equation} \label{eq:groupvect}
(\1_{I_j^*})_k = \left\{ 
\begin{array}{l}
1, \quad k \in I_j^*
\\
0, \quad k \notin I_j^*
\end{array} \right. ,
\end{equation} 
see Lemma~\ref{eigenspace}. In step 2, we bound the distance in operator norm between $\hat L$ and $\hat L_{\op{diag}}$ (Lemma~\ref{boundL}). In step 3, we quantify the gap between the $G$-th and $(G+1)$-th smallest eigenvalues of $\hat L_{\op{diag}}$ (Lemma~\ref{eigengap}). In step 4, we use the results from step 2 and step 3 to show that the matrix $\hat U$ defined in step 4 of the spectral clustering algorithm is close to a rotation of the matrix~$U{\in\R^{n\times G}}$ defined via 
\[
U := \big (\1_{I_1^*},...,\1_{I_G^*}\big )
\]
in Frobenius norm (Lemma~\ref{boundU}), i.e. the Frobenius norm of the difference between those matrices converges to zero. This convergence together with a simple analysis of the $k$-means algorithm yields our main result in step 5.

\medskip

\noindent \textbf{Step 1:} \textit{Eigenstructure of $\hat L_{\op{diag}}$.} \\
The following result is essentially a reformulation of Proposition 4 from \cite{von2007} in our setting.
The proof follows by exactly the same type of arguments as in the latter paper, for the sake of completeness and for the reader's convenience we provide a short proof in our specific setting.

\begin{lemma}\label{eigenspace}
The multiplicity of the eigenvalue 0 of $\hat L_{\op{diag}}$ equals $G$. The eigenspace of the eigenvalue 0 of $\hat L_{\op{diag}}$ is spanned by the vectors $\hat D_{\op{diag}}^{1/2} \1_{I_j^*}$ where $\1_{I_j^*}$ are defined in~\eqref{eq:groupvect}.
\end{lemma}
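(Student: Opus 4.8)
The plan is to exploit the block-diagonal structure of $\hat L_{\op{diag}}$ together with the classical characterization of the kernel of a normalized graph Laplacian through its Dirichlet form; this is essentially Proposition~4 of~\cite{von2007} restated in our notation. First I would record the structural observation: since $\hat A_{\op{diag}}$ is block-diagonal with blocks $\hat A^{(jj)}\in\R^{|I_j^*|\times|I_j^*|}$, the row sums of $\hat A_{\op{diag}}$ restricted to indices in $I_j^*$ coincide with the row sums of $\hat A^{(jj)}$, so $\hat D_{\op{diag}}$ is block-diagonal with $j$-th block equal to the degree matrix $\hat D^{(jj)}:=\op{diag}\bigl(\sum_{l}\hat A^{(jj)}_{1l},\dots,\sum_{l}\hat A^{(jj)}_{|I_j^*|l}\bigr)$ of $\hat A^{(jj)}$. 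All entries of $\hat A^{(jj)}$ are strictly positive (off-diagonal entries equal $e^{-\hat V_{ij}}>0$, diagonal entries equal $1$), so every degree is $\ge 1$, $\hat D_{\op{diag}}$ is invertible, and $\hat L_{\op{diag}}$ is block-diagonal with $j$-th block
\[
\hat L^{(jj)} := I_{|I_j^*|} - (\hat D^{(jj)})^{-1/2}\hat A^{(jj)}(\hat D^{(jj)})^{-1/2}.
\]

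Next I would analyze a single block. For any $f\in\R^{|I_j^*|}$ one has the standard identity
\[
f^\top \hat L^{(jj)} f = \frac12\sum_{k,l\in I_j^*}\hat A^{(jj)}_{kl}\Bigl(\frac{f_k}{\sqrt{(\hat D^{(jj)})_k}}-\frac{f_l}{\sqrt{(\hat D^{(jj)})_l}}\Bigr)^2\ge 0,
\]
so $\hat L^{(jj)}$ is positive semidefinite and $\hat L^{(jj)}f=0$ if and only if this sum vanishes. Because every $\hat A^{(jj)}_{kl}$ is strictly positive, the sum vanishes exactly when $f_k/\sqrt{(\hat D^{(jj)})_k}$ takes the same value for all $k\in I_j^*$, i.e. when $f$ is a scalar multiple of $(\hat D^{(jj)})^{1/2}\1$. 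Hence $0$ is an eigenvalue of $\hat L^{(jj)}$ of multiplicity exactly one, with eigenspace $\op{span}\{(\hat D^{(jj)})^{1/2}\1\}$.

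Finally I would assemble the blocks: for a block-diagonal symmetric matrix the kernel is the orthogonal direct sum of the kernels of its blocks, and embedding the block eigenvector $(\hat D^{(jj)})^{1/2}\1$ into $\R^n$ by zero-padding outside $I_j^*$ yields precisely the vector $\hat D_{\op{diag}}^{1/2}\1_{I_j^*}$ from~\eqref{eq:groupvect}. These $G$ vectors have pairwise disjoint supports, hence are orthogonal and in particular linearly independent, so the multiplicity of the eigenvalue $0$ of $\hat L_{\op{diag}}$ equals $G$ and its eigenspace is $\op{span}\{\hat D_{\op{diag}}^{1/2}\1_{I_1^*},\dots,\hat D_{\op{diag}}^{1/2}\1_{I_G^*}\}$. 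There is no real obstacle here beyond bookkeeping; the only point requiring care is the strict positivity of all within-block adjacency entries (equivalently, connectedness of each group's similarity graph), which relies on the specific definition $\hat A_{ij}=e^{-\hat V_{ij}}$, $\hat A_{ii}=1$ in Algorithm~\ref{SpectralClustering}, and on correctly identifying $\hat D_{\op{diag}}$ as collecting the within-block degrees.
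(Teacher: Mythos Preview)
Your proposal is correct and follows essentially the same approach as the paper: both reduce to the block-diagonal structure, use the Dirichlet-form identity $f^\top \hat L^{(jj)} f = \tfrac12\sum_{k,l}\hat A^{(jj)}_{kl}\bigl(f_k/\sqrt{(\hat D^{(jj)})_k}-f_l/\sqrt{(\hat D^{(jj)})_l}\bigr)^2$, and invoke strict positivity of the within-block entries to conclude that each block has a one-dimensional kernel spanned by $(\hat D^{(jj)})^{1/2}\1$. Your write-up is in fact slightly more complete than the paper's, since you explicitly note the disjoint-support orthogonality when assembling the blocks.
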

\begin{proof}[Proof of Lemma~\ref{eigenspace}]

Begin by observing that $\hat L_{\op{diag}}$ is block-diagonal with $G$ blocks, say $\hat L^{(11)},...,\hat L^{(GG)}$, of size $|I_1^*|\times|I_1^*|,...,|I_G^*|\times|I_G^*|$. 
It thus suffices to show that the eigenvalues of each block are non-negative and that the multiplicity of the eigenvalue $0$ for each block equals $1$. 
Since all blocks share a similar structure we will focus on the first block. Assume that $v=(v_1,\dots,{v_{|I^*_1|}})^\top$ is an eigenvector of $\hat L^{(11)}_{\op{diag}}$ with norm $1$ {corresponding to }eigenvalue $\lambda$. 
Then, we have
\begin{align*}
\lambda = &v^\top \hat L^{(11)}_{\op{diag}} v\\
=&\sum_{i \in I_1} v_i^2 -\sum_{i,j \in I_1} v_i \frac{\hat A_{ij}}{\sqrt{(\hat D_{\op{diag}})_i}\sqrt{(\hat D_{\op{diag}})_j}}v_j
\\
=&\frac{1}{2}\Bigg(\sum_{i \in I_1} v_i^2-2\sum_{i,j \in I_1} \hat A_{ij}\frac{v_i}{\sqrt{ (\hat D_{\op{diag}})_i}}\frac{v_j}{\sqrt{ (\hat D_{\op{diag}})_j}}+\sum_{j\in I_1} v_j^2\Bigg)
\\
=& \frac{1}{2}\sum_{i,j \in I_1} \hat A_{ij}\Bigg(\frac{v_i}{\sqrt{ (\hat D_{\op{diag}})_i}}-\frac{v_j}{\sqrt{ (\hat D_{\op{diag}})_j}}\Bigg)^2
\\
\geq& 0\,,
\end{align*}
where $(\hat D_{\op{diag}})_i$ denotes the $i$-th diagonal elements of $\hat D_{\op{diag}},$ and  the last line follows since by construction $\hat A_{ij} > 0$. The latter also implies that $\lambda = 0$ if and only if $v_i\big/\sqrt{(\hat D_{\op{diag}})_i} = v_j \big/\sqrt{ (\hat D_{\op{diag}})_j }$ for all $i,j$, which is only possible if $v_i = C\sqrt{ (\hat D_{\op{diag}})_i }$ for a constant $C$ independent of $i$. This completes the proof.

\end{proof}

\medskip

\noindent\textbf{Step 2:} \textit{Bound on operator norm distance between $\hat L$ and $\hat L_{\op{diag}}$.} \\

Now we consider the distance between $\hat L$ and $\hat L_{\op{diag}}$ in operator norm.
\begin{lemma}\label{boundL}
On the event $\frac{n A_{1,max}}{A_{0,min}\min_k |I_k^*|} \le 1$ it holds  that
\begin{align*}
\normop{\hat L-\hat L_{\op{diag}}} %
&\le \frac{4 \sqrt{2}n A_{1,max}}{A_{0,min}\min_k |I_k^*|} \sqrt{\frac{A_{0,max}\max_k |I_k^*|}{ A_{0,min}\min_k |I_k^*|}}\,.
\end{align*}
\end{lemma}
\begin{proof}[Proof of Lemma~\ref{boundL}]
The proof follows a similar strategy as in \cite{chung2011spectra}, and Lemma 3.1 of \cite{vandelft2021} but modified to account for the fact that $n$ can diverge while it is fixed in the latter paper. 
Decompose the difference $\hat L-\hat L_{\op{diag}}$ as follows
\begin{align*}
&\hat L-\hat L_{\op{diag}}\\
=& (\hat{D}^{-1/2}-\hat D_{\op{diag}}^{-1/2})\hat{A}\hat{D}^{-1/2}
+\hat D_{\op{diag}}^{-1/2}\hat{A}(\hat{D}^{-1/2}-\hat D_{\op{diag}}^{-1/2})
+\hat D_{\op{diag}}^{-1/2}(\hat{A}-\hat A_{\op{diag}})\hat D_{\op{diag}}^{-1/2}\\
=& (I-\hat D_{\op{diag}}^{-1/2}\hat{D}^{1/2})\hat{D}^{-1/2}\hat{A}\hat{D}^{-1/2} + (\hat D_{\op{diag}}^{-1/2}\hat{D}^{1/2})\hat{D}^{-1/2}\hat{A}\hat{D}^{-1/2}(I-\hat{D}^{1/2}\hat D_{\op{diag}}^{-1/2})\\
&+\hat D_{\op{diag}}^{-1/2}(\hat{A}-\hat A_{\op{diag}})\hat D_{\op{diag}}^{-1/2}\,.
\end{align*}
Now, we bound the terms on the right hand side separately. 
Define the $i$-th diagonal elements of the diagonal matrix $D$ by $(D)_i.$
By definition of the norm~$\normop{\cdot}$, we have
\begin{align*}
\normop{I-\hat D_{\op{diag}}^{-1/2}\hat{D}^{1/2}}
&= \max_i\Biggl|1-\sqrt{\frac{\hat D_i}{(\hat D_{\op{diag}})_i}} \Biggr|
\\
&\le  \max_i \Biggl|1-\frac{\hat D_i}{(\hat D_{\op{diag}})_i} \Biggr|\\
&\le  \frac{\max_i \Bigl|(\hat D_{\op{diag}})_i-\hat{D}_i\Bigr|}{\min_i (\hat D_{\op{diag}})_i}\,,
\end{align*}
where we used the fact that {$|1-x| = |1-\sqrt{x}||1+\sqrt{x}| \geq |1-\sqrt{x}|,\forall x>0$}. 
We also have
\begin{align*}
\normop{\hat D_{\op{diag}}^{-1/2}\hat{D}^{-1/2}}
=\normop{I-(I-\hat D_{\op{diag}}^{-1/2}\hat{D}^{-1/2})}
\le 1+ \frac{\max_i |(\hat D_{\op{diag}})_i-\hat{D}_i|}{\min_i (\hat D_{\op{diag}})_i}\,,
\end{align*}
and
\begin{align*}
\normop{\hat D^{-1/2}\hat A\hat D^{-1/2}}
=&\max_i \Biggl\{\sum_{j=1}^n \frac{\hat A_{ij}}{\sqrt{\hat D_i}\sqrt{\hat D_j}} \Biggr\}\\
\le&  \max_i \Biggl\{\frac{1}{\sqrt{\hat D_i}}\frac{1}{\min_k\sqrt{\hat D_k}}\sum_{j=1}^n \hat A_{ij} \Biggr\}\\
= &\frac{\max_i\sqrt{\hat D_i}}{\min_j\sqrt{\hat D_j}}\,.
\end{align*}
Moreover, by the sub-multiplicativity of the norm~$\normop{\cdot}$, it holds that 
\begin{equation*}
\normop{\hat D_{\op{diag}}^{-1/2}(\hat A-\hat A_{\op{diag}})\hat D_{\op{diag}}^{-1/2}}
\le \frac{1}{\min_i (\hat D_{\op{diag}})_i}\normop{\hat A-\hat A_{\op{diag}}}\,.
\end{equation*}
Collecting pieces gives
\begin{align*}
&\normop{\hat L-\hat L_{\op{diag}}} \\
\le& \frac{\max_i|(\hat D_{\op{diag}})_i-\hat D_i| }{\min_i (\hat D_{\op{diag}})_i} \frac{\max_i\sqrt{\hat D_i}}{\min_j\sqrt{\hat D_j}}
\Biggl(2+\frac{\max_i|(\hat D_{\op{diag}})_i-\hat D_i| }{\min_i (\hat D_{\op{diag}})_i}\Biggr)+\frac{1}{\min_i  (\hat D_{\op{diag}})_i}\normop{\hat A_{\op{diag}}-\hat A }\,.
\end{align*}

Define
\begin{align*}
	S_{0,i} &:= \sum_{j: i,j \mbox{ in same group}} \hat A_{ij}\,,
	\\
	S_{1,i} &:= \sum_{j: i,j \mbox{ in different groups}} \hat A_{ij}	\,.
\end{align*}
With this notation we have 
\[
\big |(\hat D_{\op{diag}})_i-\hat D_i \big| = S_{0,i}
\]
and 
\[
(\hat D_{\op{diag}})_i = S_{1,i}
\]
as well as $\hat D_i = S_{0,i} + S_{1,i}$. Moreover, by definition,
\[
\normop{\hat A_{\op{diag}}-\hat A } = \max_i S_{1,i}.
\]
Collecting pieces yields
\begin{align*}
\normop{\hat L-\hat L_{\op{diag}}} &\le \frac{\max_i S_{1,i}}{\min_i S_{0,i}}\Bigg(1 + \sqrt{\frac{\max_i S_{0,i} + S_{1,i}}{\min_i S_{0,i} + S_{1,i}}}\Big(2 +  \frac{\max_i S_{1,i}}{\min_i S_{0,i}}\Big) \Bigg)
\end{align*}
Recall the definition of $A_{1,max}, A_{0,\min}$. We have
\begin{align*}
S_{1,i}	\le nA_{1,max}\,,
\end{align*}
and
\begin{equation}\label{re:minDdiag}
A_{0,max} \max_k|I_k^*| \geq S_{0,i} \geq \min_k|I_k^*| A_{0,min}\,.
\end{equation}
This further yields
\begin{align*}
	\normop{\hat L-\hat L_{\op{diag}}} \le \frac{n A_{1,max}}{A_{0,min}\min_k |I_k^*|}\Bigg(1 + \sqrt{\frac{\max_k |I_k^*| + n A_{1,max}}{ A_{0,min}\min_k |I_k^*|}}\Big(2 + \frac{n A_{1,max}}{A_{0,min}\min_k |I_k^*|}\Big) \Bigg)\,.
\end{align*}
Assuming $\frac{n A_{1,max}}{A_{0,min}\min_k |I_k^*|} \le 1$ and noting $\frac{\max_k |I_k^*|}{A_{0,min} \min_k |I_k^*|} \ge 1$ this can be further bounded by 
\[
\normop{\hat L-\hat L_{\op{diag}}} \le \frac{4 \sqrt{2}n A_{1,max}}{A_{0,min}\min_k |I_k^*|} \sqrt{\frac{\max_k |I_k^*|}{ A_{0,min}\min_k |I_k^*|}}. 
\]
This completes the proof.
\end{proof}

\noindent \textbf{Step 3:} \textit{Bounding the $G+1$'st smallest eigenvalue of $\hat L_{\op{diag}}.$}\\
Denote the $i$-th smallest eigenvalue of~$\hat L_{\op{diag}}$ by $\lambda_{i}.$
By Lemma~\ref{eigenspace}, we know that $\lambda_1=\cdots=\lambda_G=0.$
Thus, we need to find a lower bound on {the $G+1$'st smallest eigenvalue}~$\lambda_{G+1}$. This is done in the following Lemma.
\begin{lemma}\label{eigengap}
We have
\[
\lambda_{G+1}\geq \frac{A_{0,min}}{8A_{0,max}}\,. %
\] 

\end{lemma}
\begin{proof}[Proof of Lemma~\ref{eigengap}]

Recall the \textit{Cheeger constant} (see for instance equation (2.2) in \cite{chung1997}) of a undirected graph $\graph = (V,E)$ ($V$ denotes the set of vertices and $E$ denotes the set of edges) with weights $w_{i,j}$ on the vertices $(i,j) \in E$:
\[
\mathfrak{H} := \min _{\mathcal{I}{\subset V}} \frac{\sum_{j\in\mathcal{I},k\notin\mathcal{I}}{w_{j,k}}}{ \min\big\{  \sum_{j\in\mathcal{I}}d_j,\sum_{k \in V\backslash \mathcal{I}} d_k\big\} }\,,
\]
where 
\[
d_k := \sum_{(i,j) \in E: i \in \mathcal{I}} w_{i,j}\,.
\]
Then, Theorem 2.2 in~\cite{chung1997} implies that the eigengap of the normalized graph Laplacian is bounded below by $\mathfrak{H}^2/2$. 
To translate this result to our setting consider the fully connected graph with vertices given by $V = I_k^*$ and edge weights $w_{i,j} := \hat A_{ij}, i,j \in V$. 
Hence, the Cheeger constant corresponding to block $\hat L^{(mm)}$ on the diagonal of $\hat L_{\op{diag}}$ is {defined as} %
\[
\mathfrak{H}_m := \min _{\mathcal{I}\subset I_m^*} \frac{\sum_{j\in\mathcal{I},i\in I_m^* \backslash\mathcal{I}}{\hat A_{ij}}}{ \min\bigl\{  \sum_{j\in\mathcal{I}} \hat d_j(m),\sum_{k \in I_m^* \backslash \mathcal{I}} \hat d_k(m)\bigr\} }\,,
\]
where $\hat d_j(m) := \sum_{i \in I_m} \hat A_{ij}$. Since the non-zero eigenvalues of $\hat L_{\op{diag}}$ are exactly the eigenvalues of the corresponding block diagonal pieces, it follows that 
\[
\lambda_{G+1} \geq \frac{\min_{m=1,...,G} \mathfrak{H}_m^2}{2}\,.
\] 
Hence, it suffices to prove that
\begin{equation*}
\min_{m=1,...,G} \mathfrak{H}_m \geq A_{0,min}/2A_{0,max} \,.
\end{equation*}
Observe that
\begin{align*}
\sum_{j\in\mathcal{I}}\hat d_j(m) \leq |\mathcal{I}| |I_m^*| A_{0,max}
\end{align*}
and 
\[
\sum_{j\in\mathcal{I},i\in I_m^* \backslash\mathcal{I}}{\hat A_{ij}} \ge |\mathcal{I}| |I_m^*\backslash \mathcal{I}|A_{0,min} \,.
\]
Let $ \mathcal{\bar I} := I_m^* \backslash \mathcal{I}$. 
It then holds that
\[
\mathfrak{H}_m \geq \frac{A_{0,min}}{A_{0,max}} \min _{\mathcal{I}\subset I_m^*} \frac{|\mathcal{I}||\mathcal{\bar I}|}{ |I_m^*|\min\{  |\mathcal{I}|  ,|\mathcal{\bar I}| \} } = \frac{A_{0,min}}{A_{0,max}} \min_{\mathcal{I}\subset I_m^*} \frac{|\mathcal{I}| \vee |\mathcal{\bar I}|}{|I_m^*|} \geq \frac{A_{0,min}}{2A_{0,max}},
~~m = 1,...,G\,.
\]
This completes the proof.
\end{proof}

\medskip

\noindent \textbf{Step 4:} \textit{Frobenius norm convergence of~\,$\hat U$ to a transformation of \,$U$.} 

\begin{lemma}\label{boundU}
There exists a orthogonal matrix $O_{n,T}{\in\R^{G\times G}}$ such that on the event $\frac{n A_{1,max}}{A_{0,min}\min_k |I_k^*|} \le 1$ we have
\begin{align*}
\normf{\hat U - U O_{n,T}}^2
&\le \frac{2^{16} n^2 G\max_k|I_k^*|^2 A_{1,max}^2 A_{0,max}^3}{ A_{0,min}^5\min_k |I_k^*|^3} \,. 
\end{align*}
\end{lemma}

\begin{proof}[Proof of Lemma~\ref{boundU}]
In the first step we apply Theorem 2 from \cite{yu2015}. 
In the notation of the latter paper let $d=G, s = n, r = n-G+1$, $\hat \Sigma = \hat L, \Sigma = \hat L_{\op{diag}}$. 
Let $\hat Z, Z$ denote the matrices which contain the eigenvectors corresponding to the $G$ smallest eigenvalues of $\hat L_{\op{diag}}$ and $\hat L$, respectively (in the notation of \cite{yu2015} we have $\hat V = \hat Z, V = Z$). 
Note that by Lemma~\ref{eigenspace} we can choose $Z$ to have columns $\hat D_{\op{diag}} \1_{I_j^*}, j=1,...,G$. By equation (3) in Theorem 2 from \cite{yu2015} there exists an orthonormal matrix $\hat O \in \R^{G\times G}$ with 
\begin{equation}\label{re:DKthm}
\normf{\hat{Z} \hat O -Z}\le\frac{2^{3/2}\sqrt{G}\normop{\hat L-\hat L_{\op{diag}}}}{\lambda_{G+1}} \,.
\end{equation}
Here we note that for symmetric matrices the operator norm $\|\cdot\|_{\op{op}}$ used in \cite{yu2015} coincides with our $\normspec{\cdot }$ and the latter satisfies $\normspec{A} \leq \normop{A}$ for symmetric matrices $A$. 
Let ${O_{n,T}} := \hat O^\top$ and note that by orthogonality of $\hat O$ we have $\normf{\hat{Z} \hat O -Z} = \normf{\hat{Z} - Z O_{n,T}}$. 
In what follows write $O$ for $O_{n,T}$ to simplify notation.
Note that $\hat U_{i,\cdot}=\frac{\hat Z_{i,\cdot}}{\normtwo{\hat Z_{i,\cdot}}},$ and $(U O) _{i,\cdot}=\frac{(ZO)_{i,\cdot}}{\normtwo{ Z_{i,\cdot}}}.$ 
Similarly to Lemma 3.2 in \cite{vandelft2021}, it follows that
\begin{align*}
\normf{\hat U-UO}^2&=\sum_{i=1}^n \Bigg\|\frac{\hat Z_{i,\cdot}}{\normtwo{\hat Z_{i,\cdot}}} - \frac{ (ZO)_{i,\cdot}}{\normtwo{Z_{i,\cdot}}}\Bigg\|_2^2 \\
&=    \sum_{i=1}^n \Bigg\| \frac{\hat Z_{i,\cdot} \,\normtwo{Z_{i,\cdot}} - \hat Z_{i,\cdot}\, \normtwo{\hat Z_{i,\cdot}} + \hat Z_{i,\cdot}\,\normtwo{\hat Z_{i,\cdot}} -  (ZO)_{i,\cdot}\,\normtwo{\hat Z_{i,\cdot}} }{\normtwo{\hat Z_{i,\cdot}}\,\normtwo{Z_{i,\cdot}}}  \Bigg\|_2^2 \\
&\le 2\sum_{i=1}^n \Bigg\| \frac{ \hat Z_{i,\cdot} (\,\normtwo{Z_{i,\cdot}}-\normtwo{\hat Z_{i,\cdot}})}{ \normtwo{\hat Z_{i,\cdot}}\,\normtwo{Z_{i,\cdot}}} \Bigg\|_2^2 +\Bigg\| \frac{\hat Z_{i,\cdot} - (ZO)_{i,\cdot} }{\normtwo{Z_{i,\cdot}}} \Bigg\|_2^2 \\
& = 2\sum_{i=1}^n \frac{(\,\normtwo{Z_{i,\cdot}}-\normtwo{\hat Z_{i,\cdot}})^2}{\normtwo{Z_{i,\cdot}}^2} +\frac{\normtwo{\hat Z_{i,\cdot}-(ZO)_{i,\cdot}}^2}{\normtwo{Z_{i,\cdot}}^2}\\
&\le 4 \sum_{i=1}^n \frac{\normtwo{\hat Z_{i,\cdot}-(ZO)_{i,\cdot}}^2}{\normtwo{Z_{i,\cdot}}^2}\\
&\le \frac{4}{\min_i\normtwo{Z_{i,\cdot}}^2}\, \normf{\hat Z-(ZO)}^2\,.
\end{align*}
Combining this with~\eqref{re:DKthm} yields
\begin{equation}\label{re:boundU}
\normf{\hat U - U O}^2
\le \frac{32G}{(\lambda_{G+1})^2\min_i\normtwo{Z_{i,\cdot}}^2 }\, \normop{\hat L-\hat L_{\op{diag}}}^2\,.
\end{equation}
Recalling the definition of $Z$, we obtain
\begin{align*}
\normtwo{ Z_{i,\cdot}}^2 
=& \frac{(\hat D_{\op{diag}})_i}{\sum_{j\in I^*_k}(\hat D_{\op{diag}})_j} = 1/|I_k^*|,~~\forall i\in I^*_k\,,
\end{align*}
where the last line follows since $(\hat D_{\op{diag}})_i$ is the same for all $i$ from the same group. This yields 
\[
1/\min_i \normtwo{ Z_{i,\cdot}}^2 = \max_k |I_k^*|
\]
and thus 
\[
\normf{\hat U - U O}^2
\le \frac{32G\max_k|I_k^*|}{(\lambda_{G+1})^2 }\, \normop{\hat L-\hat L_{\op{diag}}}^2\,.
\]
Combining this with the bounds in Lemma~\ref{boundL} we find
\begin{align*}
\normf{\hat U - U O}^2
& \le \frac{2^{16} n^2 G\max_k|I_k^*|^2 A_{1,max}^2 A_{0,max}^3}{ A_{0,min}^5\min_k |I_k^*|^3}  
\end{align*}

\end{proof}

\noindent\textbf{Step 5:} \textit{Completing the argument}

Recall that the last step of the algorithm consists of applying $k$-means clustering to the $n$ embedded points $\hat U_{1,\cdot},\dots,\hat U_{n,\cdot}.$ In other words, this step determines group centers $\hat c_1,...,\hat c_G$ through
\[
\{\hat{c}_1,\dots,\hat{c}_G\}\in\argmin_{c_1,\dots,c_G\in\R^G}\Biggl\{\sum_{i=1}^n \min_{j\in\{1,\dots,G\}}\normtwo{\hat U_{i,\cdot}-c_j}^2\Biggr\}\,.
\] 
The data points $\hat U_{i,\cdot}$ and $\hat U_{j,\cdot}$ are grouped together if and only if
\[
\argmin_k \|\hat c_k - \hat U_{i,\cdot}\|_2 = \argmin_k \|\hat c_k - \hat U_{j,\cdot}\|_2\,.
\] 
We will prove that as soon as $\normf{\hat U - UO_{n,T}} < 1/2$ all individuals are clustered correctly. Combined with Lemma~\ref{boundU} and noting that under the assumptions of the theorem we have $\frac{n A_{1,max}}{A_{0,min}\min_k |I_k^*|} \le 1$,  this will complete the proof. 
By orthogonality of $O_{n,T}$ and the definition of $U$ we have for $i,j$ in different groups
\[
\|(UO_{n,T})_{i,\cdot} - (UO_{n,T})_{j,\cdot}\|_2 = \|U_{i,\cdot} - U_{j,\cdot} \|_2 = \sqrt{2}\,. 
\] 
Note that by definition of the Frobenius norm
\[
\max_{i \neq j} \Big\{ \| \hat U_{i,\cdot} - (UO_{n,T})_{i,\cdot}\|_2 + \|\hat U_{j,\cdot} - (UO_{n,T})_{j,\cdot} \|_2 \Big\} \leq \sqrt{2} \, \normf{\hat U - UO_{n,T}} < 1/\sqrt{2}\,.
\]
Combining the above inequality with the reverse triangle inequality we have for $i,j$ in different groups
\[
\min_{i,j \text{ in different groups}} \|\hat U_{i,\cdot} - \hat U_{j,\cdot}\|_2 \geq \|U_{i,\cdot} - U_{j,\cdot} \|_2- \sqrt{2}\, \normf{\hat U - UO_{n,T}} > 1/\sqrt{2}\,.  
\]
Similarly, we have
\[
\max_{i,j \text{ in the same group}} \|\hat U_{i,\cdot} - \hat U_{j,\cdot}\|_2  \leq \sqrt{2}\, \normf{\hat U - UO_{n,T}} < 1/\sqrt{2}\,.
\]
Hence any two points in the same group are closer to each other than to any point outside of that group. 
This implies that $\hat c_j$ are just the group means of group $I_j^*$ (modulo permutation of group labels) and that individuals $i,j$ are grouped together if and only if $i,j \in I_k^*$ for some $k$. 
This completes step 5 and thus the proof of Theorem~\ref{th:specnonasy}. 
\hfill $\Box$

\subsubsection{Proof of Theorem~\ref{thm:clusternew} } Note that $G^* \leq n, \min_k |I_k^*| \geq 1, \max_k |I_k^*| \leq n$. Thus the bound in~\eqref{eq:nonasyperfectA} holds with probability approaching one if $A_{1,max}^2A_{0,max}^3/A_{0,min}^5 = o_\mpr(n^{-5})$. Now letting $\hat \eta_{max} := \lambda_{max}(b_T^{-1/2}\hat \Sigma_{i,j}^{-1/2})$ and $\hat \eta_{min} := \lambda_{min}(b_T^{-1/2}\hat \Sigma_{i,j}^{-1/2})$ we find that under Assumption~\ref{asm:est_cov} $\hat \eta_{min}$ is bounded away from zero and $\hat \eta_{max}$ is bounded from above by a fixed constant, both with probability tending to one. Moreover, by definition of $\hat A_{ij}$, that $A_{0,max} \leq 1$ and
\begin{align*}
	A_{1,max} &\leq \exp(-b_T^{1/2}\hat \eta_{min}(\min_{k \neq \ell} \|\vecb_k^* - \vecb_\ell^*\|_2 - 2\max_{i}\|\hat \vecb_i - \vecb_i\|_2))
	\\
	A_{0,min} &\geq \exp(-2 b_T^{1/2}\hat \eta_{max} \max_{i}\|\hat \vecb_i - \vecb_i\|_2).
\end{align*}
Thus
\[
A_{1,max}^2A_{0,max}^3/A_{0,min}^5 \leq \exp\Big(- 2b_T^{1/2} \{\hat \eta_{min}\Delta_{min} - (2\hat \eta_{min} + 5\hat \eta_{max})a_{n,T} \}\Big)\,.
\]
The assumption $a_{n,T} = o_\mpr(\Delta_{min})$ ensures that the exponent is bounded from below by a (positive) constant multiple of $\Delta_{min}b_T^{1/2}$ which grows faster than $\log n$ by assumption. This completes the proof. 
\hfill $\Box$

\subsection{Proofs for examples section}

Throughout this section, we will use the following empirical process notation: let $\mpr_{T,i}$ denote the empirical measure of the sample $(\vecz_{it},Y_{it})_{t=1,\dots,T}$ and let $\mpr_i$ denote the measure corresponding to the distribution of $(\vecz_{i1},\response_{i1})$ and let $\G_{T,i} := \sqrt{T}(\mpr_{T,i} - \mpr_i)$ denote the corresponding empirical process. For a function $f: (\vecz,y) \mapsto f(\vecz,y)$ and a signed measure $\mpr$ let $\mpr f$ stand for $\int f d\mpr$. For a class of functions $\mathcal{G}$ define $\|\G_{T,i}\|_\mathcal{G} := \sup_{g \in \mathcal{G}} |\G_{T,i} f|$. 
Given the function~$f:\rd\times\R\to \R,$ define $\sigma_{q,i}(f):=\Var\Big(\frac{1}{\sqrt{q}}\sum_{i=1}^q f(\vecz_{it},Y_{it})\Big).$

\subsubsection{Proofs for logistic regression in the independent case (Theorem~\ref{mle_est})}

Throughout this section we will use the following additional notation. Let
\[
f(y;\vecg,\vecz)=\exp\Bigl\{y\vecz^\top\vecg-g(\vecz^\top\vecg)\Bigr\}
\]
denote the pmf of $y\in\R$ conditional on $\vecz\in\R^{p+1}$; here the function $g$ is defined via
\begin{align*}
g:\R&\to\R\\
z&\mapsto\log(1+e^z)\,.
\end{align*}
We abbreviate the corresponding log-likelihood as $\ell(\vecz,y;\vecg):=y\vecz^\top\vecg-g(\vecz^\top\vecg).$ Define
\[
\mathbb{M}_{i,T}(\vecg):=\frac{1}{T}\sum_{t}[Y_{it}\vecz_{it}^\top\vecg-g(\vecz_{it}^\top\vecg)]
\] 
and 
\[
\mathbb{M}_{i}(\vecg):=\E_{}[\mathbb{M}_{i,T}(\vecg)], i=1,\dots,n\,.
\]

\begin{lemma}\label{bracketing_integral}
Given $p\in\mathbb{Z}^+,$ we have $\int_0^1 \sqrt{1+\log (\epsilon^{-p}) }d\epsilon\le 1+\sqrt{2\pi p}e^{1/p}.$
\end{lemma}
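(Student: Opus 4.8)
The plan is to reduce the integral to a Gamma-function evaluation by the substitution $\epsilon = e^{-u}$. Observing that $\log(\epsilon^{-p}) = p\log(1/\epsilon) = pu$ under this change of variables, and that $d\epsilon = -e^{-u}\,du$ with limits $\epsilon=1\mapsto u=0$ and $\epsilon\to 0^+\mapsto u\to+\infty$, the left-hand side becomes
\[
\int_0^1 \sqrt{1+\log(\epsilon^{-p})}\,d\epsilon = \int_0^\infty \sqrt{1+pu}\,e^{-u}\,du .
\]

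Next I would control the integrand with the elementary bound $\sqrt{a+b}\le\sqrt a+\sqrt b$ (valid for $a,b\ge 0$), giving $\sqrt{1+pu}\le 1+\sqrt p\,\sqrt u$. Splitting the resulting integral produces
\[
\int_0^\infty \sqrt{1+pu}\,e^{-u}\,du \le \int_0^\infty e^{-u}\,du + \sqrt p\int_0^\infty u^{1/2}e^{-u}\,du = 1 + \sqrt p\,\Gamma(3/2),
\]
where $\Gamma(3/2)=\sqrt\pi/2$. Finally, since $\sqrt\pi/2 \le \sqrt{2\pi}$ and $e^{1/p}\ge 1$ for every $p\in\mathbb{Z}^+$, one obtains $1+\sqrt p\,\Gamma(3/2) \le 1+\sqrt{2\pi p}\,e^{1/p}$, which is the claimed inequality.

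I do not expect any real obstacle here, as the computation is elementary; the only points that need a little care are tracking the reversal of the integration limits in the substitution $\epsilon=e^{-u}$ and identifying the second integral as $\Gamma(3/2)$. If one prefers the bound in exactly the stated form with the factor $e^{1/p}$ arising organically rather than inserted via the trivial estimate $e^{1/p}\ge 1$, an alternative is to follow $\epsilon=e^{-u}$ by the shift $v=u+1/p$: this rewrites the integral as $\sqrt p\,e^{1/p}\int_{1/p}^\infty v^{1/2}e^{-v}\,dv \le \sqrt p\,e^{1/p}\,\Gamma(3/2)$, after which $\Gamma(3/2)\le\sqrt{2\pi}$ finishes the argument, and the leftover $+1$ in the statement is simply harmless slack.
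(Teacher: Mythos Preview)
Your proof is correct, but it follows a different route from the paper's.

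The paper substitutes $t=\sqrt{1+\log(\epsilon^{-p})}$, which turns the integral into $\frac{2}{p}e^{1/p}\int_1^\infty t^2 e^{-t^2/p}\,dt$; it then integrates by parts to obtain $1+e^{1/p}\int_1^\infty e^{-t^2/p}\,dt$ and bounds the remaining Gaussian integral by $\sqrt{2\pi p}$. Your substitution $\epsilon=e^{-u}$ together with the subadditivity estimate $\sqrt{1+pu}\le 1+\sqrt{p}\,\sqrt{u}$ sidesteps the integration by parts entirely and lands directly on $1+\sqrt{p}\,\Gamma(3/2)=1+\tfrac{\sqrt{\pi p}}{2}$. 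This is more elementary and in fact sharper than the paper's intermediate bound, since your constant has no $e^{1/p}$ factor; you only reinsert $e^{1/p}$ at the end via the trivial inequality $e^{1/p}\ge 1$ to match the stated form. Your alternative shift $v=u+1/p$ also works and produces the $e^{1/p}$ factor organically, closer in spirit to how it arises in the paper's argument.
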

\begin{proof}[Proof of Lemma~\ref{bracketing_integral}]
Set $t=\sqrt{1+\log(\epsilon^{-p})},$ we then have
\begin{align*}
&\int_0^1 \sqrt{1+\log (\epsilon^{-p}) }d\epsilon\\
\le & \frac{2}{p}e^{\frac{1}{p}}\int_1^\infty t^2e^{\frac{-t^2}{p}}dt\\
=& -e^{\frac{1}{p}}\int_1^\infty td(e^{-\frac{t^2}{p}})\\
=& -e^{\frac{1}{p}}\Bigl(te^{-\frac{t^2}{p}}\Big|_1^\infty-\int_1^\infty e^{-\frac{t^2}{p}}dt\Bigr)\\
=&1 +e^{\frac{1}{p}}\int_1^\infty e^{-\frac{t^2}{p}}dt\\
\le& 1+\sqrt{2\pi p}e^{1/p}\,.
\end{align*}
\end{proof}

\begin{proof}[Proof of Theorem~\ref{mle_est} (i)]
Define set $\Gamma_i(\delta):=\{\vecg\in\R^{p+1}: \normtwo{\vecg-\vecg^*_i}\le \delta\}.$ 
By the concavity of function $\mathbb{M}_{i,T}$ and definition of~$\est_i,$ when all the directional derivatives on the boundary of  the set $\Gamma_i(\delta)$ is negative, that is,
\[
\sup_{\vecg:\normtwo{\vecg-\vecg^*_i}=\delta} (\vecg-\vecg^*_i)^\top \grad \mathbb{M}_{i,T}(\vecg)<0\,,
\]
it follows that $\est_i\in\Gamma_i(\delta).$
This implies 
\[
\mpr\Bigg(\sup_i\sup_{\vecg:\,\normtwo{\vecg-\vecg^*_i}=\tilde C\sqrt{\frac{\log n}{T}}} (\vecg-\vecg^*_i)^\top \grad \mathbb{M}_{i,T}(\vecg)<  0\Bigg)
\le \mpr\Bigg (\sup_i\normtwo{\est_i-\vecg^*_i}\le \tilde C\sqrt{\frac{\log n}{T}} \Bigg)\,.
\]
Hence it suffices to show that under the stated assumptions it holds that
\begin{equation}\label{eq:prlog1step0}
\mpr\Bigg(\sup_i\sup_{\vecg:\,\normtwo{\vecg-\vecg^*_i}=\tilde C\sqrt{\frac{\log n}{T}}} (\vecg-\vecg^*_i)^\top \grad \mathbb{M}_{i,T}(\vecg)< 0\Bigg)\to 1
\end{equation}
provided that $\tilde C$ is picked sufficiently large. Note that 
\begin{align} \label{eq:prlog1h0}
(\vecg-\vecg^*_i)^\top \grad \mathbb{M}_{i,T}(\vecg) %
&= (\vecg-\vecg^*_i)^\top \Big(  \grad \mathbb{M}_{i,T}(\vecg) - \grad \mathbb{M}_{i}(\vecg) \Big) +  (\vecg-\vecg^*_i)^\top \grad \mathbb{M}_{i}(\vecg)\,.
\end{align}
We now handle the last two terms on the right hand side of the last equality separately. More precisely, we will show that for any $\tilde C > 0$ there exists $\delta > 0$ such that for $\log n/T < \delta$ we have
\begin{align}
\label{eq:prlog1step1}
&\sup_{\vecg:\normtwo{\vecg-\vecg^*_i}=\tilde C\sqrt{\frac{\log n}{T}}}(\vecg-\vecg^*_i)^\top \grad \mathbb{M}_{i}(\vecg)\le -C_1 \frac{\log n}{T}, \quad i=1,\dots,n\,,
\end{align}
where $C_1=\tilde C^2 \kappa_2\inf_i\{\lambda_{\min}( \E\bigl[ \vecz_{it}\vecz_{it}^\top \bigr] )\}$ with $\kappa_1 := \max_i \{\normtwo{\vecg^*_i}\}$ and $\kappa_2 := \frac{e^{ \kappa(1+\kappa_1)}}{(1+e^{ \kappa(1+\kappa_1)})^2}$.

Additionally, we will prove that for $\tilde C$ sufficiently large (where ``sufficiently large" does not depend on $n,T$), it holds that
\begin{equation} \label{eq:prlog1step2}
\mpr\Bigg( \sup_i\sup_{\vecg:\normtwo{\vecg-\vecg^*_i}=\tilde C\sqrt{\frac{\log n}{T}}} (\vecg-\vecg^*_i)^\top \Big(  \grad \mathbb{M}_{i,T}(\vecg) - \grad \mathbb{M}_{i}(\vecg) \Big)  > \frac{C_1}{2}\frac{\log n}{T}   \Bigg)\to 0\,.
\end{equation}
Combining the above statements with the decomposition in~\eqref{eq:prlog1h0} yields~\eqref{eq:prlog1step0}.

\bigskip

\noindent\textbf{Proof of display~\eqref{eq:prlog1step1}}.
In what follows assume that the vector~$\vecg\in\R^{p+1}$ satisfies $\normtwo{\vecg-\vecg^*_i}=\tilde C\sqrt{\frac{\log n}{T}}.$
Using Taylor expansion, we have
\[
\grad \mathbb{M}_{i}(\vecg) =  \grad \mathbb{M}_{i}(\vecg^*_i) + \grad^2 \mathbb{M}_{i}(\tilde\vecg) (\vecg -\vecg^*_i) \,,
\]
where $\tilde\vecg\in\R^{p+1}$ is on the line connecting $\vecg$ and $\vecg^*_i.$ Note that $\grad \mathbb{M}_{i}(\vecg^*_i)=\mathbf{0}.$
Multiplying both sides of the equation by $(\vecg-\vecg^*_i)$ gives
\[
(\vecg-\vecg^*_i)^\top  \grad \mathbb{M}_{i}(\vecg) =   (\vecg-\vecg^*_i)^\top  \grad^2 \mathbb{M}_{i}(\tilde\vecg) (\vecg -\vecg^*_i) \,.
\]
Note that
$\grad^2\mathbb{M}_{i}(\tilde\vecg)=-\E\bigl[ g^{''}(\vecz_{it}^\top\tilde\vecg)\vecz_{it}\vecz_{it}^\top \bigr].$
It then follows that 
\begin{align*}
(\vecg-\vecg^*_i)^\top  \grad \mathbb{M}_{i}(\vecg) %
=& - (\vecg-\vecg_i^*)^\top  \E\bigl[ g^{''}(\vecz_{it}^\top\tilde\vecg)\vecz_{it}\vecz_{it}^\top \bigr] (\vecg-\vecg_i^*)\,.
\end{align*}
This implies
\begin{equation}
\label{logistic1}
(\vecg-\vecg^*_i)^\top  \grad \mathbb{M}_{i}(\vecg)\le  -\normtwo{\vecg-\vecg_i^*}^2 \lambda_{\min}\Bigl( \E\bigl[ g^{''}(\vecz_{it}^\top\tilde\vecg)\vecz_{it}\vecz_{it}^\top \bigr] \Bigr)\,.
\end{equation}
For any $\tilde C$ we have $\tilde C\sqrt{\frac{\log n}{T}}<1$ provided that $\log n/T$ is small enough. 
Note that $\tilde\vecg$ is in between $\vecg$ and $\vecg^*_i$ and $\normtwo{\vecg-\vecg^*_i}=\tilde C\sqrt{\frac{\log n}{T}}.$
We then find $\normtwo{\tilde \vecg}\le 1+\kappa_1\,.$
Combining this with Assumption~\ref{asm:logistic_bounded} yields
\[
\normtwo{\vecz_{it}^\top\tilde\vecg}\le \sup_{i,t}\normtwo{\vecz_{it}}\normtwo{\tilde\vecg}\le \kappa(1+\kappa_1)\,.
\]
Note that the function
$z\mapsto g^{''}(z)=\frac{e^{z}}{(1+e^{z})^2}$ is positive and decreasing on $\R_+$.
It then follows that 
\begin{equation*}
g^{''}(\vecz_{it}^\top\tilde\vecg) \ge \frac{e^{ \kappa(1+\kappa_1)}}{(1+e^{ \kappa(1+\kappa_1)})^2}\,.
\end{equation*}
Define $\kappa_2:=\frac{e^{ \kappa(1+\kappa_1)}}{(1+e^{ \kappa(1+\kappa_1)})^2}.$
Plugging the last display into the inequality~\eqref{logistic1} and using $\normtwo{\vecg-\vecg_i^*}=\tilde C\sqrt{\frac{\log n}{T}}$ yields
\[
(\vecg-\vecg^*_i)^\top  \grad \mathbb{M}_{i}(\vecg)\le  -\kappa_2\tilde C^2 \frac{\log n}{T} \lambda_{\min}\Bigl( \E\bigl[ \vecz_{it}\vecz_{it}^\top \bigr] \Bigr)\,.
\]
By Assumption~\ref{asm:logistic_eigenvalue}  that $\inf_{i}\{\lambda_{\min}\bigl( \E\bigl[ \vecz_{it}\vecz_{it}^\top \bigr]\bigr)\}$ is bounded from zero, we obtain~\eqref{eq:prlog1step1} by setting 
$C_1:=\kappa_2\tilde C^2\inf_i\{\lambda_{\min}\bigl( \E\bigl[ \vecz_{it}\vecz_{it}^\top \bigr] \bigr)\}.$

\bigskip

\noindent\textbf{Proof of display~\eqref{eq:prlog1step2}} We will show  
\[
\mpr\Bigg( \sup_i\sup_{\vecg:\normtwo{\vecg-\vecg^*_i}=\tilde C\sqrt{\frac{\log n}{T}}} (\vecg-\vecg^*_i)^\top \Big(  \grad \mathbb{M}_{i,T}(\vecg) - \grad \mathbb{M}_{i}(\vecg) \Big)  > \frac{C_1}{2}\frac{\log n}{T}   \Bigg)\to 0\,.
\]
By Cauchy-Schwaz inequality, it holds that 
\[
(\vecg-\vecg^*_i)^\top \Big(  \grad \mathbb{M}_{i,T}(\vecg) - \grad \mathbb{M}_{i}(\vecg) \Big)\le \normtwo{\vecg-\vecg^*_i}\normtwo{ \grad \mathbb{M}_{i,T}(\vecg) - \grad \mathbb{M}_{i}(\vecg)}\,,
\]
which implies
\begin{align*}
&\mpr\Bigg( \sup_i\sup_{\vecg:\normtwo{\vecg-\vecg^*_i}=\tilde C\sqrt{\frac{\log n}{T}}} (\vecg-\vecg^*_i)^\top \Big(  \grad \mathbb{M}_{i,T}(\vecg) - \grad \mathbb{M}_{i}(\vecg) \Big)  >  \frac{C_1}{2}\frac{\log n}{T}   \Bigg)\\
&\le \mpr\Bigg(  \sup_i\sup_{\vecg:\normtwo{\vecg-\vecg^*_i}=\tilde C\sqrt{\frac{\log n}{T}}} \normtwo{\vecg-\vecg^*_i}\normtwo{ \grad \mathbb{M}_{i,T}(\vecg) - \grad \mathbb{M}_{i}(\vecg)} >  \frac{C_1}{2}\frac{\log n}{T}   \Bigg)
\\
&= \mpr\Bigg(  \sup_i\sup_{\vecg:\normtwo{\vecg-\vecg^*_i}=\tilde C\sqrt{\frac{\log n}{T}}} \normtwo{ \grad \mathbb{M}_{i,T}(\vecg) - \grad \mathbb{M}_{i}(\vecg)} >  \frac{C_1}{2 \tilde C} \sqrt{\frac{\log n}{T}}   \Bigg)
\\
&\le \mpr\Bigg(  \sup_i\sup_{\vecg:\normtwo{\vecg-\vecg^*_i}\leq 1} \normtwo{\vecg-\vecg^*_i}\normtwo{ \grad \mathbb{M}_{i,T}(\vecg) - \grad \mathbb{M}_{i}(\vecg)} >  C_2 \sqrt{\frac{\log n}{T}}   \Bigg)\,,
\end{align*}
where $C_2 := \frac{C_1}{2\tilde C} = \tilde C \kappa_2\inf_i\{\lambda_{\min}( \E\bigl[ \vecz_{it}\vecz_{it}^\top \bigr] )\}/2$. We will show that the last line in the display above converges to zero provided that $\tilde C$ is large enough. Define the vector 
\[
M_{it}(\vecg):=Y_{it}\vecz_{it}-\frac{\vecz_{it} e^{\vecz_{it}^\top \vecg}}{1+e^{\vecz_{it}^\top\vecg}}-\E\Bigg[Y_{it}\vecz_{it}-\frac{\vecz_{it} e^{\vecz_{it}^\top \vecg}}{1+e^{\vecz_{it}^\top\vecg}}\Bigg]\,.
\]
Denote the $j$-th entry of the vector~$M_{it}(\vecg)$ by $M_{it,j}(\vecg).$
We now show that 
\[
\mpr\Bigg( \sup_i\sup_{\vecg:\normtwo{\vecg-\vecg^*_i}\le 1} \frac{1}{T}\sum_{t=1}^T\Big|M_{it,j}(\vecg)\Big|  >C_2\sqrt{\frac{\log n}{T}}  \Bigg)\to 0\,,
\]
Define the function $h^{j}_{\vecg}(\vecz,y)$ via
\begin{align*}
h^{j}_{\vecg}: \R^{p+1} \times\{0,1\}& \to \R \\
(\vecz,y)&\mapsto h^{j}_{\vecg}(\vecz,y):= z_j \Big(y - \frac{e^{\vecz^\top \vecg }}{1+e^{\vecz^\top \vecg} }\Big)\1\{\normtwo{\vecz} \leq \kappa\}\,,
\end{align*}
where $z_j$ denotes the $j$-th element of the vector~$\vecz$.
Consider the function class~$\mathcal{H}_{i,j}(\delta):=\bigl\{ h^{j}_{\vecg}(\vecz,y): \normtwo{\vecg-\vecg_i^*}\le \delta\bigr\}.$ 
Set $\mathcal{H}_{i,j}:=\mathcal{H}_{i,j}(1)$.
It follows that 
\begin{equation*}
\mpr\Bigg( \sup_{\vecg:\,\normtwo{\vecg-\vecg^*}\le 1} \Bigl |\frac{1}{T}\sum_{t=1}^T M_{it,j}(\vecg)\Bigr |>C_2\sqrt{\frac{\log n}{T}}\Bigg)
= \mpr\Bigl(\norm{ \mathbb{G}_{T,i}}_{\mathcal{H}_{i,j}}>C_2\sqrt{\log n}\Bigr) \,.
\end{equation*}
Now, we study the probability
\[
\mpr\Bigl(\norm{ \mathbb{G}_{T,i}}_{\mathcal{H}_{i,j}}>C_2\sqrt{\log n}\Bigr) \,.
\]
Note that for any $\vecg\in\R^{p+1}$ satisfying $\normtwo{\vecg-\vecg_i^*}<1,$ it holds that 
\begin{equation} \label{eq:fbound}
\Biggl|  yz_j- \frac{e^{\vecz^\top \vecg }z_j}{1+e^{\vecz^\top \vecg} } \Biggr| = |z_j| \Biggl|  y- \frac{e^{\vecz^\top \vecg }}{1+e^{\vecz^\top \vecg} } \Biggr|
\le  \|\vecz\|_2\,,
\end{equation}
and thus an envelope for the class $\mathcal{H}_{i,j}$ is given by $\kappa.$
Moreover, for any functions $h_{\vecg}^j(\vecz,y), h_{\tilde\vecg}^j(\vecz,y)\in \mathcal{H}_{i,j},$ it holds that
\begin{align}
&\Bigg|   yz_j- \frac{e^{\vecz^\top \vecg }z_j}{1+e^{\vecz^\top \vecg} } - yz_j + \frac{e^{\vecz^\top \tilde\vecg }z_j}{1+e^{\vecz^\top \tilde\vecg} }\Bigg|\\
\le&\normtwo{\vecz}\Bigg|\frac{e^{\vecz^\top \vecg }z_j}{1+e^{\vecz^\top \vecg} }-\frac{e^{\vecz^\top\tilde \vecg }z_j}{1+e^{\vecz^\top\tilde \vecg} } \Bigg|\\
\le & \normtwo{\vecz}^2\normtwo{\vecg-\tilde\vecg}\,, \label{eq:diffbound}
\end{align}
where the last inequality follows from the mean value theorem and the bound $e^z/(1+e^z)^2 \leq 1$.
Thus, the $\epsilon$-bracketing number of the function class~$\mathcal{H}_{i,j}$ satisfies
\begin{equation}\label{eq:brk}
\sup_{i,j} N_{[\,]}(\epsilon, \mathcal{H}_{i,j} , \|\cdot\|_{2}) \leq C_4\epsilon^{-p-1}
\end{equation}
for a constant $C_4$ independent of $n$. By Theorem 2.14.2 in \cite{van1996} and Assumption~\ref{asm:logistic_bounded}, we have
\begin{align*}
\E \Bigg[\sup_{\vecg: \,\normtwo{\vecg-\vecg_i^*}<1} \Biggl(\sqrt{T}   \Bigl |\frac{1}{T}\sum_{t=1}^T M_{it,j}(\vecg)\Bigr | \Biggr) \Bigg]\lesssim 2\kappa J_{[\,]}(1,  \mathcal{H}_{i,j} )\,,
\end{align*}
where
\begin{equation*}
J_{[\,]}(1, \mathcal{H}_{i,j} ):=\int_0^1 \sqrt{1+\log N_{[\,]}(\epsilon, \mathcal{H}_{i,j} , \normtwo{\cdot}) }d\epsilon \leq \int_0^1 \sqrt{1+\log(C_4 \epsilon^{-p-1})} d\epsilon < \infty
\end{equation*}
by Lemma~\ref{bracketing_integral}. This implies
\begin{equation}\label{eq:Emp_expectation}
\mu := \sup_{i,j}\E \Bigg[\sup_{\vecg: \,\normtwo{\vecg-\vecg_i^*}<1} \Bigl(\sqrt{T}   \Bigl |\frac{1}{T}\sum_{t=1}^T M_{it,j}(\vecg)\Bigr | \Bigr)\Bigg] \leq C_5\kappa
\end{equation}
for a constant $C_5$ independent of $n$. For a function class $\mathcal{H}$ define $\mu_i(\mathcal{H}):=\E \bigl[\,\norm{ \mathbb{G}_{T,i}}_{\mathcal{H}} \bigr],$ and  $\sigma^2_i(\mathcal{H}):=\norm{\mpr_i[(h-\mpr_i h)^2]}_{\mathcal{H}}.$ By~\eqref{eq:Emp_expectation} we have $\mu_i\big(\mathcal{H}_{i,j}  \big) \leq \mu$.
Since the envelope for the class $\mathcal{H}_{i,j}$ is $\kappa,$ it holds by Assumption~\ref{asm:logistic_bounded} that
\begin{align*}
\sigma^2:= \sup_{i,j} \sigma^2_i\bigl(\mathcal{H}_{i,j} \bigr) \leq \kappa^2\,.
\end{align*}
Define $\tilde C_2 := C_2C^*$ where $C^*$ denotes the universal constant $C$ from Theorem 2.14.25 in \cite{van1996}. 
Set $t = C_2\sqrt{\log n}-\mu$ to obtain for $\log n > \mu^2/\tilde C_2^2,$ it holds that
\begin{align*}
\sup_{i,j}\mpr\Bigl(\norm{ \mathbb{G}_{T,i}}_{\mathcal{H}_{i,j}} > \tilde C_2\sqrt{\log n}\Bigr) 
\le & \sup_{i,j} \mpr \Bigl(\norm{ \mathbb{G}_{T,i}}_{ \mathcal{H}_{i,j} }> C^*\{\mu_i\bigl( \mathcal{H}_{i,j} \bigr) +t\} \Bigr)\,.
\end{align*}
Invoking Theorem 2.14.25 in \cite{van1996} yields
\begin{align*}
&\sup_{i,j} \mpr \Bigl(\norm{ \mathbb{G}_{T,i}}_{ \mathcal{H}_{i,j}  }> C^*\{\mu_i\bigl( \mathcal{H}_{i,j} \bigr) +t\} \Bigr)\\
\le & \exp \Biggl(-D\Bigl( \frac{(C_2\sqrt{\log n}-\mu)^2}{\sigma^2} \bigwedge \frac{(C_2\sqrt{\log n}-\mu)\sqrt{T}}{\kappa} \Bigr)\Biggr)\,,
\end{align*}
where $D$ is a universal constant independent of $n,T,C_2$. Collecting pieces gives
\begin{align*} 
&\mpr\Bigg( \sup_i  \sup_{\vecg:\,\normtwo{\vecg-\vecg^*}\le 1} \Bigl |\frac{1}{T}\sum_{t=1}^T M_{it,j}(\vecg)\Bigr |>C_2\sqrt{\frac{\log n}{T}} \Bigg)
\\
\le & \sum_{i=1}^n \mpr\Bigg( \sup_{\vecg:\,\normtwo{\vecg-\vecg^*}\le1}  \bigl |\frac{1}{T}\sum_{t=1}^T M_{it,j}(\vecg) \bigr|>C_2\sqrt{\frac{\log n}{T}}\Bigg)
\\
\le &  \exp \Biggl(\log n -D\Bigl( \frac{(C_2\sqrt{\log n}-\mu)^2}{\sigma^2} \bigwedge \frac{(C_2\sqrt{\log n}-\mu)\sqrt{T}}{\kappa} \Bigr) \Biggr)\,.
\end{align*}
By assumption that $\log n/T \to 0$ and $n, T\to\infty $, we can pick a $\tilde C$ sufficiently large such that $C_2$ is large enough to obtain 
\begin{equation*}
\mpr\Bigg( \sup_i \sup_{\vecg:\,\normtwo{\vecg-\vecg^*}\le1}\Bigl |\frac{1}{T}\sum_{t=1}^T M_{it,j}(\vecg) \Bigr |>C_2\sqrt{\frac{\log n}{T}}\Bigg)\to 0 \,.
\end{equation*}
This completes the proof. \end{proof}

\begin{proof}[Proof of Theorem~\ref{mle_est} (ii)]
Define the functions $h_i: \R^{p+1} \to \R^{(p+1) \times (p+1)}$ through
\[
h_i(\vecg) := \E\Bigg[ \frac{e^{\vecz_{it}^\top\vecg}}{(1+e^{\vecz_{it}^\top\vecg})^2}\vecz_{it}\vecz_{it}^\top \Bigg].
\]
Note that 
\begin{equation*}
\sup_i\normspec{\widetilde{\Sigma}_i^{-1}-h_i(\hat \vecg_i)} \le \kappa\sup_i\lambda_{\max}(\E[\vecz_{it}\vecz_{it}^\top])\sup_{i}\normtwo{\hat\vecg_i-\vecg_i^*}\,.
\end{equation*}
By Theorem~\ref{mle_est} and Assumption~\ref{asm:logistic_eigenvalue}, we obtain
\begin{equation*}
\sup_i\normspec{\widetilde{\Sigma}_i^{-1}-h_i(\hat \vecg_i)} = \bigo_\mpr \Bigg(\sqrt{\frac{\log n}{T}} \Bigg)\,.
\end{equation*}
Since
\begin{equation*}
\sup_i\normspec{\hat{\widetilde{\Sigma}}_i^{-1}- \widetilde{\Sigma}_i^{-1}}\le \sup_i\normspec{\hat{\widetilde{\Sigma}}_i^{-1}- h_i(\hat \vecg_i)}+\sup_i\normspec{h_i(\hat \vecg_i)-\widetilde{\Sigma}_i^{-1}}\,,
\end{equation*}
it remains to show that 
\begin{equation}\label{eq:loghelp1}
\sup_i  \normspec{\hat{\widetilde{\Sigma}}_i^{-1}- h_i(\hat \vecg_i) }=\bigo_\mpr \Bigg(\sqrt{\frac{\log n }{T}} \Bigg).
\end{equation}
For ease of notation, we define the matrix~$N_{it}(\vecg)\in\R^{(p+1)\times(p+1)}$ via
\begin{equation*}
N_{it}(\vecg):= \frac{e^{\vecz_{it}^\top\vecg}}{(1+e^{\vecz_{it}^\top\vecg})^2}\vecz_{it}\vecz_{it}^\top - \E\Bigl[ \frac{e^{\vecz_{it}^\top\vecg}}{(1+e^{\vecz_{it}^\top\vecg})^2} \vecz_{it}\vecz_{it}^\top\Bigr]\,.
\end{equation*}
Define the $(j,\ell)$-th entry of matrix $N_{it}(\vecg)$ by $N_{it,j,\ell}(\vecg).$ 
Given $\delta>0,$ it holds that 
\begin{align*}
&\mpr\Bigg( \sup_i \Big|\frac{1}{T}\sum_{t=1}^T N_{it,j,\ell}(\hat\vecg_i)\Big|>C\sqrt{\frac{\log n}{T}}\Bigg)\\
\le&\mpr\Bigg( \sup_i \sup_{\vecg:\,\normtwo{\vecg-\vecg_i^*}\le\delta} \Big|\frac{1}{T}\sum_{t=1}^T N_{it,j,\ell}(\vecg)\Big|>C\sqrt{\frac{\log n}{T}}\Bigg) + \mpr\Bigl( \sup_i \normtwo{\hat\vecg_i-\vecg_i^*}> \delta\Bigr)\,.
\end{align*}
By  equation~\eqref{eq:log-mle1} of Theorem~\ref{mle_est}, it holds that
\begin{align*}
\mpr\Bigl( \sup_i \normtwo{\hat\vecg_i-\vecg_i^*}> \delta\Bigr)\to 0\,.
\end{align*}
It remains to bound the probability 
\begin{equation*}
\mpr\Bigg( \sup_i \sup_{\vecg:\,\normtwo{\vecg-\vecg_i^*}\le \delta} \Big|\frac{1}{T}\sum_{t=1}^T N_{it,j,\ell}(\vecg)\Big|>C\sqrt{\frac{\log n}{T}}\Bigg)\,.
\end{equation*}
Define the function $h^{j,\ell}_{\vecg}(\vecz)$ via
\begin{align*}
h^{j,\ell}_{\vecg}: \R^{p+1} & \to \R \\
\vecz&\mapsto h^{j,\ell}_{\vecg}(\vecz):= \frac{e^{\vecz^\top \vecg }}{(1+e^{\vecz^\top \vecg} )^2}z_jz_l \1\{\normtwo{\vecz} \leq \kappa\}\,,
\end{align*}
where $z_j$ denotes the $j$-th element of the vector~$\vecz.$
Consider the function class~$\mathcal{H}_i^{j,\ell}(\delta):=\bigl\{ h^{j,\ell}_{\vecg}(\vecz): \normtwo{\vecg-\vecg_i^*}\le \delta\bigr\}.$ It follows that 
\begin{equation*}
\mpr\Bigg( \sup_{\vecg:\,\normtwo{\vecg-\vecg^*}\le \delta} \Bigl |\frac{1}{T}\sum_{t=1}^T N_{it,j,\ell}(\vecg)\Bigr |>C\sqrt{\frac{\log n}{T}}\Bigg)
= \mpr\Bigl(\norm{ \mathbb{G}_{T,i}}_{\mathcal{H}_i^{j,\ell}(\delta)}>C\sqrt{\log n}\Bigr) \,.
\end{equation*}
Moreover, by Assumption~\ref{asm:logistic_bounded}, it holds for any function~$ h^{j,\ell}_{\vecg}(\vecz)\in\mathcal{H}_i^{j,\ell}(\delta)$ that $|h^{j,\ell}_{\vecg}(\vecz)| \leq \kappa^2/4$.
Employing the similar entropy method as in the proof of equation~\eqref{eq:log-mle1} of Theorem~\ref{mle_est}, we obtain 
\begin{equation*}
\E \sup_{\vecg:\,\normtwo{\vecg-\vecg^*}\le \delta} \Biggl (\sqrt{T} \Bigl |\frac{1}{T}\sum_{t=1}^T N_{it,j,\ell}(\vecg)\Bigr | \Biggr) \leq C_6\kappa^2
\end{equation*}
for a constant $C_6$ independent of $n,T$. 
Defining $\mu_i(\mathcal{H}):=\E \bigl[\,\norm{ \mathbb{G}_{T,i}}_{\mathcal{H}} \bigr],$ and  $\sigma^2_i(\mathcal{H}):=\norm{\mpr_i[(h-\mpr_i h)^2]}_{\mathcal{H}}$ we have
\begin{align*}
\mu &:= \sup_i \mu_i\Big(\mathcal{H}_i^{j,\ell}(\delta)\Big)\leq C_6\kappa^2\,,
\\
\sigma^2 &:= \sup_i\sigma_i^2\Big(\mathcal{H}_i^{j,\ell}(\delta)\Big)\le  \kappa^4\,.
\end{align*}
Denote the universal constant $C$ from Theorem 2.14.25 in \cite{van1996} by $C^*$ and set $t = \tilde C\sqrt{\log n}-\mu$ with $\tilde C := C/C^*$ for $\log n > \mu_1^2/\tilde C^2$ to obtain
\begin{align*}
\mpr\Bigl(\norm{ \mathbb{G}_{T,i}}_{\mathcal{H}_i^{j,\ell}(\delta)}>C\sqrt{\log n}\Bigr)
\le & \mpr \Bigl(\norm{ \mathbb{G}_{T,i}}_{\mathcal{H}_i^{j,\ell}(\delta)}>C^*\{\mu_i\bigl(\mathcal{H}_i^{j,\ell}(\delta)\bigr) +t\} \Bigr)\,.
\end{align*}
Invoking Theorem 2.14.25 in \cite{van1996} yields
\begin{align*}
&\mpr \Bigl(\norm{ \mathbb{G}_{T,i}}_{\mathcal{H}_i^{j,\ell}(\delta)}>C^*\{\mu_i\bigl(\mathcal{H}_i^{j,\ell}(\delta)\bigr) +t\} \Bigr)\\
\le & \exp \Biggl(-D\Bigl( \frac{(\tilde C\sqrt{\log n}-\mu)^2}{\sigma^2} \bigwedge \frac{(\tilde C\sqrt{\log n}-\mu)\sqrt{T}}{\kappa^2} \Bigr)\Biggr)\,,
\end{align*}
where $D$ is an universal constant independent of $n,T,C$. Collecting pieces gives
\begin{align*} 
&\mpr\Bigg( \sup_i \sup_{\vecg:\,\normtwo{\vecg-\vecg^*}\le\delta} \Bigl |\frac{1}{T}\sum_{t=1}^T N_{it,j,\ell}(\vecg) \Bigr|>C\sqrt{\frac{\log n}{T}}\Bigg)
\\
\le & \sum_{i=1}^n \mpr\Bigg( \sup_{\vecg:\,\normtwo{\vecg-\vecg^*}\le\delta}  \Bigl |\frac{1}{T}\sum_{t=1}^T N_{it,j,\ell}(\vecg) \Bigr|>C\sqrt{\frac{\log n}{T}}\Bigg)
\\
\le &  \exp \Biggl(\log n -D\Bigl( \frac{(\tilde C\sqrt{\log n}-\mu)^2}{\sigma^2} \bigwedge \frac{(\tilde C\sqrt{\log n}-\mu)\sqrt{T}}{\kappa^2} \Bigr) \Biggr)\,.
\end{align*}
By assumption $\log n/T \to 0$, and hence we can pick $C$ sufficiently large to obtain as $\min(n,T) \to \infty$
\begin{equation*}
\mpr\Bigg( \sup_i \sup_{\vecg:\,\normtwo{\vecg-\vecg^*}\le\delta}\Bigl |\frac{1}{T}\sum_{t=1}^T N_{it,j,\ell}(\vecg) \Bigr |>C\sqrt{\frac{\log n}{T}}\Bigg)\to 0 \,.
\end{equation*}
This establishes~\eqref{eq:loghelp1}. Note that the eigenvalues of $\E\Bigl[ \frac{e^{\vecz_{it}^\top\vecg_i^*}}{(1+e^{\vecz_{it}^\top\vecg_i^*})^2}\vecz_{it}\vecz_{it}^\top  \Bigr]$ are bounded uniformly away from zero and from above -- indeed, boundedness from above follows since $\normtwo{\vecz_{it}} \leq \kappa$ by assumption, for boundedness from below recall that $z \mapsto e^z/(1+e^z)^2$ is decreasing and non-negative on $\R_+$ so that for $\kappa_1 := \max_i \{\|\vecg_i^*\|_2\} < \infty$, it holds that
\[
\frac{e^{\vecz_{it}^\top\vecg_i^*}}{(1+e^{\vecz_{it}^\top\vecg_i^*})^2} \geq \frac{e^{\kappa\kappa_1}}{(1+e^{\kappa\kappa_1})^2} > 0\,.
\] 
A Taylor expansion of the map $A \mapsto A^{-1}$ together completes the proof. \end{proof}

\newpage

\subsubsection{Proofs for logistic regression under dependence:  Theorem~\ref{mle_est1}}

\begin{proof}[Proof of Theorem~\ref{mle_est1} (i)]
The proof of Theorem~\ref{mle_est1} (i) is similar to the proof of Theorem~\ref{mle_est} (i), the only difference is that we employ Proposition C.2 in~\cite{kato2012asymptotics} instead of Talagrand’s inequality for i.i.d. random variables used in the previous proof.
We use the same notation as in the proof of Theorem~\ref{mle_est} (i).
To establish the desired result, we need to to derive the bounds~\eqref{eq:prlog1step1} and~\eqref{eq:prlog1step2}.
Note that the proof of display~\eqref{eq:prlog1step1} remains unchanged under the dependent setting, so we omit the proof for the sake of brevity.
We aim to show the bound~\eqref{eq:prlog1step2}, i.e. 
\begin{equation} \label{eq:prlog1step2_mix}
\sup_i\sup_{\vecg:\normtwo{\vecg-\vecg^*_i}=\tilde C\sqrt{\frac{\log n}{T}}} (\vecg-\vecg^*_i)^\top \Big(  \grad \mathbb{M}_{i,T}(\vecg) - \grad \mathbb{M}_{i}(\vecg) \Big) =\bigo_{\mpr}\Big( \frac{\log n}{T}  \Big) \,.
\end{equation}
By the proof of Theorem~\ref{mle_est} (i), it suffices to show that
\[
\mpr\Bigg( \sup_i\sup_{\vecg:\normtwo{\vecg-\vecg^*_i}\le 1} \frac{1}{T}\sum_{t=1}^T\Big|M_{it,j}(\vecg)\Big|  >C_2\sqrt{\frac{\log n}{T}}  \Bigg)\to 0\,,
\] 
where $C_2$ is the constant defined in the proof of Theorem~\ref{mle_est} (i), 
\[
M_{it}(\vecg)=Y_{it}\vecz_{it}-\frac{\vecz_{it} e^{\vecz_{it}^\top \vecg}}{1+e^{\vecz_{it}^\top\vecg}}-\E\Bigg[Y_{it}\vecz_{it}-\frac{\vecz_{it} e^{\vecz_{it}^\top \vecg}}{1+e^{\vecz_{it}^\top\vecg}}\Bigg]\,.
\]
and $M_{it,j}(\vecg)$ denotes the $j$-th entry of the vector~$M_{it}(\vecg).$ 
Define the function class
\begin{align*}
\mathcal{H}_{i,j}:=\Bigg\{  (\vecz,y)\mapsto & \Bigg[\Big( y\vecz-\frac{\vecz e^{\vecz^\top \vecg}}{1+e^{\vecz^\top\vecg}}\Big)_{j}-\E\Big[  \Big( y\vecz-\frac{\vecz e^{\vecz^\top \vecg}}{1+e^{\vecz^\top\vecg}}\Big)_{j} \Big] \Bigg] \1\{\normtwo{\vecz}\le\kappa\}: \\
&y\in\{0,1\}, \vecz\in\R^{p+1},\vecg\in\R^{p+1},\normtwo{\vecg-\vecg_i^*}\le 1\Bigg\}\,.
\end{align*}
It then follows that
\begin{equation*}
\mpr\Bigg( \sup_{\vecg:\,\normtwo{\vecg-\vecg^*}\le 1} \Bigl |\frac{1}{T}\sum_{t=1}^T M_{it,j}(\vecg)\Bigr |>C_2\sqrt{\frac{\log n}{T}}\Bigg)
= \mpr\Bigg(\norm{    \mpr_{T,i}-\mpr_i }_{\mathcal{H}_{i,j}}>C_2\sqrt{\frac{\log n}{T}} \Bigg) \,.
\end{equation*}
We will show that 
\[
\norm{    \mpr_{T,i}-\mpr_i }_{\mathcal{H}_{i,j}} =\bigo_{\mpr} \Bigg(\sqrt{\frac{\log n}{T}} \Bigg) \,.
\]
By displays~\eqref{eq:fbound} and Assumption~\ref{asm:logistic_bounded}, it holds for any $i,j$ and $h\in \mathcal{H}_{i,j}$ that
\[
\normsup{h}\le U_1\,,\text{ and }\Var(h)\le U_2\,,
\]
with some universal constants~$U_1,U_2>0.$
Applying Lemma 4 in~\cite{galvao2018} to the function  $h/U_2$ gives
\[
\sup_{i,j}\sup_{h\in \mathcal{H}_{i,j}}\sup_{1\le q\le T} \Var\Big(\frac{1}{q^{1/2}}\sum_{t=1}^q h(\vecz_{it},Y_{it}) \Big)\le U_3\,,
\] 
with some positive universal constant~$U_3<\infty$.
Note that the envelope for function class~$\mathcal{H}_{i,j}$ is $2\kappa$ and the upper bound for the $\epsilon$-bracketing number in~\eqref{eq:brk} holds for any $L_p$-norm and any probability measure $Q.$
Then, we obtain the following bounds of the $\epsilon$-covering number for any probability measure $Q$ and any $0<\epsilon<1$ that
\[
N(\mathcal{H}_{i,j}, L_1(Q),\epsilon)
\le N_{[~]}(\mathcal{H}_{i,j}, L_1(Q),\epsilon/2)
\le (2A/\epsilon)^\nu\,,
\]
with some constants $A,\nu<\infty.$ 
By Proposition C.2 of~\cite{kato2012asymptotics}, it holds for any $q_{n,T}\ge 1$ satisfying $q^2_{n,T}\log (q_{n,T})=o(T)$, any $i$, and any $s_{n,T}>0$ that 
\begin{equation}\label{eq:mix0}
\mpr\Bigg(  \norm{\mpr_{i,T}-\mpr_i}_{\mathcal{H}_{i,j} }  \ge C \Big( \sqrt{\frac{\log (q_{n,T})}{T} }+ \sqrt{\frac{s_{n,T}}{T}} + \frac{s_{n,T}q_{n,T}}{T}\Big) \Bigg)
\le 2e^{-s_{n,T}}+ 2T\beta(q_{n,T})\,,
\end{equation}
where $C>0$ is a constant independent of $T,n,i,j.$
Let $q_{n,T}:=C_1(\log n+\log T)$ with the constant $C_1>1$ satisfying $b_{\beta}^{C_1}\le e^{-2}.$ 
With the assumption that $T$ grows at most polynomially in $n$ and $(\log n)^3 = o(T)$, one can verify that $q^2_{n,T}\log (q_{n,T})=o(T).$
Let $s_{n,T}:=2\log n,$ it then holds for large $n,T$ that
\[
\sqrt{\frac{\log (q_{n,T})}{T} }+ \sqrt{\frac{s_{n,T}}{T}} + \frac{s_{n,T}q_{n,T}}{T} 
\lesssim \sqrt{\frac{\log n}{T}} 
\]
and
\[
2e^{-s_{n,T}}+2T\beta(q_{n,T}) \lesssim \frac{1}{n^2}+\frac{1}{n^2T}\,.
\]
Taking the union bound for~\eqref{eq:mix0} over $i=1,\dots,n$ gives the desired result.

\end{proof}

\begin{proof}[Proof of Theorem~\ref{mle_est1} (ii)]
The proof of $\sup_{i}\normspec{\widehat B_{iT}^{-1}- B_i^{-1}}=o_{\mpr}(1)$ is similar to the proof of Theorem~\ref{mle_est} (ii), which boils down to show that
\begin{equation*}
\mpr\Bigg( \sup_i \sup_{\vecg:\,\normtwo{\vecg-\vecg_i^*}\le \delta} \Big|\frac{1}{T}\sum_{t=1}^T N_{it,j,\ell}(\vecg)\Big|>C\sqrt{\frac{\log n}{T}}\Bigg)\to 0\,,
\end{equation*}
where 
\begin{equation*}
N_{it}(\vecg):= \frac{e^{\vecz_{it}^\top\vecg}}{(1+e^{\vecz_{it}^\top\vecg})^2}\vecz_{it}\vecz_{it}^\top - \E\Bigg[ \frac{e^{\vecz_{it}^\top\vecg}}{(1+e^{\vecz_{it}^\top\vecg})^2} \vecz_{it}\vecz_{it}^\top\Bigg]\,,
\end{equation*}
and $N_{it,j,\ell}(\vecg)$ denotes the $(j,\ell)$-th entry of the matrix~$N_{it}(\vecg).$
The desired result follows by an application of the union bound and Proposition C.2 of~\cite{kato2012asymptotics} with similar arguments as in the proof of Theorem~\ref{mle_est1} (i) after noting that by Lemma 4 in~\cite{galvao2018}, we have
\[
\sup_{j,\ell}\sup_{i} \Var  \Bigg(\frac{1}{\sqrt{q}}\sum_{t=1}^q  N_{it,j,\ell}(\vecg)  \Bigg)=\bigo(1)\,.
\]
Thus, it remains to show that $\sup_{i}\normspec{\widehat H_{iT}-H_i}=o_{\mpr}(1).$
Similar to the proof of the convergence of $\widehat B_{iT}^{-1}$, one can verify that $\sup_{i}\normspec{  \frac{1}{T}\sum_{t=1}^T  \widehat \vecw_{it}  \widehat\vecw_{it}^\top -  \E\big[\vecw_{i1}\vecw_{i1}^\top \big] }=o_{\mpr}(1).$
We now aim to show that 
\[
\sup_{i}\normspec{  \sum_{1\le j\le m_T} \Big(1-\frac{j}{T}\Big) \Bigg(\frac{1}{T}\sum_{t=1}^{T-j}\big( \widehat \vecw_{it} \widehat \vecw_{i,t+j}^\top + \widehat \vecw_{i,t+j} \widehat \vecw_{it}^\top \big) \Bigg)
-\sum_{j=1}^{\infty}\E \big[\vecw_{i1}\vecw_{i,1+j}^\top+\vecw_{i,1+j}\vecw_{i1}^\top \big] }=o_{\mpr}(1)\,.
\]
To this end, we introduce an intermediate term 
\[
\widetilde A_{iT}:=  \sum_{1\le j\le m_T} \Big(1-\frac{j}{T}\Big) \Bigg(\frac{1}{T}\sum_{t=1}^{T-j}\big(  \vecw_{it}  \vecw_{i,t+j}^\top +  \vecw_{i,t+j}  \vecw_{it}^\top \big) \Bigg)\,,
\]
and we define
\[
\widehat A_{iT}:=\sum_{1\le j\le m_T} \Big(1-\frac{j}{T}\Big) \Bigg(\frac{1}{T}\sum_{t=1}^{T-j}\big( \widehat \vecw_{it} \widehat \vecw_{i,t+j}^\top + \widehat \vecw_{i,t+j} \widehat \vecw_{it}^\top \big) \Bigg) ,
\quad 
A_i:=\sum_{j=1}^{\infty}\E \big[\vecw_{i1}\vecw_{i,1+j}^\top+\vecw_{i,1+j}\vecw_{i1}^\top \big]\,.
\]
So, we aim to show that $\sup_i\normspec{\widehat A_{iT} - A_i}=o_{\mpr}(1).$ Consider the decomposition
\[
\sup_i \normspec{\widehat A_{iT} - A_i}
\le \sup_i  \normspec{\E[\widetilde A_{iT}]-A_i}+
\sup_i  \normspec{\E[\widetilde A_{iT}]-\widehat A_{iT}}\,.
\]
We note that $\sup_i\normspec{\E[\widetilde A_{iT}]-A_i}=o_{\mpr}(1)$ follows by similar arguments as the proof of the last display in the proof of Lemma 12 in~\cite{galvao2018}.

It remains to show that $\sup_i\normspec{\E[\widetilde A_{iT}]-\widehat A_{iT}}=o_{\mpr}(1).$
Invoking the triangle inequality again, we have
\[
\normspec{\E[\widetilde A_{iT}]-\widehat A_{iT}}
\le
\normspec{\widetilde A_{iT}-\E[\widetilde A_{iT}]}
+\normspec{\widetilde A_{iT}-\widehat A_{iT}}\,.
\]
To bound $\sup_i \normspec{\widetilde A_{iT}-\E[\widetilde A_{iT}]}$, observe that
\[
\normspec{\widetilde A_{iT}-\E[\widetilde A_{iT}]} \leq m_T \max_{j=1,\dots,m_T} \normspec{\frac{1}{T}\sum_{t=1}^{T-j} \vecw_{it}\vecw_{i,t+j}^\top+\vecw_{i,t+j}\vecw_{it}^\top - \E \big[\vecw_{i1}\vecw_{i,1+j}^\top+\vecw_{i,1+j}\vecw_{i1}^\top \big]}.
\]
By similar computations as in the proof of of display~(53) in~\cite{galvao2018} one can show that
\[
\sup_{j=1,\dots m_T}\sup_{i=1,\dots,n} \sup_{q \geq 1} \sup_{k,\ell} \Var\Big(\frac{1}{\sqrt{q}}\sum_{t=1}^q (\vecw_{it}\vecw_{i,t+j}^\top+\vecw_{i,t+j}\vecw_{it}^\top)_{k,\ell}\Big) = O(m_T)
\]
By applying Corollary~C.1 in \cite{kato2012asymptotics} with $q = C\log(nm_T), s = C\log(nm_T)$ for a suitable constant $C$ it follows that
\[
\sup_i \normspec{\widetilde A_{iT}-\E[\widetilde A_{iT}]}=\bigo_{\mpr}\Big( m_T\sqrt{\frac{m_T\log (n m_T)}{T}}\Big)
\]
Finally, note that for all $t=1,\dots,T$, by a Taylor expansion and Assumption~\ref{asm:logistic_bounded} 
\[
\|\hat w_{it} - w_{it}\|_2 \leq \kappa^2\|\hat \vecg_i - \vecg_i\|_2. 
\]
Thus by elementary computations
\begin{align*}
\max_i \normspec{\widetilde A_{iT}-\widehat A_{iT}} \lesssim m_T \max_i \|\hat \vecg_i - \vecg_i\|_2 = \bigo_{\mpr}\Big( m_T\sqrt{\frac{\log n}{T}}\Big) = o_{\mpr}(1).
\end{align*}
Combining all bounds obtained so far we have
\[
\sup_i\normspec{\widehat A_{iT} - A_i} = \bigo_{\mpr}\Big( m_T\sqrt{\frac{\log n}{T}} + m_T\sqrt{\frac{m_T\log (nm_T)}{T}} \Big) + o_{\mpr}(1) = o_{\mpr}(1)
\] 
by the assumptions on $m_T$. \end{proof}

\newpage

\subsection{Proofs for quantile regression in the independent case (Theorem~\ref{quantreg_est} and Theorem~\ref{qrint_est})}

\begin{proof}[Proof of Theorem~\ref{quantreg_est}(i)]

\noindent 
Define $\vecg_{n,T,i}:= \est_i-\vecg^*_i.$
The Theorem 5.1 in \cite{chao2017} can be used in our framework by setting $n=T, m=p+1, \xi_m= \kappa, g_n=0,$ and $c_n=0.$
We then find
\begin{equation}\label{re:error}
\vecg_{n,T,i}=-\frac{1}{T}B_i^{-1}\sum_{t=1}^T \psi_{i,\tau}(\vecz_{it},\response_{it})+\vecg_{n,T,i,1}+\vecg_{n,T,i,2}+\vecg_{n,T,i,3}\,,
\end{equation}
where $B_i := \E[f_{\response\mid\vecz}(q_{i,\tau}(\vecz_{i1})\mid \vecz_{i1})\vecz_{i1}\vecz_{i1}^\top]$ and $\psi_{i, \tau} (\vecz, \response) := \vecz (\1(\response \le q_{i,\tau}(\vecz))-\tau).$
Define $\tilde\vecg_{n,T,i}:=\vecg_{n,T,i,1}+\vecg_{n,T,i,2}+\vecg_{n,T,i,3}.$ 
We now prove that 
\begin{align}\label{eq:prqrh1}
\sup_{i }\normtwo{\tilde\vecg_{n,T,i}}=o_\mpr\Bigg(\sqrt{\frac{\log n}{T}}\Bigg)\,.
\end{align}
For this, we show that 
\begin{align*}
\sup_{i}\normtwo{\vecg_{n,T,i,k}}=o_\mpr\Bigg(\sqrt{\frac{\log n}{T}}\Bigg), ~~k=1,2,3\,.
\end{align*}
Now, we handle the three remainder terms~$\vecg_{n,T,i,1}, \vecg_{n,T,i,2}, \vecg_{n,T,i,3}$ separately.
By equation (5.1) in Theorem 5.1 of \cite{chao2017}, we have almost surely
\begin{equation*}
\sup_i \normtwo{\vecg_{n,T,i,1}} \leq C/T
\end{equation*}
for a constant $C$ independent of $n,T,i$. Since $1/T = o(\sqrt{\log n/T})$ it follows that 
\begin{equation}\label{eq:term1}
\sup_{i}\normtwo{\vecg_{n,T,i,1}}=o_\mpr\Bigg(\sqrt{\frac{\log n}{T}}\Bigg)\,.
\end{equation}
By equation (5.2) in Theorem 5.1 of \cite{chao2017} applied with $\kappa_n = 2\log n \ll T$, there exists a constant $C_1$ independent of $n,T$ (and bounded uniformly in $i$ as seen by a close inspection of the corresponding proof in \cite{chao2017}) such that for all sufficiently large $T$
\begin{equation}\label{re:rem2}
\mpr\Bigg(\,\normtwo{\vecg_{n,T,i,2}}>C_1\Big(\sqrt{\frac{\log T}{T}}+\sqrt{\frac{2\log n}{T}}\Big)^2 \Bigg) \leq 2\exp(-\kappa_n) = 2/n^2\,.
\end{equation}
Since 
\[
\Bigg(\sqrt{\frac{\log T}{T}}+\sqrt{\frac{2 \log n}{T}}\Bigg)^2 \leq 2\frac{2\log n + \log T}{T} = o\Bigg(\sqrt{\frac{\log n}{T}} \Bigg)\,,
\]
an application of the union bound shows that 
\[
\sup_{i }\normtwo{\vecg_{n,T,i,2}} = o_\mpr\Bigg( \sqrt{\frac{\log n}{T}}\Bigg)\,.
\]
Next apply (5.2) in Theorem 5.1 from \cite{chao2017} with $\kappa_n = 2\log n \ll T$ to obtain the existence of a constant $C_2$ independent of $T$ (and bounded uniformly in $i$ as seen by a close inspection of the corresponding proof in \cite{chao2017}) such that for all sufficiently large $T$
\begin{equation}\label{re:rem3}
\mpr\Biggl(\,\normtwo{\vecg_{n,T,i,3}}>C_2\Bigl(\sqrt{\frac{\log T}{T}}+\sqrt{\frac{2\log n}{T}}\Bigr)^{3/2}\Biggr)< 2/n^2\,.
\end{equation}
Note that
\begin{align*}
\Bigg(\sqrt{\frac{\log T}{T}}+\sqrt{\frac{2 \log n}{T}}\Bigg)^{3}
\leq 8\frac{ (2\log n)^{3/2} + (\log T)^{3/2}}{T^{3/2}} = o\Bigg(\frac{\log n}{T}\Bigg)
\end{align*}
by the assumption that $\log n = o(T)$. Combining this with the union bound and~\eqref{re:rem3} shows that 
\[
\sup_{i }\normtwo{\vecg_{n,T,i,3}} = o_\mpr\Big( \sqrt{\frac{\log n}{T}}\Big)\,,
\]
and collecting pieces yields~\eqref{eq:prqrh1}.

To complete the proof, define the classes of functions
\[
\mathcal{G}_i := \Big\{(\vecz,y) \mapsto \mathbf{a}^\top\vecz (\1\{y\le \vecz^\top \mathbf{b}\}-\tau) \1\{\|\vecz\|_2\le \kappa \}:\mathbf{b}\in\R^{p+1}, \mathbf{a}\in\R^{p+1},\normtwo{\mathbf{a}}=1 \Big\}
\] 
and note that 
\begin{equation}\label{eq:decomp}
\sup_i \normtwo{\frac{1}{T}B_i^{-1}\sum_{t=1}^T \psi_{i,\tau}(\vecz_{it},\response_{it})} \leq \sup_i \normspec{B_i^{-1}} \sup_i \|\mpr_{T,i} - \mpr_i\|_{\mathcal{G}_i}
\end{equation}
for $\mpr_i$ denoting the measure of and $\mpr_{T,i}$ corresponding to the empirical measure of $\{(\vecz_{it},Y_{it}), t=1,\dots,T\}$. Under the assumptions made we have $\sup_i \normspec{B_i^{-1}} = O(1)$, and Lemma C.3 from~\cite{chao2017} applied with $\kappa_n = 2\log n \ll T$ shows that there exists a constant $C_3$, independent of $n,T$ (and bounded uniformly in $i$ as revealed by a close look at the corresponding proof) such that 
\[
\mpr\Bigg( \norm{\mpr_{T,i} - \mpr_i}_{\mathcal{G}_i} > C_3 \sqrt{\frac{\log n}{T}} \Bigg) \leq n^{-2}\,.
\] 
Applying the union bound shows that
\[
\sup_i \norm{\mpr_{T,i} - \mpr_i}_{\mathcal{G}_i} = \bigo_\mpr\Bigg(\sqrt{\frac{\log n}{T}} \Bigg)\,.
\] 
Combining this with~\eqref{eq:prqrh1} completes the proof.
\end{proof}

\bigskip

Next we proceed to the proof of Theorem~\ref{quantreg_est}(ii). The proof will make use of the following additional notation
\begin{align*}
\psi_{i, \tau} (\vecz, \response) &= \vecz (\1(\response \le q_{i,\tau}(\vecz))-\tau)\\
f_{it} &:= \frac{2d_T}{q_{i,\tau+d_T}(\vecz_{it})-q_{i,\tau-d_T}(\vecz_{it})}\\
\quad e_{it} &:= 1/f_{it}\\
B_{iT} &= \frac{1}{T}\sum_{t=1}^{T} f_{it} \vecz_{it} \vecz_{it}^\top\\
\widetilde\Sigma_{iT}^{-1} &= \E[B_{iT}] H_i^{-1}\E[B_{iT}]\,.
\end{align*}
We begin by stating and proving an intermediate technical result.

\bigskip

\begin{lemma}\label{lemma2} 
Let Assumptions \ref{A1}-\ref{A4} hold and assume $\log n = o(T), \frac{\log n}{Td_T} =o(1)$. Let $\widehat{e}_{it} := \widehat{f}_{it}^{-1}$, then $\sup_{i,t}\left |\widehat{e}_{it} - e_{it}\right | = \bigo_\mpr(b_{n,T})$ with $b_{n,T} =\sqrt{\frac{\log n}{T{d_T^2}}}.$ 
\end{lemma}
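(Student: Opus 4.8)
`\textbf{Proof plan.}` The goal is to control the uniform estimation error of the Hendricks--Koenker density--type estimator $\widehat f_{it} = 2d_T/\big(\vecz_{it}^\top(\est_i(\tau+d_T) - \est_i(\tau-d_T))\big)$, or rather its reciprocal $\widehat e_{it} = \vecz_{it}^\top(\est_i(\tau+d_T) - \est_i(\tau-d_T))/(2d_T)$, around the population quantity $e_{it} = (q_{i,\tau+d_T}(\vecz_{it}) - q_{i,\tau-d_T}(\vecz_{it}))/(2d_T)$. The plan is to write
\[
\widehat e_{it} - e_{it} = \frac{\vecz_{it}^\top\big(\est_i(\tau+d_T) - \vecg_i^*(\tau+d_T)\big)}{2d_T} - \frac{\vecz_{it}^\top\big(\est_i(\tau-d_T) - \vecg_i^*(\tau-d_T)\big)}{2d_T}\,,
\]
so that by Cauchy--Schwarz and $\normtwo{\vecz_{it}} \le \kappa$ (Assumption (A1)),
\[
\sup_{i,t}|\widehat e_{it} - e_{it}| \le \frac{\kappa}{d_T}\sup_i \Big( \normtwo{\est_i(\tau+d_T) - \vecg_i^*(\tau+d_T)} + \normtwo{\est_i(\tau-d_T) - \vecg_i^*(\tau-d_T)} \Big)\,.
\]
Thus the task reduces to a uniform-in-$i$ estimation bound for the quantile regression coefficient at the two perturbed quantile levels $\tau \pm d_T$.

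`\textbf{Main step: uniform rate at perturbed quantile levels.}` The key point is that the rate $\bigo_\mpr(\sqrt{\log n/T})$ established in Theorem~\ref{quantreg_est}(i) was proved for a fixed $\tau$, but an inspection of that argument (which invokes Theorem 5.1 and Lemma C.3 of \cite{chao2017}) shows the bound holds uniformly over quantile levels $\eta$ in the fixed neighborhood $\mathcal{T}$ of $\tau$; the constants there depend only on $\kappa$, $f_{\min}$, $f_{\max}$, $\overline{f'}$, and the eigenvalue bounds $c_\lambda, C_\lambda$, all of which are uniform over $\eta\in\mathcal{T}$ by Assumptions (A1)--(A3). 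Hence for all $T$ large enough that $\tau \pm d_T \in \mathcal{T}$ (which holds since $d_T = o(1)$ by (A4)),
\[
\sup_i \sup_{\eta\in\{\tau-d_T,\tau+d_T\}}\normtwo{\est_i(\eta) - \vecg_i^*(\eta)} = \bigo_\mpr\Big(\sqrt{\tfrac{\log n}{T}}\Big)\,.
\]
Plugging this into the previous display gives
\[
\sup_{i,t}|\widehat e_{it} - e_{it}| = \bigo_\mpr\Big(\tfrac{1}{d_T}\sqrt{\tfrac{\log n}{T}}\Big) = \bigo_\mpr\Big(\sqrt{\tfrac{\log n}{Td_T^2}}\Big)\,.
\]
Here I would note that this is slightly weaker than the claimed $b_{n,T} = \sqrt{\log n/(Td_T)}$; to recover the stated rate one needs a refinement. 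Specifically, rather than bounding each $\est_i(\tau\pm d_T)-\vecg_i^*(\tau\pm d_T)$ separately, I would use the Bahadur representation~\eqref{re:error} from the proof of Theorem~\ref{quantreg_est}(i) to write the leading term of $\widehat e_{it} - e_{it}$ as
\[
\frac{1}{2d_T}\vecz_{it}^\top\Big( \tfrac1T B_i(\tau+d_T)^{-1}\!\sum_t \psi_{i,\tau+d_T} - \tfrac1T B_i(\tau-d_T)^{-1}\!\sum_t \psi_{i,\tau-d_T}\Big) + (\text{remainders})\,,
\]
and observe that the difference of the two empirical-process terms is a difference of $U$-statistic-type averages over the thin strip $\{q_{i,\tau-d_T}(\vecz) < y \le q_{i,\tau+d_T}(\vecz)\}$, whose variance is $O(d_T/T)$ rather than $O(1/T)$. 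A maximal inequality (again along the lines of Lemma C.3 in \cite{chao2017}, applied to the difference class with envelope and $L_2$-size shrinking like $\sqrt{d_T}$) then yields a bound of order $\sqrt{d_T\log n/T}$ for this term, hence $\tfrac1{d_T}\sqrt{d_T\log n/T} = \sqrt{\log n/(Td_T)} = b_{n,T}$ after dividing by $2d_T$. The Bahadur remainders, being of order $(\log n/T)^{3/4}$ or better by \eqref{re:rem2}--\eqref{re:rem3}, are $o_\mpr(b_{n,T})$ under (A4) since $\log n/(Td_T) = o(1)$ forces $d_T \gg \log n /T$.

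`\textbf{Conclusion and anticipated obstacle.}` Combining the leading-term bound with the negligibility of the remainders, and using the union bound over $i=1,\dots,n$ (each tail event has probability $\le n^{-2}$, exactly as in the cited lemmas of \cite{chao2017}), gives $\sup_{i,t}|\widehat e_{it} - e_{it}| = \bigo_\mpr(b_{n,T})$ with $b_{n,T} = \sqrt{\log n/(Td_T)}$, as claimed. The main obstacle is establishing the sharp, uniform-in-$i$ maximal inequality for the \emph{difference} of the two empirical processes indexed by the shrinking strips: one must track carefully that the relevant function class has $L_2(\mpr_i)$-diameter of order $\sqrt{d_T}$ and a correspondingly small envelope, so that Talagrand/chaining-type bounds produce the extra $\sqrt{d_T}$ gain rather than the naive $O(1)$; without this gain one only obtains the suboptimal rate $\sqrt{\log n/(Td_T^2)}$. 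A secondary technical point is checking that all constants appearing in the Bahadur expansion of \cite{chao2017} remain bounded uniformly over the quantile levels $\tau\pm d_T$ as $T\to\infty$, which follows from the uniform-in-$\eta$ versions of Assumptions (A2)--(A3) but should be stated explicitly.
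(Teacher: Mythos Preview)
Your proposal is correct and follows essentially the same route as the paper: write $\widehat e_{it}-e_{it}$ as $(2d_T)^{-1}\vecz_{it}^\top\big[(\est_i(\tau+d_T)-\vecg_i^*(\tau+d_T))-(\est_i(\tau-d_T)-\vecg_i^*(\tau-d_T))\big]$, insert the Bahadur representation~\eqref{re:error} at the two levels, and exploit that the \emph{difference} of the leading terms lives on the thin strip $\{\tau-d_T<U_{it}\le\tau+d_T\}$, whose variance is $O(d_T)$ rather than $O(1)$; this is precisely the mechanism that upgrades the naive rate $\sqrt{\log n/(Td_T^2)}$ to the claimed $\sqrt{\log n/(Td_T)}$.

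The one place where the paper is more economical than your sketch: you propose a maximal inequality over a shrinking function class (\`a la Lemma~C.3 of \cite{chao2017}) to extract the $\sqrt{d_T}$ gain. The paper avoids this entirely. After replacing $B_i(\tau\pm d_T)^{-1}$ by $B_i^{-1}$ (the difference is $O(d_T)$, absorbed into lower-order terms) and using $\normtwo{\vecz_{is}}\le\kappa$ to factor out the sup over $s$, the leading term becomes, for each fixed $i$ and each coordinate $k$, a single centered average $\tfrac{1}{T}\sum_t M_{it,k}$ with $|M_{it,k}|\lesssim d_T^{-1}$ and $\Var(M_{it,k})\lesssim d_T^{-1}$. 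A direct Bernstein inequality plus a union bound over $i$ and $k$ then gives the rate $\sqrt{\log n/(Td_T)}$ without any chaining. Your approach would of course also work, but is heavier machinery than needed here. Your remarks on the Bahadur remainders being $o_\mpr(b_{n,T})$ under (A4), and on the uniformity of constants over $\eta\in\mathcal{T}$, are exactly right and match what the paper uses.
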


\begin{proof}[Proof of Lemma~\ref{lemma2}]
The proof essentially follows from the arguments in the proof of Lemma 9 of \cite{galvao2018}, but modifications are needed to take into account that $n (\log T)^2/T = o(1)$ made in that paper is replaced by $\log n = o(T)$ and that the rate changes accordingly. 
\noindent
By definitions of $\widehat{e}_{is}$ and $e_{is} $, it holds that
\begin{equation*}
\widehat{e}_{is} - e_{is} = \vecz_{is}^\top \Big(\big(\est_{i}(\tau + d_T) - \vecg^*_i(\tau + d_T) \big)- \big(\est_{i}(\tau - d_T) - \vecg^*_i(\tau - d_T) \big)\Big)/2d_T\,.
\end{equation*}
We know from the display~\eqref{re:error} and Theorem~\ref{quantreg_est} that 
\begin{multline*}
\vecz_{is}^\top (\est_{i}(\tau \pm d_T) - \vecg^*_i(\tau \pm d_T))
\\ = -\frac{1}{T} \vecz_{is}^\top B_i^{-1}\sum_{t=1}^{T}  \vecz_{it} \Big(\1 \{\response_{it} \leq q_{i, \tau \pm d_T}(\vecz_{it})\} - (\tau \pm d_T)\Big) + \bigo_\mpr\Bigg(\sqrt{\frac{\log n}{T}}\Bigg)\,.
\end{multline*}
Hence with $U_{it} := F_{\response|\vecz}(\response_{it}|\vecz_{it}) \sim U[0,1]$ independent of $\vecz_{it}$, it holds that
\begin{equation}\label{eq:eit1}
\widehat{e}_{is} - e_{is} = -\frac{1}{2Td_T} \vecz_{is}^\top B_i^{-1}\sum_{t=1}^{T} \vecz_{it} \Big( \1 \{U_{it} \leq \tau + d_T\} - \1 \{U_{it} \leq \tau - d_T\} - 2d_T\Big)+ \bigo_\mpr\Bigg({\frac{1}{d_T}}\sqrt{\frac{\log n}{T}}\Bigg)\,.
\end{equation}
Define the vectors~$M_{it}\in\R^{p+1}$ via
\begin{equation*}
M_{it} :=  \vecz_{it} \Big( \1 \{U_{it} \leq \tau + d_T\} - \1 \{U_{it} \leq \tau - d_T\} - 2d_T\Big) \Big/2d_T\,.
\end{equation*}
Fix an arbitrary $k \in \{1,...,p+1\}$ and let $M_{it,k}$ denote the $k$-th entry of the vector $M_{it}$. 
It then follows that $\E[M_{it,k}] = 0$ and $\sup_i \Var[M_{it,k}] \leq \frac{{C}_1}{d_T}$ for some constant ${C}_1$  under Assumption~\ref{A1}. 
Under Assumption~\ref{A1}, we also have $\sup_{i,t,k} |M_{it,k}| \leq {C}_2/d_T$ for some constant ${C}_2>0$. 
Invoking the Bernstein inequality yields 
\begin{align*}
\mpr\Bigg(\Big|\sum_{t=1}^{T} M_{it,k}\Big| > T\epsilon\Bigg) 
\leq& 2\exp \Bigg(-\frac{\frac{1}{2}T^2\epsilon^2}{\sum_{t=1}^{T} \E[M_{it,k}^2] + \frac{1}{3}{C}_2 d_T^{-1} T\epsilon}\Bigg)
\\
= &2\exp\Bigg(-\frac{\frac{1}{2}T^2\epsilon^2}{{C}_1Td_T^{-1}+ \frac{1}{3}{C}_2 d_T^{-1} T\epsilon}\Bigg)\,.
\end{align*}
Take $\epsilon = {C}_3 T^{-1/2}d_T^{-1/2}(\log n)^{1/2}$ for a constant ${C}_3$ which will be determined later. 
Under the assumption $\frac{\log n}{Td_T} \to 0$, it follows that $\epsilon\to 0$ and the right hand side of the inequality becomes
\begin{equation*}
2\exp\Bigg(-\frac{1}{2}\frac{({C}_3)^2d_T^{-1}\log n}{{C}_1d_T^{-1} + \frac{1}{3}{C}_2{C}_3 d_T^{-1}T^{-1/2}d_T^{-1/2}(\log n)^{1/2}}\Bigg)\leq 2\exp\Big(-\frac{1}{4} ({C}_3)^2 \log n/{C}_1\Big)\,,
\end{equation*}
where the last inequality holds for $\log n/(T d_T)$ sufficiently small.
Then, we have
\begin{align*}
\mpr\Bigg(\sup_k\sup_i \Big|\frac{1}{T}\sum_{t=1}^{T} M_{it,k}\Big|>\epsilon \Bigg) 
&\leq \sum_k \sum_{i} \mpr\Bigg( \Big|\frac{1}{T}\sum_{t=1}^{T} M_{it,k}\Big|>\epsilon \Bigg)
\\
&\leq 2np\exp\Bigg( -\frac{({C}_3)^2}{4{C}_1}  \log n\Bigg) \to 0
\end{align*}
by taking $({C}_3)^2 > 4{C}_1$. 
Hence, we obtain
\begin{equation*}
\sup_i \normtwo{\frac{1}{T}\sum_{t=1}^{T} M_{it}}= \bigo_\mpr\Biggl( \sqrt{\frac{\log n}{Td_T}}\Biggr)\,.
\end{equation*}
Combining this with~\eqref{eq:eit1}, the fact that $\sup_i \normspec{B_i^{-1}} = \bigo(1)$, and Assumption~\ref{A4} gives
\begin{equation*}
\sup_{i,s}|\widehat{e}_{is} - e_{is}| = \bigo_\mpr\Bigg(\sqrt{\frac{\log n}{Td_T}}+ \sqrt{\frac{\log n}{T{d_T^2}}}\Bigg) = \bigo_\mpr\Biggl(\sqrt{\frac{\log n}{T{d_T^2}}} \Biggr)
\end{equation*}
as desired. 
\end{proof}

\begin{proof}[Proof of Theorem~\ref{quantreg_est}(ii)]
The following bound follows by the same arguments as Lemma 8 of \cite{galvao2018} (note that the condition $n (\log T)^2/T = o(1)$ made in that paper is not used in their proof of Lemma 8):
\begin{equation} \label{Bit00}
\sup_i \normspec{\E[B_{iT}] - B_i } = o(1)\,.
\end{equation}
In addition, we will prove the following bounds
\begin{equation} \label{Bit0}
\sup_i\normspec{\widehat{B}_{iT} - B_{iT} }= \bigo_\mpr(b_{n,T})\,
\end{equation}
with $b_{n,T} =\sqrt{\frac{\log n}{T{d_T^2}}},$
\begin{equation} \label{Bit1}
\sup_i \normspec{B_{iT}-\E[B_{iT}]} = \bigo_\mpr\Bigg(\sqrt{\frac{\log n}{T}}\Bigg),
\end{equation}
and 
\begin{equation} \label{Ait1}
\sup_i \normspec{\widehat{H}_{iT}^{-1}-H_i^{-1}} = \bigo_\mpr\Bigg(\sqrt{\frac{\log n}{T}}\Bigg).
\end{equation}		
The remaining proof follows from similar arguments as the proof of Lemma 10 of \cite{galvao2018}, 
but modifications are needed to take into account that $n (\log T)^2/T = o(1)$ made in that paper is replaced by $\log n = o(T)$ and that the rate changes accordingly. We note that 
\begin{align*}
&\widehat{B}_{iT} - B_{iT} = \frac{1}{T}\sum_{t=1}^{T} (\widehat{f}_{it} -f_{it})\vecz_{it}\vecz_{it}^\top \,.
\end{align*}
Using Taylor expansion, we have
\begin{equation}\label{eq:taylor}
\widehat{f}_{it} - f_{it} = \widehat{e}_{it}^{\,-1} -e_{it}^{-1} = \frac{e_{it} - \widehat{e}_{it}}{e_{it}^2} + \bigo\Big(|\widehat{e}_{it} - e_{it}|^2\Big),
\end{equation}
where the remainder term is uniform in $i,t$ since under Assumption~\ref{A2}, it holds that
\begin{align}
\inf_{i,t} e_{i,t} &=  \nonumber
\inf_{i,t} \frac{q_{i,\tau + d_T}(\vecz_{it}) - q_{i,\tau - d_T}(\vecz_{it})}{2d_T} 
\geq \inf_{i,t}\inf_{|\eta - \tau| 
\leq d_T} \frac{1}{f_{\response\mid \vecz}(q_{i,\eta}(\vecz_{it})\mid \vecz_{it})}
\\
&= \frac{1}{\sup_{i,t}\sup_{\eta, \vecz}f_{\response\mid\vecz}(q_{i,\eta}(\vecz_{it})\mid \vecz_{it})} \geq 1/f_{max}\,, \label{eitlowb}
\end{align}
almost surely. By Assumption~\ref{A1} and Lemma~\ref{lemma2}, the bound in~\eqref{Bit0} follows. 

\noindent
Next we prove~\eqref{Bit1}. Define the matrix~$N_{it}\in \R^{(p+1)\times (p+1)}$ via
\begin{equation*}
N_{it} := f_{it}\vecz_{it}\vecz_{it}^\top -\E[f_{it}\vecz_{it}\vecz_{it}^\top ]\,.
\end{equation*}
It then follows that $\E[N_{it}] = \mathbf{0}.$ 
We denote by $N_{it,j,\ell}$ the $(j,\ell)$-th entry of the matrix~$N_{it}$.
By Assumption~\ref{A1} and the inequality~\eqref{eitlowb}, we have $\sup_{i,t,j,\ell}|N_{it,j,\ell}| \leq {C}_5$ and $\sup_{i,t,j,\ell}\Var[N_{it,j,\ell}] \leq {C}_6$ for some constants ${C}_5$, ${C}_6 > 0$. 
Applying the Bernstein inequality gives, for any $\epsilon_2 > 0$,  
\begin{align*}
\mpr\Bigg(\Big|\sum_{t=1}^{T} N_{it,j,\ell}\Big| > T \epsilon_2\Bigg) 
\leq &2\exp \Bigg(-\frac{\frac{1}{2}T^2 \epsilon_2^2}{\sum_{t=1}^{T} \E[N_{it,j,\ell}^2] + \frac{1}{3}{C}_5 T \epsilon_2}\Bigg)\\
\leq &2\exp\Bigg(-\frac{\frac{1}{2}T^2 \epsilon_2^2}{T{C}_6 + \frac{1}{3}{C}_5 T\epsilon_2}\Bigg)\,.
\end{align*}
Take $\epsilon_2 = {C}_7 T^{-1/2}(\log n)^{1/2}$ for some constant ${C}_7>0$ to be determined later, 
and the right hand side of the inequality becomes, for $\log n/T$ sufficiently small,
\begin{equation*}
2\exp\Bigg(-\frac{1}{2} \frac{({C}_7)^2\log n}{{C}_6 +\frac{1}{3}{C}_5{C}_7 T^{-1/2}(\log n)^{1/2}}\Bigg)\leq 2\exp\Bigg(-\frac{({C}_7)^2\log n}{4{C}_6} \Bigg)\,.
\end{equation*}
Choosing $({C}_7)^2 > 4{C}_6,$ then for every $j,\ell$, it holds that
\begin{align*}
\mpr\Bigg(\sup_i \Big|\frac{1}{T}\sum_{t=1}^{T} N_{it,j,\ell}\Big|> \epsilon_2\Bigg)
\leq \sum_{i=1}^{n} \mpr\Bigg(\Big|\frac{1}{T}\sum_{t=1}^{T} N_{it,j,\ell}\Big|>\epsilon_2\Bigg) 
=& 2n \exp\Big(-\frac{({C}_7)^2\log n}{4{C}_6} \Big)
\to 0\,.
\end{align*}
Thus, $\sup_{i} \normspec{ \frac{1}{T}\sum_{t=1}^{T} N_{it}}= \bigo_\mpr\Big(\sqrt{\frac{\log n}{T}}\Big)$. 
This implies~\eqref{Bit1}.
Finally, we prove the bound~\eqref{Ait1}.
By Assumption~\ref{A1}, it holds that~$\sup_i \normspec{H_i^{-1}} < \infty.$
Moreover, we have
\begin{multline}
\widehat{H}_{iT}^{-1} - H_i^{-1} 
= H_i^{-1}(H_i \widehat{H}_{iT}^{-1} - I) 
= H_i^{-1}(H_i - \widehat{H}_{iT})\widehat{H}_{iT}^{-1}\\
=H_i^{-1}(H_i - \widehat{H}_{iT}) H_{i}^{-1} + \bigo\Big (\normspec{H_i^{-1}}^2\,\normspec{\widehat{H}_{iT} - H_i}^2\Big)\label{eq:h_decomp}\,,
\end{multline} 
where 
\begin{equation*}
\sup_i \normspec{\widehat{H}_{iT} - H_i} = \bigo_\mpr\Bigg(\sqrt{\frac{\log n}{T}}\Bigg)\,
\end{equation*}
holds by an application of the Bernstein inequality which is similar to the one given above. 
This completes the proof.
\end{proof}

\textit{Proof sketch of Theorem~\ref{qrint_est}} Both parts follow by simple computations provided that we can establish the bound
\[
\sup_{|\eta - \tau| \leq \varepsilon} \sup_{i\in\{1,\dots,n\} }|\hat\alpha_i(\eta)-\alpha^*_i(\eta)| =\bigo_\mpr\biggl(\sqrt{\frac{\log n}{T}}\biggr)\,.
\]
for some $\varepsilon > 0$. This can be established by following the arguments given in Step 1--Step 3 in the proof of Theorem 3.2 in~\cite{kato2012asymptotics}. Note that all empirical processes appearing in those steps retain the same complexity (in terms of VC dimension and envelope functions). Note also that the assumption that $T$ grows at most polynomially in $n$ made in their Theorem 3.2 can be dropped at the cost of replacing $\log n$ by $\log(T \vee n)$, see also the discussion in the latter paper following Theorem 3.2. \hfill $\Box$

\subsubsection{Proof for quantile regression in the dependent case (Theorem~\ref{quantreg_est1})}

Before proving Theorem~\ref{quantreg_est1} we collect some preliminary technical results. Let $\mathcal{S}^{p+1} := \{\mathbf{a}\in\R^{p+1},\normtwo{\mathbf{a}}=1\}$. Let $\tilde{\mathcal{T}} = [\tau-\eps,\tau + \eps]$ where $\eps > 0$ is such that $\tilde{\mathcal{T}} \subset \mathcal{T}$ for $\mathcal{T}$ from Assumption~\ref{A3}. Define the function classes
{
\begin{align}
\mathcal{G}_1 &:=\Big\{(\vecz,y) \mapsto \mathbf{a}^\top\vecz (\1\{y\le \vecz^\top \mathbf{b}\}-\tau) \1\{\|\vecz\|_2\le \kappa \}:\mathbf{b}\in\R^{p+1},\tau\in\tilde{\mathcal{T}}, \mathbf{a} \in \mathcal{S}^{p+1}  \Big\}\,. \label{def:G1}
\\
\mathcal{G}_2(\delta) &:= \Big\{ (y,\vecz) \mapsto \mathbf{a}^\top \vecz (\1\{y\leq \mathbf{b}_1^\top \vecz\} - \1\{y\leq \mathbf{b}_2^\top \vecz\})\1\{\| \vecz \|_2 \leq \kappa \} \Big| \|\mathbf{b}_1-\mathbf{b}_2\|_2 \leq \delta, \mathbf{a}\in \mathcal{S}^{p+1} \Big\}. \label{def:G2}
\end{align}
}
Further, define the functions
{
\[
g_{b,k,\ell}(\vecz_1,\vecz_2,y_1,y_2) := \vecz_{1,k}\vecz_{2,\ell}(\1\{y_1\le \vecz_1^\top \mathbf{b}\}-\tau)(\1\{y_2\le \vecz_2^\top \mathbf{b}\}-\tau), \quad k,\ell = 1,\dots,d, \mathbf{b} \in \R^d. 
\]
}
With this notation, let 
\begin{align*}
\mathcal{G}_{3,k,\ell} &:= \Big\{ (\vecz_1,\vecz_2,y_1,y_2) \mapsto g_{\bm{b},k,\ell}(\vecz_1,\vecz_2,y_1,y_2) : \bm{b} \in \R^{p+1}  \Big\},
\end{align*}
and 
\[
\mu_{3,k,\ell}(\bm{b},i,j) := \E[g_{\bm{b},k,\ell}(\vecz_{i1},\vecz_{i,1+j},y_{i1},y_{i,1+j})].
\] 
Consider the empirical measures $\mpr_{i,j,T}$ corresponding to $\big\{(\vecz_{i,t},\vecz_{i,t+j},y_{i,t},y_{i,t+j})\big\}_{t=1,\dots,T}$ and denote by $\tilde \mpr_{i,j}$ the distribution of $(\vecz_{i,1},\vecz_{i,1+j},y_{i,1},y_{i,1+j})$. Note that for $j\neq 0$ this includes "observations" outside of the observable sample. This quantity only appears in the proofs and is not used to compute any of the estimators.  With this notation we have the following technical result.  

\begin{lemma}\label{lem:tailboundsdep} Assume the conditions of Theorem~\ref{quantreg_est1}(i). For $n$ sufficiently large we have for all $s>0$ and all $ 1 \ll q_{n,T}$ with $q_{n,T}^2\log q_{n,T} = o(T)$ for constants $C_{\mathcal{G}_1}, \tilde C_{\mathcal{G}_1}$  
\begin{align}\label{eq:mix1}
\mpr\Bigg(  \norm{\mpr_{i,T}-\mpr_i}_{\mathcal{G}_1}  \ge C_{\mathcal{G}_1} \Bigg( \sqrt{\frac{\log q_{n,T}}{T} }+ \sqrt{\frac{s}{T}} + \frac{s q_{n,T}}{T}\Bigg) \Bigg)
&\le 2e^{-s }+2T\beta(q_{n,T})\,,
\\ \label{eq:mix1_2}
\mpr\Big( \max_{i=1,\dots,n} \norm{\mpr_{i,T}-\mpr_i}_{\mathcal{G}_1} \leq  \tilde C_{\mathcal{G}_1}\sqrt{\frac{\log(nT)}{T}} \Big) &\geq 1 - \frac{1}{nT}.
\end{align}
Further, we also have for all $s>0$ and all $ 1 \ll q_{n,T}$ with $q_{n,T}^2 \log (m_T\vee q_{n,T}) = o(T)$ and a constant $C_{\mathcal{G}_3}$ for $n$ sufficiently large
\begin{align} \label{eq:mixG3}
\mpr\Bigg(  \norm{\mpr_{i,j,T}-\tilde \mpr_{i,j}}_{\mathcal{G}_{3,k,\ell}}  \ge C_{\mathcal{G}_3} \Bigg(\sqrt{\frac{m_T \log q_{n,T}}{T} } + \sqrt{\frac{s m_T}{T}} + \frac{s q_{n,T}}{T} \Bigg) \Bigg)
&\le 2e^{-s }+2T\beta(q_{n,T})\,
\\ \label{eq:mixG3_2}
\max_{i,j,k,\ell} \norm{\mpr_{i,j,T}-\mpr_i}_{\mathcal{G}_{3,k,\ell}} &= O_\mpr\Big(\sqrt{\frac{m_T\log(nT)}{T}}\Big).
\end{align}
Next, let 
\[
\sigma^2_{q,i}(g) := \Var\Big(\frac{1}{\sqrt{q}}\sum_{t=1}^q g(\vecz_{it},Y_{it})\Big)
\]
and assume that 
{
\begin{equation}\label{eq:sigma0}
\sigma^2_{n,T}(\delta)\ge \sup_i\sup_{g\in\mathcal{G}_2(\delta)} \sigma^2_{q,i}(g)\,.
\end{equation}
}
Then for any $s >0$ and any $q_{n,T}$ satisfying 
\begin{equation}\label{eq:sigma}
q_{n,T}^2 \log \Big(\frac{q_{n,T}}{\sigma_{n,T}^2(\delta)}\Big)\le \tilde C T\sigma_{n,T}^2(\delta)\,,
\end{equation}
for a certain constant $\tilde C$ depending only on $\kappa$ and the dimension $p$ of $\vecz_{it}$ we have 
\begin{multline}\label{eq:mixG2}
\mpr\Bigg( \norm{\mpr_{i,T}-\mpr_i}_{\mathcal{G}_2(\delta)} \ge C \Bigg(\sqrt{ \frac{\sigma^2_{n,T}(\delta)}{T}   \log \Big( \frac{q_{n,T}}{\sigma^2_{n,T}(\delta)}\Big) } + \sqrt{\frac{\sigma^2_{n,T}(\delta) s}{T}} + \frac{sq_{n,T}}{T}\Bigg) \Bigg)
\\
\le 2e^{-s}+2T\beta(q_{n,T}).
\end{multline}
In particular, for $\delta = \delta_{n,T} := (C T^{-1}\log(nT))^{1/2} $ with $C > 0$ arbitrary but fixed we obtain
\begin{equation} \label{eq:mixG2_2}
\max_{i=1,\dots,n} \norm{\mpr_{i,T}-\mpr_i}_{\mathcal{G}_2(\delta_{n,T})} = O_\mpr\Big( \frac{(\log(nT))^{5/4}}{T^{3/4}}\Big).
\end{equation}
Finally, letting $U_{it} := F_{Y_{it}|X_{it}}(Y_{it}|X_{it})$,
\begin{equation}
\label{eq:norm_phi}
\sup_{i} \normtwo{ \frac{1}{2Td_T}  \sum_{t=1}^{T} \vecz_{it} \Big( \1 \{U_{it} \leq \tau + d_T\} - \1 \{U_{it} \leq \tau - d_T\} - 2d_T\Big)  } =\bigo_{\mpr} \Bigg(\frac{\log n}{\sqrt{T d_T}} \Bigg)\,.
\end{equation}
\end{lemma}

\textit{Proof of Lemma~\ref{lem:tailboundsdep}}
The proof strategy for many parts is similar to that in the proof of Lemma~5 in~\cite{galvao2018} and we will only point out the relevant differences. We will repeatedly apply Proposition~C.2 from \cite{kato2012asymptotics}. That result requires the corresponding function classes to be centered. Assume that $\mathcal{F}$ is a class of functions that are not centered and such that $\sup_{Q} N(\mathcal{F}, L_1(Q),\epsilon)\le (A/\epsilon)^{\nu}$ for some constants $A, \nu$ and let $\tilde{\mathcal{F}} := \{f - \mpr f: f \in \mathcal{F}\}$. The it is easy to see that $N(\tilde{\mathcal{F}}, L_1(Q),\epsilon) \le N(\mathcal{F}, L_1(Q),\epsilon/2)$ and $\|f-\mpr f\|_\infty \le 2 \|f\|_\infty$ so that Proposition~C.2 from \cite{kato2012asymptotics} can be applied to non-centered function classes to obtain bounds on $\|\mpr_T f - \mpr f\|_{\mathcal{F}}$. This fact will be repeatedly used throughout the proofs that follow.  

\medskip

\textbf{Proof of~\eqref{eq:mix1} and~\eqref{eq:mix1_2}} By the proof of Lemma 5 in~\cite{galvao2018}, it holds for any $g\in\mathcal{G}_1$ that
\[
\normsup{g}\le U_1,\quad\text{and} \quad \sup_i\sup_{g\in\mathcal{G}_1}\Var\big(g(\vecz_{i1},Y_{i1})\big)\le U_2\,
\]
with some positive universal constants $U_1$ and $U_2.$ 
Moreover, it holds for any probability measure~$Q$ and any $0<\epsilon<1$ that
\[
N(\mathcal{G}_1, L_1(Q),\epsilon)\le (A/\epsilon)^{\nu}\,
\]
with some positive constants $A,\nu<\infty.$ The claim in~\eqref{eq:mix1} follows by Proposition C.2 of~\cite{kato2012asymptotics}. For~\eqref{eq:mix1_2}, let $q_{n,T} := C_1 \log(nT)$ with the constant $C_1\ge 1$ satisfying $b_{\beta}^{C_1}\le e^{-2}$ and $s = 2\log (nT)$. Clearly $q_{n,T} \gg 1$, $q_{n,T}^2 \log(q_{n,T}) = o(T)$, so~\eqref{eq:mix1_2} follows from the union bound and simple calculations. 

\medskip

\textbf{Proof of~\eqref{eq:mixG3} and~\eqref{eq:mixG3_2}} Observe that any function in $\mathcal{G}_{3,k,\ell}$ can be expressed as through sums and products of functions from the classes 
{
$\mathcal{H}_1 := \{(y_1,\vecz_1,y_2,\vecz_2) \mapsto \vecz_{1,k}\vecz_{2,\ell}| 1 \leq k,\ell \leq p+1\}, \mathcal{H}_2 := \{(y_1,\vecz_1,y_2,\vecz_2) \mapsto \tau - \1\{y_1 \leq \vecz_1^\top \mathbf{b} \} | \mathbf{b} \in \mathbb{R}^{p+1}\}$, $\mathcal{H}_3 := \{(y_1,\vecz_1,y_2,\vecz_2) \mapsto \tau - \1\{y_2 \leq \vecz_2^\top \mathbf{b} \} | \mathbf{b} \in \mathbb{R}^{p+1}\}$ 
}
and that each of the three classes satisfies
\begin{equation*}
N(\mathcal{H}_j,L_2(Q),\epsilon) \leq (\widetilde{A}/ \epsilon)^{\widetilde{v}}, 
\end{equation*}
for all $0 < \epsilon\leq 1$ and some constants $\widetilde{A}, \widetilde{v} < \infty$. Hence, by the Cauchy-Schwarz inequality and Lemma 23 in \cite{belloni2019conditional} (note that the proof of this Lemma continues to hold for arbitrary probability measures, discreteness is not required), we find that
\begin{equation*}
N(\mathcal{G}_{3,k,\ell},L_1(Q),\epsilon) \leq (A/\epsilon)^v, 
\end{equation*}
for some $A,v < \infty$. Next, note that under Assumption~\ref{B1} the series of random vectors $\{\xi_{i,t} := (Y_{it}, \vecz_{it}, Y_{it+j}, \vecz_{it+j})\}_{t\in\mathbb{Z}}$ is $\beta$-mixing with mixing coefficients $\widetilde{\beta}(t)$ satisfying $\widetilde{\beta}(t) \leq \beta(0\vee (t-j))$. Since the functions in $\mathcal{G}_{3,k,\ell}$ are uniformly bounded, Lemma C.1 in \cite{kato2012asymptotics}(applied with $\delta = 1$ in the notation of that Lemma) yields 
\[
|Cov(g(\xi_{i,t}),g(\xi_{i,t+j}))| \leq C \tilde \beta(j)^{1/2}
\]
for a constant $C$ independent of $n,T,i$. For $g \in \mathcal{G}_{3,k,\ell}$ let
\begin{equation*}
\sigma_{q,i,j}^2(g) := Var \Big(\frac{1}{\sqrt{q}} \sum_{t=1}^q f(Y_{it},\vecz_{it},Y_{it+j},\vecz_{it+j})\Big).
\end{equation*}
We have
\begin{align*}
\sigma_{q,i,j}^2(g) &= Var(f(\xi_{i,t})) + 2\sum_{j=1}^{q-1} \Big(1-\frac{j}{q}\Big) Cov(f(\xi_{i,1}), f(\xi_{i,1+j}))
\\
&\leq C + 2C \sum_{j=1}^{m_T} \Big(1-\frac{j}{q}\Big) + 2C \sum_{j=m_T}^{q-1} \tilde \beta(j)^{1/2}
\\
&\leq 2(m_T+1)C + 2C \sum_{j=1}^{\infty} \tilde \beta(j)^{1/2}
\\
&\leq \tilde C(m_T+1)
\end{align*}
for a constant $\tilde C$ independent of $i,n,T$. The claim in~\eqref{eq:mixG3} follows by an application of Proposition~C.2 in \cite{kato2012asymptotics}. To obtain~\eqref{eq:mixG3_2}, set $s = 4\log(nT)$ and $q_{n,T} = C \log(nT)$ with $C$ chosen such that $\beta(C) \leq e^{-4}$. 

\medskip

\textbf{Proof of~\eqref{eq:mixG2} and~\eqref{eq:mixG2_2}} By the proof of Lemma 5 in~\cite{galvao2018}, it holds for any $g\in\mathcal{G}_2(\delta)$ that
\[
\normsup{g}\le U_2\,
\]
with some constant $U_2>0,$ 
and  it also  holds for large $n,T$ satisfying $\frac{1}{nT}\le \delta\le 1 $ that
\begin{align*}
\sigma^2_{q,i}(g)
=\Var\Big(\frac{1}{\sqrt{q}}\sum_{t=1}^q g(\vecz_{it},Y_{it})\Big)
\le & C_{\sigma,2} \delta \log(nT),\quad i=1,\dots,n\,,
\end{align*}
where $C_{\sigma,2}$ is a constant. Moreover, by the first display in the proof of Lemma 5 in~\cite{galvao2018}, it holds for any probability measure~$Q$ and any $0<\epsilon<1$ that
\[
N(\mathcal{G}_2(\delta), L_1(Q),\epsilon)\le (A/\epsilon)^{\nu}\,
\]
with some positive constants $A,\nu<\infty.$
Invoking Proposition C.2 of~\cite{kato2012asymptotics} gives~\eqref{eq:mixG2}. To prove~\eqref{eq:mixG2_2} pick 
\[
q_{n,T}:=C_1\log(nT)
\]
with the universal constant $C_1\ge1$ satisfying $b_{\beta}^{C_1}\le e^{-2},$ and set 
{
\[
\sigma^2_{n,T} := C_{\sigma,2} \log(nT) \delta\,. %
\]
}
With this choice~\eqref{eq:sigma0} holds by definition and we have
\begin{align*}
\frac{q_{n,T}^2}{\sigma_{n,T}^2}\log\Big(\frac{q_{n,T}}{\sigma_{n,T}^2}\Big) \lesssim \sqrt{T} \log(nT)^{1/2} \log(T)^{1/2} = o(T) 
\end{align*}
so that~\eqref{eq:sigma} holds for $n,T$ large enough. Let $s_{n,T}:=2\log n.$ By elementary computations using the fact that $\log(n)^3=o(T)$ by assumption the claim follows by applying the union bound.

\medskip

\textbf{Proof of~\eqref{eq:norm_phi} } Denote the $k$-th element of vector~$\vecz_{it}$ by $\vecz_{it,k}.$ Define the function~$f_k$ via
\begin{align*}
f_k:\rd\times \R&\to \R \\
(\vecz_{it},Y_{it})& \mapsto \vecz_{it,k}\Big( \1 \{U_{it} \leq \tau + d_T\} - \1 \{U_{it} \leq \tau - d_T\} - 2d_T\Big) \,.
\end{align*}
By Lemma 11 in~\cite{galvao2018}, we have
\[
\sup_{i,t,k}|f_k(\vecz_{it},Y_{it})|\le C_1,\quad \E[f_k(\vecz_{it},Y_{it})]=0\,,
\]
and
\[
\sigma^2_{q,i}(f):=\Var\Big(\frac{1}{\sqrt{q}}\sum_{t=1}^q f_k(\vecz_{it},Y_{it}) \Big) \le C_2 d_T|\log (d_T)|\,,
\]
where $C_1,C_2$ are constants independent of $i,T.$
Note that the constants are independent of $i$, throughout the proof we drop the dependence of $\sigma^2_{q,i}(f)$ on $i$, and denote it by $\sigma^2_{q}(f)$ instead.
Applying Corollary C.1 in~\cite{kato2012asymptotics}, we have for some constant $C$ independent of $i,T,k$ and $q_{n,T}\in[1,\frac{T}{2}]$  and for some $s_{n,T}>0$ 
\[
\mpr\Bigg( \Bigg| \frac{1}{T}\sum_{t=1}^T f_k(\vecz_{it},Y_{it})\Bigg| \ge C\Bigg(  \frac{\sqrt{(s_{n,T}\vee 1)}}{\sqrt{T}}\sigma_q(f) + \frac{s_{n,T}q_{n,T}}{T}  \Bigg) \Bigg)
\le 2e^{-s_{n,T}} + 2T\beta(q_{n,T})\,.
\]
Set $s_{n,T}:=2\log n$ and let  
\[
q_{n,T}:= C_1\log (nT)
\]
where $C_1>1$ is a constant satisfying $b_\beta^{C_1} \le e^{-2}.$
Then, it holds for some large $n$ and $T$ and a small $d_T$ that 
\[
\frac{\sqrt{(s_{n,T}\vee 1)}}{\sqrt{T}}\sigma_q(f) + \frac{s_{n,T}q_{n,T}}{T}  
\le 
\sqrt{\frac{2\log n}{T}}\sqrt{d_T}\sqrt{\log\Big(\frac{1}{d_T}\Big)} + \frac{C_1 \log n\log(nT)}{T}\,.
\]
By Assumption~\ref{B3} and $T$ grow at most polynomial in $n$, we have
\[
\sqrt{\frac{2\log n}{T}}\sqrt{d_T}\sqrt{\log\Big(\frac{1}{d_T}\Big)} + \frac{C_1 \log n\log(nT)}{T}\lesssim \frac{\log n}{\sqrt{T}}\sqrt{d_T} + \frac{(\log n)^2}{T}\,.
\]
Moreover, note that
\[
2e^{-s_{n,T}}+ 2T\beta(q_{n,T}) \lesssim \frac{1}{n^2}+\frac{1}{n^2T}\,.
\]
Taking the union bound over $i=1,\dots,n$ then gives
\begin{equation*}
\sup_{i} \normtwo{ \frac{1}{2Td_T}  \sum_{t=1}^{T} \vecz_{it} \Big( \1 \{U_{it} \leq \tau + d_T\} - \1 \{U_{it} \leq \tau - d_T\} - 2d_T\Big)  } =\bigo_{\mpr} \Bigg(\frac{\log n}{\sqrt{T d_T}} \Bigg)\,.
\end{equation*}
This completes the proof of~\eqref{eq:norm_phi}. Now the proofs of all results in Lemma~\ref{lem:tailboundsdep} are complete. \hfill $\Box$

\subsubsection{Proof of Theorem~\ref{quantreg_est1} (i)}

The Lemma~C.2 in~\cite{chao2017} can be used in our framework by setting $t=2, n=T, \xi_m= \kappa, g_n=0,$ which implies the following for each $i\in\{1,\dots,n\}$
\begin{equation}
\label{eq:supset}
\bigg\{\sup_{\tau\in\mathcal{T}}\normtwo{\est_i-\vecg^*_i}\le \frac{4~ \norm{\mpr_{i,T}-\mpr_i}_{\mathcal{G}_1}}{\inf_{\tau\in\mathcal{T}}\lambda_{\min}(\widetilde J_i)} \bigg\}
\supseteq
\bigg\{ \norm{\mpr_{i,T}-\mpr_i}_{\mathcal{G}_1} < \frac{\inf_{\tau\in\mathcal{T}}\lambda^2_{\min}(\widetilde J_i)}{8\kappa\overline{f'}\lambda_{\max} \big( \E\bigl[ \vecz_{it}\vecz_{it}^\top \bigr] \big)}\bigg\}\,,
\end{equation}
where $\widetilde J_i:=\E\bigl[ \vecz_{it}\vecz_{it}^\top f_{\response_{it}|\vecz_{it}}(\vecz_{it}^\top \vecg^*_i |\vecz_{it}) \bigr],$ with the function class $\mathcal{G}_1$ defined in~\eqref{def:G1}. By the assumption that $(\log n)^3 = o(T)$  and Assumptions~\ref{A1}-\ref{A3}, it holds for sufficiently large $n,T$ that 
\begin{equation}
\label{eq:subset}
\tilde C_{\mathcal{G}_1}\sqrt{\frac{\log(nT)}{T}}
\le
\frac{\inf_{\tau\in\mathcal{T}}\lambda^2_{\min}(\widetilde J_i)}{8\kappa\overline{f'}\lambda_{\max} \big( \E\bigl[ \vecz_{it}\vecz_{it}^\top \bigr] \big)}\,.
\end{equation}
Define the event 
\begin{equation}\label{eq:OmegaG1}
\Omega_{\mathcal{G}_1} :=\Big\{\norm{\mpr_{i,T}-\mpr_i}_{\mathcal{G}_1} \le \tilde C_{\mathcal{G}_1}\sqrt{\frac{\log(nT)}{T}} \Big\}.
\end{equation}
By the relation~\eqref{eq:supset}, we obtain that on the event $\Omega_{\mathcal{G}_1},$ it holds that 
\[
\sup_{\tau\in\mathcal{T}}\normtwo{\est_i-\vecg^*_i}\le C_3\sqrt{\frac{\log(nT)}{T}} \,,
\]
where $C_3>0$ is a constant independent of $i,n,T$. Combined with~\eqref{eq:mix1_2} we find that for all sufficiently large $n,T$ 
\begin{equation}\label{eq:tailgammamix}
\mpr\Big(\sup_{\tau\in\mathcal{T}}\normtwo{\est_i-\vecg^*_i}\le C_3\sqrt{\frac{\log(nT)}{T}}\Big) \ge 1 - \frac{1}{nT}.
\end{equation}
This completes the proof of Theorem~\ref{quantreg_est1} (i). \hfill $\Box$

\subsubsection{Proof of Theorem~\ref{quantreg_est1} (ii).} The assumptions made imply that the smallest eigenvalues of the matrices $B_i$ are bounded away from zero uniformly in $i$. Since we work in fixed dimension, it suffices to show that 
{
\[
\max_{i,k,\ell} |\widehat{B}_{iT,k,\ell} - B_{i,k,\ell}| + \max_{i,k,\ell}|\widehat{H}'_{iT,k,\ell} - \widetilde H_{i,k,\ell}| = o_\mpr(1). 
\]
}
We will consider the two sums separately, starting with $\widehat{B}_{iT}$. Note that
{
\begin{align*}
\normspec{\widehat{B}_{iT} - B_i} \leq& \normspec{\frac{1}{T} \sum_{t=1}^T \vecz_{it}\vecz_{it}^\top(\hat f_{it} - f_{it})}
+ \normspec{\frac{1}{T} \sum_{t=1}^T \vecz_{it}\vecz_{it}^\top f_{it} - \E\big[\vecz_{it}\vecz_{it}^\top f_{it}\big]}
\\ 
&+ \normspec{|\E\big[\vecz_{it}\vecz_{it}^\top f_{it}\big] - B_i }.   	
\end{align*}
}
The bound{$\max_i \normspec{ \E\big[\vecz_{it}\vecz_{it}^\top f_{it}\big] - B_i }= o(1)$} follows from standard Taylor expansions similarly to the proof of Lemma 8 in \cite{galvao2018}. Further, we have
{
\[
\max_i  \normspec{ \frac{1}{T} \sum_{t=1}^T \vecz_{it}\vecz_{it}^\top(\hat f_{it} - f_{it}) } \leq \kappa^2 \max_{i,t} |\hat f_{it} - f_{it}| = o(1)
\]
}
by Lemma~\ref{lem:mix_aux1}. To bound {$\max_i \normspec{T^{-1} \sum_{t=1}^T \vecz_{it}\vecz_{it}^\top f_{it} - \E\big[\vecz_{it}\vecz_{it}^\top f_{it}\big] }$} note that the entries of $\vecz_{it}\vecz_{it}^\top f_{it}$ are uniformly bounded. Thus an application of Lemma C.1 from \cite{kato2012asymptotics} shows that $\Var(T^{-1} \sum_{t=1}^T \vecz_{it}\vecz_{it}^\top f_{it}) \leq C_1$ for a constant $C_1$. Now apply Corollary C.1 from \cite{kato2012asymptotics} with $s = 2\log n, q = c\log(nT)$ for a suitable constant $c$ to obtain {$\max_i  \normspec{ T^{-1} \sum_{t=1}^T \vecz_{it}\vecz_{it}^\top f_{it} - \E\big[\vecz_{it}\vecz_{it}^\top f_{it}\big] } = o_\mpr(1)$.}

Next we proceed to bound {$\normspec{\widehat H_{iT}' - \widetilde H_i }.$ } Recall the notation from the paragraph before Lemma~\ref{lem:tailboundsdep}. Observe the decomposition
{
\begin{align*}
[\widehat{H}'_{iT}]_{k,\ell} - [\widetilde H_{i}]_{k,\ell} %
=~&  \tau (1-\tau)\frac{1}{T} \sum_{t=1}^{T} \Big\{[\vecz_{it}\vecz_{it}^\top]_{k,\ell} - \E\big[[\vecz_{it}\vecz_{it}^\top]_{k,\ell}\big]\Big\}
\\
& + \sum_{1\le j\le m_T}\Big(1-\frac{j}{T}\Big)\Big( \mpr_{i,j,T}g_{\hat\vecg_i(\tau),k,\ell} - \mu_{3,k,\ell}(\hat\vecg_i(\tau),i,j)\Big)
\\
& + \sum_{1\le j\le m_T}\Big(1-\frac{j}{T}\Big)\Big( \mu_{3,k,\ell}(\hat\vecg_i(\tau),i,j) - \mu_{3,k,\ell}(\vecg_{i}^*(\tau),i,j)\Big)
\\
& + \sum_{1\le j\le m_T}\Big(1-\frac{j}{T}\Big) \mu_{3,k,\ell}(\vecg_{i}^*(\tau),i,j) - \sum_{j=1}^\infty \mu_{3,k,\ell}(\vecg_{i}^*(\tau),i,j)
\\
& + R_{n,T,k,\ell,i}(\tau) 
\\
=:~& \sum_{j=1}^4 \Delta_{i,k,\ell,n,T}^{(j)}(\tau) +  R_{n,T,k,\ell,i}(\tau). 
\end{align*}
}
where $R_{n,T,k,\ell,i}(\tau)$ arises due to the summation range over $T_j$. Note that
\[
\sup_{i,k,\ell}|R_{n,T,k\ell,i}| \leq \frac{2m_T^2\kappa^2}{T} = o(1)
\]
since $T \geq |T_j| \geq T - m_T$ and $\|g_{\hat\vecg_i(\tau),k,\ell}\|_\infty \leq 2\kappa^2$. The bound $\max_{i,k,\ell}\sup_{\tau \in \mathcal{T}} |\Delta_{i,k,\ell,n,T}^{(1)}(\tau)| = o_\mpr(1)$ follows by combining Lemma C.1 and Proposition C.2 from \cite{kato2012asymptotics} with $s = 2\log n, q = c\log n$ for a suitable constant $c$. The bound $\max_{i,k,\ell}\sup_{\tau \in \mathcal{T}} |\Delta_{i,k,\ell,n,T}^{(4)}(\tau)| = o(1)$ follows from the arguments in the last paragraph in the proof of Lemma 12 in \cite{galvao2018}. To bound {$\max_{i,k,\ell}\sup_{\tau \in \mathcal{T}} |\Delta_{i,k,\ell,n,T}^{(3)}(\tau)|  = o(1)$} note that under Assumption~\ref{B2} the maps $\bm{b} \mapsto \mu_{3,k,\ell}(\bm{b},i,j)$ are Lipshitz continuous with Lipshitz constant \red{$\kappa^2$} bounded uniformly in $n,T,i,j,k,\ell$. Thus
{
\[
\max_{i,k,\ell}\sup_{\tau \in \mathcal{T}} |\Delta_{i,k,\ell,n,T}^{(3)}(\tau)| \leq \kappa^2 m_T \max_i \sup_{\tau \in \mathcal{T}} \|\hat \vecg_i(\tau) - \vecg_{i}^*(\tau)\|_2 = O_\mpr\Big(m_T \sqrt{\frac{\log (nT)}{T}} \Big) = o_\mpr(1)
\]
}
by the first part of the {theorem} and Assumption~\ref{B3}. Finally, observe that
\[
\max_{i,k,\ell}\sup_{\tau \in \mathcal{T}} |\Delta_{i,k,\ell,n,T}^{(2)}(\tau)| \leq m_T \max_{i,k,\ell,j} \|\mpr_{i,j,T} - \tilde\mpr_{i,j}\|_{\mathcal{G}_{3,k,\ell}} = O_\mpr\Big(\sqrt{\frac{m_T^3 \log(nT)}{T}} \Big) = o_\mpr(1) 
\] 
where we used~\eqref{eq:mixG3_2} and the assumption on $m_T$. This completes the proof of Theorem~\ref{quantreg_est1} (ii). \hfill $\Box$

\subsubsection{Technical results used in the proof of Theorem~\ref{quantreg_est1} (ii)}

\begin{lemma}\label{lem:mix_aux1}
Let the assumptions stated in Theorem~\ref{quantreg_est1}(i) and Assumption~\ref{B3} hold. Then
\[
\sup_{i,t} |\hat f_{it}-f_{it}| = o(1).
\]
\end{lemma}

\textit{Proof of Lemma~\ref{lem:mix_aux1}}
The proof strategy follows from Lemma  11 in~\cite{galvao2018}, where we employ the Bernstein inequality for $\beta$-mixing sequences (Corollary C.1 in~\cite{kato2012asymptotics}). Define $\widehat{e}_{it} := \widehat{f}_{it}^{-1}$ and $e_{it} := 1/f_{it}$. By definitions of $\widehat{e}_{is}$ and $e_{is} $ it holds that
\begin{equation*}
\widehat{e}_{is} - e_{is} = \vecz_{is}^\top \Big(\big(\est_{i}(\tau + d_T) - \vecg^*_i(\tau + d_T) \big)- \big(\est_{i}(\tau - d_T) - \vecg^*_i(\tau - d_T) \big)\Big)/2d_T\,.
\end{equation*}
By Lemma~\ref{lem:decmp_gamma}and the assumptions $\frac{\log (nT)}{ T d_T^2 }=o(1), \log(n)^3 = o(T)$ we obtain
\[
\max_{i,s} |\widehat{e}_{is} - e_{is}| \leq 2\kappa^2 \max_i \normop{B_i^{-1}} \max_i \Big\| \frac{1}{2Td_T}\sum_{t=1}^T \psi_{i,\tau + d_T}(\vecz_{it},Y_{it}) - \psi_{i,\tau - d_T}(\vecz_{it},Y_{it}) \Big\|_2 + o_\mpr(1).
\] 
Letting $U_{it} := F_{Y_{it}|X_{it}}(Y_{it}|X_{it})$ we see that {$\1 \{Y_{it} \leq \vecg_{i}^*(\tau \pm d_T)\} = \1 \{U_{it} \leq \tau \pm d_T\}$} and hence 
\begin{multline*}
\frac{1}{2Td_T}\sum_{t=1}^T \psi_{i,\tau + d_T}(\vecz_{it},Y_{it}) - \psi_{i,\tau - d_T}(\vecz_{it},Y_{it})
\\
= \frac{1}{2Td_T}  \sum_{t=1}^{T} \vecz_{it} \Big( \1 \{U_{it} \leq \tau + d_T\} - \1 \{U_{it} \leq \tau - d_T\} - 2d_T\Big)
\end{multline*}
Thus $\max_{i,s} |\widehat{e}_{is} - e_{is}| = o_\mpr(1)$ by~\eqref{eq:norm_phi}. Finally, under the assumptions made we have $\min_{i,t} e_{it} \geq 1/f_{max}$, see \eqref{eitlowb}. The claim follows by a Taylor expansion of $x \mapsto 1/x$. \hfill $\Box$

\begin{lemma}\label{lem:decmp_gamma}
Let the assumptions stated in Theorem~\ref{quantreg_est1}(i) and Assumption~\ref{B3} hold. It holds for every $i\in\{1,\dots,n\}$ that
\[
\est_i(\tau)-\vecg^*_i(\tau)=-\frac{1}{T}B_i^{-1}\sum_{t=1}^T \psi_{i,\tau}(\vecz_{it},\response_{it})+R_{n,T,i}(\tau)\,,
\]
where 
\[
B_i := \E[f_{\response\mid\vecz}(q_{i,\tau}(\vecz_{i1})\mid \vecz_{i1})\vecz_{i1}\vecz_{i1}^\top],\quad   \psi_{i, \tau} (\vecz, \response) := \vecz (\1(\response \le q_{i,\tau}(\vecz))-\tau\,,
\]
and 
\[
\sup_i\sup_{\tau\in\mathcal{T}}\normtwo{R_{n,T,i}(\tau)} =  \bigo_{\mpr}\Bigg( \frac{(\log(nT))^{5/4}}{T^{3/4}} \Bigg)\,.
\]
\end{lemma}

\textit{Proof of Lemma~\ref{lem:decmp_gamma}} Observe the decomposition
{
\begin{equation*}
\widehat{\bm{\gamma}}_{i}(\eta) - \bm{\gamma}_{i}^*(\eta) = -\frac{1}{T}B_i^{-1}\sum_{t=1}^{T} \vecz_{it}(\1(Y_{it} \leq q_{i,\eta}(\vecz_{it})) - \eta) + r_{i,1}(\eta) + r_{i,2}(\eta) + r_{i,3}(\eta),
\end{equation*}
}
where
{
\begin{align*}
r_{i,1}(\eta) &:= \frac{1}{T}B_i^{-1}\sum_{t=1}^{T} \vecz_{it}(\1(Y_{it} \leq \vecz_{it}^\top \widehat{\bm{\gamma}}_{i}(\eta)) - \eta),
\\
r_{i,2}(\eta) &:= - \frac{1}{T}B_i^{-1} \sum_{t=1}^{T} \Big\{\vecz_{it}\Big( \1(Y_{it} \leq \vecz_{it}^\top \widehat{\bm{\gamma}}_{i}(\eta)) -\1(Y_{it} \leq \vecz_{it}^\top \bm{\gamma}_{i}^*(\eta)) \Big)
\\
& \quad \quad \quad \quad \quad \quad \quad \quad- \int z[F_{Y|Z}(z^\top \widehat{\bm{\gamma}}_{i}(\eta) \mid z) - F_{Y|Z}(z^\top {\bm{\gamma}}_{i}^*(\eta)\mid z)] dP^{\vecz_{i1}}(z)\Big\},
\\
r_{i,3}(\eta) &:= - B_i^{-1}\Big[ \int z[F_{Y|Z}(z^\top \widehat{\bm{\gamma}}_{i}(\eta) \mid z) - F_{Y|Z}(z^\top {\bm{\gamma}}_{i}^*(\eta)\mid z)] dP^{\vecz_{i1}}(z) - B_i (\widehat{\bm{\gamma}}_{i}(\eta) - \bm{\gamma}_{i}^*(\eta))\Big].
\end{align*}
}
Let $R_{iT}^{(1)}(\eta) := r_{i,2}(\eta), R_{iT}^{(2)}(\eta) := r_{i,1}(\eta) + r_{i,3}(\eta)$. Following the arguments in the proof of Theorem 5.1 in \cite{chao2017} with $n=T, m=p+1, \xi_m= \kappa, g_n=0,$ and $c_n=0$ we have almost surely
\[
\sup_{\eta \in \mathcal{T}} \|r_{i,1}(\eta)\| \lesssim T^{-1}. 
\]
Moreover, on the event
{
\[
\max_{i} \sup_{\eta \in \mathcal{T}}\|\widehat{\bm{\gamma}}_{i}(\eta) - \bm{\gamma}^*_{i}(\eta)\| \leq \delta
\]
}
we have 
\[
\sup_{\eta \in \mathcal{T}} \|r_{i,2}(\eta)\| \lesssim \max_i \norm{\mpr_{i,T}-\mpr_i}_{\mathcal{G}_2(\delta)}
\]
and $\sup_{\eta \in \mathcal{T}} \|r_{i,2}(\eta)\| \lesssim \delta^2$ where the constants in $\lesssim$ depend on the constants from Assumption~\ref{A1}--\ref{A3} only. Letting $\delta = C_3 \sqrt{T^{-1}\log(nT)}$ and recalling~\eqref{eq:tailgammamix} and~\eqref{eq:mixG2_2} completes he proof. \hfill $\Box$

\bigskip

\begin{proof}[Proof sketch of Theorem~\ref{qrint_est1}]
The results for the first part can be established by following the arguments given in the proof of Theorem 5.1 in~\cite{kato2012asymptotics}, which are parallel to the step 1-3 in the proof of Theorem 3.2 therein. The results for the second part can be proved similarly to those for the second part of Theorem~\ref{quantreg_est1}. 
\end{proof}

\end{document}